\documentclass[10.5pt]{article}
\usepackage{setspace}
\usepackage[margin=0.75in]{geometry}

\usepackage[utf8]{inputenc}
\usepackage{hanging}	
\usepackage{soul}
\usepackage{xcolor}
\usepackage{amsmath}
\usepackage{amsthm}
\usepackage{amssymb}
\usepackage{amsfonts}
\usepackage{setspace}
\usepackage{multirow}
\usepackage{caption}
\DeclareCaptionLabelFormat{AppendixTables}{A.#2}
\usepackage{subcaption}
\usepackage{graphics}
\usepackage{lscape}
\usepackage{changepage}
\usepackage{graphicx}% http://ctan.org/pkg/graphicx
\usepackage{xcolor,soul}
\usepackage{booktabs}
\usepackage{tikz-cd}
\usepackage[makeroom]{cancel}
\usepackage{mathrsfs}
\usepackage{algorithm}
\usepackage{algpseudocode}
\usepackage{bbm}

\usepackage{natbib}
\usepackage{hyperref}
\usepackage{xr-hyper}
\usepackage{subfiles}

\tikzset{
  symbol/.style={
    draw=none,
    every to/.append style={
      edge node={node [sloped, allow upside down, auto=false]{$#1$}}}
  }
}

\theoremstyle{definition}
\newtheorem{theorem}{Theorem}[section]  \newtheorem{assumption}{Assumption}[section]  \newtheorem{lemma}{Lemma}[section] \newtheorem{proposition}{Proposition}[section] \newtheorem{definition}{Definition}[section]\newtheorem{corollary}[theorem]{Corollary}

\title{Double/Debiased Machine Learning for Functional-Form-Robust Spatial Autoregression}
\author{Jieun Lee\footnote{Department of Economics, Emory University. \textit{jieun.lee@emory.edu, jieunlee.sophia@gmail.com}}}

\begin{document}
\maketitle

\begin{abstract}

Spatial autoregressive inference is typically conditional on the spatial
weights matrix, $W$, even though the underlying interaction structure is often
unknown and empirical conclusions can be sensitive to its specification. This
paper develops double/debiased machine learning (DML) inference for
low-dimensional SAR parameters when the spatial interaction operator is
learned flexibly from potentially endogenous characteristics. Within a
maintained admissible support, interaction strength is generated by an unknown
function of geographic and socioeconomic characteristics, making inference
robust to functional-form specification of the weights within that support.
Endogeneity in the characteristics generating $W$ is addressed through a
nonlinear control function based on locally relevant first-stage residual
information. Because the learned operator enters both the spatial lag and
spatially transformed instruments, treating the estimated $W$ as known
generally leaves a first-order generated-$W$ effect. I construct an
operator-orthogonal SAR-IV/GMM score that removes this leading sensitivity and
combine it with buffered spatial cross-fitting that separates evaluation-score
footprints from nuisance-training observations. Under near-epoch dependence
on a spatially mixing innovation field and target-relevant nuisance-rate and
regularity conditions, the estimator is asymptotically linear and root-$n$
normal. Monte Carlo simulations show improved finite-sample inference relative
to nonorthogonal alternatives when the interaction function is misspecified,
weight-generating characteristics are endogenous, and observations are
spatially dependent. In a U.S.\ application, diabetes estimates vary with the choice of $W$, showing the sensitivity of SAR inference to the interaction structure. Even for the same learned $W$, results differ across inferential methods, highlighting the importance of inference when $W$ is learned.

\medskip

\noindent\textbf{Keywords:}
Spatial autoregression; spatial weights matrix; machine learning;
double machine learning; endogenous spatial weights; control function;
Neyman orthogonality; spatial cross-fitting; near-epoch dependence.

\medskip

\noindent\textbf{JEL Codes:}
C14, C21, C26, C45.

\end{abstract}

\section{Introduction}\label{sec:introduction}

Spatial autoregressive (SAR) models are widely used to examine aggregate
patterns of spatial dependence when outcomes in one location may depend on
outcomes elsewhere. Their empirical content depends critically on the spatial
weights matrix, which determines which locations interact and how strongly. In
most applications, however, the underlying interaction structure is not
directly observed. Researchers therefore construct the weights matrix using
geographic distance, contiguity, economic similarity, social characteristics,
or other measures of proximity.

The choice of spatial weights can materially affect empirical conclusions.
Inverse-distance rules, parametric decay functions, nearest-neighbor
structures, and contiguity matrices each impose a particular view of spatial
interaction, and alternative choices can produce different spatial lags and
parameter estimates from the same data
\citep{StakhovychBijmolt2009,HarrisMoffatKravtsova2011,Juhl2020,Lee2022}.
Moreover, economically relevant proximity need not coincide with physical
distance \citep{ConleyLigon2002}. Spatial interaction may depend jointly and
nonlinearly on geography, income, population, migration, infrastructure,
trade, and other characteristics.

This paper develops a robust estimation and inference framework for spatial
autoregressive models in which the spatial weights matrix is constructed from
observed pair characteristics, while the functional form mapping those
characteristics into relative spatial weights is treated as a nuisance.
The characteristics entering the weights matrix, together with a maintained
admissible support, determine the economic content of the interaction
structure. Conditional on these inputs, however, the mapping from pair
characteristics into interaction intensity---including the rate at which
interaction decays with distance or dissimilarity---is learned flexibly using
double/debiased machine learning (DML; e.g.,
\citet{ChernozhukovEtAl2018}). The robustness claim therefore concerns the
functional form of the weighting rule, rather than unrestricted recovery of
the connectivity structure or the choice of characteristics entering the
weights matrix.

Rather than treating the unknown weighting function as a structural object
that must be recovered jointly with the SAR parameters, I take the
low-dimensional SAR coefficients as the inferential target and treat the
functional mapping from the observed pair characteristics into relative
weights as a learned nuisance component. Because this nuisance function
determines the interaction operator entering the structural SAR-IV moments, I
construct estimation and inference that are locally robust to errors in its
estimation.

This distinction matters because estimation of the weighting function changes
the structural moment itself. The resulting operator determines the spatially
lagged outcome and may also enter spatially transformed instruments.
Consequently, even when the weighting function is consistently estimated,
plugging the resulting spatial weights matrix into a conventional SAR-IV or
GMM procedure and subsequently treating it as known can leave a first-order
generated-operator effect. Learning the weighting rule and conducting
structural inference therefore cannot generally be separated into two
independent steps.

I address this generated-operator problem using Neyman orthogonality. Starting
from the original SAR-IV moment conditions, I construct an
operator-orthogonal score using a Riesz-representer correction
\citep{ChernozhukovEtAl2018,ChernozhukovNeweySingh2022}. The spatially lagged
outcome remains a structural regressor and its coefficient remains a parameter
of interest; what is treated as a nuisance is the functional component that
generates the operator used to construct that regressor. The correction
removes the leading sensitivity of the structural moments to estimation of
the weighting function and the remaining nuisance components. Flexible
learning therefore reduces dependence on a predetermined parametric weighting
rule, while orthogonalization protects inference on the SAR parameters against
the first-order effect of estimating that rule. As a result, inference depends
on products of nuisance estimation errors rather than requiring every nuisance
component to be estimated at the root-$n$ rate.

This also changes what must be learned accurately. The objective is not global
recovery of the function mapping pair characteristics into spatial weights.
What matters for inference on the SAR parameters is estimation error in those
directions of the weighting function that affect the spatial lag and the
spatially transformed instruments entering the target moment. Directions that
leave these target-relevant objects unchanged, including directions eliminated
by row normalization, need not be recovered precisely. The framework
therefore preserves the economic information contained in the chosen
weight-generating characteristics while reducing the inferential importance
of accurately specifying every feature of the functional form that maps those
characteristics into relative interaction intensity.

A further complication arises when the characteristics generating spatial
interaction are themselves endogenous. Variables such as income, employment,
migration, population, or trade may shape spatial connections while also being
related to unobserved determinants of the outcome. Building on the
control-function approach to endogenous spatial weights
\citep{QuLee2015,QuLeeYang2021}, I allow both the weighting function and the
associated control adjustment to be learned flexibly. Because bilateral
characteristics can transmit endogeneity through more than the own-unit
residual, the control index may also include predetermined local residual
summaries. Conceptually, the two issues are distinct: the control-function
component addresses endogeneity in the inputs used to construct the weights,
whereas the orthogonalization component addresses estimation of the
functional form mapping those inputs into relative interaction intensity.
Importantly, addressing endogeneity in the weight-generating characteristics
does not by itself guarantee validity of the resulting SAR instruments, so
instrument validity is maintained as a separate identifying requirement.

Spatial dependence creates an additional challenge for orthogonal inference.
Standard cross-fitting separates nuisance-training and score-evaluation
observations, but different folds need not be independent in a spatial cross
section. Moreover, an evaluated SAR score may directly use neighboring
outcomes and covariates, while spatial feedback can transmit local shocks
through paths of arbitrary length. Random sample splitting therefore does not
provide the usual independence argument underlying cross-fitted DML.

I address this problem through buffered spatial cross-fitting. Evaluation
observations are organized into geographically coherent blocks. Nuisance
training first excludes the raw-data footprint required to construct the
corresponding evaluation scores and then places an additional guard region
between that footprint and the training sample. The asymptotic argument
controls the remaining training-to-evaluation dependence along the nuisance
direction actually generated by the fold-specific learner rather than through
a worst-case requirement over all possible training-sample directions.

The spatial dependence analysis builds on the random-field and
near-epoch-dependence framework of
\citet{JenishPrucha2009,JenishPrucha2012}. Rather than assuming that the
globally simultaneous SAR outcome is itself spatially mixing, I impose
weak-dependence conditions on underlying innovations and derive local
approximations for the outcome, the random interaction operator, and the
spatial objects entering the orthogonal score. Under suitable stability,
learning-rate, and spatial-decay conditions, the feasible estimator has the
same first-order behavior as an oracle estimator that knows the relevant
nuisance objects.

This paper makes three main contributions. First, it formulates the
functional-form problem for spatial weights as one of robust structural
estimation and inference with a learned nuisance weighting function and
constructs an operator-orthogonal SAR-IV/GMM score that removes the leading
effect of its estimation error. Second, it combines flexible learning of the
weighting rule with a control-function treatment of endogenous
weight-generating characteristics while keeping the economic inputs to the
weights matrix and instrument validity conceptually separate from the
functional-form problem. Third, it develops buffered spatial cross-fitting and
an accompanying oracle-reduction argument for a spatially dependent cross
section in which the observed outcome is globally simultaneous. Together,
these results provide a modular framework in which the weighting function and
other nuisance components may be estimated flexibly, subject to the stated
target-relevant rate and regularity conditions, while inference remains
focused on the low-dimensional SAR parameters.

The empirical application illustrates both motivations for the framework using county-level diabetes prevalence in the contiguous United States. Conventional fixed-$W$ specifications imply spatial autoregressive coefficients near $0.70$, compared with about $0.44$ under the learned-$W$ plug-in estimator and about $0.20$ under the operator-orthogonal estimator. Spatial dependence therefore remains positive and statistically significant, but its estimated magnitude declines substantially when the interaction operator is learned flexibly and the first-order effects of that learning are incorporated. The results further show that even modest changes in the weighting structure can lead to economically meaningful differences in the estimated strength of spatial dependence.

The rest of the paper is organized as follows. In Section
\ref{subsec:related_literature}, I discuss the related literature and how it
relates to the proposed framework. In Section \ref{sec:theory}, I develop the
SAR model with a learned weighting function, the control-function
representation, identification, instrument conditions, and buffered spatial
cross-fitting. In Section \ref{sec:asymptotics}, I construct the
operator-orthogonal score and establish the large-sample properties of the
estimator under spatial dependence. I then conduct Monte Carlo simulations to
examine functional-form misspecification, endogenous weight-generating
characteristics, estimation error from learning the spatial interaction
operator, and alternative cross-fitting procedures. In Section
\ref{sec:empirics}, I provide a U.S.\ county-level health application that
illustrates how conclusions about spatial dependence can vary with the
functional form used to construct the interaction operator and with whether
the estimation effect from a learned operator is properly incorporated into
structural inference. In Section \ref{sec:conclusion}, I conclude.

\subsection{Related Literature}\label{subsec:related_literature}

This paper connects four strands of the literature: classical estimation and
inference in spatial autoregressive models, estimation and identification of
unknown or endogenous weighting rules, semiparametric inference with estimated
nuisance functions, and limit theory under spatial or network dependence. The
central distinction of the present framework is inferential. The observed
characteristics entering the spatial weights matrix determine the economic
content of spatial proximity, while the functional mapping from those
characteristics into relative interaction intensity is treated as a flexibly
learned nuisance component. Because this nuisance function determines an
operator entering the structural SAR-IV moments through both the spatial lag
and, when used, spatially transformed instruments, I construct an orthogonal
score that removes its first-order estimation effect on inference for the
low-dimensional SAR parameters. At the same time, the framework permits the
characteristics used to generate the weights to be endogenous and addresses
that endogeneity through a flexible control function. The resulting analysis
therefore combines flexible learning of the interaction operator, correction
for endogenous weight-generating characteristics, and target-oriented
orthogonal inference. Because the observations form a spatially dependent
cross section, these ingredients are further combined with buffered spatial
cross-fitting and a dependence argument adapted to the globally simultaneous
SAR outcome.

\paragraph{Spatial autoregression with unknown and endogenous weighting
rules.}

The classical spatial-econometric literature typically conditions on a
specified spatial weights matrix. Foundational work develops likelihood,
instrumental-variable, and GMM methods for spatial autoregressive models under
this maintained interaction structure; see, among others,
\citet{KelejianPrucha1998,KelejianPrucha1999,Lee2004,LinLee2010}. In empirical
applications, the weights matrix is commonly constructed from geographic
contiguity, physical distance, nearest-neighbor relations, or economic
measures of proximity \citep{Anselin1988,LeSagePace2009,ConleyLigon2002}.
Because empirical conclusions can be sensitive to this choice
\citep{StakhovychBijmolt2009,HarrisMoffatKravtsova2011,Juhl2020}, a related
literature treats the interaction structure itself as an object to be selected
or estimated. For example, \citet{LamSouza2020} estimate a spatial weights
matrix using combinations of candidate matrices together with a potentially
sparse adjustment, while \citet{AhrensBhattacharjee2015} estimate spatial
weights under sparsity restrictions.

The closely related network literature emphasizes that identification of
social or strategic interactions can itself depend on the structure of the
interaction network. \citet{BramoulleDjebbariFortin2009} characterize
identification of endogenous and contextual effects when interactions occur
through an observed network, while \citet{dePaulaRasulSouza2025} study
recovery of an otherwise unobserved interaction network from panel variation.
These papers address a different identification problem from the one studied
here. I maintain an admissible support and observed pair characteristics that
give the weights matrix its economic content, rather than attempting to
recover an unrestricted network from outcome variation alone. They
nevertheless make clear why the interaction structure should be viewed as an
econometric object rather than as an innocuous normalization chosen by the
researcher.

Most directly related on the unknown-weight side are papers that allow the
spatial weighting rule itself to be estimated flexibly. \citet{Sun2016}
develops a functional-coefficient SAR model with nonparametric spatial weights
and uses sieve and nonparametric GMM methods to estimate the unknown weighting
function. More recently, \citet{GuptaQuZhang2026} develop a
semi-nonparametric framework for spatial dynamic panel models in which spatial
weights in several channels are unknown functions of underlying economic
distances and are estimated by sieve GMM together with the finite-dimensional
parameters. These papers establish that an unknown functional rule for spatial
interaction can be incorporated into structural spatial estimation and
provide the closest semiparametric benchmarks for flexible learning of the
weighting rule.

The present paper differs from this literature in both the source of
uncertainty and the inferential objective. First, the pair characteristics
entering the interaction rule may contain non-predetermined socioeconomic
characteristics, so the inputs used to construct the learned operator may be
endogenous. Second, the unknown weighting function is treated as a nuisance
for inference on the low-dimensional SAR parameters rather than as an object
that must be recovered globally with first-order accuracy. I derive the
sensitivity of the SAR-IV moment to perturbations of the learned interaction
function and construct an operator-orthogonal score that removes the leading
generated-$W$ effect. The relevant metric for learning the weighting function
is therefore target-specific: what matters is how estimation error propagates
into the spatial lag and spatially transformed instruments, rather than global
recovery of the primitive function itself. The analysis is also developed for
a dependent cross section and uses buffered spatial cross-fitting to separate
nuisance learning from evaluation in the presence of spatial dependence.

A separate but closely related literature addresses endogeneity of the spatial
weights matrix or of the variables used to construct it. \citet{QuLee2015}
and \citet{QuLeeYang2021} show that when weight-generating characteristics are
endogenous, their relationship with the structural disturbance must be
addressed explicitly. In particular, \citet{QuLeeYang2021} develop a
control-function approach for SAR models whose weights are constructed from
bilateral variables. \citet{LinSong2025} provide a complementary recent
approach: they develop an instrument-free semiparametric copula method for SAR
models with an endogenous spatial weights matrix, endogenous regressors, or
both, and estimate the structural and copula components jointly by sieve
maximum likelihood.

I retain the control-function insight of the endogenous-weight literature but
separate the endogeneity problem from the functional-form problem. The control
function addresses dependence between the structural disturbance and the
non-predetermined characteristics entering the weighting rule. Conditional on
those observed pair characteristics and the maintained admissible support, the
mapping from pair characteristics into relative interaction intensity remains
unknown and is learned flexibly. Operator orthogonalization then addresses the
first-order inferential effect of estimating this mapping. Thus, relative to
the copula route of \citet{LinSong2025}, the present framework retains an
IV/control-function structure and focuses on generated-operator uncertainty;
relative to \citet{QuLee2015,QuLeeYang2021}, it additionally treats the
mapping from observed pair characteristics into relative interaction intensity
as an unknown function rather than taking the weight-construction rule as a
maintained feature of the model. These components have different
roles: the control function addresses endogeneity in the inputs to the
weights matrix, while the operator-orthogonal score addresses estimation of
the function generating the weights. Instrument validity for the SAR equation
remains a separate identifying restriction.

\paragraph{Orthogonal inference and the joint sieve-GMM alternative.}

The semiparametric literature provides a natural alternative starting point.
More generally, first-step nonparametric estimation can affect the asymptotic
distribution of a finite-dimensional estimator through its first-order
influence on the estimating equation \citep{Newey1994}. \citet{AiChen2003}
show that finite-dimensional parameters and unknown functions satisfying
conditional moment restrictions can be estimated jointly by sieve minimum
distance while retaining root-$n$ asymptotic normality for the
finite-dimensional component. \citet{ChenPouzo2012,ChenPouzo2015} develop
estimation and inference for broad classes of semi- and nonparametric
conditional moment models, including settings involving regularization and
ill-posed inverse problems.

This literature raises a natural question in the present setting. Because the
structural and nuisance components can be estimated jointly, one could in
principle conduct inference on the SAR parameters directly from a joint
semiparametric estimator using sieve methods. Indeed, the flexible-weight
estimators of \citet{Sun2016} and \citet{GuptaQuZhang2026} demonstrate that
joint estimation of structural parameters and an unknown spatial weighting
rule is a viable strategy. Orthogonalization is therefore not necessary in
principle for root-$n$ inference, and the contribution of this paper is not a
claim to the contrary.

The motivation for orthogonalization is instead that it changes the
inferential burden associated with estimation of the unknown weighting
function and other nuisance components. Under direct joint sieve inference,
estimation error in these components generally enters the first-order behavior
of the target estimator and must be characterized jointly with the
finite-dimensional parameter. In the present setting this issue is especially
important because an error in the weighting function changes an operator that
enters both the endogenous spatial regressor and, when used, spatially
transformed instruments. The resulting generated-$W$ effect is therefore not
an ordinary scalar first-stage perturbation.

The orthogonal score removes the first-order sensitivity of the target moment
to these nuisance perturbations. After orthogonalization, nuisance estimation
affects the target through second-order rate products together with vanishing
Riesz-approximation, localization, and spatial-leakage terms. This permits
regularized and cross-fitted nuisance estimators to be chosen subject to
target-relevant rate conditions rather than requiring their first-order
estimation error to be carried directly into inference on the SAR
coefficients. The advantage of orthogonalization is therefore not that joint
sieve inference is incapable of delivering root-$n$ inference. Rather, it
reorganizes the problem around the directions of nuisance error that matter
for the structural target and makes the inferential procedure more modular
with respect to the choice of nuisance learner.

This distinction is especially useful for the function generating the spatial
interaction operator. Precise global recovery of that function is not
necessary for inference on the SAR parameters. What matters is estimation
error in the directions through which the weighting function affects the
spatial lag and the spatially transformed instruments entering the target
moment. The Riesz representation isolates this target-relevant sensitivity
and converts it into an orthogonal correction. This does not eliminate
regularization or inverse problems altogether, since estimation of the
relevant Riesz representer may itself require regularization. Rather, it
replaces inference based on unrestricted first-order propagation of the full
joint nuisance error with a target-specific orthogonalization problem.

The construction is closely related to locally robust GMM
\citep{ChernozhukovEscancianoIchimuraNeweyRobins2022}, double/debiased machine
learning \citep{ChernozhukovEtAl2018}, and regularized Riesz representations
\citep{ChernozhukovNeweySingh2022}. These approaches provide the general
semiparametric logic for reducing first-order sensitivity to estimated
nuisance functions. The spatial problem considered here requires deriving that
correction for a row-normalized interaction operator whose perturbation changes
both the endogenous spatial regressor and, when used, spatially transformed
instruments. Related work by \citet{ChernozhukovHuangWang2026} develops
debiased regularized inference for high-dimensional spatial panel networks.
Their focus is uniform inference in a high-dimensional, sparsely represented
network, whereas this paper targets low-dimensional SAR coefficients when
interaction intensity is generated by a smooth, potentially endogenous
pairwise function.

\paragraph{Orthogonal inference under spatial and network dependence.}

Cross-fitting creates an additional issue in a spatial cross section because
ordinary sample splitting does not generally separate statistically
independent observations. Extensions of DML to dependent sampling include
multiway clustering \citep{ChiangKatoMaSasaki2022}, dyadic dependence
\citep{ChiangMaRodrigueSasaki2026}, time-series dependence
\citep{CiganovicEtAl2026}, and locally dependent networks
\citep{EmmeneggerEtAl2025}. These approaches share the principle that the
training and evaluation samples must be separated in a manner compatible with
the relevant dependence structure.

The spatial and network asymptotic literature provides complementary tools.
The random-field and near-epoch-dependence results of
\citet{JenishPrucha2009,JenishPrucha2012} provide laws of large numbers and
central limit theory for spatial processes under weak dependence.
\citet{KojevnikovMarmerSong2021} develop limit theory and HAC inference for
network-dependent random variables, allowing dependence to decay with network
distance while accounting for network density. These results emphasize that
the relevant notion of separation depends on the structure through which
dependence propagates.

The SAR setting considered here requires an additional step because geographic
separation does not make the observed outcome field independent. The spatial
multiplier transmits innovations through paths of arbitrary length, and the
interaction operator governing these paths is itself random and estimated. I
therefore use buffered spatial cross-fitting. For each evaluation block, the
procedure first excludes the raw-data footprint required to construct its
score and then places an additional guard region between that footprint and
the sample used to estimate the nuisance functions. The resulting
oracle-reduction argument controls the remaining training-to-evaluation
dependence along the fold-specific nuisance direction rather than requiring
independence of the observed SAR outcomes across folds.

The asymptotic analysis consequently imposes weak-dependence conditions on the
underlying innovation field and obtains local approximations to the outcome,
the random interaction operator, and the spatial objects entering the
orthogonal score. Under sufficiently fast spatial decay, growing guard regions
make the remaining cross-fit leakage asymptotically negligible while retaining
a nondegenerate estimation sample. For inference under residual spatial
dependence, the framework can be combined with spatial or network HAC methods
\citep{KelejianPrucha2007,KimSun2011,KojevnikovMarmerSong2021}.

\section{Theoretical Framework}\label{sec:theory}

This section develops the econometric framework in a sequence of steps. I first introduce the basic cross-sectional SAR model with an unknown interaction structure. Adopting the endogeneity structure of \citet{QuLee2015}, I model endogeneity as arising from correlation between unobserved determinants of the outcome and unobserved determinants of the non-predetermined characteristics used to generate the spatial weights. I account for this relationship using a flexible control function. I then learn the interaction structure from observed pair characteristics without imposing a fixed distance-decay rule. Because the learned structure enters several parts of estimation, the procedure is designed so that small learning errors do not have a leading effect on the main parameters. Finally, I address spatial dependence through buffered spatial sample splitting and combine these components for estimation and spatially robust inference.

\paragraph{Notation.}

$A^{\prime}$ denotes the transpose of a matrix or vector $A$. Let $I_n$
denote the $n\times n$ identity matrix and $\mathbf 1_n$ the $n\times1$
vector of ones. For a vector $a$, $\|a\|_2$ denotes the Euclidean norm. For
a matrix $A$, $\|A\|_F$, $\|A\|_1$, and $\|A\|_{\infty}$ denote the
Frobenius, maximum absolute column-sum, and maximum absolute row-sum norms.
For an $n$-vector of random variables
$a=(a_1,\ldots,a_n)^{\prime}$, define
\[
\|a\|_{2,n}
=
\left[
\frac{1}{n}
E\left(\|a\|_2^2\right)
\right]^{1/2}.
\]
Expectation and probability are denoted by $E[\cdot]$ and $P(\cdot)$.
Convergence in probability and convergence in distribution are written as
$\xrightarrow{p}$ and $\xrightarrow{d}$. A subscript $0$ denotes the true
population value of a parameter or function, while a hat denotes its
estimator.

\subsection{Cross-Sectional SAR Model}\label{subsec:sar}

Suppose we observe one cross section of $n$ units indexed by $i=1,\ldots,n$. For each
unit, $Y_i$ is a scalar outcome and $X_i\in\mathbb R^p$ is a vector of
regressors. Define
\[
Y
=
\begin{pmatrix}
Y_1 & \cdots & Y_n
\end{pmatrix}^{\prime}
\in\mathbb R^n
\]
and
\[
X
=
\begin{pmatrix}
X_1^{\prime}\\
\vdots\\
X_n^{\prime}
\end{pmatrix}
\in\mathbb R^{n\times p}.
\]

Throughout, $X_i$ excludes an intercept. This normalization is useful because
the flexible control function introduced below contains an unrestricted level,
so a constant regressor would be annihilated by the conditional
residualization used for local identification. More generally, the local
separation condition below rules out target-regressor directions that are
indistinguishable from functions of the control index.

The baseline structural SAR model is
\[
Y
=
\rho_0W_0Y
+
X\beta_0
+
\varepsilon,
\]
where
\[
\rho_0\in\mathbb R,
\qquad
\beta_0\in\mathbb R^p,
\qquad
W_0\in\mathbb R^{n\times n},
\]
and
\[
\varepsilon
=
\begin{pmatrix}
\varepsilon_1 & \cdots & \varepsilon_n
\end{pmatrix}^{\prime}
\]
is the composite structural disturbance. At this stage, I do not require the
characteristics used to generate $W_0$ to be exogenous with respect to
$\varepsilon$. Subsections \ref{subsec:trigger} and \ref{subsec:control}
introduce these characteristics and decompose the composite disturbance using
a flexible control function.

The finite-dimensional parameter of interest is
\[
\theta_0
=
\begin{pmatrix}
\rho_0 &
\beta_0^{\prime}
\end{pmatrix}^{\prime}
\in\mathbb R^{d_\theta},
\qquad
d_\theta=p+1.
\]

I maintain the conventional normalization
\[
w_{ii,0}=0,
\qquad
w_{ij,0}\geq0,
\qquad
\sum_{j=1}^{n}w_{ij,0}=1.
\]
Thus $W_0$ is a row-stochastic spatial interaction operator.

\subsection{Endogenous Weight-Generating Characteristics}
\label{subsec:trigger}

Let
\[
Z_i\in\mathbb R^{d_Z}
\]
denote socioeconomic characteristics that determine spatial interaction.
These characteristics need not be exogenous with respect to the composite
structural disturbance $\varepsilon_i$. In particular, the component of
$Z_i$ not explained by predetermined information may be associated with
unobserved determinants of the outcome. Because the interaction
characteristics are bilateral, the relevant dependence may also involve
residual components of other units entering those interactions. Let
\[
V_i
=
\begin{pmatrix}
Q_i^{\prime} &
X_i^{\prime}
\end{pmatrix}^{\prime},
\]
where $Q_i$ contains excluded or predetermined first-stage shifters. I
specify
\[
Z_i
=
m_0(V_i)+U_i,
\qquad
E[U_i\mid V_i]=0.
\]

The function
\[
m_0:\mathcal V\rightarrow\mathbb R^{d_Z}
\]
is left unrestricted within a sufficiently regular function class and may be
estimated flexibly. The residual $U_i$ contains the component of the
weight-generating characteristics not explained by the first-stage
information and will enter the control function below.

\paragraph{Remark (Structural target versus first-stage nuisance).}

The parametric treatment of $X_i^{\prime}\beta_0$ in the outcome equation
and the flexible treatment of $m_0(V_i)$ in the equation for $Z_i$ serve
different purposes. The coefficients $\beta_0$, together with the spatial
autoregressive parameter $\rho_0$, are components of the finite-dimensional
structural parameter on which inference is conducted. I therefore maintain
the linear specification $X_i^{\prime}\beta_0$ as part of the structural
outcome equation. By contrast, $m_0(V_i)$ is an auxiliary first-stage object
whose role is to separate the component of $Z_i$ explained by predetermined
information from the residual variation used to construct the control
function. Its functional form is not itself an object of inference and is
therefore left flexible. This distinction is a modeling choice tied to the
target of inference rather than a requirement that the outcome and
weight-generating equations have the same degree of functional flexibility.
A specification that also treats the effect of $X_i$ on the outcome
nonparametrically would constitute a different semiparametric model with a
correspondingly different target parameter.

\subsection{Flexible Control Function}\label{subsec:control}

I use the first-stage residuals from the weight-generating characteristics to
control for their endogeneity. Because pairwise interaction characteristics
may depend on both $Z_i$ and $Z_j$, this endogeneity need not operate
exclusively through the own-unit residual $U_i$. Let $L_C<\infty$ denote the number of predetermined
local residual summaries. For $r=1,\ldots,L_C$, let
$\kappa_{ij,n}^{C,r}$ satisfy
\[
\kappa_{ii,n}^{C,r}=0,
\qquad
\sup_{i,r}
\sum_{j\neq i}
|\kappa_{ij,n}^{C,r}|
\leq C.
\]

For a candidate first-stage function $m$, define
\[
U_i(m)
=
Z_i-m(V_i)
\]
and
\[
\overline U_{i,r}(m)
=
\sum_{j\neq i}
\kappa_{ij,n}^{C,r}
U_j(m).
\]
The control index is
\[
C_i(m)
=
c
\left(
U_i(m),
\overline U_{i,1}(m),
\ldots,
\overline U_{i,L_C}(m)
\right).
\]
At the truth,
\[
C_{i0}=C_i(m_0).
\]

Let $\mathcal P_i$ collect the predetermined and excluded variables used to
construct the target and nuisance instruments for unit $i$, including the
relevant components of $X$, $Q$, and predetermined bilateral variables
$D_{ij}$ over the maintained score footprint. Define
\[
\mathcal A_i
=
\sigma(\mathcal P_i).
\]
This information set is fixed independently of the structural disturbance
and the realized target moment.

Let $h_0(\cdot)$ be an unknown control function and decompose the composite
structural disturbance from Subsection \ref{subsec:sar} as
\[
\varepsilon_i
=
h_0(C_{i0})
+
\xi_i,
\]
where $h_0(C_{i0})$ captures the component of the structural disturbance
associated with the endogenous variation in the weight-generating
characteristics $Z_i$, and $\xi_i$ is the remaining structural innovation.

\begin{assumption}[Control-function sufficiency]
\label{ass:control_function}

The remaining structural innovation satisfies
\[
E
\left[
\xi_i
\mid
C_{i0},
\mathcal A_i
\right]
=
0.
\]
Equivalently,
\[
E
\left[
\varepsilon_i
\mid
C_{i0},
\mathcal A_i
\right]
=
h_0(C_{i0}).
\]

\end{assumption}

The assumption says that the part of the original structural disturbance
associated with the endogenous weight-generating characteristics is captured
by the residual summaries collected in $C_{i0}$. Once those controls and the
predetermined information are held fixed, the remaining innovation has
conditional mean zero. The control index may contain both the own-unit first-stage residual and
predetermined summaries of nearby residuals, so endogeneity generated through
bilateral interaction characteristics need not be reduced to an own-unit
control function.

A stronger primitive condition that is sufficient for Assumption
\ref{ass:control_function} is
\[
\varepsilon_i
=
h_0
\left(
U_i,
\Lambda_{i1},
\ldots,
\Lambda_{iL_C}
\right)
+
\xi_i,
\]
where
\[
\Lambda_{ir}
=
\sum_{j\neq i}
\kappa_{ij,n}^{C,r}U_j
\]
and
\[
E
\left[
\xi_i
\mid
U_1,\ldots,U_n,\mathcal A_i
\right]
=
0.
\]
The stronger full-residual-field condition will also provide a convenient
primitive sufficient condition for validity of spatially transformed
instruments below.

The control-function condition is therefore a finite-index sufficiency
restriction on the residual field. The control weights are predetermined
because the restriction concerns which residual summaries are sufficient for
the endogeneity channel; they need not coincide with the structural
interaction weights. In empirical work, robustness can be assessed using
several predetermined geographic and socioeconomic residual summaries.

Substituting the control-function decomposition into the baseline SAR model
gives
\[
Y
=
\rho_0W_0Y
+
X\beta_0
+
h_0(C_0)
+
\xi,
\]
where
\[
h_0(C_0)
=
\begin{pmatrix}
h_0(C_{10}) & \cdots & h_0(C_{n0})
\end{pmatrix}^{\prime}.
\]
The next subsection represents the unknown operator as
$W_0=W_n(g_0)$.

\subsection{Functional Learning of the Spatial Interaction Operator}
\label{subsec:functional_W}

For each ordered pair $i\neq j$, define
\[
R_{ij}
=
r(Z_i,Z_j,D_{ij})
\in\mathbb R^{d_R},
\]
where $D_{ij}$ contains predetermined bilateral information such as
geographic distance, contiguity, transportation cost, or other economically
meaningful measures of separation. The map $r(\cdot)$ specifies the observed
pair characteristics supplied to the learner; the unknown object is the
function mapping these characteristics into relative interaction strength.

Let
\[
S_{ij,n}\in\{0,1\},
\qquad
S_{ii,n}=0,
\]
denote a predetermined candidate-support indicator. The maintained primitive
implementation uses a sufficiently broad but spatially local candidate
support. Specifically, there exists a deterministic sequence $a_{S,n}$ such
that
\[
S_{ij,n}=0
\qquad
\text{whenever }
d_{ij}^{*}>a_{S,n}.
\]
The benchmark theory allows $a_{S,n}$ to remain bounded or to increase slowly
with $n$, subject to the locality conditions below. The support restriction
determines which interactions are economically feasible but imposes no
parametric decay function within that candidate set.

This distinction is deliberate. The framework does not attempt to estimate an
unrestricted $n\times n$ matrix with no structure. Instead, it learns the
interaction strengths within a maintained admissible support while leaving
their functional dependence on observed pair characteristics flexible. Dense
or complete candidate networks may be considered only if spatial locality of
the induced operator is verified directly. The primitive B-spline and NED
results developed below are stated for spatially local candidate support.

Let
\[
g_0:\mathcal R\rightarrow\mathbb R
\]
be an unknown interaction-score function. Define
\begin{equation}
K_{ij}(g)
=
S_{ij,n}\exp\{g(R_{ij})\},
\qquad
i\neq j,
\label{eq:kernel}
\end{equation}
and set $K_{ii}(g)=0$. Every row is assumed to contain at least one
admissible neighbor.

The associated spatial weight is
\[
w_{ij}(g)
=
\frac{K_{ij}(g)}
{\sum_{\ell\neq i}K_{i\ell}(g)},
\qquad
i\neq j,
\]
with $w_{ii}(g)=0$. Denote
\[
W_n(g)
=
[w_{ij}(g)]_{i,j=1}^{n}.
\]
At the truth,
\[
W_0=W_n(g_0).
\]

The exponential transformation in \eqref{eq:kernel} guarantees positivity on
the candidate support but does not impose a negative-exponential distance
decay function. Conventional specifications such as
\[
-\alpha d_{ij},
\qquad
-\alpha\log d_{ij},
\qquad
\gamma_1d_{ij}+\gamma_2d_{ij}^2
\]
are nested as special cases.

Because the weights are normalized row by row, $g_0$ itself is not the
economic target. Define
\[
[g]_n
=
\left\{
\widetilde g:
W_n(\widetilde g)=W_n(g)
\right\}.
\]
The identified object is the induced interaction operator $W_n(g_0)$ rather
than a particular representative of $g_0$.

For a sieve representation
\[
g(r)=p_{J_n}(r)^{\prime}\gamma,
\]
define the population local null space by
\[
\mathcal N_{J_n,n}
=
\left\{
v\in\mathbb R^{J_n}:
\frac{1}{n}
E
\left\|
D_gW_n(g_{0,J_n})
[p_{J_n}(\cdot)^{\prime}v]
\right\|_F^2
=
0
\right\}.
\]
Because the integrand is nonnegative, this is equivalent to requiring the
corresponding first-order change in the random operator to equal zero almost
surely. This $L^2$ definition makes the population nature of the null space
explicit even though $W_n(g_{0,J_n})$ depends on the realized
weight-generating characteristics. A theoretical canonical representative may
be described by
\[
\gamma\perp\mathcal N_{J_n,n}.
\]
This population normalization is used only to characterize locally identified
directions. It is not required to be known in computation. In implementation,
whenever the sample criterion admits observationally equivalent sieve
coefficients, I select the minimum-Euclidean-norm element of the set of sample
minimizers. Thus feasibility does not require knowledge of the population
null space.

\begin{definition}[Functional-form robustness]
\label{def:functional_robustness}

Let $\mathcal G$ denote an admissible class of interaction functions.
Inference for $\theta_0$ is functional-form robust over $\mathcal G$ if its
asymptotic validity does not require $g_0$ to belong to a predetermined
finite-dimensional spatial-decay family and remains valid for every sequence
of data-generating processes with $g_0\in\mathcal G$ satisfying the stated
identification, spatial-locality, smoothness, and learning-rate conditions.

\end{definition}

Functional-form robustness means that the researcher need not decide in
advance that interaction strength must decline linearly, exponentially, or
according to another particular parametric distance function. The admissible
support, observed pair characteristics, normalization, and regularity
conditions remain maintained features of the model. The robustness claim is
therefore about the functional form generating relative weights within this
structured class, not about unrestricted estimation of every entry of $W_0$.

Definition \ref{def:functional_robustness} is consequently a
functional-specification statement. It does not claim uniform inference over
an unrestricted nonparametric universe. Corollary
\ref{cor:uniform_functional_robustness} below records the corresponding
sequence-wise implication over a common regularity class. In particular, when
$W_0$ is learned from the SAR equation, the maintained class excludes
sequences with $|\rho_0|\rightarrow0$.

\subsubsection{Derivative and target-relevant metric for $g$}

For an admissible perturbation $\delta g$, define
\[
D_gW_n(g)[\delta g]
=
\left.
\frac{\partial}{\partial t}
W_n(g+t\delta g)
\right|_{t=0}.
\]

\begin{proposition}[Derivative of the learned spatial weights]
\label{prop:w_derivative}

For a supported pair $i\neq j$,
\[
D_gw_{ij}(g)[\delta g]
=
w_{ij}(g)
\left[
\delta g(R_{ij})
-
\sum_{\ell\neq i}
w_{i\ell}(g)\delta g(R_{i\ell})
\right].
\]
For unsupported pairs,
\[
D_gw_{ij}(g)[\delta g]=0.
\]
Consequently,
\[
D_g\{W_n(g)Y\}_i[\delta g]
=
\sum_{j\neq i}
w_{ij}(g)
\left[
Y_j-\{W_n(g)Y\}_i
\right]
\delta g(R_{ij}).
\]

\end{proposition}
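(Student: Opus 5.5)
The plan is to compute the Gateaux derivative row by row directly from the softmax-type definition of $w_{ij}(g)$. Fix a row $i$ and write $K_{ij}(t)=S_{ij,n}\exp\{g(R_{ij})+t\,\delta g(R_{ij})\}$ and $N_i(t)=\sum_{\ell\neq i}K_{i\ell}(t)$. The row denominator $N_i(0)$ is strictly positive because every row has at least one admissible neighbor, so it stays positive for $t$ near zero. Each $K_{ij}(t)$ is a smooth function of the scalar $t$. Hence $w_{ij}(t)=K_{ij}(t)/N_i(t)$ is differentiable at $t=0$, and the derivative can be obtained by the quotient rule. Because the row sum is finite, no interchange-of-limits issue arises.

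\textbf{Step 1: supported pairs.} For a supported pair ($S_{ij,n}=1$) we have $\partial_t K_{ij}(t)|_{t=0}=K_{ij}(0)\,\delta g(R_{ij})$. For the denominator, $\partial_t N_i(t)|_{t=0}=\sum_{\ell\neq i}K_{i\ell}(0)\,\delta g(R_{i\ell})$, where unsupported $\ell$ contribute zero. The quotient rule then gives
\[
\frac{K_{ij}\delta g(R_{ij})}{N_i}-\frac{K_{ij}}{N_i}\sum_{\ell\neq i}\frac{K_{i\ell}}{N_i}\delta g(R_{i\ell}).
\]
Substituting $w_{i\ell}(g)=K_{i\ell}/N_i$ yields the stated formula.

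\textbf{Step 2: unsupported pairs and the diagonal.} For unsupported pairs, $K_{ij}(t)\equiv0$ for all $t$, so $w_{ij}(t)\equiv0$ and the derivative is zero. The diagonal entries are identically zero, so their derivatives vanish as well.

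\textbf{Step 3: the spatial lag.} Because $Y$ does not depend on $g$, linearity of differentiation gives
\[
D_g\{W_n(g)Y\}_i[\delta g]=\sum_{j\neq i}D_gw_{ij}(g)[\delta g]\,Y_j.
\]
Inserting the formula from Step 1 produces two terms:
\[
\sum_j w_{ij}Y_j\delta g(R_{ij})-\Big(\sum_j w_{ij}Y_j\Big)\Big(\sum_\ell w_{i\ell}\delta g(R_{i\ell})\Big).
\]
The first factor of the second term is $\{W_n(g)Y\}_i$. Relabeling the summation index $\ell$ as $j$ and collecting terms gives $\sum_{j\neq i}w_{ij}[Y_j-\{W_n(g)Y\}_i]\delta g(R_{ij})$. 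The unsupported $j$ can be included freely because $w_{ij}=0$ for them.

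The only point needing care is the positivity of the row normalizer, which the admissibility assumption guarantees. Otherwise the argument is a routine calculation with no real obstacle. I would also note that if $\delta g$ is constant, or more generally constant within row $i$'s support, the derivative vanishes. This is consistent with the row-normalization invariance built into $[g]_n$.
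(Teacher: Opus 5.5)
Your proposal is correct and follows essentially the same route as the paper's proof: you differentiate the row-normalized exponential weights along $g+t\,\delta g$ with the quotient rule, treat unsupported pairs as identically zero, and then apply the derivative to the fixed realized vector $Y$ to obtain the spatial-lag formula. Your added observations on strict positivity of the row normalizer and on the vanishing of the derivative for rowwise-constant $\delta g$ are sound and consistent with the paper's treatment of $[g]_n$.
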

\noindent\textbf{Proof.} See Appendix \ref{app:proof_w_derivative}.

Changing $g$ at one pair does not change only that pair's weight because each
row of $W_n(g)$ must continue to sum to one. The derivative therefore has a
direct component and a row-normalization component. The final expression shows
exactly how an error in the learned interaction function changes the spatial
lag. This derivative is the source of the generated-$W$ effect corrected by
the orthogonal score below.

The convergence rate required for inference concerns those directions of $g$
that affect the target score rather than an arbitrary global norm for the
primitive function. Let
\[
\mathcal D_{W,i}[\delta g]
=
D_g\{W_n(g_0)Y\}_i[\delta g]
\]
and, for the target instrument vector introduced in Subsection
\ref{subsec:moments},
\[
\mathcal D_{H,i}[\delta g]
=
D_gH_i(g_0)[\delta g].
\]
Define
\[
\|\delta g\|_{\mathcal G,\mathrm{tar},n}
=
\left[
\frac{1}{n}
\sum_{i=1}^{n}
E
\left\{
|\mathcal D_{W,i}[\delta g]|^2
+
\|\mathcal D_{H,i}[\delta g]\|_2^2
\right\}
\right]^{1/2}.
\]
The quotient distance is
\[
\operatorname{dist}_{\mathcal G,\mathrm{tar}}
(g,[g_0]_n)
=
\inf_{\widetilde g\in[g_0]_n}
\|g-\widetilde g\|_{\mathcal G,\mathrm{tar},n}.
\]

The target-relevant metric is the norm that enters identification and the
first derivative of the SAR score. Second-order expansions require a slightly
stronger local envelope because the Hessian of the row-normalized weight map
contains products of perturbations. Let
\[
\|\delta g\|_{\mathcal G,+,n}
=
\|\delta g\|_{\mathcal G,\mathrm{tar},n}
+
\|\delta g\|_{\infty,\mathcal R_n},
\]
where $\|\cdot\|_{\infty,\mathcal R_n}$ is the essential supremum over the
maintained supported pair-characteristic domain, and define the corresponding
quotient distance
\[
\operatorname{dist}_{\mathcal G,+}
(g,[g_0]_n)
=
\inf_{\widetilde g\in[g_0]_n}
\|g-\widetilde g\|_{\mathcal G,+,n}.
\]
The stronger norm is used only to control products in second-order
remainders. Identification and first-order functional-form robustness remain
defined by the target-relevant action of $g$.

\begin{assumption}[Local smoothness of the operator map]
\label{ass:operator_map_smoothness}

There exists a neighborhood $\mathcal G_n^0$ of $[g_0]_n$ such that
$g\mapsto W_n(g)Y$ and $g\mapsto H_i(g)$ are twice Gateaux differentiable on
$\mathcal G_n^0$. Uniformly over $g\in\mathcal G_n^0$, their second
derivatives satisfy
\[
\left\|
D_g^2\{W_n(g)Y\}
[\delta g_1,\delta g_2]
\right\|_{2,n}
\leq
C\,\mathfrak b_n(\delta g_1,\delta g_2)
\]
and
\[
\left[
\frac{1}{n}
\sum_{i=1}^{n}
E
\left\|
D_g^2H_i(g)
[\delta g_1,\delta g_2]
\right\|_2^2
\right]^{1/2}
\leq
C\,\mathfrak b_n(\delta g_1,\delta g_2),
\]
where
\[
\mathfrak b_n(\delta g_1,\delta g_2)
=
\|\delta g_1\|_{\mathcal G,\mathrm{tar},n}
\|\delta g_2\|_{\mathcal G,+,n}
+
\|\delta g_2\|_{\mathcal G,\mathrm{tar},n}
\|\delta g_1\|_{\mathcal G,+,n}.
\]

\end{assumption}

The strengthened envelope is deliberate. The second derivative of a
row-normalized exponential weight is a centered bilinear form in
$\delta g_1(R_{ij})$ and $\delta g_2(R_{ij})$. An $L^2$ bound for its action
therefore generally requires an $L^\infty$ or comparable $L^4$ envelope on
one perturbation. Assumption \ref{ass:operator_map_smoothness} states this
requirement explicitly rather than treating an $L^2$ target norm as if it
were closed under multiplication. For spline learners the additional
supremum-norm control is supplied below.

\begin{proposition}[Target-relevant transfer from $g$ to $W$]
\label{prop:g_to_W_transfer}

Under Assumption \ref{ass:operator_map_smoothness}, uniformly for
$g\in\mathcal G_n^0$,
\[
\left\|
\{W_n(g)-W_n(g_0)\}Y
\right\|_{2,n}
\leq
C
\left[
\operatorname{dist}_{\mathcal G,\mathrm{tar}}
(g,[g_0]_n)
+
\operatorname{dist}_{\mathcal G,\mathrm{tar}}
(g,[g_0]_n)
\operatorname{dist}_{\mathcal G,+}
(g,[g_0]_n)
\right]
\]
and
\[
\left[
\frac{1}{n}
\sum_{i=1}^{n}
E
\|H_i(g)-H_i(g_0)\|_2^2
\right]^{1/2}
\leq
C
\left[
\operatorname{dist}_{\mathcal G,\mathrm{tar}}
(g,[g_0]_n)
+
\operatorname{dist}_{\mathcal G,\mathrm{tar}}
(g,[g_0]_n)
\operatorname{dist}_{\mathcal G,+}
(g,[g_0]_n)
\right].
\]
Consequently, if
\[
\operatorname{dist}_{\mathcal G,\mathrm{tar}}
(\widehat g,[g_0]_n)
=
O_p(r_{g,\mathrm{tar},n}),
\qquad
\operatorname{dist}_{\mathcal G,+}
(\widehat g,[g_0]_n)
=
O_p(r_{g,+,n}),
\]
with $r_{g,+,n}=o(1)$, then
\[
r_{W,n}
=
O_p(r_{g,\mathrm{tar},n}),
\qquad
r_{H,n}
=
O_p(r_{g,\mathrm{tar},n}).
\]

\end{proposition}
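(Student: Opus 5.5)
The argument is a second-order Taylor expansion along the segment joining $g$ to an equivalent representative of $g_0$. It combines the first-derivative bound, which is the target-relevant metric by definition, with the second-derivative envelope of Assumption \ref{ass:operator_map_smoothness}.

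\emph{Step 1: reduce to a representative.} Fix $g\in\mathcal G_n^0$ and choose $\widetilde g\in[g_0]_n$ with
\[
\|g-\widetilde g\|_{\mathcal G,\mathrm{tar},n}\le 2\operatorname{dist}_{\mathcal G,\mathrm{tar}}(g,[g_0]_n).
\]
A single $\widetilde g$ should nearly attain both infima, at least up to constants. One way to get this is to pick $\widetilde g$ near-optimal for the sum $\|\cdot\|_{\mathcal G,\mathrm{tar},n}+\|\cdot\|_{\mathcal G,+,n}$, which is comparable to the $+$ norm. This coupling is a point I would check carefully; if necessary I would state the bound with a common representative. By definition of $[g_0]_n$, $W_n(\widetilde g)=W_n(g_0)$. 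Row normalization makes $H_i$ a function of the operator, so also $H_i(\widetilde g)=H_i(g_0)$. Hence it suffices to bound $\{W_n(g)-W_n(\widetilde g)\}Y$ and $H_i(g)-H_i(\widetilde g)$ with $\delta g=g-\widetilde g$.

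\emph{Step 2: expand.} Set $\phi(t)=W_n(\widetilde g+t\delta g)Y$, assuming the segment lies in the neighborhood $\mathcal G_n^0$ (shrinking it to a convex set if needed). Twice Gateaux differentiability gives the integral-form Taylor formula
\[
\phi(1)-\phi(0)=D_g\{W_n(\widetilde g)Y\}[\delta g]+\int_0^1(1-t)\,D_g^2\{W_n(\widetilde g+t\delta g)Y\}[\delta g,\delta g]\,dt.
\]
The same formula holds for $H_i$.

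\emph{Step 3: bound the two terms.}
\begin{itemize}
\item The linear term at $\widetilde g$ is the same as at $g_0$. Proposition \ref{prop:w_derivative} shows that $D_gW_n(g)$ depends on $g$ only through $W_n(g)$: it is the centered expression $w_{ij}[\delta g(R_{ij})-\sum_\ell w_{i\ell}\delta g(R_{i\ell})]$. So its $\|\cdot\|_{2,n}$ norm is exactly the $\mathcal D_{W}$ part of $\|\delta g\|_{\mathcal G,\mathrm{tar},n}$. The analogous identification $D_gH_i(\widetilde g)=D_gH_i(g_0)$ holds because $H_i$ depends on $g$ only through the operator.
\item For the remainder, Minkowski's inequality for the integral moves the $\|\cdot\|_{2,n}$ norm inside. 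The envelope of Assumption \ref{ass:operator_map_smoothness}, applied uniformly along the segment, then gives the bound $C\,\mathfrak b_n(\delta g,\delta g)=2C\|\delta g\|_{\mathcal G,\mathrm{tar},n}\|\delta g\|_{\mathcal G,+,n}$.
\end{itemize}
Taking near-infima over representatives gives both displayed inequalities, uniformly in $g$ because all constants come from the assumption.

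\emph{Step 4: rates.} On the event $\widehat g\in\mathcal G_n^0$, which has probability tending to one since $r_{g,+,n}=o(1)$, plug in $\widehat g$. The bound becomes $O_p(r_{g,\mathrm{tar},n})(1+O_p(r_{g,+,n}))=O_p(r_{g,\mathrm{tar},n})$.

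One caveat concerns randomness. The $\|\cdot\|_{2,n}$ norm is an expectation, while $\widehat g$ is random. I would interpret the bound conditionally on the training fold, in the cross-fitting sense, treating $\widehat g$ as fixed. Alternatively, I would apply the deterministic inequality pathwise and then use Markov's inequality.

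The main obstacle is Step 1's common-representative issue together with invariance of the first derivative across the equivalence class. That invariance is where Proposition \ref{prop:w_derivative}'s dependence on $g$ only through $W_n(g)$ is essential.
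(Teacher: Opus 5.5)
Your skeleton is the paper's: reduce to a representative of $[g_0]_n$, take an integral-form second-order Gateaux expansion, bound the linear term by the target seminorm and the remainder by $\mathfrak b_n(\delta g,\delta g)=2\|\delta g\|_{\mathcal G,\mathrm{tar},n}\|\delta g\|_{\mathcal G,+,n}$, then plug in rates. Your Step 3 remark that $D_gW_n(\widetilde g)=D_gW_n(g_0)$, because the centered derivative depends on $g$ only through $W_n(g)$, is a genuine plus: the paper uses this silently when it expands at $g_0^\star$ but writes the linear term at $g_0$.

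However, the one non-routine step, finding a common representative, is left open, and the fix you suggest does not close it. Write $d_{\mathrm{tar}}$ and $d_+$ for the two quotient distances. If $\widetilde g$ is only near-optimal for $\|\cdot\|_{\mathcal G,\mathrm{tar},n}+\|\cdot\|_{\mathcal G,+,n}$, you learn only that $\|g-\widetilde g\|_{\mathcal G,\mathrm{tar},n}\le 2(1+\epsilon)d_+$. The argument then yields $C(d_++d_+^2)$, not $C(d_{\mathrm{tar}}+d_{\mathrm{tar}}d_+)$. Since $d_+\ge d_{\mathrm{tar}}$ always, this is weaker and gives only $r_{W,n}=O_p(r_{g,+,n})$. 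That defeats the purpose of the proposition, which is that the stronger envelope does not affect the first-order rate. Falling back to ``a common representative'' changes the statement. Your base-point invariance does not rescue this either. It says the derivative map is the same at every representative, not that $\|g-\widetilde g\|_{\mathcal G,\mathrm{tar},n}$ is the same for every representative.

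The missing idea is that differences of representatives lie in the kernel of the derivative. Suppose $W_n(\widetilde g_1)=W_n(\widetilde g_2)$. Then on each row the softmax forces $\widetilde g_1(R_{ij})-\widetilde g_2(R_{ij})=c_i$ for all supported $j$. By Proposition \ref{prop:w_derivative}, the derivative in such a direction is $w_{ij}[c_i-\sum_{\ell}w_{i\ell}c_i]=0$, so $\mathcal D_{W,i}[\widetilde g_1-\widetilde g_2]=0$. Also $\mathcal D_{H,i}[\widetilde g_1-\widetilde g_2]=0$ by the chain rule, because $H_i$ depends on $g$ only through $W_n(g)$. Hence $\|g-\widetilde g\|_{\mathcal G,\mathrm{tar},n}=d_{\mathrm{tar}}$ for every $\widetilde g\in[g_0]_n$. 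You may therefore choose $\widetilde g$ near-optimal for $\operatorname{dist}_{\mathcal G,+}$ alone, and both infima are attained simultaneously; this is exactly how the paper resolves the issue. With that insertion, the rest of your argument goes through, including your treatment of the randomness of $\widehat g$, which the paper passes over.
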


\noindent\textbf{Proof.}
See Appendix \ref{app:proof_g_to_W_transfer}.

The proposition separates the norm needed for first-order target relevance
from the stronger envelope needed for a valid quadratic expansion. The
additional envelope does not change the first-order rate of the generated
spatial lag or instruments: once
$\operatorname{dist}_{\mathcal G,+}(\widehat g,[g_0]_n)=o_p(1)$, the second
term is of smaller order than the target-relevant first-order error.

\subsection{Moment Construction and Nuisance System}
\label{subsec:moment_system}

For
\[
\theta
=
\begin{pmatrix}
\rho & \beta^{\prime}
\end{pmatrix}^{\prime}
\]
and candidate functions $(m,g,h)$, define the structural residual
\begin{equation}
\xi_i(\theta,m,g,h)
=
Y_i
-
\rho\{W_n(g)Y\}_i
-
X_i^{\prime}\beta
-
h(C_i(m)).
\label{eq:xi_candidate}
\end{equation}

The nuisance and target moment systems introduced below convert the maintained
conditional restrictions into unconditional moments used for sieve-GMM
learning and, subsequently, for construction of the operator-orthogonal score.

\subsubsection{Sieve representation of nuisances}
\label{subsubsec:sieve_representation}

The unknown nuisance functions are approximated by growing finite-dimensional
sieves. Let
\[
m(v)=p_{m,J_m}(v)^{\prime}\gamma_m,
\qquad
g(r)=p_{g,J_g}(r)^{\prime}\gamma_g,
\]
and
\[
h(c)=p_{h,J_h}(c)^{\prime}\gamma_h,
\qquad
\ell(c)=p_{\ell,J_\ell}(c)^{\prime}\gamma_\ell.
\]
Here, the vectors
$p_{m,J_m}$, $p_{g,J_g}$, $p_{h,J_h}$, and $p_{\ell,J_\ell}$
are sieve bases used to approximate the corresponding unknown functions.
For the vector-valued nuisances $m$ and $\ell$, the associated coefficient
arrays are understood to have the conformable dimensions and are vectorized
when stacked in $\vartheta$. The sieve bases are conceptually distinct from
the dictionaries $B_m$, $B_\ell$, $B_h$, and $B_g$ introduced below: the
sieve bases represent the nuisance functions themselves, whereas the
dictionaries provide the test functions or instruments used to construct the
moment conditions for estimating those functions. Thus the sieve bases
determine how the unknown functions are represented, whereas the dictionaries
determine which unconditional restrictions are used to estimate those
representations.

Collect the target and sieve coefficients in
\[
\vartheta
=
\left(
\theta^{\prime},
\gamma_m^{\prime},
\gamma_g^{\prime},
\gamma_h^{\prime},
\gamma_\ell^{\prime}
\right)^{\prime}.
\]

\subsubsection{Nuisance moments and conditional projection}
\label{subsubsec:nuisance_moments}

The nuisance restrictions are implemented through finite-dimensional,
possibly growing, vector-valued dictionaries
\[
B_m,\qquad B_\ell,\qquad B_h,\qquad B_g.
\]
These researcher-specified dictionaries serve as sieve instruments or test
functions for the corresponding nuisance residuals. Their role is to convert
the underlying conditional restrictions into unconditional moment conditions
that can be used for joint sieve-GMM estimation. They are not additional
structural parameters or nuisance functions. Their dimensions may increase
with the sample size subject to the sieve-complexity and rate conditions
imposed below.

The auxiliary projection nuisance is
\[
\ell_0(c)
=
E
\left[
B_g(\mathcal P_i)
\mid
C_{i0}=c
\right].
\]
It removes from the interaction dictionary the component explained solely by
the control index. Define
\[
\eta_0
=
(m_0,g_0,h_0,\ell_0),
\qquad
\eta
=
(m,g,h,\ell).
\]
For notational economy, write
\[
\xi_i(\theta,\eta)
=
\xi_i(\theta,m,g,h),
\]
noting that $\ell$ does not enter the structural residual directly. At the
truth, Assumption \ref{ass:control_function} gives
\begin{equation}
E
\left[
\xi_i(\theta_0,\eta_0)
\mid
C_{i0},
\mathcal A_i
\right]
=
0.
\label{eq:conditional_moment}
\end{equation}

For the first-stage conditional mean, define
\[
B_{m,i}
=
B_m(V_i).
\]
The dictionary $B_m(V_i)$ provides test functions for the restriction
\[
E[Z_i-m_0(V_i)\mid V_i]=0.
\]

For the control-function component, define
\[
B_{h,i}(\eta)
=
B_h(C_i(m)).
\]
The dictionary $B_h(C_i(m))$ provides test functions for the structural
residual along directions measurable with respect to the control index.

For the interaction component, define the baseline unresidualized
interaction dictionary
\[
B_{g,i}^{0}
=
B_g(\mathcal P_i),
\]
where $\mathcal P_i$ contains the predetermined and excluded information
defined above. I maintain that $B_g(\mathcal P_i)$ is
$\mathcal A_i$-measurable and does not depend on $g$. It may contain
functions of predetermined or excluded variables, including $X$, $Q$,
$D_{ij}$, the candidate support, and predetermined spatial summaries, but the
baseline nuisance dictionary does not contain objects whose randomness is
generated by $W_n(g)$ itself. Its purpose is to provide observable,
predetermined directions that are informative about changes in the learned
interaction operator. The precise richness requirement needed for
identification is stated below.

To remove the part of the interaction dictionary explained solely by the
control index, for a candidate $\ell$, let
\[
B_{\ell,i}(\eta)
=
B_\ell(C_i(m)),
\]
where $B_\ell$ is a dictionary used to estimate this conditional projection,
and define the residualized interaction dictionary
\[
\widetilde B_{g,i}(\eta)
=
B_{g,i}^{0}
-
\ell(C_i(m)).
\]
At the truth,
\[
\widetilde B_{g,i}(\eta_0)
=
B_g(\mathcal P_i)
-
E
\left[
B_g(\mathcal P_i)
\mid
C_{i0}
\right],
\]
so that
\[
E
\left[
\widetilde B_{g,i}(\eta_0)
\mid
C_{i0}
\right]
=
0.
\]
Moreover, $\widetilde B_{g,i}(\eta_0)$ is measurable with respect to
$\sigma(C_{i0},\mathcal A_i)$. This measurability is what allows the
control-function restriction to justify the interaction moment at the truth.

The restriction that $B_g$ be predetermined and independent of $g$ applies
to the nuisance interaction dictionary used in the baseline theory. The
target instrument vector $H_i(g)$ may still contain spatially transformed
objects under Assumption \ref{ass:instrument_validity}. A $g$-dependent
nuisance interaction dictionary can also be considered under a stronger
full-residual-field validity condition together with explicit validity
conditions for both the level moment and the derivative-of-dictionary term,
but that extension is not needed for the results below.

Define the stacked nuisance moment vector
\[
s_i(\theta,\eta)
=
\begin{pmatrix}
s_{m,i}(m)\\
s_{\ell,i}(\eta)\\
s_{h,i}(\theta,\eta)\\
s_{g,i}(\theta,\eta)
\end{pmatrix},
\]
where
\[
s_{m,i}(m)
=
\operatorname{vec}
\left[
B_{m,i}
\{Z_i-m(V_i)\}^{\prime}
\right],
\]
\[
s_{\ell,i}(\eta)
=
\operatorname{vec}
\left[
B_{\ell,i}(\eta)
\{
B_{g,i}^{0}
-
\ell(C_i(m))
\}^{\prime}
\right],
\]
\[
s_{h,i}(\theta,\eta)
=
B_{h,i}(\eta)
\xi_i(\theta,\eta),
\]
and
\[
s_{g,i}(\theta,\eta)
=
\widetilde B_{g,i}(\eta)
\xi_i(\theta,\eta).
\]
At the truth, the $m$ block has mean zero by the first-stage
conditional-mean restriction, and the $\ell$ block has mean zero by the
definition of $\ell_0$. Because $B_h(C_{i0})$ is measurable with respect to
$C_{i0}$ and $\widetilde B_{g,i}(\eta_0)$ is measurable with respect to
$\sigma(C_{i0},\mathcal A_i)$, Assumption
\ref{ass:control_function} also gives
\[
E
\left[
B_h(C_{i0})\xi_i
\right]
=
0,
\qquad
E
\left[
\widetilde B_{g,i}(\eta_0)\xi_i
\right]
=
0.
\]
Hence
\[
E[s_i(\theta_0,\eta_0)]
=
0.
\]

The four blocks have distinct identifying roles. The $m$ block identifies
the first-stage conditional mean. The $\ell$ block identifies the conditional
projection used to remove control-index variation from the interaction
dictionary. The $h$ block identifies the control-function component of the
structural residual. Finally, the $g$ block uses the residualized interaction
dictionary to identify target-relevant directions of the spatial interaction
operator. The interaction-moment richness condition below formalizes the
requirement that the span of this residualized dictionary be sufficiently
rich to detect every sieve direction of $g$ that matters for the SAR target.

Stacking these moment restrictions provides a common nuisance-moment system
whose derivative can subsequently be used to construct the Riesz
representation and the operator-orthogonal correction for estimation of
$m_0$, $g_0$, $h_0$, and $\ell_0$.

\subsubsection{SAR instruments and target moments}\label{subsec:moments}

Even after controlling endogeneity of the characteristics generating $W$, the
spatial lag $W_0Y$ remains endogenous because of simultaneous determination.
The control-function restriction and the validity of the SAR instruments are
therefore conceptually distinct requirements.

Let
\[
H_i(g)\in\mathbb R^q,
\qquad
q\geq d_\theta,
\]
denote a candidate vector of spatial instruments; the target-moment dimension
$q$ is fixed as $n\to\infty$. A representative dictionary may be generated from
\[
X,
\qquad
Q,
\qquad
W_n(g)X,
\qquad
W_n(g)Q,
\qquad
W_n(g)^2X.
\]
Throughout the baseline construction, any dependence of $H_i(g)$ on $g$ is
through the induced operator $W_n(g)$. Hence observationally equivalent
representatives of $g$ that generate the same normalized weight matrix also
generate the same target instruments.
Because $W_n(g_0)$ depends on the potentially endogenous characteristics
$Z_i$, validity of spatially transformed candidates does not follow
automatically from Assumption \ref{ass:control_function}. Candidate
instruments are retained only when the following moment restriction is
satisfied.

\begin{assumption}[SAR instrument validity]
\label{ass:instrument_validity}

The instrument vector used for estimation satisfies
\[
E
\left[
H_i(g_0)\xi_i
\right]
=
0.
\]
The corresponding population target moment has finite second moments.

A primitive sufficient condition for the first restriction is
\[
E
\left[
\xi_i
\mid
U_1,\ldots,U_n,\mathcal A_i
\right]
=
0
\]
together with measurability of $H_i(g_0)$ with respect to
\[
\sigma
\left(
U_1,\ldots,U_n,\mathcal A_i
\right).
\]

\end{assumption}

Controlling for $C_{i0}$ corrects endogeneity of the variables used to form
the interaction weights, but it does not by itself make every
$W_0X$- or $W_0Q$-type variable a valid instrument. Since the learned weights
depend on the residual field through $Z$, spatially transformed instruments
may inherit that dependence. Assumption \ref{ass:instrument_validity}
therefore states IV validity separately and explicitly.

The stronger full-residual-field condition is one transparent way to justify
such instruments: conditional on the residual field and predetermined
information, the structural innovation must have zero mean, and the proposed
instrument must be measurable with respect to that information. The
high-level theory does not require this particular sufficient condition if
instrument validity can be justified by another economically appropriate
restriction.

Define
\[
\phi_i(\theta,\eta)
=
H_i(g)\xi_i(\theta,\eta).
\]
Under Assumption \ref{ass:instrument_validity},
\[
E[\phi_i(\theta_0,\eta_0)]=0.
\]

The base target moment is generally not Neyman orthogonal with respect to
$\eta_0$. In particular, estimation error in $g_0$ changes both the spatial
lag entering the residual and the spatially transformed components of the
instrument vector. The operator-orthogonal score introduced in Section
\ref{sec:asymptotics} removes these first-order nuisance effects.

\subsection{Identification}
\label{subsec:g_identification}

The preceding moment construction separates the conditional restrictions used
for identification from the finite collection of unconditional moments used
by the estimator. I first state a primitive local separation condition for the
conditional model and then connect it to the implemented interaction moments.

\subsubsection{Primitive local separation}

For any square-integrable random object $A_i^{*}$, define
\[
\mathcal R_CA_i^{*}
=
E[A_i^{*}\mid C_{i0},\mathcal A_i]
-
E[A_i^{*}\mid C_{i0}].
\]

For
\[
\delta\theta
=
\begin{pmatrix}
\delta\rho &
\delta\beta^{\prime}
\end{pmatrix}^{\prime}
\]
and an admissible $\delta g$, define
\[
A_{\delta g}
=
D_gW_n(g_0)[\delta g]
\]
and
\[
\Delta_i(\delta\theta,\delta g)
=
\delta\rho(W_0Y)_i
+
X_i^{\prime}\delta\beta
+
\rho_0(A_{\delta g}Y)_i.
\]

\begin{assumption}[Local target--operator separation]
\label{ass:local_separation}

The first-stage conditional mean uniquely identifies $m_0$. Whenever
identification of the interaction operator is required, there exists
$\underline\rho>0$ such that
\[
|\rho_0|
\geq
\underline\rho.
\]
There exists $\kappa_{\mathrm{id}}>0$ such that, for every locally admissible
$(\delta\theta,\delta g)$,
\[
\left[
\frac{1}{n}
\sum_{i=1}^{n}
\|
\mathcal R_C
\Delta_i(\delta\theta,\delta g)
\|_{L^2}^2
\right]^{1/2}
\geq
\kappa_{\mathrm{id}}
\left(
\|\delta\theta\|_2
+
\|\delta g\|_{\mathcal G,\mathrm{tar},n}
\right).
\]

\end{assumption}

After removing variation explained only by the control function, no nonzero
local change in the SAR coefficient, regression coefficients, or
target-relevant interaction operator can leave the conditional mean
unchanged. This is the local rank condition that separates the finite-
dimensional target from changes in the learned operator.

The condition also explains why a constant is excluded from $X_i$: a constant
is annihilated by $\mathcal R_C$ and therefore could not satisfy this lower
bound. More generally, regressors whose relevant variation is completely
absorbed by the control index are not separately identified as components of
$\beta_0$.

The lower bound $|\rho_0|\geq\underline\rho$ is imposed only when the
interaction operator itself is to be identified uniformly over the maintained
sequence of data-generating processes. If $\rho_0=0$, the interaction
operator is not identified through the SAR equation because $g_0$ drops out
of the structural outcome equation. The benchmark theory therefore concerns
the nondegenerate spatial-interaction regime
\[
|\rho_0|
\geq
\underline\rho
>
0.
\]
In particular, the uniform functional-form robustness claim below does not
cover sequences with $\rho_0\rightarrow0$. Testing the canonical SAR null
$\rho_0=0$ when the interaction operator is itself unknown is a
nonregular identification problem and requires a separate weak- or
non-identification analysis; such inference is not claimed here.

\begin{assumption}[Operator richness]
\label{ass:operator_richness}

There exists $\kappa_W>0$ such that, for every locally admissible
operator-changing $\delta g$,
\[
\frac{1}{n}
E
\|A_{\delta g}Y\|_2^2
\geq
\kappa_W
\frac{1}{n}
E
\|A_{\delta g}\|_F^2.
\]

\end{assumption}

The assumption rules out changes in the weight matrix that are large as
operators but happen to be invisible when applied to the realized outcome
process. If a perturbation genuinely changes the relevant entries of $W_0$,
it must also change $W_0Y$ enough to be statistically detectable.

\begin{proposition}[Target and first-order operator identification]
\label{prop:local_identification}

Suppose \eqref{eq:conditional_moment} is continuously Gateaux differentiable
near the truth and Assumption \ref{ass:local_separation} holds. Any locally
observationally equivalent differentiable path satisfies
\[
\delta\theta=0,
\qquad
A_{\delta g}Y=0
\quad\text{in }L^2,
\qquad
\delta h(C_{i0})=0
\quad\text{a.s.}
\]
If Assumption \ref{ass:operator_richness} also holds, then
\[
A_{\delta g}=0
\quad\text{in }L^2.
\]
Thus the conditional moment locally identifies $\theta_0$ and the
target-relevant first-order action of the interaction operator, while $g_0$
remains identified only modulo directions that leave $W_n(g_0)$ unchanged.

\end{proposition}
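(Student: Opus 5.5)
We consider a differentiable path $t\mapsto(\theta_t,m_t,g_t,h_t)$ through the truth along which the conditional restriction \eqref{eq:conditional_moment} continues to hold for all small $t$. We differentiate at $t=0$ and extract the three conclusions from the resulting first-order identity.

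\textbf{Step 1: the first-stage direction vanishes.} The first-stage conditional mean uniquely identifies $m_0$ (Assumption \ref{ass:local_separation}). Hence along an observationally equivalent path $m_t=m_0$, so $\delta m=0$. Consequently $C_i(m_t)=C_{i0}$ along the path, and the conditioning set $\sigma(C_{i0},\mathcal A_i)$ does not move with $t$.

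\textbf{Step 2: differentiate the structural residual.} Write $\delta h=\partial_t h_t|_{t=0}$ and $\delta\theta=(\delta\rho,\delta\beta')'$. Differentiating \eqref{eq:xi_candidate} gives
\[
\partial_t\xi_i(\theta_t,\eta_t)\big|_{t=0}
=
-\Delta_i(\delta\theta,\delta g)-\delta h(C_{i0}).
\]
Here $\Delta_i$ collects $\delta\rho(W_0Y)_i+X_i'\delta\beta+\rho_0(A_{\delta g}Y)_i$, and the derivative of the spatial lag is $A_{\delta g}Y$ by Proposition \ref{prop:w_derivative}. Continuous Gateaux differentiability allows interchanging the derivative with the conditional expectation. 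Observational equivalence then yields
\[
E[\Delta_i(\delta\theta,\delta g)\mid C_{i0},\mathcal A_i]
+
\delta h(C_{i0})=0
\quad\text{a.s. for each } i.
\]

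\textbf{Step 3: apply $\mathcal R_C$.} Take the further conditional expectation given $C_{i0}$ and subtract. The term $\delta h(C_{i0})$ is $C_{i0}$-measurable, so it cancels, leaving $\mathcal R_C\Delta_i(\delta\theta,\delta g)=0$ for all $i$. The lower bound in Assumption \ref{ass:local_separation} then forces
\[
\|\delta\theta\|_2+\|\delta g\|_{\mathcal G,\mathrm{tar},n}=0.
\]
So $\delta\theta=0$. Moreover the target norm dominates $\frac1n\sum_iE|\mathcal D_{W,i}[\delta g]|^2$, and $\mathcal D_{W,i}[\delta g]=(A_{\delta g}Y)_i$, so $A_{\delta g}Y=0$ in $L^2$. (Because $|\rho_0|\ge\underline\rho$, this is consistent with the $\rho_0 A_{\delta g}Y$ term carrying the information.)

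\textbf{Step 4: recover $\delta h$.} Substituting $\delta\theta=0$ and $A_{\delta g}Y=0$ gives $\Delta_i=0$, and Step 2 then yields $\delta h(C_{i0})=0$ almost surely.

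\textbf{Step 5: operator identification.} Under Assumption \ref{ass:operator_richness},
\[
\tfrac1n E\|A_{\delta g}\|_F^2\le\kappa_W^{-1}\tfrac1nE\|A_{\delta g}Y\|_2^2=0,
\]
so $A_{\delta g}=0$ in $L^2$. Thus first-order perturbations of $g$ are identified only up to directions in the null space of $D_gW_n(g_0)$, that is, modulo $[g_0]_n$ to first order.

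\textbf{Main obstacle.} The delicate step is the interchange of differentiation and conditional expectation in Step 2, together with confirming that the path's $\delta g$ is "locally admissible" so that Assumption \ref{ass:local_separation} applies. I would handle this by invoking the assumed continuous Gateaux differentiability with a dominated-convergence argument. This uses the finite second moments from Assumption \ref{ass:operator_map_smoothness}, which bound $\|D_g\{W_n(g)Y\}\|_{2,n}$ uniformly near $[g_0]_n$.
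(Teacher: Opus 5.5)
Your proposal is correct and follows the paper's proof essentially step for step: first-stage identification removes the $m$ direction, differentiating \eqref{eq:conditional_moment} gives $E[\Delta_i(\delta\theta,\delta g)+\delta h(C_{i0})\mid C_{i0},\mathcal A_i]=0$, applying $\mathcal R_C$ eliminates $\delta h$, Assumption \ref{ass:local_separation} then yields $\delta\theta=0$ and $A_{\delta g}Y=0$, substituting back gives $\delta h(C_{i0})=0$, and Assumption \ref{ass:operator_richness} finishes the argument. The only difference is your "main obstacle" remark on interchanging the derivative with the conditional expectation, which the paper simply treats as covered by the maintained continuous Gateaux differentiability hypothesis; Assumption \ref{ass:operator_map_smoothness} is not among this proposition's hypotheses and is not needed here.
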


\noindent\textbf{Proof.}
See Appendix \ref{app:proof_identification}.

The proposition separates identification of the economically relevant
operator from identification of a particular numerical representation of
$g_0$. Two interaction functions that generate the same normalized weight
matrix are observationally equivalent and need not be distinguished.
Subject to that unavoidable normalization, however, neither the target
parameter nor a first-order change in the relevant interaction operator can
be varied without changing the maintained conditional moment.

\subsubsection{Identification through the implemented interaction moments}

The preceding identification argument uses the conditional moment directly.
The estimator instead works with a finite collection of unconditional sieve
moments. I therefore impose a richness condition ensuring that these
implemented moments recover the target-relevant conditional variation needed
to identify changes in the interaction operator.

Let $\mathcal G_{J_n}^{\perp}$ denote the normalized population sieve tangent
space. For a unit target-relevant direction
\[
\delta g_J\in\mathcal G_{J_n}^{\perp},
\qquad
\|\delta g_J\|_{\mathcal G,\mathrm{tar},n}=1,
\]
write
\[
R_{i,\delta g_J}
=
\mathcal R_C
\{
(A_{\delta g_J}Y)_i
\}.
\]

\begin{assumption}[Interaction-moment richness]
\label{ass:sieve_moment_richness}

For every unit target-relevant sieve direction $\delta g_J$, there exists
$a_J(\delta g_J)$ satisfying
\[
\|a_J(\delta g_J)\|_2\leq C
\]
such that
\[
\left[
\frac{1}{n}
\sum_{i=1}^{n}
E
\left[
\left\{
a_J(\delta g_J)^{\prime}
\widetilde B_{g,i}(\eta_0)
-
R_{i,\delta g_J}
\right\}^2
\right]
\right]^{1/2}
\leq
\zeta_{J_n},
\qquad
\zeta_{J_n}\rightarrow0.
\]

\end{assumption}

The residualized interaction dictionary must be rich enough to approximate
the conditional variation generated by every target-relevant change in the
interaction operator. This is a sieve completeness or relevance condition:
if a change in $W_0$ matters for the SAR equation, the interaction moments
must contain enough variation to detect it.

The bounded coefficient requirement prevents detection from relying on
increasingly unstable linear combinations of the dictionary. Accordingly,
this assumption deliberately supplies a well-posed benchmark for direct
identification of the target-relevant $g$ directions used in Proposition
\ref{prop:sieve_g_rate}.

This benchmark is intentionally stronger than what may hold in applications
with weakly informative interaction characteristics. If the lower separation
constant or the effective $g$-block singular value is allowed to approach
zero, the interaction learner becomes weakly identified and its rate is
amplified accordingly. The main root-$n$ result below is therefore a
strong-identification result. Weak-identification-robust inference for
$\rho_0$ is a distinct extension and is not claimed here.

\begin{proposition}[Identification by the implemented sieve moments]
\label{prop:sieve_moment_identification}

Suppose Assumptions \ref{ass:local_separation} and
\ref{ass:sieve_moment_richness} hold. Then, for sufficiently large $J_n$,
there exists $c_g>0$ such that
\[
\left\|
D_g
\left\{
\frac{1}{n}
\sum_{i=1}^{n}
E[s_{g,i}(\theta_0,\eta_0)]
\right\}
[\delta g_J]
\right\|_2
\geq
c_g
\|\delta g_J\|_{\mathcal G,\mathrm{tar},n}
\]
for every normalized target-relevant sieve direction.

Consequently, if the GMM weighting matrix for the $g$ block has eigenvalues
bounded away from zero and the local second derivative is regular, the
population sieve-GMM criterion is locally quadratically identified in the
target-relevant directions.

\end{proposition}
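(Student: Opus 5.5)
We compute the derivative of the population $g$-block moment explicitly and then bound it from below by testing it against the bounded coefficient vector supplied by Assumption \ref{ass:sieve_moment_richness}.

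\textbf{Step 1: compute the derivative.} Since $B_{g,i}^{0}$ does not depend on $g$ and $\ell$ is held at $\ell_0$, the only $g$-dependence of $s_{g,i}(\theta_0,\eta_0)=\widetilde B_{g,i}(\eta_0)\xi_i(\theta_0,\eta_0)$ enters through $\xi_i$. By \eqref{eq:xi_candidate} and Proposition \ref{prop:w_derivative},
\[
D_g\Big\{\frac1n\sum_{i=1}^n E[s_{g,i}(\theta_0,\eta_0)]\Big\}[\delta g_J]
=-\frac{\rho_0}{n}\sum_{i=1}^n E\big[\widetilde B_{g,i}(\eta_0)(A_{\delta g_J}Y)_i\big].
\]

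\textbf{Step 2: replace $(A_{\delta g_J}Y)_i$ by $R_{i,\delta g_J}$.} This rests on $\widetilde B_{g,i}(\eta_0)$ being $\sigma(C_{i0},\mathcal A_i)$-measurable and satisfying $E[\widetilde B_{g,i}(\eta_0)\mid C_{i0}]=0$. Conditioning on $(C_{i0},\mathcal A_i)$ and then subtracting $E[\cdot\mid C_{i0}]$ (which is orthogonal to $\widetilde B_{g,i}(\eta_0)$) gives
\[
E[\widetilde B_{g,i}(\eta_0)(A_{\delta g_J}Y)_i]=E[\widetilde B_{g,i}(\eta_0)R_{i,\delta g_J}].
\]

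\textbf{Step 3: lower bound by testing with $a_J$.} Write $a_J=a_J(\delta g_J)$, which satisfies $\|a_J\|_2\le C$. Then
\[
C\,\|D_g\{\cdot\}[\delta g_J]\|_2 \ge |a_J' D_g\{\cdot\}[\delta g_J]|
= \frac{|\rho_0|}{n}\Big|\sum_i E[a_J'\widetilde B_{g,i}(\eta_0)R_{i,\delta g_J}]\Big|.
\]
Decompose $a_J'\widetilde B_{g,i}=R_{i,\delta g_J}+e_i$, where $[\frac1n\sum_i E e_i^2]^{1/2}\le\zeta_{J_n}$. By Cauchy--Schwarz the sum is at least
\[
N^2-\zeta_{J_n}N,\qquad N=\Big[\frac1n\sum_i\|R_{i,\delta g_J}\|_{L^2}^2\Big]^{1/2}.
\]

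\textbf{Step 4 (main obstacle): show $N$ is bounded below.} Apply Assumption \ref{ass:local_separation} with $\delta\theta=0$. Then $\Delta_i=\rho_0(A_{\delta g_J}Y)_i$, so
\[
|\rho_0|\,N\ge\kappa_{\mathrm{id}}\|\delta g_J\|_{\mathcal G,\mathrm{tar},n}=\kappa_{\mathrm{id}}.
\]
The remaining point is that $N$ is also bounded above uniformly over unit directions. This follows because $\mathcal R_C$ is an $L^2$ contraction (a difference of nested conditional expectations), so
\[
N\le \Big[\frac1n\sum_i E|(A_{\delta g_J}Y)_i|^2\Big]^{1/2}\le \|\delta g_J\|_{\mathcal G,\mathrm{tar},n}=1,
\]
using $\mathcal D_{W,i}=(A_{\delta g}Y)_i$ together with the definition of the target norm. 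Combined with $|\rho_0|\ge\underline\rho$ and a bound on $|\rho_0|$ from the parameter space, this gives $N\ge\kappa_{\mathrm{id}}/|\rho_0|$. Hence, once $J_n$ is large enough that $\zeta_{J_n}\le\kappa_{\mathrm{id}}/(2|\rho_0|)$,
\[
\|D_g\{\cdot\}[\delta g_J]\|_2\ge \frac{|\rho_0|N^2}{2C}\ge\frac{\kappa_{\mathrm{id}}^2}{2C|\rho_0|}=:c_g.
\]
The bound for general (non-unit) directions follows by homogeneity.

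\textbf{Step 5: the consequence.} Expand the population sieve-GMM criterion $\bar s'\Omega\bar s$ to second order around the truth in the target-relevant direction. The gradient term vanishes because $E[s_i(\theta_0,\eta_0)]=0$. The quadratic term is bounded below by
\[
\lambda_{\min}(\Omega_g)\,c_g^2\,\|\delta g_J\|_{\mathcal G,\mathrm{tar},n}^2,
\]
and the assumed regularity of the local second derivative keeps the cubic remainder of smaller order in a shrinking neighborhood. This yields local quadratic identification.
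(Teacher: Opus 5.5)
Your proposal is correct and follows essentially the same route as the paper's proof. The $g$-derivative acts only through $\xi_i$, the conditioning argument replaces $(A_{\delta g_J}Y)_i$ by $R_{i,\delta g_J}$, you test against the bounded $a_J$ from the richness assumption, and local separation with $\delta\theta=0$ gives $|\rho_0|N\geq\kappa_{\mathrm{id}}$. The upper bound $N\leq 1$ in your Step 4 is true but not needed. The step $N(N-\zeta_{J_n})\geq N^2/2$ uses only the lower bound on $N$, and what makes $c_g$ uniform is $|\rho_0|\leq 1-\delta_\rho$.
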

\noindent\textbf{Proof.}
See Appendix \ref{app:proof_sieve_moment_identification}.

The proposition connects the abstract conditional identification condition to
the actual moments used by the estimator. The implemented $g$ moments are not
merely valid at the truth; their derivative is bounded away from zero in every
operator direction that matters for inference. This produces the local
curvature required for stable sieve estimation of the interaction function.

\subsection{Feasible Sieve-GMM Nuisance Learning}
\label{subsec:sieve_learning}

Using the sieve representations introduced in Subsubsection
\ref{subsubsec:sieve_representation} and the nuisance and target moment systems
developed in Subsubsections \ref{subsubsec:nuisance_moments} and
\ref{subsec:moments}, I now turn to feasible estimation of the nuisance
components
\[
\eta_0=(m_0,g_0,h_0,\ell_0).
\]
The resulting sieve-GMM construction serves two purposes: it provides feasible
estimators of the nuisance functions and supplies the finite-dimensional
nuisance derivative system used later to construct the Riesz correction.

\subsubsection{Joint sieve-GMM start}
\label{subsubsec:g_learner}

The local rate analysis developed below requires the nuisance estimators to
enter a neighborhood of the population solution. Rather than assuming an
infeasible preliminary estimator that already knows $W_0$, I construct a
feasible joint sieve-GMM start on the fold-specific auxiliary sample using
the sieve representations introduced above.

Let
\[
\mathfrak m_i(\vartheta)
=
\begin{pmatrix}
\phi_i(\theta,\eta)\\
s_i(\theta,\eta)
\end{pmatrix}
\]
denote the stacked target and nuisance moments defined in Subsubsections
\ref{subsubsec:nuisance_moments} and \ref{subsec:moments}. For fold $k$,
let $\mathcal T_k$ denote the auxiliary sample defined in Section
\ref{subsec:crossfitting}, and define
\[
\widehat{\mathfrak m}_k(\vartheta)
=
\frac{1}{|\mathcal T_k|}
\sum_{i\in\mathcal T_k}
\mathfrak m_i(\vartheta).
\]

For each $a\in\{m,g,h,\ell\}$, let $D_a^{(2)}$ denote the stacked
second-order difference operator for the corresponding B-spline coefficient
vector, applied along each spline dimension when a tensor-product basis is
used. Define the quadratic roughness penalty
\[
\mathcal P_a(\gamma_a)
=
\left\|
D_a^{(2)}\gamma_a
\right\|_2^2.
\]
The tuning parameter $\lambda_{a,n}\geq 0$ controls the degree of
regularization of nuisance component $a$.

Let $\widehat{\mathcal W}_k$ denote a positive-definite GMM weighting matrix
for the stacked moments, constructed using only the auxiliary sample
$\mathcal T_k$. Assume that, uniformly over the fixed number of folds,
\[
\left\|
\widehat{\mathcal W}_k-\mathcal W_0
\right\|_{\mathrm{op}}
=
o_p(1),
\]
where $\mathcal W_0$ is positive definite and has eigenvalues bounded away
from zero and infinity. Since this step is used only to obtain a consistent
preliminary estimator, the identity weighting matrix
$\widehat{\mathcal W}_k=I$ is admissible.

The fold-specific preliminary criterion is
\begin{equation}
\widehat Q_k(\vartheta)
=
\widehat{\mathfrak m}_k(\vartheta)^{\prime}
\widehat{\mathcal W}_k
\widehat{\mathfrak m}_k(\vartheta)
+
\sum_{a\in\{m,g,h,\ell\}}
\lambda_{a,n}\mathcal P_a(\gamma_a).
\label{eq:joint_preliminary}
\end{equation}

Let
\[
\mathcal M_k
=
\arg\min_{\vartheta\in\mathcal V_{J_n}}
\widehat Q_k(\vartheta)
\]
denote the set of fold-specific sieve-GMM minimizers. Because row
normalization may generate observationally equivalent representatives of
$g$, define $\widetilde\vartheta^{(-k)}$ to be an element of
$\mathcal M_k$ whose $g$-coefficient has minimum Euclidean norm among the
equivalent minimizers.

The criterion is generally nonconvex because the interaction score enters a
row-normalized softmax and the SAR moment system jointly depends on the target
and nuisance components. In implementation I therefore use block profiling:
initialize $(m,h,\ell)$ from their separate sieve moments, update $g$ from
the profiled interaction criterion, update the low-dimensional SAR parameter
from the resulting IV/GMM moments, and iterate these blocks until the
criterion and parameter vector stabilize. Multiple starting values are used
for the $g$ block, and the minimum-criterion solution is retained. The theory
requires the final numerical optimization error to be asymptotically
negligible relative to the statistical error.

Let $Q_{0,n}(\vartheta)$ denote the corresponding unpenalized population
stacked-GMM criterion, and define the population sieve minimizer
\[
\vartheta_{0,J_n}
\in
\arg\min_{\vartheta\in\mathcal V_{J_n}}
Q_{0,n}(\vartheta),
\]
using the same representative normalization for the $g$ component.

\paragraph{Local learning metric and effective complexity.}

For the joint rate statement, let $d_{\vartheta,n}$ be a normalized local
sieve metric that controls
\[
\|\theta-\theta_{0,J_n}\|_2,\qquad
\|m-m_{0,J_n}\|_{L^2},\qquad
\operatorname{dist}_{\mathcal G,+}(g,[g_{0,J_n}]_n),
\]
together with the corresponding $L^2$ errors of
$h(C_i(m))$ and $\ell(C_i(m))$. Let
$d_{\mathfrak m,n}$ denote the dimension of the stacked moment vector
$\mathfrak m_i(\vartheta)$ and let $J_{\mathrm{joint},n}$ denote the
dimension of the normalized joint sieve coefficient vector. Rather than
suppressing the growing moment dimension, define an effective stochastic
complexity $\mathfrak C_{\mathrm{joint},n}$ through the local empirical
criterion fluctuation
\[
\sup_{\substack{
\vartheta\in\mathcal B_{J_n}:\\
d_{\vartheta,n}(\vartheta,\vartheta_{0,J_n})>0
}}
\frac{
\left|
\{
\widehat Q_k-Q_{0,n}
\}(\vartheta)
-
\{
\widehat Q_k-Q_{0,n}
\}(\vartheta_{0,J_n})
\right|
}{
d_{\vartheta,n}(\vartheta,\vartheta_{0,J_n})
}
=
O_p
\left(
\sqrt{
\frac{\mathfrak C_{\mathrm{joint},n}}
{|\mathcal T_k|}
}
\right)
\]
uniformly over folds on a local neighborhood $\mathcal B_{J_n}$. The
quantity $\mathfrak C_{\mathrm{joint},n}$ is allowed to depend on both
$J_{\mathrm{joint},n}$ and $d_{\mathfrak m,n}$. Under normalized local
bases and uniformly bounded moment envelopes it can be of the same order as
the effective number of target-sensitive coefficients, but no such
simplification is imposed by notation.

Define the local penalty drift
\[
b_{\mathrm{joint},n}^{\mathrm{pen}}
=
\sup_{\substack{
\vartheta\in\mathcal B_{J_n}:\\
d_{\vartheta,n}(\vartheta,\vartheta_{0,J_n})>0
}}
\frac{
\left|
\sum_{a}
\lambda_{a,n}
\{
\mathcal P_a(\gamma_a)
-
\mathcal P_a(\gamma_{a,0,J_n})
\}
\right|
}{
d_{\vartheta,n}(\vartheta,\vartheta_{0,J_n})
},
\]
where the sum is over $a\in\{m,g,h,\ell\}$.

\begin{proposition}[Consistency and rate of the feasible joint sieve start]
\label{prop:joint_preliminary}

Suppose the population stacked-GMM criterion is globally separated at
$\vartheta_{0,J_n}$, the sieve approximation errors vanish, the
auxiliary-sample criterion satisfies a uniform spatial law of large numbers,
and
\[
\max_{1\leq k\leq K}
\left\|
\widehat{\mathcal W}_k-\mathcal W_0
\right\|_{\mathrm{op}}
=
o_p(1).
\]
Suppose also that
\[
\max_{a\in\{m,g,h,\ell\}}
\lambda_{a,n}
\sup_{\gamma_a\in\Gamma_{a,J_n}}
\mathcal P_a(\gamma_a)
=
o(1).
\]
Then, uniformly over the fixed number of folds,
\[
d_{\vartheta,n}
\left(
\widetilde\vartheta^{(-k)},
\vartheta_{0,J_n}
\right)
=
o_p(1).
\]

In addition, suppose that on a neighborhood
$\mathcal B_{J_n}$ containing the population sieve solution,
\[
Q_{0,n}(\vartheta)
-
Q_{0,n}(\vartheta_{0,J_n})
\geq
\kappa_{\mathrm{joint},n}
d_{\vartheta,n}^2
(\vartheta,\vartheta_{0,J_n})
\]
for some $\kappa_{\mathrm{joint},n}>0$, and that the local empirical
criterion fluctuation is governed by
$\mathfrak C_{\mathrm{joint},n}$ as defined above. Let
$\epsilon_{\mathrm{opt},n}$ bound the criterion suboptimality of the
numerical solution relative to the local minimum. If
\[
\sqrt{
\frac{\mathfrak C_{\mathrm{joint},n}}
{|\mathcal T_k|}
}
+
b_{\mathrm{joint},n}^{\mathrm{pen}}
=
o(\kappa_{\mathrm{joint},n}),
\qquad
\epsilon_{\mathrm{opt},n}
=
o_p(\kappa_{\mathrm{joint},n}),
\]
then
\[
d_{\vartheta,n}
\left(
\widetilde\vartheta^{(-k)},
\vartheta_{0,J_n}
\right)
=
O_p(r_{\mathrm{joint},n}),
\]
where
\[
r_{\mathrm{joint},n}
=
\frac{1}{\kappa_{\mathrm{joint},n}}
\left[
\sqrt{
\frac{\mathfrak C_{\mathrm{joint},n}}
{|\mathcal T_k|}
}
+
b_{\mathrm{joint},n}^{\mathrm{pen}}
\right]
+
\sqrt{
\frac{\epsilon_{\mathrm{opt},n}}
{\kappa_{\mathrm{joint},n}}
}.
\]
If the population sieve approximation satisfies
\[
d_{\vartheta,n}
(\vartheta_{0,J_n},\vartheta_0)
\leq
a_{\mathrm{joint},n},
\]
then
\[
d_{\vartheta,n}
\left(
\widetilde\vartheta^{(-k)},
\vartheta_0
\right)
=
O_p
\left(
r_{\mathrm{joint},n}
+
a_{\mathrm{joint},n}
\right).
\]
If the local empirical-fluctuation and optimization bounds above also hold
with uniformly bounded fourth moments after normalization by their displayed
rates, then
\[
\max_{1\leq k\leq K}
E
\left[
d_{\vartheta,n}^{4}
\left(
\widetilde\vartheta^{(-k)},
\vartheta_0
\right)
\right]
\leq
C
\left(
r_{\mathrm{joint},n}
+
a_{\mathrm{joint},n}
\right)^4.
\]

\end{proposition}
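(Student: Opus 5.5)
The proof follows the standard M-estimation template (consistency by argmin continuity, rate by a basic-inequality/peeling argument) applied to the penalized fold criterion $\widehat Q_k$ in \eqref{eq:joint_preliminary}. One quantity is carried throughout: the gap between $\widehat Q_k$ and the unpenalized population criterion $Q_{0,n}$. That gap has three parts: the empirical fluctuation of $\widehat{\mathfrak m}_k$, the weighting-matrix error $\widehat{\mathcal W}_k-\mathcal W_0$, and the penalty.

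\emph{Step 1 (consistency).} Write
\[
\widehat Q_k-Q_{0,n}=\{\widehat{\mathfrak m}_k'\widehat{\mathcal W}_k\widehat{\mathfrak m}_k-\bar{\mathfrak m}'\mathcal W_0\bar{\mathfrak m}\}+\textstyle\sum_a\lambda_{a,n}\mathcal P_a,
\]
where $\bar{\mathfrak m}$ is the population stacked moment.
\begin{itemize}
\item The uniform spatial LLN gives $\sup_\vartheta\|\widehat{\mathfrak m}_k-\bar{\mathfrak m}\|=o_p(1)$ on the sieve space.
\item Boundedness of $\bar{\mathfrak m}$ follows from the finite-second-moment parts of the assumptions.
\item The weighting-matrix condition $\max_k\|\widehat{\mathcal W}_k-\mathcal W_0\|_{\mathrm{op}}=o_p(1)$ then makes the quadratic-form difference $o_p(1)$ uniformly, by adding and subtracting cross terms.
\item The penalty term is $o(1)$ uniformly by the stated condition on $\lambda_{a,n}\sup\mathcal P_a$.
\end{itemize}
Hence $\sup_\vartheta|\widehat Q_k-Q_{0,n}|=o_p(1)$. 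Global separation of $Q_{0,n}$ at $\vartheta_{0,J_n}$, combined with near-minimality of $\widetilde\vartheta^{(-k)}$, yields $d_{\vartheta,n}(\widetilde\vartheta^{(-k)},\vartheta_{0,J_n})=o_p(1)$.

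Two technical points need care here. First, separation must be stated in the quotient metric $\operatorname{dist}_{\mathcal G,+}$, so the minimum-norm selection among equivalent minimizers does no harm: every element of $\mathcal M_k$ induces the same $W_n(g)$, so the metric sees only the equivalence class. Second, since $K$ is fixed, uniformity over folds is a finite maximum.

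\emph{Step 2 (rate).} By consistency, $\widetilde\vartheta^{(-k)}\in\mathcal B_{J_n}$ with probability approaching one. Write $\widehat\vartheta=\widetilde\vartheta^{(-k)}$ and $d=d_{\vartheta,n}(\widehat\vartheta,\vartheta_{0,J_n})$.
\begin{itemize}
\item Near-optimality gives $\widehat Q_k(\widehat\vartheta)\le\widehat Q_k(\vartheta_{0,J_n})+\epsilon_{\mathrm{opt},n}$.
\item Combining this with the curvature bound,
\[
\kappa_{\mathrm{joint},n}d^2\le Q_{0,n}(\widehat\vartheta)-Q_{0,n}(\vartheta_{0,J_n})\le -\Delta_k+\epsilon_{\mathrm{opt},n},
\]
where $\Delta_k$ is the centered increment $\{\widehat Q_k-Q_{0,n}\}(\widehat\vartheta)-\{\widehat Q_k-Q_{0,n}\}(\vartheta_{0,J_n})$. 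This increment includes the penalty difference.
\item By the definitions of $\mathfrak C_{\mathrm{joint},n}$ and $b^{\mathrm{pen}}_{\mathrm{joint},n}$,
\[
|\Delta_k|\le d\,\bigl[O_p(\sqrt{\mathfrak C_{\mathrm{joint},n}/|\mathcal T_k|})+b^{\mathrm{pen}}_{\mathrm{joint},n}\bigr].
\]
\item This gives a quadratic inequality $\kappa d^2\le d\,A+\epsilon$. Solving it yields $d\le A/\kappa+\sqrt{\epsilon/\kappa}$, which is $r_{\mathrm{joint},n}$.
\end{itemize}
The $o(\kappa)$ conditions are needed to stay inside $\mathcal B_{J_n}$. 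The difficulty is that the fluctuation bound is only assumed on $\mathcal B_{J_n}$. I would therefore run the argument on the event that $\widehat\vartheta\in\mathcal B_{J_n}$, which has probability tending to one by Step 1. The $o(\kappa)$ conditions ensure the derived radius shrinks, so no self-consistency issue arises. The triangle inequality with $a_{\mathrm{joint},n}$ then gives the bound relative to $\vartheta_0$. For the $g$ component, the triangle inequality must be applied in the quotient distance, which is a pseudometric on classes.

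\emph{Step 3 (fourth moments).} Raise the deterministic inequality $d\le A_n/\kappa+\sqrt{\epsilon_n/\kappa}$ to the fourth power and use $(x+y)^4\le8(x^4+y^4)$. Here $A_n$ and $\epsilon_n$ are the realized fluctuation and optimization errors, and the inequality holds on the good event. By hypothesis, the normalized $A_n$ and $\epsilon_n$ have bounded fourth moments. On the complement event, one needs $d$ bounded. I would obtain this from compactness of the normalized sieve parameter set, giving a bounded $d_{\vartheta,n}$ there, together with the probability of the complement event. This is the step I expect to be the main obstacle: convergence in probability does not control moments. The tail contribution $E[d^4\mathbf 1\{\text{bad}\}]$ must be $O((r+a)^4)$, which requires either boundedness of $d_{\vartheta,n}$ times an exponential or polynomial tail bound on the bad event, or extending the fluctuation bound globally. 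I would read the hypothesis ``hold with uniformly bounded fourth moments'' as covering this, and apply Hölder's inequality to the cross terms.
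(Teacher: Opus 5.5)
Your proposal is correct and follows essentially the same route as the paper's proof. You write $\widehat Q_k$ as $Q_{0,n}$ plus an empirical fluctuation plus the penalty, get consistency from the uniform law and global separation by the standard argmin argument, combine near-optimality with the local quadratic lower bound to obtain a quadratic inequality in $d_{\vartheta,n}$, solve it, and add $a_{\mathrm{joint},n}$ by the triangle inequality. Your Step 3 concern about the event where the estimator lies outside $\mathcal B_{J_n}$ is legitimate. The paper handles it no more explicitly than you do: it simply asserts that the fourth-moment hypothesis makes the same quadratic inequality hold in $L^4$, so your reading of that hypothesis coincides with the paper's.
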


\noindent\textbf{Proof.}
See Appendix \ref{app:proof_joint_preliminary}.

The first part places the feasible estimator in the locally identified
basin. The second part applies the standard local-quadratic sieve
minimum-distance logic of \citet{ChenPouzo2012} to the \emph{entire} joint
learner, thereby closing the rate chain rather than assuming a rate for the
components held fixed in the subsequent $g$-profiling step. In particular,
the preliminary errors in $\theta$, $m$, $h$, and $\ell$ are bounded by the
same independently derived joint rate. The effective complexity
$\mathfrak C_{\mathrm{joint},n}$ explicitly allows the number of stacked
moments to grow with the sieve dimension, so the rate does not silently
treat a growing moment dictionary as fixed.

Because the joint criterion is generally nonconvex, the global minimization
description above is used to establish entrance into the identified
neighborhood. All local rate and coupling arguments below concern the
normalized local minimizer in that neighborhood. The computational
multi-start procedure is required to return this local solution with
probability approaching one; the theory does not rely on continuity of a
global argmin map across separated basins.

\subsubsection{Profiled interaction learner and rates}

I next refine the interaction component through a profiled sieve learner.
Write
\[
\widetilde\vartheta_{-g}^{(-k)}
\]
for all components of the preliminary estimator other than $\gamma_g$.
Because Proposition \ref{prop:joint_preliminary} supplies a rate for the
entire preliminary vector, the nuisance components held fixed in this
profiling step are no longer treated as rate-free inputs.

Let
\[
\widehat s_{g,k}
\left(
\gamma;
\widetilde\vartheta_{-g}^{(-k)}
\right)
=
\frac{1}{|\mathcal T_k|}
\sum_{i\in\mathcal T_k}
s_{g,i}
\left(
\gamma;
\widetilde\vartheta_{-g}^{(-k)}
\right)
\]
denote the fold-specific average interaction moment evaluated at $\gamma$
with all remaining components fixed at the joint preliminary estimator.
Define
\[
Q_{g,k}(\gamma)
=
\widehat s_{g,k}
\left(
\gamma;
\widetilde\vartheta_{-g}^{(-k)}
\right)^{\prime}
\widehat{\mathcal W}_{g,k}
\widehat s_{g,k}
\left(
\gamma;
\widetilde\vartheta_{-g}^{(-k)}
\right)
+
\lambda_{g,n}\mathcal P_g(\gamma),
\]
where $\widehat{\mathcal W}_{g,k}$ is the positive-definite GMM weighting
matrix for the interaction-learning moments, constructed using only
$\mathcal T_k$. Let $Q_{g,0}(\gamma)$ denote the corresponding profiled
population criterion on the normalized sieve space.

Let $\mathcal B_{g,k,n}$ be the normalized locally identified basin
containing the population sieve representative $\gamma_{0,J_g}$. Its radius
may shrink with $n$, but is chosen large enough that
\[
r_{\mathrm{joint},n}
+
a_{\mathrm{joint},n}
=
o(b_{g,n})
\]
for the basin radius $b_{g,n}$. Proposition
\ref{prop:joint_preliminary} then implies that the preliminary estimator lies
in this basin with probability approaching one. Define
\[
\mathcal M_{g,k}^{\mathrm{loc}}
=
\arg\min_{\gamma\in\Gamma_{g,J_g}\cap\mathcal B_{g,k,n}}
Q_{g,k}(\gamma),
\]
and select
\[
\widehat\gamma^{(-k)}
=
\arg\min_{\gamma\in\mathcal M_{g,k}^{\mathrm{loc}}}
\|\gamma\|_2.
\]
The minimum-norm rule only selects a representative within the observational
equivalence class inside the identified basin; it is not used to select
among separated nonconvex basins. Define
\[
\widehat g^{(-k)}(r)
=
p_{g,J_g}(r)^{\prime}
\widehat\gamma^{(-k)}.
\]

Let
\[
\widetilde m^{(-k)},
\qquad
\widetilde h^{(-k)},
\qquad
\widetilde\ell^{(-k)}
\]
denote the function estimates encoded by the corresponding components of the
joint preliminary estimator $\widetilde\vartheta^{(-k)}$. The fold-specific
nuisance vector used in the final score is
\[
\widehat\eta^{(-k)}
=
\left(
\widetilde m^{(-k)},
\widehat g^{(-k)},
\widetilde h^{(-k)},
\widetilde\ell^{(-k)}
\right).
\]

For notational simplicity in the rate statement below, write $J_n=J_g$ and
let
\[
d_{B_g,n}
=
\dim B_g(\mathcal P_i).
\]
Define the local penalty contribution
\[
b_{g,n}^{\mathrm{pen}}
=
\lambda_{g,n}
\left\|
\nabla\mathcal P_g(\gamma_{0,J_n})
\right\|_2.
\]
Let $\mathfrak C_{g,n}$ denote the effective stochastic complexity of the
profiled interaction score. It is defined so that the target-relevant
empirical gradient obeys
\[
\left\|
\nabla_\gamma
Q_{g,k}(\gamma_{0,J_n})
-
E[
\nabla_\gamma
Q_{g,k}(\gamma_{0,J_n})
]
\right\|_2
=
O_p
\left(
\sqrt{
\frac{\mathfrak C_{g,n}}
{|\mathcal T_k|}
}
\right).
\]
The quantity $\mathfrak C_{g,n}$ is allowed to depend on both $J_n$ and the
dimension $d_{B_g,n}$ of the interaction-moment dictionary. Under normalized
local bases, bounded moment envelopes, and
$d_{B_g,n}\lesssim J_n$ with stable moment-derivative operator norms, the
benchmark $\mathfrak C_{g,n}\lesssim J_n$ recovers the familiar
$\sqrt{J_n/n}$ stochastic term. When the interaction dictionary grows more
quickly, its additional complexity remains explicit through
$\mathfrak C_{g,n}$.

\begin{proposition}[Sieve rate for the interaction learner with a closed preliminary-rate chain]
\label{prop:sieve_g_rate}

Suppose that, after selecting a representative of $[g_0]_n$, $g_0$ is
$\alpha$-smooth on a compact $d_R$-dimensional domain and admits a normalized
sieve approximation
\[
g_{0,J_n}(r)
=
p_{J_n}(r)^{\prime}\gamma_{0,J_n}
\]
satisfying
\[
\operatorname{dist}_{\mathcal G,\mathrm{tar}}
\left(
g_{0,J_n},[g_0]_n
\right)
\leq
a_{J_n}^{\mathrm{tar}},
\qquad
a_{J_n}^{\mathrm{tar}}
\lesssim
J_n^{-\alpha/d_R},
\]
and
\[
\operatorname{dist}_{\mathcal G,+}
\left(
g_{0,J_n},[g_0]_n
\right)
\leq
a_{J_n}^{+},
\qquad
a_{J_n}^{+}
\rightarrow0.
\]
For compactly supported splines of order exceeding $\alpha$, the benchmark
approximation also satisfies
\[
a_{J_n}^{+}
\lesssim
J_n^{-\alpha/d_R}.
\]

Assume that the profiled population criterion is locally quadratically
identified on the normalized target-relevant sieve space, with curvature
bounded away from zero, and that its empirical target-relevant gradient is
of order
\[
O_p
\left(
\sqrt{
\frac{\mathfrak C_{g,n}}
{|\mathcal T_k|}
}
\right).
\]
Suppose the target-relevant sieve norm is locally dominated by the normalized
coefficient norm. By Proposition \ref{prop:joint_preliminary}, suppose the
components held fixed in the profile satisfy
\[
r_{-g,n}
\lesssim
r_{\mathrm{joint},n}
+
a_{\mathrm{joint},n}.
\]
Then, uniformly over folds,
\[
\operatorname{dist}_{\mathcal G,\mathrm{tar}}
\left(
\widehat g^{(-k)},[g_0]_n
\right)
=
O_p
\left[
\sqrt{
\frac{\mathfrak C_{g,n}}
{|\mathcal T_k|}
}
+
a_{J_n}^{\mathrm{tar}}
+
r_{\mathrm{joint},n}
+
a_{\mathrm{joint},n}
+
b_{g,n}^{\mathrm{pen}}
\right].
\]

Suppose additionally that the local spline empirical process gives the
supremum-norm bound
\[
\operatorname{dist}_{\mathcal G,+}
\left(
\widehat g^{(-k)},[g_0]_n
\right)
=
O_p(r_{g,+,n}),
\]
with
\[
r_{g,+,n}
\lesssim
\sqrt{
\frac{\mathfrak C_{g,\infty,n}\log n}
{|\mathcal T_k|}
}
+
a_{J_n}^{+}
+
r_{\mathrm{joint},n}
+
a_{\mathrm{joint},n}
+
b_{g,n}^{\mathrm{pen}},
\]
where $\mathfrak C_{g,\infty,n}$ is the corresponding local-basis envelope
complexity. For compactly supported B-splines with normalized local bases,
a representative benchmark is
\[
\mathfrak C_{g,n}
\lesssim
J_n,
\qquad
\mathfrak C_{g,\infty,n}
\lesssim
J_n.
\]

If $|\mathcal T_k|\asymp n$,
$\mathfrak C_{g,n}\lesssim J_n$, the joint preliminary and penalty terms are
of no larger order than the stochastic and approximation terms, and $J_n$
balances
\[
\sqrt{\frac{J_n}{n}}
\qquad\text{and}\qquad
J_n^{-\alpha/d_R},
\]
then
\[
\operatorname{dist}_{\mathcal G,\mathrm{tar}}
\left(
\widehat g^{(-k)},[g_0]_n
\right)
=
O_p
\left(
n^{-\alpha/(2\alpha+d_R)}
\right)
\]
and, up to the usual logarithmic factor,
\[
\operatorname{dist}_{\mathcal G,+}
\left(
\widehat g^{(-k)},[g_0]_n
\right)
=
O_p
\left(
n^{-\alpha/(2\alpha+d_R)}
\sqrt{\log n}
\right).
\]
Hence both the target-relevant rate and the stronger smoothness-envelope rate
are $o(n^{-1/4})$ whenever
\[
\alpha>\frac{d_R}{2}.
\]
If the corresponding target-gradient and local sup-norm maximal inequalities
hold with uniformly bounded fourth moments after normalization by their
rates, the two displayed convergence statements also hold with the
fourth-moment bounds required by Assumption \ref{ass:rates}.

\end{proposition}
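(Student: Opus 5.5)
The argument is a standard local-quadratic M-estimation (sieve minimum-distance) expansion of the profiled criterion $Q_{g,k}$ around the normalized population representative $\gamma_{0,J_n}$. The only extra bookkeeping is that the profile is evaluated at the preliminary $\widetilde\vartheta_{-g}^{(-k)}$ rather than at the truth. I will work throughout on the event, of probability approaching one by Proposition \ref{prop:joint_preliminary}, that both the preliminary estimator and $\widehat\gamma^{(-k)}$ lie in the basin $\mathcal B_{g,k,n}$. All statements are taken modulo the local null space $\mathcal N_{J_n,n}$, using the minimum-norm representative. On $\mathcal G_{J_n}^{\perp}$ the assumed local quadratic identification, together with Proposition \ref{prop:sieve_moment_identification}, gives curvature bounded below by some $c>0$ in the target-relevant norm.

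\textbf{Step 1: decomposition of the criterion gradient.} The first-order condition for the local minimizer, together with the curvature bound, yields
\[
c\,\|\widehat\gamma^{(-k)}-\gamma_{0,J_n}\|_{\perp}
\leq
\left\|\nabla_\gamma Q_{g,k}(\gamma_{0,J_n})\right\|_2
+
o_p(\|\widehat\gamma^{(-k)}-\gamma_{0,J_n}\|_{\perp}).
\]
The $o_p$ remainder comes from the regular second derivative (Assumption \ref{ass:operator_map_smoothness}) on a shrinking basin. I then split $\nabla_\gamma Q_{g,k}(\gamma_{0,J_n})$ into four pieces:
\begin{itemize}
\item[(a)] the centered empirical fluctuation, which is $O_p(\sqrt{\mathfrak C_{g,n}/|\mathcal T_k|})$ by the definition of $\mathfrak C_{g,n}$;
\item[(b)] the population gradient at the truth with remaining components at $\vartheta_{0,-g}$, which vanishes up to the sieve bias $a_{J_n}^{\mathrm{tar}}$ because $E[s_{g,i}(\theta_0,\eta_0)]=0$ and the moment is Lipschitz in the target-relevant direction;
\item[(c)] the perturbation from replacing $\vartheta_{0,-g}$ by $\widetilde\vartheta_{-g}^{(-k)}$;
\item[(d)] the penalty gradient $b_{g,n}^{\mathrm{pen}}$.
\end{itemize}

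\textbf{Step 2: the plug-in term.} This is the main obstacle, since the profiled moment depends on $(\theta,m,h,\ell)$ nonlinearly through $C_i(m)$ and the residualized dictionary. I bound the population part of (c) by a mean-value expansion in the joint metric $d_{\vartheta,n}$. The derivative of $E[s_{g,i}]$ with respect to $(\theta,m,h,\ell)$ is bounded in operator norm on the basin, given bounded dictionary envelopes and smoothness of $c(\cdot)$, $h$ and $\ell$. This gives a contribution of order $r_{\mathrm{joint},n}+a_{\mathrm{joint},n}$. The empirical part of the cross-term is controlled uniformly over the local neighborhood by the same complexity bound used for $\mathfrak C_{\mathrm{joint},n}$, so it is absorbed. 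Dividing by $c$ and adding the sieve approximation error of $\gamma_{0,J_n}$ gives the displayed target-relevant rate. Transferring the coefficient norm to $\operatorname{dist}_{\mathcal G,\mathrm{tar}}$ uses the assumed local domination of the target norm by the normalized coefficient norm.

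\textbf{Step 3: supremum-norm bound and the benchmark rate.} For the sup-norm statement, I write $\widehat g-g_{0,J_n}=p_{J_n}'(\widehat\gamma-\gamma_{0,J_n})$. Using the linearization from Step 1, the leading stochastic term is a projected empirical process built from normalized local B-splines. A maximal inequality over the $O(J_n)$ local basis blocks gives the extra $\sqrt{\log n}$ factor with complexity $\mathfrak C_{g,\infty,n}$. The bias, joint and penalty terms carry over by the Lebesgue-constant stability of the spline projection. For the benchmark rate, I set $J_n\asymp n^{d_R/(2\alpha+d_R)}$, which balances $\sqrt{J_n/n}$ and $J_n^{-\alpha/d_R}$ and gives $n^{-\alpha/(2\alpha+d_R)}$. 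This is $o(n^{-1/4})$ exactly when $2\alpha/(2\alpha+d_R)>1/2$, that is, when $\alpha>d_R/2$; the $\sqrt{\log n}$ factor does not change this conclusion. Finally, the fourth-moment version follows by applying the same deterministic inequality from Step 1 on the good event and using the assumed uniformly bounded normalized fourth moments of each term. The complement of the good event is handled by the fourth-moment bound in Proposition \ref{prop:joint_preliminary}.
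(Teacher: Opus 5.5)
Your argument is essentially the paper's: a local quadratic expansion of the profiled criterion around $\gamma_{0,J_n}$. In it, four terms each enter linearly in $\|\widehat\gamma^{(-k)}-\gamma_{0,J_n}\|_2$: the stochastic gradient, the sieve bias $a_{J_n}^{\mathrm{tar}}$, the plug-in of $\widetilde\vartheta_{-g}^{(-k)}$ (controlled by $r_{\mathrm{joint},n}+a_{\mathrm{joint},n}$ from Proposition~\ref{prop:joint_preliminary}), and the penalty. This is followed by coefficient-to-target norm domination and the balance $J_n\asymp n^{d_R/(2\alpha+d_R)}$. The paper uses the basic inequality $Q_{g,k}(\widehat\gamma^{(-k)})\leq Q_{g,k}(\gamma_{0,J_n})$ instead of your first-order condition. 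That spares it from needing $\widehat\gamma^{(-k)}$ to be interior to the basin and the sample (not just population) Hessian to be bounded below, at the cost of a uniform linear bound on the empirical criterion increment.

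The one piece to drop is your derivation of the supremum-norm rate. That bound is a hypothesis of the proposition; the paper simply invokes it and adds the deterministic approximation $\|g_{0,J_n}-g_0^\star\|_{\infty,\mathcal R_n}\lesssim a_{J_n}^{+}$ for the representative $g_0^\star\in[g_0]_n$. Your appeal to Lebesgue-constant stability of a spline projection would not be justified for this learner anyway. Its linearization runs through the inverse of a generally dense GMM Hessian rather than a banded B-spline Gram matrix.
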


\noindent\textbf{Proof.}
See Appendix \ref{app:proof_sieve_g_rate}.

The proposition now closes the rate chain. The error of the preliminary
components entering the profile is controlled by the independently derived
joint sieve-GMM rate in Proposition \ref{prop:joint_preliminary}; it is not
bounded by invoking the rate that the profiled $g$ learner is itself intended
to prove. The effective complexity $\mathfrak C_{g,n}$ also prevents a
growing interaction-moment dictionary from being hidden inside a
fixed-dimension $\sqrt{J_n/n}$ notation.

The second rate is used only for smoothness of nonlinear operator
compositions. Combined with Assumption \ref{ass:operator_map_smoothness}, it
justifies the quadratic remainder underlying the orthogonal-score expansion.
The logarithmic factor does not alter the threshold
$\alpha>d_R/2$ for the benchmark B-spline construction.

This result remains a well-posed benchmark for direct estimation of $g_0$.
Possible mild ill-posedness of the full nuisance Jacobian used in
constructing the Riesz correction in Section \ref{sec:asymptotics} is a
separate issue.

The unrestricted tensor-product benchmark also makes the role of
dimensionality transparent. When $d_R$ is large, a more attractive primitive
specification is an additive or low-order interaction sieve of the form
\[
g_0(r)
=
\sum_{a=1}^{d_R}g_{0a}(r_a)
+
\sum_{(a,b)\in\mathcal E_g}
g_{0,ab}(r_a,r_b),
\]
where $\mathcal E_g$ contains a prespecified collection of economically
meaningful interactions. Under such a structure, the relevant nonparametric
rate is governed by the largest component dimension rather than by the full
tensor-product dimension $d_R$.

The same consideration applies to the remaining smooth nuisance functions.
For a generic scalar nuisance function $f_0$ with input dimension $d_f$ and
smoothness $\alpha_f$, a conventional tensor-product sieve has benchmark
error
\[
O_p
\left(
\sqrt{\frac{\mathfrak C_{f,n}}{n}}
+
J_{f,n}^{-\alpha_f/d_f}
\right),
\]
where $\mathfrak C_{f,n}$ records the effective coefficient-and-moment
complexity of the corresponding learner.

The conditional projection $\ell_0$ requires separate accounting because it
is vector valued. If each coordinate of $\ell_0$ is estimated at
root-mean-square rate $\rho_{\ell,n}$, then the Euclidean error entering the
nuisance metric is, in general,
\[
r_{\ell,n}^{\mathrm{agg}}
=
O_p
\left(
\sqrt{d_{B_g,n}}\,
\rho_{\ell,n}
\right).
\]
Thus growth of the interaction dictionary is not free. For a
$d_C$-dimensional tensor-product sieve with coordinate smoothness
$\alpha_\ell$, a representative coordinatewise benchmark is
\[
\rho_{\ell,n}
=
O_p
\left(
\sqrt{\frac{\mathfrak C_{\ell,n}}{n}}
+
J_{\ell,n}^{-\alpha_\ell/d_C}
\right),
\]
so the admissible growth of $d_{B_g,n}$, the complexity of $g_0$, and the
smoothness of $\ell_0$ must be chosen jointly. The condition
$\alpha>d_R/2$ is therefore only a benchmark for the interaction block; the
complete nuisance vector must satisfy the joint and aggregate rate
restrictions stated below.

\subsection{Buffered Spatial Cross-Fitting}
\label{subsec:crossfitting}

Partition the units into $K$ geographically coherent evaluation blocks
\[
\mathcal I_1,\ldots,\mathcal I_K,
\qquad
\bigcup_{k=1}^{K}\mathcal I_k
=
\{1,\ldots,n\},
\]
with pairwise disjoint blocks.

Let $d_{ij}^{*}$ denote a predetermined spatial metric and define
\[
d^{*}(j,\mathcal I)
=
\min_{i\in\mathcal I}d_{ij}^{*}.
\]

Let $a_n^W$ denote the radius used to localize one application of the spatial
operator and let $L_\psi$ denote the maximum number of successive spatial
operator applications entering the score. Let $b_n$ bound the complete raw
score footprint, with
\[
b_n\geq L_\psi a_n^W.
\]
The radius $b_n$ is enlarged whenever a predetermined residual summary in
$C_i(m)$ or another score component requires a larger raw-data footprint.

Define
\[
\mathcal E_k(b_n)
=
\{j:d^{*}(j,\mathcal I_k)\leq b_n\}.
\]
For a guard distance $s_n>0$, define
\[
\mathcal A_k^{\mathrm{aux}}
=
\{j:d^{*}(j,\mathcal I_k)>b_n+s_n\}.
\]

If $b_n^{\mathrm{tr}}$ bounds the raw-data footprint of one localized
training moment centered at $j$, let
\[
\mathcal N_{j,n}(a)
=
\{\ell:d_{j\ell}^{*}\leq a\}
\]
and
\[
\mathcal T_k
=
\left\{
j\in\mathcal A_k^{\mathrm{aux}}:
\mathcal N_{j,n}(b_n^{\mathrm{tr}})
\subseteq
\mathcal A_k^{\mathrm{aux}}
\right\}.
\]

All nuisance fitting, regularization selection, tuning, estimation of the
conditional projections, and estimation of the debiasing operator for fold
$k$ use only the training observations indexed by $\mathcal T_k$ and their
required localized raw-data footprints.

The spatial cross-fitting design is therefore
\[
\text{evaluation block}
\longrightarrow
\text{score footprint}
\longrightarrow
\text{guard region}
\longrightarrow
\text{training footprint}.
\]
Every unit is used as an evaluation observation exactly once. The footprint
and guard regions reduce only the fold-specific nuisance-training sample.

In words, ordinary sample splitting is not enough in a spatial
cross section because the score evaluated at one observation may directly use
neighboring outcomes and covariates, while nearby training observations may
remain statistically dependent with that score. The score footprint first
removes observations that are mechanically used in constructing the
evaluation score. The additional guard region then increases the spatial
separation between the evaluation information and nuisance-training
information. Section \ref{sec:asymptotics} shows how this separation, combined
with NED approximation of the globally simultaneous SAR outcome, makes the
remaining training-to-evaluation dependence asymptotically negligible.

\section{Asymptotics}\label{sec:asymptotics}

The asymptotic argument separates the high-level orthogonal-score result from
its primitive spatial verification. The high-level result requires spatial
laws of large numbers and a central limit theorem for the oracle score,
target-relevant nuisance rates, local score smoothness, and sufficiently weak
interaction between the nuisance-training error and the derivative of an
evaluation score.

The spatial difficulty differs from the usual i.i.d.\ cross-fitting problem.
Even when nuisance functions are estimated outside an evaluation block, the
training data and evaluation score need not be independent. Moreover, the
observed SAR outcome is globally simultaneous. I therefore decompose the
first-order cross-fitting error into a centered empirical fluctuation and a
training-to-evaluation leakage term. The first is controlled by spatial
short-memory conditions on the score derivative, whereas the second is made
small by the guard region introduced in Section \ref{subsec:crossfitting}.

I then give primitive sufficient conditions based on NED on a spatially
mixing innovation field. The primitive argument explicitly accounts for the
fact that $W_0=W_n(g_0)$ is itself random because its entries depend on the
possibly endogenous characteristics $Z_i$. Local support, smoothness of the
normalized weight map, stability of the SAR resolvent, and spatial separation
jointly imply that the feasible cross-fitted score is asymptotically
equivalent to the oracle score.

\subsection{High-Level Conditions}\label{subsec:highlevel}

\begin{assumption}[Parameter space and spatial stability]
\label{ass:stability}

The true parameter $\theta_0$ belongs to the interior of a compact parameter
space $\Theta$. Uniformly over admissible $g$,
\[
\|W_n(g)\|_{\infty}=1,
\qquad
\sup_n\|W_n(g)\|_1<\infty.
\]
There exists $\delta_\rho>0$ such that
\[
|\rho|
\leq
1-\delta_\rho
\]
for every $\theta\in\Theta$.

\end{assumption}

Row normalization keeps the direct propagation of a spatial shock bounded,
while the column-sum condition rules out increasingly concentrated incoming
influence as the sample grows. The restriction on $\rho$ keeps the SAR model
uniformly away from the instability boundary. Together these conditions make
the spatial multiplier
$(I_n-\rho W_n(g))^{-1}$ well behaved.

\begin{assumption}[Spatial locality]
\label{ass:spatial_locality}

Define
\[
\tau_{W,n}(r)
=
\sup_{g\in\mathcal G}
\max_i
\sum_{\substack{j\neq i:\\d_{ij}^{*}>r}}
w_{ij}(g).
\]
Then
\[
\tau_{W,n}(r)
\rightarrow
0
\]
as $r\rightarrow\infty$, uniformly in $n$ along the maintained sequence.

Similarly, define
\[
\tau_{C,n}(r)
=
\max_{\substack{i\\1\leq\ell\leq L_C}}
\sum_{\substack{j\neq i:\\d_{ij}^{*}>r}}
|\kappa_{ij,n}^{C,\ell}|.
\]
Then
\[
\tau_{C,n}(r)
\rightarrow
0
\]
as $r\rightarrow\infty$.

For the localization sequence used in Assumption
\ref{ass:score_localization}, the chosen radii must additionally satisfy the
rate requirement generated by the relevant score composition. In particular,
when truncation of the control summaries contributes linearly to the score,
a sufficient condition is
\[
\sqrt n\,\tau_{C,n}(b_n)\rightarrow0.
\]
This rate is automatic under the fixed-radius primitive benchmark below and,
more generally, must be verified jointly with the operator-localization
error.

\end{assumption}

The assumption allows interaction to extend beyond immediate neighbors, but
requires sufficiently distant observations to have progressively little
direct influence. The same requirement is imposed on the predetermined
residual summaries used in the control function. Thus a spatial score can be
approximated by one depending only on a sufficiently large local
neighborhood.

\begin{assumption}[Oracle spatial LLN and CLT]
\label{ass:oracle_clt}

Let
\[
\psi_i^0
=
\psi_i(\theta_0,\eta_0,\Gamma_0)
\]
denote the oracle orthogonal score defined below, and let
\[
\overline\psi_n^0(\theta)
=
\frac{1}{n}
\sum_{i=1}^{n}
\psi_i(\theta,\eta_0,\Gamma_0),
\qquad
\mu_n(\theta)
=
\frac{1}{n}
\sum_{i=1}^{n}
E[
\psi_i(\theta,\eta_0,\Gamma_0)
].
\]
The oracle score satisfies the uniform spatial law of large numbers
\[
\sup_{\theta\in\Theta}
\left\|
\overline\psi_n^0(\theta)
-
\mu_n(\theta)
\right\|_2
=
o_p(1).
\]
For some neighborhood $\mathcal N_\theta$ of $\theta_0$, the oracle target
derivative also satisfies
\[
\sup_{\theta\in\mathcal N_\theta}
\left\|
\frac{1}{n}
\sum_{i=1}^{n}
\partial_\theta
\psi_i(\theta,\eta_0,\Gamma_0)
-
\frac{1}{n}
\sum_{i=1}^{n}
E[
\partial_\theta
\psi_i(\theta,\eta_0,\Gamma_0)
]
\right\|_{\mathrm{op}}
=
o_p(1).
\]

Its long-run covariance satisfies
\[
\Omega_n
=
\frac{1}{n}
\sum_{i=1}^{n}
\sum_{j=1}^{n}
\operatorname{Cov}
(\psi_i^0,\psi_j^0)
\rightarrow
\Omega_0,
\]
where $\Omega_0$ is finite and positive definite, and
\[
\frac{1}{\sqrt n}
\sum_{i=1}^{n}
\psi_i^0
\xrightarrow{d}
N(0,\Omega_0).
\]

\end{assumption}

This assumption makes explicit the two oracle uniform laws used later:
uniform convergence of the population GMM criterion and replacement of the
sample target Jacobian by its population counterpart. The primitive spatial
conditions developed in Section \ref{subsec:ned}, summarized in Corollary
\ref{cor:primitive_benchmark}, provide sufficient conditions for these
oracle laws and the CLT. The high-level formulation is retained to allow
alternative primitive dependence conditions.

\begin{assumption}[Generated-$W$ relevance]
\label{ass:w_sensitivity}

Suppose the interaction operator is relevant, so that
\[
|\rho_0|
\geq
\underline\rho
>
0.
\]
The valid instrument vector contains at least one component $H_i^0$
independent of $g$ for which there exists an admissible $\delta g$ satisfying
\[
E
\left[
H_i^0
D_g\{W_n(g_0)Y\}_i[\delta g]
\right]
\neq
0.
\]

\end{assumption}

The assumption rules out the uninteresting case in which estimating $W_0$
has no first-order effect on the target moment. It ensures that there is at
least one direction in which an error in $g_0$ changes the spatial lag in a
way that matters for IV estimation. The next proposition then shows why a
naive plug-in estimator generally cannot treat $\widehat W$ as if it were
known.

\begin{proposition}[Nonorthogonality of the naive plug-in moment]
\label{prop:naive_nonorthogonal}

Under Assumptions \ref{ass:instrument_validity} and
\ref{ass:w_sensitivity},
\[
D_g
E
\left[
H_i^0
\xi_i(\theta_0,\eta_0)
\right]
[\delta g]
\neq
0
\]
for at least one admissible direction $\delta g$. Hence treating
$W_n(\widehat g)$ as known generally leaves a first-order generated-$W$
effect.

\end{proposition}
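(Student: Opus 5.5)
The plan is to compute the Gateaux derivative of the population moment $E[H_i^0\xi_i(\theta_0,\eta)]$ in the $g$ direction directly, holding $(\theta_0,m_0,h_0)$ fixed. Then Assumption \ref{ass:w_sensitivity} shows it is nonzero. Since $H_i^0$ does not depend on $g$, the only $g$-dependence of the moment runs through the structural residual \eqref{eq:xi_candidate}, via the spatial lag $\rho_0\{W_n(g)Y\}_i$. The control term $h_0(C_i(m_0))$ and the regressor term $X_i'\beta_0$ do not involve $g$.

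\textbf{Step 1: pathwise derivative of the residual.} Fix an admissible direction $\delta g$ and consider the path $t\mapsto g_0+t\delta g$. By Proposition \ref{prop:w_derivative},
\[
\left.\frac{\partial}{\partial t}\xi_i(\theta_0,m_0,g_0+t\delta g,h_0)\right|_{t=0}
=
-\rho_0\,D_g\{W_n(g_0)Y\}_i[\delta g]
=
-\rho_0\sum_{j\neq i}w_{ij}(g_0)\bigl[Y_j-\{W_0Y\}_i\bigr]\delta g(R_{ij}).
\]

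\textbf{Step 2: interchange of derivative and expectation.} I would justify moving $\partial_t$ inside $E[\cdot]$ by dominated convergence. The difference quotient of $\{W_n(g_0+t\delta g)Y\}_i$ is dominated, for $|t|$ small, by a constant times $\|\delta g\|_{\infty,\mathcal R_n}\sum_j w_{ij}(g_0+t\delta g)|Y_j - \{W_n(g_0+t\delta g)Y\}_i|$. This holds because the exponential weights on the bounded local support change by bounded multiplicative factors. Combined with the finite second moments of the target moment from Assumption \ref{ass:instrument_validity}, bounded moments of $Y$ (stability, Assumption \ref{ass:stability}) and Cauchy--Schwarz, this gives an integrable envelope. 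Alternatively, I could appeal to the twice-Gateaux differentiability in Assumption \ref{ass:operator_map_smoothness}, which yields the first derivative with an $L^2$ bound. The result is
\[
D_gE[H_i^0\xi_i(\theta_0,\eta_0)][\delta g]
=
-\rho_0\,E\bigl[H_i^0\,D_g\{W_n(g_0)Y\}_i[\delta g]\bigr].
\]
I expect this interchange to be the only technical obstacle, and the local support plus bounded perturbations should handle it.

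\textbf{Step 3: conclusion.} Assumption \ref{ass:w_sensitivity} supplies a $\delta g$ for which $E[H_i^0 D_g\{W_n(g_0)Y\}_i[\delta g]]\neq 0$, and it also gives $|\rho_0|\geq\underline\rho>0$. The product is therefore nonzero, which proves the displayed claim. Instrument validity (Assumption \ref{ass:instrument_validity}) ensures that the moment is zero at $g_0$, so this is the derivative of a moment at a zero. A first-order expansion then gives
\[
E[H_i^0\xi_i(\theta_0,m_0,\widehat g,h_0)]\approx -\rho_0E[H_i^0\mathcal D_{W,i}[\widehat g-g_0]],
\]
a term linear in the nuisance error. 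With nonparametric rates slower than $n^{-1/2}$, this term contributes at first order to $\sqrt n$ times the plug-in moment, i.e.\ a generated-$W$ effect. This justifies the phrase ``generally leaves a first-order effect''; no rate claim beyond linearity is needed.
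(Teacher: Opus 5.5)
Your proposal is correct and follows the paper's proof: since $H_i^0$ does not depend on $g$, the derivative passes through the residual, where only the spatial lag depends on $g$, giving $-\rho_0E[H_i^0D_g\{W_n(g_0)Y\}_i[\delta g]]$, which is nonzero by Assumption~\ref{ass:w_sensitivity} together with $|\rho_0|\geq\underline\rho>0$. Your dominated-convergence step is extra care the paper skips, but Assumptions~\ref{ass:instrument_validity} and \ref{ass:stability} do not by themselves supply the needed integrability: an envelope such as $2\|\delta g\|_{\infty,\mathcal R_n}|H_i^0|\max_{j:S_{ij,n}=1}|Y_j|$ requires second moments of $H_i^0$ and of the neighboring $Y_j$, which must be assumed separately.
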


\noindent\textbf{Proof.} See Appendix \ref{app:proof_naive_nonorthogonal}.

An estimation error in $g_0$ changes the regressor $W_0Y$ at first order.
Consequently, even if the SAR instrument itself does not depend on $g$, the
usual IV moment inherits a first-order error from estimating the spatial
operator. When the instrument also contains objects such as $W_0X$, there
are additional generated-$W$ channels. Orthogonalization is therefore needed
to prevent the first-stage learning error from entering the limiting
distribution of $\widehat\theta$.

\subsection{Operator-Orthogonal Score}\label{subsec:orthogonal}

For block $k$, define
\[
G_{\eta,0,k}[\delta\eta]
=
\frac{1}{|\mathcal I_k|}
\sum_{i\in\mathcal I_k}
D_\eta
E[
\phi_i(\theta_0,\eta_0)
]
[\delta\eta]
\]
and
\[
A_{\eta,0,k}[\delta\eta]
=
\frac{1}{|\mathcal I_k|}
\sum_{i\in\mathcal I_k}
D_\eta
E[
s_i(\theta_0,\eta_0)
]
[\delta\eta].
\]

Let $d_{\eta,n}$ denote the target-relevant nuisance semimetric defined
below. For a linear map $\mathcal L$ on nuisance directions, define
\[
\|\mathcal L\|_{\eta,n}^{*}
=
\sup_{\substack{
\delta\eta:\\
d_{\eta,n}(\delta\eta)>0
}}
\frac{
\|\mathcal L[\delta\eta]\|_2
}{
d_{\eta,n}(\delta\eta)
}.
\]

\begin{assumption}[Approximately common blockwise Riesz representation]
\label{ass:riesz}

There exists a linear operator $\Gamma_0$ satisfying
\[
\|\Gamma_0\|_{\mathrm{op}}
\leq
C
\]
and a deterministic sequence
$\delta_{R,n}\rightarrow0$ such that
\[
\sup_{1\leq k\leq K}
\|
G_{\eta,0,k}
-
\Gamma_0A_{\eta,0,k}
\|_{\eta,n}^{*}
\leq
\delta_{R,n}.
\]

\end{assumption}

The nuisance moments contain information about the same local nuisance
directions that affect the target moment. The operator $\Gamma_0$ combines
those nuisance moments so that their first-order sensitivity approximates the
first-order sensitivity of the target moment. Subtracting this combination
therefore removes the leading nuisance-estimation effect.

Importantly, the representation is required only on the nuisance directions
that matter for the target score. It is not necessary to invert an
unrestricted infinite-dimensional nuisance operator.

A primitive route to the approximately common blockwise representation is
asymptotic homogeneity of the population derivative maps. In particular,
suppose there exist common maps $A_{\eta,0}$ and $G_{\eta,0}$ such that
\[
\sup_{1\leq k\leq K}
\left\{
\|A_{\eta,0,k}-A_{\eta,0}\|_{\eta,n}^{*}
+
\|G_{\eta,0,k}-G_{\eta,0}\|_{\eta,n}^{*}
\right\}
\leq
\delta_{B,n},
\qquad
\delta_{B,n}\rightarrow0,
\]
and $G_{\eta,0}=\Gamma_0A_{\eta,0}$ on the target-relevant tangent space.
Then the same $\Gamma_0$ satisfies Assumption \ref{ass:riesz} with a block
error of order $\delta_{B,n}$. Such a condition follows, for example, under
an increasing-domain design with a fixed number of regular spatial blocks
when the relevant population derivative fields are spatially homogeneous
and their block averages converge to the same limits.

The stacked derivative $A_{\eta,0}$ is essential here. Because $C_i(m)$
contains spatial summaries of first-stage residuals, a perturbation of $m$
can enter the target derivative through terms involving
$\sum_j\kappa_{ij,n}^{C,r}\delta m(V_j)$. The representation is therefore
not based on the own-unit $m$ moment alone: derivatives of the $h$, $\ell$,
and $g$ nuisance blocks with respect to $m$ are included in
$A_{\eta,0}$ and carry these spatially aggregated directions. Full column
rank below is imposed on this complete stacked system.

The possible ill-posedness considered below concerns construction of this
full nuisance-to-target Riesz correction. It is distinct from the well-posed
benchmark imposed in Proposition \ref{prop:sieve_g_rate} for direct learning
of the target-relevant $g$ block. Even when the $g$ block itself has stable
local curvature, weak singular directions can arise from another nuisance
block or from combinations of nuisance directions in the full stacked
Jacobian.

Let
\[
A_{\eta,0}^{(J)}
\in
\mathbb R^{d_{s,n}\times J}
\]
and
\[
G_{\eta,0}^{(J)}
\in
\mathbb R^{q\times J}
\]
denote the nuisance Jacobian and target-sensitivity matrix on a
$J$-dimensional target-relevant sieve tangent space.

\begin{proposition}[Existence of the sieve Riesz representer]
\label{prop:riesz_existence}

Suppose
\[
\operatorname{rank}
\left(
A_{\eta,0}^{(J)}
\right)
=
J
\]
and let
\[
\kappa_J
=
\sigma_{\min}
\left(
A_{\eta,0}^{(J)}
\right)
>
0.
\]
Then
\[
\Gamma_{0,J}
=
G_{\eta,0}^{(J)}
\left(
A_{\eta,0}^{(J)}
\right)^\dagger
\]
satisfies
\[
G_{\eta,0}^{(J)}
=
\Gamma_{0,J}
A_{\eta,0}^{(J)}
\]
exactly on the sieve tangent space.

The full stacked nuisance system may be mildly ill posed, so that
\[
\kappa_J
\downarrow
0.
\]
A bounded population representer is obtained under the source condition
\[
\sup_J
\left\|
G_{\eta,0}^{(J)}
\left(
A_{\eta,0}^{(J)}
\right)^\dagger
\right\|_{\mathrm{op}}
<
\infty.
\]

If, in addition,
\[
\|\Gamma_{0,J}-\Gamma_0\|_{\mathrm{op}}
\leq
a_{\Gamma,J},
\qquad
a_{\Gamma,J}
\rightarrow
0,
\]
and block-specific derivative matrices differ from the common population
matrices by at most $\delta_{B,n}$ in the corresponding target-relevant
operator norm, then Assumption \ref{ass:riesz} holds with
\[
\delta_{R,n}
\lesssim
a_{\Gamma,J}
+
\delta_{B,n}.
\]

\end{proposition}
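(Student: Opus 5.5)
The argument has three parts: an exact finite-dimensional identity, a reading of the source condition, and an approximation bound for Assumption \ref{ass:riesz}.

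\emph{Exact representation.} Since $A_{\eta,0}^{(J)}\in\mathbb R^{d_{s,n}\times J}$ has full column rank $J$, its Moore--Penrose inverse is a left inverse:
\[
\bigl(A_{\eta,0}^{(J)}\bigr)^\dagger
=
\bigl(A_{\eta,0}^{(J)\prime}A_{\eta,0}^{(J)}\bigr)^{-1}A_{\eta,0}^{(J)\prime},
\qquad
\bigl(A_{\eta,0}^{(J)}\bigr)^\dagger A_{\eta,0}^{(J)}=I_J .
\]
Right-multiplying $\Gamma_{0,J}=G_{\eta,0}^{(J)}(A_{\eta,0}^{(J)})^\dagger$ by $A_{\eta,0}^{(J)}$ then gives $\Gamma_{0,J}A_{\eta,0}^{(J)}=G_{\eta,0}^{(J)}$. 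This is an identity of linear maps on the $J$-dimensional sieve tangent space, so the first claim is pure linear algebra.

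\emph{Ill-posedness and the source condition.} Note that $\|(A_{\eta,0}^{(J)})^\dagger\|_{\mathrm{op}}=1/\kappa_J$. The crude bound $\|\Gamma_{0,J}\|_{\mathrm{op}}\le\|G_{\eta,0}^{(J)}\|_{\mathrm{op}}/\kappa_J$ can therefore diverge when $\kappa_J\downarrow0$. The source condition is exactly the statement that $G_{\eta,0}^{(J)}$ loads weakly enough on the weak singular directions of $A_{\eta,0}^{(J)}$ for the product to stay uniformly bounded. Under it, $\sup_J\|\Gamma_{0,J}\|_{\mathrm{op}}\le C$. Combined with the assumed convergence $\|\Gamma_{0,J}-\Gamma_0\|_{\mathrm{op}}\le a_{\Gamma,J}\to0$, this gives $\|\Gamma_0\|_{\mathrm{op}}\le C$, which is the operator bound required in Assumption \ref{ass:riesz}.

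\emph{Blockwise approximation.} Write $A_{\eta,0}$ and $G_{\eta,0}$ for the common population derivative maps, whose restrictions to the sieve tangent space are $A_{\eta,0}^{(J)}$ and $G_{\eta,0}^{(J)}$. For each block $k$ and each target-relevant direction $\delta\eta$, decompose
\[
G_{\eta,0,k}-\Gamma_0A_{\eta,0,k}
=
(G_{\eta,0,k}-G_{\eta,0})
+
(G_{\eta,0}-\Gamma_{0,J}A_{\eta,0})
+
(\Gamma_{0,J}-\Gamma_0)A_{\eta,0}
+
\Gamma_0(A_{\eta,0}-A_{\eta,0,k}).
\]
The four terms are handled as follows.
\begin{itemize}
\item The first and fourth terms are bounded in $\|\cdot\|_{\eta,n}^{*}$ by $\delta_{B,n}$ and $C\delta_{B,n}$, using the blockwise homogeneity hypothesis and $\|\Gamma_0\|_{\mathrm{op}}\le C$.
\item The third term is bounded by $a_{\Gamma,J}\|A_{\eta,0}\|_{\eta,n}^{*}$. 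This requires $\|A_{\eta,0}\|_{\eta,n}^{*}\le C$, meaning the stacked nuisance-moment derivatives are Lipschitz in the semimetric $d_{\eta,n}$. I would take this as implicit in the choice of $d_{\eta,n}$ as the target-relevant nuisance metric, and state it as a regularity condition if needed.
\item The second term is where the main obstacle lies.
\end{itemize}

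\emph{The main obstacle.} The exact identity holds only on the $J$-dimensional sieve tangent space, while $\|\cdot\|_{\eta,n}^{*}$ takes a supremum over all target-relevant directions. I would first handle it by reading ``on the corresponding target-relevant operator norm'' as restricting the supremum to the sieve tangent space, where the second term vanishes identically. If the full supremum is intended, I would project $\delta\eta$ onto the sieve tangent space and bound the orthogonal remainder by the sieve approximation error. Under a source-type condition, that error can be absorbed into $a_{\Gamma,J}$.

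Collecting the bounds yields $\delta_{R,n}\lesssim a_{\Gamma,J}+\delta_{B,n}$, uniformly over the fixed number $K$ of blocks.
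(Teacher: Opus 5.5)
Your argument is essentially the paper's: the left-inverse identity for the full-column-rank Moore--Penrose inverse, boundedness of $\Gamma_{0,J}$ (and hence of $\Gamma_0$) from the source condition, and the same blockwise decomposition, with the paper bounding $A_{\eta,0}$ by appeal to normalized tangent coordinates, as you suggest. The paper works only on the target-relevant sieve tangent space, where your extra term $G_{\eta,0}-\Gamma_{0,J}A_{\eta,0}$ vanishes identically, so your primary reading is the one it uses and your projection-based fallback is not needed (it would in any case require an additional approximation condition not in the statement).
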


\noindent\textbf{Proof.}
See Appendix \ref{app:proof_riesz_existence}.

In a finite sieve, orthogonalization reduces to a matrix projection. Full
column rank guarantees that every target-relevant nuisance direction can be
represented using the nuisance moments. The source condition allows some
singular values of the full nuisance Jacobian to become small, but requires
the target sensitivity to place sufficiently little weight on the unstable
directions. Thus mild ill-posedness may slow estimation of the debiasing
operator without contradicting stable direct estimation of $g_0$.

For estimation, use the Tikhonov-type pseudoinverse
\[
\widehat A_{\eta,\lambda}^{\dagger}
=
\left(
\widehat A_\eta^{\prime}
\widehat A_\eta
+
\lambda_{\Gamma,n}I
\right)^{-1}
\widehat A_\eta^{\prime}
\]
and define
\[
\widehat\Gamma^{(-k)}
=
\widehat G_\eta^{(-k)}
\widehat A_{\eta,\lambda}^{(-k)\dagger}.
\]
The complete fold-specific nuisance object used below is
\[
\widehat\zeta^{(-k)}
=
\left(
\widehat\eta^{(-k)},
\widehat\Gamma^{(-k)}
\right).
\]

\begin{proposition}[Rate for the debiasing operator]
\label{prop:gamma_rate}

Let $J_{\Gamma,n}$ denote the target-sensitive sieve dimension and write
\[
\kappa_{\Gamma,n}
=
\sigma_{\min}
\left(
A_{\eta,0}^{(J_{\Gamma,n})}
\right).
\]
Suppose the source condition in Proposition
\ref{prop:riesz_existence} holds and
\[
\|\Gamma_{0,J_{\Gamma,n}}-\Gamma_0\|_{\mathrm{op}}
\leq
a_{\Gamma,J_{\Gamma,n}}.
\]

Uniformly over folds, suppose
\[
\|
\widehat A_\eta^{(-k)}
-
A_{\eta,0}^{(J_{\Gamma,n})}
\|_{\mathrm{op}}
=
O_p(\Delta_{A,n})
\]
and
\[
\|
\widehat G_\eta^{(-k)}
-
G_{\eta,0}^{(J_{\Gamma,n})}
\|_{\mathrm{op}}
=
O_p(\Delta_{G,n}).
\]
Let $d_{s,n}$ denote the dimension of the stacked nuisance-moment vector
entering $\widehat A_\eta$. Define an effective matrix complexity
$\mathfrak C_{\Gamma,n}$ so that a representative finite-sieve bound is
\[
\Delta_{A,n}
+
\Delta_{G,n}
=
O_p
\left(
\sqrt{
\frac{\mathfrak C_{\Gamma,n}}
{|\mathcal T_k|}
}
+
r_{-\Gamma,n}
\right).
\]
The quantity $\mathfrak C_{\Gamma,n}$ records the joint effect of the
target-sensitive tangent dimension $J_{\Gamma,n}$, the stacked moment
dimension $d_{s,n}$, basis envelopes, and the operator-norm empirical process
used to estimate the two derivative matrices. In the well-normalized
benchmark with $d_{s,n}\lesssim J_{\Gamma,n}$ and stable local basis
envelopes one may have
$\mathfrak C_{\Gamma,n}\lesssim J_{\Gamma,n}$ up to logarithmic factors, but
this reduction is not imposed when the moment dictionary grows separately.

If
\[
\Delta_{A,n}
=
o_p(\kappa_{\Gamma,n}),
\qquad
\lambda_{\Gamma,n}
=
o(\kappa_{\Gamma,n}^{2}),
\]
then
\[
\|
\widehat\Gamma^{(-k)}
-
\Gamma_0
\|_{\mathrm{op}}
=
O_p
\left(
a_{\Gamma,J_{\Gamma,n}}
+
\frac{\Delta_{G,n}}{\kappa_{\Gamma,n}}
+
\frac{\Delta_{A,n}}{\kappa_{\Gamma,n}^{2}}
+
\frac{\Delta_{G,n}\Delta_{A,n}}{\kappa_{\Gamma,n}^{2}}
+
\frac{\lambda_{\Gamma,n}}{\kappa_{\Gamma,n}^{2}}
\right).
\]

Consequently, a sufficient condition for
\[
r_{\Gamma,n}
=
o(n^{-1/4})
\]
is
\[
a_{\Gamma,J_{\Gamma,n}}
+
\frac{\Delta_{G,n}}{\kappa_{\Gamma,n}}
+
\frac{\Delta_{A,n}}{\kappa_{\Gamma,n}^{2}}
+
\frac{\lambda_{\Gamma,n}}{\kappa_{\Gamma,n}^{2}}
=
o(n^{-1/4}).
\]

When $\kappa_{\Gamma,n}$ is bounded away from zero, this reduces to the
well-posed sieve rate. When
$\kappa_{\Gamma,n}\downarrow0$, the displayed condition makes the additional
cost of estimating the Riesz correction explicit.

\end{proposition}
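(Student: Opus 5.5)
The plan is a deterministic matrix-perturbation argument for the regularized pseudoinverse, combined with the triangle inequality
\[
\|\widehat\Gamma^{(-k)}-\Gamma_0\|_{\mathrm{op}}
\leq
\|\widehat\Gamma^{(-k)}-\Gamma_{0,J}\|_{\mathrm{op}}
+
\|\Gamma_{0,J}-\Gamma_0\|_{\mathrm{op}},
\]
with $J=J_{\Gamma,n}$. The second term is at most $a_{\Gamma,J}$ by hypothesis. Write $A=A_{\eta,0}^{(J)}$, $G=G_{\eta,0}^{(J)}$, $\widehat A=\widehat A_\eta^{(-k)}$, $\widehat G=\widehat G_\eta^{(-k)}$, $E_A=\widehat A-A$, and $E_G=\widehat G-G$. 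Everything is carried out on the event $\{\|E_A\|_{\mathrm{op}}\leq\kappa_{\Gamma,n}/2\}$. Since $\Delta_{A,n}=o_p(\kappa_{\Gamma,n})$, this event has probability approaching one, uniformly over the fixed number of folds.

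On this event, Weyl's inequality gives $\sigma_{\min}(\widehat A)\geq\kappa_{\Gamma,n}/2$. Hence $\widehat A$ has full column rank $J$, so $\widehat A^{\prime}\widehat A+\lambda_{\Gamma,n}I$ has smallest eigenvalue at least $\kappa_{\Gamma,n}^2/4$. I decompose
\[
\widehat\Gamma^{(-k)}-\Gamma_{0,J}
=
E_G\widehat A_{\lambda}^{\dagger}
+
G\left(\widehat A_{\lambda}^{\dagger}-\widehat A^{\dagger}\right)
+
G\left(\widehat A^{\dagger}-A^{\dagger}\right).
\]

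\textbf{Regularization bias.} For the middle term, use the identity
\[
\widehat A^{\dagger}_\lambda-\widehat A^{\dagger}
=
-\lambda_{\Gamma,n}
\left(\widehat A^{\prime}\widehat A+\lambda_{\Gamma,n}I\right)^{-1}
\widehat A^{\dagger}.
\]
This term has norm at most $\lambda_{\Gamma,n}/\kappa_{\Gamma,n}^{3}$ up to constants. That bound is too crude if $\|G\|_{\mathrm{op}}$ is merely bounded. Instead I write $G=\Gamma_{0,J}A$ (Proposition \ref{prop:riesz_existence}), use the source condition $\sup_J\|\Gamma_{0,J}\|_{\mathrm{op}}<\infty$, and bound $\|A(\widehat A^{\prime}\widehat A+\lambda I)^{-1}\|_{\mathrm{op}}$. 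Splitting $A=\widehat A-E_A$ and using the SVD of $\widehat A$ gives a bound of order $\lambda_{\Gamma,n}/\kappa_{\Gamma,n}^{2}$ plus a cross term of smaller order. The condition $\lambda_{\Gamma,n}=o(\kappa_{\Gamma,n}^2)$ keeps this term vanishing.

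\textbf{Noise in $\widehat G$.} The first term is bounded by
\[
\|E_G\|_{\mathrm{op}}\,\|\widehat A^{\dagger}_\lambda\|_{\mathrm{op}}
\lesssim
\Delta_{G,n}/\kappa_{\Gamma,n},
\]
because $\|\widehat A_\lambda^\dagger\|_{\mathrm{op}}\leq 1/\sigma_{\min}(\widehat A)$.

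\textbf{Pseudoinverse perturbation.} For the third term, use Wedin's bound for full-column-rank matrices, which here reduces to
\[
\widehat A^{\dagger}-A^{\dagger}
=
-\widehat A^{\dagger}E_AA^{\dagger}
+
\left(\widehat A^{\prime}\widehat A\right)^{-1}E_A^{\prime}\left(I-AA^{\dagger}\right).
\]
The first piece is $O(\Delta_{A,n}/\kappa_{\Gamma,n}^2)$. Multiplying the second piece by $G=\Gamma_{0,J}A$ gives $\Gamma_{0,J}A(\widehat A^{\prime}\widehat A)^{-1}E_A^{\prime}(I-AA^\dagger)$, which is again $O(\Delta_{A,n}/\kappa_{\Gamma,n}^2)$ by the crude bound. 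Any residual product terms involving $E_G$ and $E_A$ together give the $\Delta_{G,n}\Delta_{A,n}/\kappa_{\Gamma,n}^2$ contribution.

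Collecting the four pieces yields the displayed rate. The ``consequently'' part follows by noting that $\Delta_{G,n}\Delta_{A,n}/\kappa^2_{\Gamma,n}\leq\Delta_{G,n}/\kappa_{\Gamma,n}$, because $\Delta_{A,n}/\kappa_{\Gamma,n}=o_p(1)$, so the product term is absorbed. Uniformity over folds holds because $K$ is fixed and the $O_p$ bounds on $\Delta_{A,n}$ and $\Delta_{G,n}$ are assumed uniform in $k$.

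\textbf{Main obstacle.} The hard part is tracking powers of $\kappa_{\Gamma,n}$ correctly, so that the regularization and pseudoinverse terms come out as $\kappa^{-2}$ rather than the naive $\kappa^{-3}$. This requires systematically routing $G$ through the factorization $G=\Gamma_{0,J}A$ and using the source condition. I would first try to organize the whole argument around $\Gamma_{0,J}A\,\widehat A_\lambda^\dagger$, comparing it with $\Gamma_{0,J}\widehat A\,\widehat A_\lambda^\dagger$, whose norm is at most $\|\Gamma_{0,J}\|_{\mathrm{op}}$.
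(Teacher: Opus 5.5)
Your proposal is correct and follows essentially the same route as the paper: Weyl's inequality on the event $\|E_A\|_{\mathrm{op}}\leq\kappa_{\Gamma,n}/2$, a three-term split of $\widehat\Gamma^{(-k)}-\Gamma_{0,J}$, a Wedin-type pseudoinverse perturbation bound, the source condition routed through $G=\Gamma_{0,J}A$ to reduce the raw $\lambda_{\Gamma,n}/\kappa_{\Gamma,n}^{3}$ Tikhonov gap to $\lambda_{\Gamma,n}/\kappa_{\Gamma,n}^{2}$, and the triangle inequality with $a_{\Gamma,J_{\Gamma,n}}$. The differences are only bookkeeping. You keep $E_G\widehat A_\lambda^{\dagger}$ as a single term, so no product term actually arises and your remark about residual products is unnecessary. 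You also evaluate the Tikhonov gap at $\widehat A$, with a cross term from $A=\widehat A-E_A$, whereas the paper evaluates it at the population matrix $A_{\eta,0}^{(J_{\Gamma,n})}$ and then perturbs the regularized inverse.
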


\noindent\textbf{Proof.}
See Appendix \ref{app:proof_gamma_rate}.

The error in $\widehat\Gamma$ has four components: sieve approximation,
estimation of the target sensitivity, estimation of the nuisance Jacobian,
and Tikhonov regularization bias. Small singular values amplify estimation
error, which is why the last three terms are divided by powers of
$\kappa_{\Gamma,n}$. In the empirically simpler well-posed case,
$\kappa_{\Gamma,n}$ is bounded away from zero and these amplification terms
disappear.

\begin{definition}[Operator-orthogonal SAR score]
\label{def:orthogonal_score}

Define
\[
\psi_i(\theta,\eta,\Gamma)
=
\phi_i(\theta,\eta)
-
\Gamma s_i(\theta,\eta).
\]

\end{definition}

The first term is the original SAR-IV moment. The second subtracts a linear
combination of nuisance moments chosen so that the first-order effect of
estimating $m_0$, $g_0$, $h_0$, and $\ell_0$ cancels. In particular, the
correction removes the first-order generated-$W$ effect operating through
both $W_0Y$ and the spatially transformed instruments.

\begin{proposition}[Approximate blockwise Neyman orthogonality]
\label{prop:orthogonality}

Under Assumption \ref{ass:riesz},
\[
\sup_{1\leq k\leq K}
\left\|
\frac{1}{|\mathcal I_k|}
\sum_{i\in\mathcal I_k}
D_\eta
E[
\psi_i(\theta_0,\eta_0,\Gamma_0)
]
\right\|_{\eta,n}^{*}
\leq
\delta_{R,n}.
\]
For perturbations of $\Gamma$,
\[
D_\Gamma
E[
\psi_i(\theta_0,\eta_0,\Gamma_0)
]
[\delta\Gamma]
=
-
\delta\Gamma
E[
s_i(\theta_0,\eta_0)
]
=
0.
\]

\end{proposition}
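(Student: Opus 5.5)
The argument works directly from Definition \ref{def:orthogonal_score} and the linearity of the Gateaux derivative. Since $\Gamma$ is a fixed linear operator that does not depend on $\eta$, for any block $k$ and admissible direction $\delta\eta$ we have
\[
\frac{1}{|\mathcal I_k|}\sum_{i\in\mathcal I_k}D_\eta E[\psi_i(\theta_0,\eta_0,\Gamma_0)][\delta\eta]
=
G_{\eta,0,k}[\delta\eta]-\Gamma_0A_{\eta,0,k}[\delta\eta].
\]
This identity requires two justifications. First, differentiation and the finite block average commute. Second, the operator $\Gamma_0$ can be pulled outside both the expectation and the derivative. Both steps hold because the block sum is finite and $\Gamma_0$ is a bounded linear map ($\|\Gamma_0\|_{\mathrm{op}}\le C$). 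Differentiability of $\eta\mapsto E[\phi_i]$ and $\eta\mapsto E[s_i]$ along admissible directions is implicit in the definitions of $G_{\eta,0,k}$ and $A_{\eta,0,k}$, and for the $g$ components it is supplied by Assumption \ref{ass:operator_map_smoothness} together with Proposition \ref{prop:w_derivative}.

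Next, divide by $d_{\eta,n}(\delta\eta)$ and take the supremum over directions with $d_{\eta,n}(\delta\eta)>0$. This turns the identity into
\[
\Bigl\|\tfrac{1}{|\mathcal I_k|}\textstyle\sum_{i\in\mathcal I_k}D_\eta E[\psi_i]\Bigr\|_{\eta,n}^{*}=\|G_{\eta,0,k}-\Gamma_0A_{\eta,0,k}\|_{\eta,n}^{*}.
\]
Taking the supremum over the finitely many blocks $k$ and applying Assumption \ref{ass:riesz} gives the bound $\delta_{R,n}$.

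For the $\Gamma$-direction, $\psi_i$ is affine in $\Gamma$, so
\[
\tfrac{\partial}{\partial t}E[\phi_i-(\Gamma_0+t\delta\Gamma)s_i]\big|_{t=0}=-\delta\Gamma\,E[s_i(\theta_0,\eta_0)].
\]
This quantity is zero because $E[s_i(\theta_0,\eta_0)]=0$. The zero-mean property was established blockwise in Subsection \ref{subsubsec:nuisance_moments}: the $m$ block vanishes by the first-stage conditional mean restriction, the $\ell$ block by the definition of $\ell_0$, and the $h$ and $g$ blocks by Assumption \ref{ass:control_function} combined with the $\sigma(C_{i0},\mathcal A_i)$-measurability of $B_h(C_{i0})$ and $\widetilde B_{g,i}(\eta_0)$.

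The only delicate point is the interchange of derivative and expectation and the moment finiteness of $s_i$, which is needed to make $\delta\Gamma E[s_i]$ well defined. I would handle this by noting that the statement is about the derivatives already defined through $G_{\eta,0,k}$ and $A_{\eta,0,k}$, so no additional interchange beyond linearity is required. Finite second moments of the stacked moments can be invoked as maintained regularity, as in Assumption \ref{ass:instrument_validity}.
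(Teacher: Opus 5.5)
Your proposal is correct and follows the paper's proof essentially line for line: linearity of the Gateaux derivative in $\eta$ gives the blockwise identity with $G_{\eta,0,k}[\delta\eta]-\Gamma_0A_{\eta,0,k}[\delta\eta]$, whose dual norm Assumption~\ref{ass:riesz} bounds by $\delta_{R,n}$, and the affine dependence on $\Gamma$ together with $E[s_i(\theta_0,\eta_0)]=0$ gives exact orthogonality in $\Gamma$. Your extra remarks on interchanging the derivative with the block average and on integrability of $s_i$ are harmless, but they add nothing to what the definitions of $G_{\eta,0,k}$ and $A_{\eta,0,k}$ already build in.
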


\noindent\textbf{Proof.}
See Appendix \ref{app:proof_orthogonality}.

At the truth, small first-order perturbations of the nuisance functions have
only a vanishing effect on the population orthogonal score. Estimation of the
debiasing operator itself is also orthogonal because the nuisance moments have
mean zero. This converts leading nuisance-estimation effects into
second-order products, up to the small Riesz approximation error
$\delta_{R,n}$.

\begin{assumption}[Target identification]
\label{ass:rank}

Define the $n$-indexed population target Jacobian
\[
J_{\psi,0,n}
=
\left.
\frac{\partial}{\partial\theta^{\prime}}
\left\{
\frac{1}{n}
\sum_{i=1}^{n}
E[
\psi_i(\theta,\eta_0,\Gamma_0)
]
\right\}
\right|_{\theta=\theta_0}.
\]
There exists a finite matrix $J_{\psi,0}$ such that
\[
\|
J_{\psi,0,n}
-
J_{\psi,0}
\|_{\mathrm{op}}
\rightarrow
0,
\]
and
\[
\sigma_{\min}(J_{\psi,0})
\geq
c_J
>
0.
\]
Consequently, for all sufficiently large $n$,
\[
\sigma_{\min}(J_{\psi,0,n})
\geq
\frac{c_J}{2}.
\]

\end{assumption}

The orthogonal moments must retain enough variation to identify the
finite-dimensional SAR parameter uniformly along the spatial asymptotic
sequence. Writing
\[
J_{\phi,0,n}
=
\left.
\partial_{\theta^{\prime}}
\frac{1}{n}\sum_i
E[\phi_i(\theta,\eta_0)]
\right|_{\theta=\theta_0},
\qquad
J_{s,\theta,0,n}
=
\left.
\partial_{\theta^{\prime}}
\frac{1}{n}\sum_i
E[s_i(\theta,\eta_0)]
\right|_{\theta=\theta_0},
\]
the corrected Jacobian is
\[
J_{\psi,0,n}
=
J_{\phi,0,n}
-
\Gamma_0J_{s,\theta,0,n}.
\]
Thus orthogonalization can remove some variation that was relevant in the
uncorrected SAR-IV moment. Assumption \ref{ass:rank} explicitly requires the
remaining, orthogonalized instrument variation to retain full rank. A simple
sufficient condition is
\[
\sigma_{\min}(J_{\phi,0,n})
-
\|\Gamma_0J_{s,\theta,0,n}\|_{\mathrm{op}}
\geq
c_J
\]
eventually, although the maintained rank condition allows less restrictive
configurations. Orthogonalization removes first-order nuisance sensitivity;
it neither creates target identification nor guarantees that relevance is
preserved without this condition.

\begin{assumption}[Global GMM separation]
\label{ass:global_identification}

Define
\[
\mu_n(\theta)
=
\frac{1}{n}
\sum_{i=1}^{n}
E[
\psi_i(\theta,\eta_0,\Gamma_0)
]
\]
and
\[
Q_n(\theta)
=
\mu_n(\theta)^{\prime}
\mathcal M_0
\mu_n(\theta).
\]
For every $\epsilon>0$,
\[
\liminf_{n\rightarrow\infty}
\inf_{\substack{
\theta\in\Theta:\\
\|\theta-\theta_0\|_2\geq\epsilon
}}
\{
Q_n(\theta)-Q_n(\theta_0)
\}
>
0.
\]

\end{assumption}

The local rank condition identifies $\theta_0$ in a neighborhood of the
truth. Global separation additionally rules out distant parameter values
that fit the population orthogonal moments equally well. This condition is
used for consistency of the GMM minimizer before the local asymptotic
expansion is applied.

\subsection{Localization and Nuisance Rates}\label{subsec:rates}

For $a>0$, define
\[
K_{ij,a}(g)
=
K_{ij}(g)
\mathbbm 1\{d_{ij}^{*}\leq a\}
\]
and row-normalize to obtain $W_{n,a}(g)$. Let
\[
\psi_{i,b_n}(\theta,\eta,\Gamma)
\]
denote the localized score obtained by replacing spatial operators by their
$a_n^W$-localized versions and retaining the complete raw-data footprint
within $\mathcal N_{i,n}(b_n)$.

\begin{assumption}[Score localization]
\label{ass:score_localization}

There exists
$\delta_n^{\mathrm{loc}}(a,b)$ such that, uniformly over a shrinking
neighborhood of
\[
\zeta_0
=
(\eta_0,\Gamma_0),
\]
\[
\|
\psi_i(\theta_0,\zeta)
-
\psi_{i,b_n}(\theta_0,\zeta)
\|_{L^2}
\leq
\delta_n^{\mathrm{loc}}(a_n^W,b_n)
\]
and
\[
\|
D_\zeta\psi_i(\theta_0,\zeta)
-
D_\zeta\psi_{i,b_n}(\theta_0,\zeta)
\|_{L^2,\mathcal H,n}^{*}
\leq
\delta_n^{\mathrm{loc}}(a_n^W,b_n).
\]
Moreover,
\[
\sqrt n\,
\delta_n^{\mathrm{loc}}(a_n^W,b_n)
\rightarrow
0.
\]

\end{assumption}

The actual SAR score can depend on arbitrarily distant observations through
the spatial multiplier. The assumption requires a score built from a growing
but finite spatial footprint to approximate the full score accurately enough
that localization error vanishes at the root-$n$ scale. The primitive NED
results below show how local support and a stable SAR resolvent deliver such
an approximation.

For
\[
\zeta=(\eta,\Gamma),
\qquad
\zeta_0=(\eta_0,\Gamma_0),
\]
define
\begin{align*}
d_{\mathcal H,n}(\zeta,\zeta_0)
={}&
\|m-m_0\|_{L^2}
+
\operatorname{dist}_{\mathcal G,+}
(g,[g_0]_n)
\\
&+
\left[
\frac{1}{n}
\sum_{i=1}^{n}
E
\left\{
h(C_i(m))-h_0(C_{i0})
\right\}^2
\right]^{1/2}
\\
&+
\left[
\frac{1}{n}
\sum_{i=1}^{n}
E
\|\ell(C_i(m))-\ell_0(C_{i0})\|_2^2
\right]^{1/2}
\\
&+
\|\{W_n(g)-W_n(g_0)\}Y\|_{2,n}
\\
&+
\left[
\frac{1}{n}
\sum_{i=1}^{n}
E
\|H_i(g)-H_i(g_0)\|_2^2
\right]^{1/2}
\\
&+
\|\Gamma-\Gamma_0\|_{\mathrm{op}}.
\end{align*}

Let $d_{\eta,n}$ denote the restriction of
$d_{\mathcal H,n}$ to $\Gamma=\Gamma_0$. Because
$d_{\mathcal H,n}$ contains
$\operatorname{dist}_{\mathcal G,+}$, the nuisance metric used for the
quadratic score expansion now includes the stronger interaction-function
envelope required by Assumption \ref{ass:operator_map_smoothness}; the
target-relevant first-order metric remains the weaker
$\operatorname{dist}_{\mathcal G,\mathrm{tar}}$.

For the feasible learner defined in Section \ref{subsec:sieve_learning}, a
sufficient implementation-specific rate envelope is
\[
r_{\zeta,n}
\gtrsim
r_{\mathrm{joint},n}
+
a_{\mathrm{joint},n}
+
r_{g,+,n}
+
r_{\ell,n}^{\mathrm{agg}}
+
r_{\Gamma,n},
\]
where the first two terms control the jointly estimated preliminary
components, $r_{g,+,n}$ is supplied by Proposition
\ref{prop:sieve_g_rate}, $r_{\ell,n}^{\mathrm{agg}}$ accounts for the
vector-valued conditional projection, and $r_{\Gamma,n}$ is supplied by
Proposition \ref{prop:gamma_rate}. Proposition
\ref{prop:g_to_W_transfer} then controls the generated spatial lag and
instrument terms.

\begin{assumption}[Cross-fitted nuisance rates and moments]
\label{ass:rates}

Uniformly over folds,
\[
d_{\mathcal H,n}
\left(
\widehat\zeta^{(-k)},
\zeta_0
\right)
=
O_p(r_{\zeta,n}),
\qquad
r_{\zeta,n}
\rightarrow
0,
\]
and
\[
\max_{1\leq k\leq K}
E
\left[
d_{\mathcal H,n}^{4}
\left(
\widehat\zeta^{(-k)},
\zeta_0
\right)
\right]
\leq
C r_{\zeta,n}^{4}.
\]

If $d_{B_g,n}$ grows, the rate $r_{\zeta,n}$ includes the aggregate
projection error
\[
r_{\ell,n}^{\mathrm{agg}}
=
\left[
\frac{1}{n}
\sum_{i=1}^{n}
E
\|
\widehat\ell(C_i(\widehat m))
-
\ell_0(C_{i0})
\|_2^2
\right]^{1/2},
\]
and a per-coordinate projection rate $\rho_{\ell,n}$ contributes
$O_p(\sqrt{d_{B_g,n}}\rho_{\ell,n})$ absent additional structure.

The benchmark common-rate condition is
\[
\sqrt n\,
r_{\zeta,n}^{2}
\rightarrow
0.
\]

More generally, heterogeneous nuisance rates are permitted provided every
second-order product appearing in the score expansion satisfies
\[
\sqrt n\,
r_{a,n}r_{b,n}
=
o(1),
\]
with the corresponding second- and fourth-moment bounds.

\end{assumption}

Orthogonality means nuisance functions need not be estimated at the
root-$n$ rate. Under a common rate, faster than $n^{-1/4}$ is sufficient
because the leading remaining error is quadratic. The fourth-moment
requirement strengthens a purely $O_p$ rate just enough to justify
expectation bounds involving the random nuisance-training error. This avoids
implicitly converting a probability rate into an $L^1$ or $L^2$ rate.

The common-rate condition is only a convenient benchmark. For example, a
slowly estimated $g_0$ can be combined with a faster first-stage or control
function as long as every nuisance product entering the score expansion is
$o(n^{-1/2})$.

\begin{assumption}[Local score smoothness]
\label{ass:score_smoothness}

The localized score is twice Gateaux differentiable in
$\zeta=(\eta,\Gamma)$ near $\zeta_0$ and
\begin{equation}
\psi_{i,b_n}
\left(
\theta_0,\widehat\zeta^{(-k)}
\right)
-
\psi_{i,b_n}
(\theta_0,\zeta_0)
=
G_{i,b_n}
[
\widehat\zeta^{(-k)}-\zeta_0
]
+
r_{i,k},
\label{eq:score_expansion}
\end{equation}
where
\[
G_{i,b_n}
=
D_\zeta
\psi_{i,b_n}(\theta_0,\zeta_0).
\]
The quadratic remainder satisfies
\[
E
\left[
\|r_{i,k}\|_2^2
\right]
\leq
C
E
\left[
d_{\mathcal H,n}^{4}
\left(
\widehat\zeta^{(-k)},
\zeta_0
\right)
\right].
\]

In addition,
\[
\sup_k
\left\|
\frac{1}{|\mathcal I_k|}
\sum_{i\in\mathcal I_k}
\left[
\partial_\theta
\psi_{i,b_n}
\left(
\theta_0,\widehat\zeta^{(-k)}
\right)
-
\partial_\theta
\psi_i(\theta_0,\zeta_0)
\right]
\right\|
=
o_p(1).
\]

The differentiability requirement applies to the complete composition maps,
including
\[
m
\mapsto
h(C_i(m))
\]
and
\[
m
\mapsto
\ell(C_i(m)).
\]

\end{assumption}

The score must admit an ordinary first-order expansion in all nuisance
objects, with a genuinely quadratic remainder. The condition includes the
fact that changing the first-stage function changes the residual control
index and hence changes both $h(C_i(m))$ and $\ell(C_i(m))$. The final
condition guarantees that the sample Jacobian with respect to the target
parameter can be replaced by its oracle counterpart.

\subsection{Dependent Cross-Fit Leakage}\label{subsec:leakage}

The difficulty created by spatial dependence is the interaction between the
training-sample nuisance error and the derivative of an evaluation score.
Under i.i.d.\ cross-fitting, conditioning on the training sample makes the
corresponding first-order evaluation term mean zero. That argument is not
available here because the evaluation and training regions remain dependent.

For the primitive finite-sieve implementation, let
$\mathcal T_{\zeta,n}$ denote the finite-dimensional target-relevant tangent
space used to represent the fold-specific nuisance error, and let
$J_{\zeta,n}$ denote its dimension. If the population nuisance is not exactly
contained in the estimation sieve, augment the sieve tangent by the
corresponding population sieve-approximation directions. With this
bookkeeping,
\[
\widehat\zeta^{(-k)}-\zeta_0
\in
\mathcal T_{\zeta,n}
\]
with probability approaching one, and $J_{\zeta,n}$ includes any such
approximation directions.

Choose normalized coordinates $v=v(\delta\zeta)$ on
$\mathcal T_{\zeta,n}$. Because
\[
G_{i,b_n}
=
D_\zeta
\psi_{i,b_n}(\theta_0,\zeta_0)
\]
is linear in the tangent direction, define
\[
D_{i,b_n}
\in
\mathbb R^{q\times J_{\zeta,n}}
\]
by its action on a normalized basis of $\mathcal T_{\zeta,n}$. Then
\[
G_{i,b_n}[\delta\zeta]
=
D_{i,b_n}v(\delta\zeta)
\]
exactly for every $\delta\zeta\in\mathcal T_{\zeta,n}$. Thus no
$o(d_{\mathcal H,n}(\delta\zeta))$ representation remainder is used in the
oracle-reduction proof. Equivalently, one may keep the sieve-approximation
directions separate and impose the same derivative and rate bounds on those
directions; the augmented-space formulation is adopted only for notational
simplicity.

Let $\Delta v_k$ denote the normalized coordinates corresponding to
$\widehat\zeta^{(-k)}-\zeta_0$. Under local norm equivalence and Assumption
\ref{ass:rates},
\[
E\|\Delta v_k\|_2^2
\leq
C r_{\zeta,n}^2,
\qquad
E\|\Delta v_k\|_2^4
\leq
C r_{\zeta,n}^4.
\]
Define the normalized training direction
\[
u_k
=
\frac{\Delta v_k}{r_{\zeta,n}}.
\]
Then
\[
E\|u_k\|_2^2
\leq
C,
\qquad
E\|u_k\|_2^4
\leq
C.
\]

Define
\[
\overline D_{k,b_n}
=
\frac{1}{|\mathcal I_k|}
\sum_{i\in\mathcal I_k}
\left(
D_{i,b_n}
-
E[D_{i,b_n}]
\right).
\]

\begin{assumption}[Cross-fit derivative short memory]
\label{ass:crossfit_short_memory}

Uniformly over folds,
\[
\left\|
\overline D_{k,b_n}u_k
-
E[
\overline D_{k,b_n}u_k
]
\right\|_2
=
O_p
\left(
\sqrt{
\frac{J_{\zeta,n}}
{|\mathcal I_k|}
}
\right).
\]

\end{assumption}

This condition controls the random fluctuation of the first-order derivative
around its mean. The factor $\sqrt{J_{\zeta,n}}$ is the finite-sieve
complexity cost. When the target-sensitive nuisance dimension is fixed, the
condition reduces to the familiar $|\mathcal I_k|^{-1/2}$ rate. Proposition
\ref{prop:primitive_short_memory} below gives primitive spatial conditions
under which this assumption holds.

The remaining mean need not be zero because $u_k$ is constructed from
spatially dependent training observations. Rather than taking a supremum over
all possible training-measurable directions, which is stronger than the proof
requires, define the learner-specific leakage coefficient
\[
\chi_n(s_n)
=
\max_{1\leq k\leq K}
\left\|
E[
\overline D_{k,b_n}u_k
]
\right\|_2.
\]
The coefficient therefore measures dependence only along the nuisance
direction actually generated by the fold-$k$ learner.

\begin{assumption}[Spatial block and guard rates]
\label{ass:spatial_blocks}

The number of blocks $K$ is fixed and
\[
\frac{|\mathcal I_k|}{n}
\rightarrow
\pi_k
\in
(0,1).
\]
The fold-specific training sample remains nondegenerate:
\[
\frac{|\mathcal T_k|}{n}
\geq
c_T
>
0
\]
with probability approaching one.

The finite-sieve derivative fluctuation satisfies
\[
r_{\zeta,n}
\sqrt{J_{\zeta,n}}
\rightarrow
0.
\]
The localization, Riesz approximation, and learner-specific leakage satisfy
\[
\sqrt n\,
r_{\zeta,n}
\left[
\chi_n(s_n)
+
\delta_n^{\mathrm{loc}}(a_n^W,b_n)
+
\delta_{R,n}
\right]
\rightarrow
0.
\]

\end{assumption}

Each evaluation fold must contain a nonvanishing share of the sample, while
the guard region cannot remove so many observations that nuisance learning
becomes impossible. The first displayed rate makes the centered derivative
fluctuation negligible at the root-$n$ scale. The second requires the
remaining dependence between the training learner and evaluation derivative,
together with localization and approximate orthogonality errors, to vanish
even faster.

A larger guard reduces $\chi_n(s_n)$ but leaves fewer observations for
training. The theory therefore formalizes the practical bias--sample-size
tradeoff created by buffered spatial cross-fitting.

\begin{lemma}[Dependent cross-fit oracle reduction]
\label{lem:dependent_crossfit}

Under Assumptions
\ref{ass:riesz},
\ref{ass:score_localization},
\ref{ass:rates},
\ref{ass:score_smoothness},
\ref{ass:crossfit_short_memory}, and
\ref{ass:spatial_blocks},
for every fold $k$,
\[
\frac{1}{|\mathcal I_k|}
\sum_{i\in\mathcal I_k}
\psi_{i,b_n}
\left(
\theta_0,\widehat\zeta^{(-k)}
\right)
=
\frac{1}{|\mathcal I_k|}
\sum_{i\in\mathcal I_k}
\psi_{i,b_n}
(\theta_0,\zeta_0)
+
o_p(n^{-1/2})
\]
uniformly over $k$. Consequently,
\[
\frac{1}{n}
\sum_{k=1}^{K}
\sum_{i\in\mathcal I_k}
\psi_{i,b_n}
\left(
\theta_0,\widehat\zeta^{(-k)}
\right)
=
\frac{1}{n}
\sum_{i=1}^{n}
\psi_i(\theta_0,\zeta_0)
+
o_p(n^{-1/2}).
\]

\end{lemma}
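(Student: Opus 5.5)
The argument will start from the score expansion \eqref{eq:score_expansion} of Assumption \ref{ass:score_smoothness}, averaged over $i\in\mathcal I_k$. For each fold this gives
\[
\frac{1}{|\mathcal I_k|}\sum_{i\in\mathcal I_k}\left\{\psi_{i,b_n}(\theta_0,\widehat\zeta^{(-k)})-\psi_{i,b_n}(\theta_0,\zeta_0)\right\}
=\frac{1}{|\mathcal I_k|}\sum_{i\in\mathcal I_k}G_{i,b_n}[\widehat\zeta^{(-k)}-\zeta_0]+\frac{1}{|\mathcal I_k|}\sum_{i\in\mathcal I_k}r_{i,k}.
\]
I will bound the two terms separately.

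\textbf{Remainder term.} By Cauchy--Schwarz and Jensen, $E\|\overline r_k\|_2\le (|\mathcal I_k|^{-1}\sum_i E\|r_{i,k}\|_2^2)^{1/2}\le C\,(E d_{\mathcal H,n}^4)^{1/2}\le C r_{\zeta,n}^2$. This uses the fourth-moment bound in Assumption \ref{ass:rates}, and it is exactly why that bound was imposed instead of a pure $O_p$ rate. Markov's inequality together with $\sqrt n\,r_{\zeta,n}^2\to0$ then gives $o_p(n^{-1/2})$. Under heterogeneous rates, the remainder splits into cross-products, and the product condition in Assumption \ref{ass:rates} handles each one.

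\textbf{Linear term.} On the augmented tangent space $\mathcal T_{\zeta,n}$, write it exactly as $r_{\zeta,n}\,|\mathcal I_k|^{-1}\sum_i D_{i,b_n}u_k$. I then decompose
\[
|\mathcal I_k|^{-1}\sum_i D_{i,b_n}u_k=\Big(|\mathcal I_k|^{-1}\sum_iE[D_{i,b_n}]\Big)u_k+\big(\overline D_{k,b_n}u_k-E[\overline D_{k,b_n}u_k]\big)+E[\overline D_{k,b_n}u_k].
\]
\begin{itemize}
\item \emph{Centered fluctuation.} Assumption \ref{ass:crossfit_short_memory} bounds the middle piece by $O_p(\sqrt{J_{\zeta,n}/|\mathcal I_k|})$. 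Multiplied by $r_{\zeta,n}$ it is $O_p(n^{-1/2}r_{\zeta,n}\sqrt{J_{\zeta,n}})=o_p(n^{-1/2})$, using $|\mathcal I_k|\asymp n$ and $r_{\zeta,n}\sqrt{J_{\zeta,n}}\to0$ from Assumption \ref{ass:spatial_blocks}.
\item \emph{Leakage.} The last piece is at most $\chi_n(s_n)$ in norm. Times $r_{\zeta,n}$ it is $o(n^{-1/2})$ by Assumption \ref{ass:spatial_blocks}.
\item \emph{Population derivative.} This first piece is where orthogonality enters, and I expect it to be the main obstacle.
\end{itemize}

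\textbf{Population-derivative piece.} The quantity $|\mathcal I_k|^{-1}\sum_iE[D_{i,b_n}]u_k$ is the blockwise population derivative of the \emph{localized} score, evaluated at a \emph{random} direction. I will handle it in three steps.
\begin{enumerate}
\item Replace the localized derivative by the unlocalized one, $D_\zeta E\psi_i(\theta_0,\zeta_0)$. The error is at most $\delta_n^{\mathrm{loc}}\,d_{\mathcal H,n}(\widehat\zeta^{(-k)},\zeta_0)$ by the derivative part of Assumption \ref{ass:score_localization}.
\item Split the direction into its $\eta$ and $\Gamma$ components.
\begin{itemize}
\item The $\Gamma$-component vanishes exactly by Proposition \ref{prop:orthogonality}, since $E s_i(\theta_0,\eta_0)=0$.
\item The $\eta$-component is bounded by $\delta_{R,n}\,d_{\eta,n}(\widehat\eta^{(-k)}-\eta_0)$ via the dual-norm bound of Proposition \ref{prop:orthogonality}, which rests on Assumption \ref{ass:riesz}.
\end{itemize}
\item The subtle point is that Proposition \ref{prop:orthogonality} is stated for deterministic directions while $u_k$ is random. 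Because the population derivative is a deterministic linear map, though, the dual-norm bound holds pathwise. So the total is $O_p\big(r_{\zeta,n}(\delta_{R,n}+\delta_n^{\mathrm{loc}})\big)=o_p(n^{-1/2})$, again by Assumption \ref{ass:spatial_blocks}.
\end{enumerate}
I also need to check that the semimetric on the tangent coordinates dominates $d_{\eta,n}$, via local norm equivalence, so that the rates transfer.

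\textbf{Conclusion.} Since $K$ is fixed, all bounds hold uniformly over $k$. For the aggregate statement, weight each fold by $|\mathcal I_k|/n\to\pi_k$, sum over folds, and use disjointness of the blocks. This turns the sum into $n^{-1}\sum_{i}\psi_{i,b_n}(\theta_0,\zeta_0)+o_p(n^{-1/2})$. Finally, replace $\psi_{i,b_n}(\theta_0,\zeta_0)$ by $\psi_i(\theta_0,\zeta_0)$. By Minkowski's inequality the replacement costs at most $\delta_n^{\mathrm{loc}}$ in $L^2$ for the average, which is $o(n^{-1/2})$ because $\sqrt n\,\delta_n^{\mathrm{loc}}\to0$.
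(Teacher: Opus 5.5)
Your proposal is correct and follows the paper's proof essentially step for step. It uses the same score expansion and the same three-way split of the linear term: the population-derivative piece, controlled by approximate orthogonality plus derivative localization; the centered fluctuation, controlled by Assumption \ref{ass:crossfit_short_memory}; and the leakage mean, bounded by $\chi_n(s_n)$. It also uses the same fourth-moment/Minkowski bound on the quadratic remainder and the same final pooling and localization replacement, while your separate treatment of the $\Gamma$ direction and your pathwise application of the deterministic derivative map to the random direction $u_k$ simply spell out what the paper's proof does implicitly.
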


\noindent\textbf{Proof.}
See Appendix \ref{app:proof_dependent_crossfit}.

After orthogonalization and buffered spatial cross-fitting, replacing the true
nuisance objects with their fold-specific estimates has no first-order effect
on the sample score. The feasible score therefore behaves as if
$m_0$, $g_0$, $h_0$, $\ell_0$, and $\Gamma_0$ were known. This is the key
oracle-reduction result needed for root-$n$ inference.

\subsection{Primitive NED Verification for Spatial Cross-Fitting}
\label{subsec:ned}

I now provide primitive sufficient conditions for the short-memory and
leakage restrictions above. The primitive dependence condition is imposed on
an underlying innovation field rather than directly on the globally
simultaneous SAR outcome. The argument proceeds in two steps. I first show
that the random interaction operator inherits a local innovation
approximation from the primitive weight-generating variables. I then combine
this result with truncation of the SAR resolvent to establish NED of the
outcome and the spatial transforms entering the score.

Let
\[
\{\mathcal E_{i,n}:i=1,\ldots,n\}
\]
denote an underlying innovation field. For a set $\mathcal A$, let
\[
\mathcal F_{\mathcal A}^{\mathcal E}
=
\sigma
\left(
\mathcal E_{i,n}:i\in\mathcal A
\right).
\]

For integers $a,b\geq1$, define
\[
\alpha_{a,b,n}^{\mathcal E}(r)
=
\sup_{\substack{
|\mathcal A|\leq a,\;
|\mathcal B|\leq b,\\
d^{*}(\mathcal A,\mathcal B)\geq r
}}
\sup_{\substack{
A\in\mathcal F_{\mathcal A}^{\mathcal E}\\
B\in\mathcal F_{\mathcal B}^{\mathcal E}
}}
|P(A\cap B)-P(A)P(B)|.
\]

\begin{assumption}[Mixing primitive innovation field]
\label{ass:innovation_mixing}

There exist
\[
C_\alpha<\infty,
\qquad
\zeta_\alpha\geq0,
\]
and a nonincreasing function
\[
\bar\alpha_{\mathcal E}(r)
\rightarrow
0
\]
such that
\[
\alpha_{a,b,n}^{\mathcal E}(r)
\leq
C_\alpha
(ab)^{\zeta_\alpha}
\bar\alpha_{\mathcal E}(r)
\]
for the cardinalities relevant below.

\end{assumption}

Primitive innovations may be spatially dependent, but dependence between two
well-separated groups must weaken with distance. The cardinality factor
allows larger sets to be more dependent than individual observations. The
assumption is deliberately imposed on the innovation field rather than on
$Y$, because simultaneous SAR feedback can make the observed outcome
globally dependent even when the underlying innovations are local.

\begin{assumption}[NED summability for the spatial limit theory]
\label{ass:ned_summability}

Let $d_{\mathrm{sp}}$ denote the dimension of the spatial index set. Because
the oracle score dimension $q$ is fixed, for every unit vector
$a\in\mathbb R^q$ the centered scalar field
\[
a^{\prime}
\left\{
\psi_i^0-E[\psi_i^0]
\right\}
\]
is uniformly $L_2$-NED on the primitive innovation field with approximation
coefficient $\nu_\psi(r)$ and is uniformly $L_{2+\delta}$-integrable for
some $\delta>0$, uniformly over $\|a\|_2=1$.

The NED approximation coefficients satisfy
\[
\sum_{r=1}^{\infty}
r^{d_{\mathrm{sp}}-1}\nu_\psi(r)
<
\infty.
\]

Suppose Assumption \ref{ass:innovation_mixing} holds with cardinality
exponent $\zeta_\alpha$. Define
\[
\tau_*
=
\frac{\delta\zeta_\alpha}{2+\delta}.
\]
The mixing-rate function satisfies
\[
\sum_{r=1}^{\infty}
r^{d_{\mathrm{sp}}(\tau_*+1)-1}
\bar\alpha_{\mathcal E}(r)^{\delta/[2(2+\delta)]}
<
\infty.
\]

Finally,
\[
\liminf_{n\to\infty}
\lambda_{\min}
\left[
\frac{1}{n}
\operatorname{Var}
\left(
\sum_{i=1}^n\psi_i^0
\right)
\right]
>0.
\]

For the fixed-dimensional target-derivative field, the corresponding
conditions hold uniformly over unit scalar linear combinations. When a
finite-sieve derivative object has growing dimension $J_{\zeta,n}$, the
required dependence and moment bounds are imposed coordinatewise with
uniform envelopes, and the resulting dimension cost is accounted for
explicitly in Proposition \ref{prop:primitive_short_memory}; it is not
hidden inside the phrase ``fixed scalar linear combination.''

\end{assumption}

\begin{definition}[Spatial near-epoch dependence]
\label{def:ned}

A scalar random field $A_{i,n}$ is $L^p$-NED on
$\{\mathcal E_{i,n}\}$ with approximation coefficient $\nu_A(r)$ if there
exists
\[
A_{i,n}^{[r]}
\in
\mathcal F_{\mathcal N_{i,n}(r)}^{\mathcal E}
\]
such that
\[
\sup_i
\|
A_{i,n}-A_{i,n}^{[r]}
\|_{L^p}
\leq
C\nu_A(r),
\qquad
\nu_A(r)
\rightarrow
0.
\]
For vector- or matrix-valued fields the definition is applied using the
Euclidean or Frobenius norm.

\end{definition}

A variable need not itself depend only on nearby innovations. It is enough
that it can be approximated increasingly well by a variable that does. NED
is therefore well suited to SAR models: the outcome is globally
simultaneous, but the contribution of increasingly distant primitive shocks
can decay sufficiently fast.

\begin{assumption}[Primitive local SAR and smooth weight map]
\label{ass:primitive_sar_ned}

The candidate interaction support satisfies
\[
S_{ij,n}=0
\qquad
\text{if }
d_{ij}^{*}>\bar a_W
\]
for some fixed $\bar a_W<\infty$. Each row contains at least one admissible
neighbor, and
\[
\sup_i
\sum_{j=1}^{n}
S_{ij,n}
\leq
\bar d,
\qquad
\sup_j
\sum_{i=1}^{n}
S_{ij,n}
\leq
\bar d
\]
for a finite constant $\bar d$.

The pair-characteristic map
$r(Z_i,Z_j,D_{ij})$ is locally Lipschitz in its random arguments on the
maintained support. The true interaction function $g_0$ is bounded and
continuously differentiable with bounded first derivative on that support.
The same bounds hold uniformly over the shrinking sieve neighborhood used
for the score expansion.

Let
\[
v_i
=
X_i^{\prime}\beta_0
+
h_0(C_{i0})
+
\xi_i.
\]
There exist $p>4$ and $\delta_p>0$ such that the field
\[
\{v_i,Z_i,X_i,Q_i\}
\]
is either measurable with respect to a fixed-radius neighborhood of the
innovation field or is uniformly $L_{p+\delta_p}$-NED on that field.
Its local innovation approximations satisfy the corresponding
$L_{p+\delta_p}$ bounds with approximation coefficient
$\delta_0(r)$, where $\delta_0(r)\to0$, and the required
$L_{p+\delta_p}$ moments are uniformly bounded.

Define
\[
\chi_p
=
\min
\left\{
1,\frac{\delta_p}{p}
\right\},
\qquad
\widetilde{\delta}_0(r)
=
\delta_0(r)^{\chi_p}.
\]
The coefficient $\widetilde{\delta}_0(r)$ will be used below when an
approximation error in the random interaction operator is multiplied by
another random field. If $\delta_0(r)$ decays exponentially, then
$\widetilde{\delta}_0(r)$ also decays exponentially.

Because $W_0$ is random and may depend on $v$, row stochasticity alone does
not imply an $L^p$ contraction. I therefore impose the propagation-stability
condition
\[
q_p
\equiv
(1-\delta_\rho)\bar d^{1/p}
<
1.
\]
Finally,
\[
|\rho_0|
\leq
1-\delta_\rho.
\]

The displayed degree-based condition may be replaced by any primitive
moment-propagation restriction implying, for the random fields used below,
\[
\sup_i
\|
(W_0^\ell A)_i
\|_{L^p}
\leq
C\kappa_p^\ell
\sup_j
\|A_j\|_{L^{p+\delta_p}}
\]
and the analogous bound for the localized interaction operator, with
\[
|\rho_0|\kappa_p<1.
\]

\end{assumption}

This assumption makes explicit the additional regularity needed because
$W_0$ is random. Local support and smooth row normalization allow local
innovation approximations of the weight-generating characteristics to be
translated into local approximations of the interaction weights. Bounded
candidate degree controls the $L^p$ propagation of random spatial averages,
while $q_p<1$ ensures that this possible moment amplification is dominated
by geometric decay in the SAR multiplier. Thus repeated spatial feedback
remains summable even though row stochasticity by itself is not used as an
$L^p$ contraction.

\begin{lemma}[NED stability of the random interaction operator]
\label{lem:random_weight_ned}

Suppose Assumption \ref{ass:primitive_sar_ned} holds. For a
$c$-innovation approximation $Z_i^{[c]}$ of $Z_i$, construct
$W_0^{[c]}=[w_{ij}^{[c]}]$ by replacing $Z_i$ and $Z_j$ in the true
pair-characteristic index by $Z_i^{[c]}$ and $Z_j^{[c]}$, while retaining
the candidate support and the true interaction function $g_0$.

Then every supported true weight $w_{ij}(g_0)$ is $L^p$-NED on the
primitive innovation field. More specifically,
\[
\sup_i
\left\|
\sum_{j=1}^{n}
|w_{ij}(g_0)-w_{ij}^{[c]}|
\right\|_{L^{p+\delta_p}}
\leq
C\delta_0(c).
\]

Moreover, for every random field $A$ satisfying
\[
\sup_j
\|A_j\|_{L^{p+\delta_p}}
<
\infty,
\]
the operator approximation satisfies
\[
\sup_i
\left\|
\{(W_0-W_0^{[c]})A\}_i
\right\|_{L^p}
\leq
C\widetilde{\delta}_0(c).
\]

Thus local innovation approximations of the weight-generating
characteristics induce local approximations of both the random interaction
weights themselves and their action on random fields.

\end{lemma}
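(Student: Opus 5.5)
The plan is to treat the row-normalized weight as a smooth softmax map of the supported pair scores. I will transfer the $L^{p+\delta_p}$ innovation approximation of $Z$ through the Lipschitz chain $Z\mapsto R_{ij}\mapsto g_0(R_{ij})\mapsto w_{ij}$. For row $i$, write $\varsigma_{ij}=g_0(R_{ij})$ and $\varsigma_{ij}^{[c]}=g_0(R_{ij}^{[c]})$, where $R_{ij}^{[c]}=r(Z_i^{[c]},Z_j^{[c]},D_{ij})$. On the support, the map $(\varsigma_{ij})_j\mapsto (w_{ij})_j$ is the softmax over at most $\bar d$ coordinates. Its Jacobian is $\operatorname{diag}(w)-ww'$, so the mean-value theorem gives
\[
\sum_j|w_{ij}-w_{ij}^{[c]}|\le 2\max_{j:S_{ij,n}=1}|\varsigma_{ij}-\varsigma_{ij}^{[c]}|\le 2\sum_{j:S_{ij,n}=1}|\varsigma_{ij}-\varsigma_{ij}^{[c]}|.
\]
The constant here does not depend on the level of $g_0$, because the softmax Jacobian has $\ell_1$-operator norm at most $2$. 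Since $g_0$ has a bounded derivative, $|\varsigma_{ij}-\varsigma_{ij}^{[c]}|\le \|\nabla g_0\|_\infty|R_{ij}-R_{ij}^{[c]}|$. Local Lipschitzness of $r$ then bounds this by $L(Z_i,Z_j,Z^{[c]}_i,Z^{[c]}_j)\,(|Z_i-Z_i^{[c]}|+|Z_j-Z_j^{[c]}|)$.

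\textbf{Main obstacle.} The Lipschitz modulus of $r$ is only local, so $L$ may be random. I will first reduce to a bounded modulus: either the maintained support is compact in the random arguments, or the Lipschitz constant has polynomial growth with moments available from the $L_{p+\delta_p}$ bounds. In the clean case of a uniformly bounded constant on the maintained support, the first bound follows. Take $L^{p+\delta_p}$ norms, use Minkowski over the at most $\bar d$ supported $j$, and apply the NED bound $\sup_i\|Z_i-Z_i^{[c]}\|_{L^{p+\delta_p}}\le C\delta_0(c)$. This gives $C\bar d\,\delta_0(c)$. Since $w_{ij}^{[c]}$ is a function of $Z_i^{[c]},Z_j^{[c]}$ with $d_{ij}^*\le\bar a_W$, it is measurable with respect to innovations within $c+\bar a_W$ of $i$. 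Hence each supported weight is $L^p$-NED, after re-indexing the radius and using $\|\cdot\|_{L^p}\le\|\cdot\|_{L^{p+\delta_p}}$. If the modulus is random, I would use Hölder to split off $L$, which costs a power of $\delta_0(c)$. I expect this is where the exponent $\chi_p$ genuinely enters.

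For the action on a random field $A$, write $\{(W_0-W_0^{[c]})A\}_i=\sum_j(w_{ij}-w_{ij}^{[c]})A_j$. This sum has at most $\bar d$ terms, each bounded by $|w_{ij}-w_{ij}^{[c]}|\,|A_j|$. Hölder with $1/p=1/s+1/(p+\delta_p)$ would require an $L^s$ bound on the weight difference with $s=p(p+\delta_p)/\delta_p$, which exceeds $p+\delta_p$ when $\delta_p<p$. Instead I will interpolate using $|w_{ij}-w_{ij}^{[c]}|\le 1$. For $\delta_p\le p$,
\[
E|w-w^{[c]}|^{s}\le E|w-w^{[c]}|^{p+\delta_p}\le C\delta_0(c)^{p+\delta_p},
\]
so $\|w-w^{[c]}\|_{L^s}\le C\delta_0(c)^{(p+\delta_p)/s}=C\delta_0(c)^{\delta_p/p}$. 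When $\delta_p\ge p$, I will take $s=p+\delta_p$ directly together with a lower moment of $A$, which gives exponent $1$. Combining the two cases gives $\delta_0(c)^{\min\{1,\delta_p/p\}}=\widetilde\delta_0(c)$. Summing over at most $\bar d$ neighbors and using $\sup_j\|A_j\|_{L^{p+\delta_p}}<\infty$ yields the operator bound $C\widetilde\delta_0(c)$ uniformly in $i$.
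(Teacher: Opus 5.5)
Your proposal is correct and follows essentially the same route as the paper. It uses a Lipschitz transfer $Z\mapsto g_0(R_{ij})$ with a uniform constant, which the paper also assumes implicitly. It then applies the bounded-degree softmax $\ell_1$ bound for the first display. Finally, it uses H\"older plus interpolation against $|w_{ij}-w_{ij}^{[c]}|\le 1$ for the operator action, treating the cases $\delta_p\le p$ and $\delta_p>p$ exactly as the paper does (the paper phrases this via $B_{i,c}\le 2$ and $M_i=\max_j|A_j|$). One correction to your aside: $\chi_p$ enters only through that interpolation step, not through a random Lipschitz modulus. With a genuinely random modulus, the first display at the exact rate $\delta_0(c)$ would not follow from the maintained $L^{p+\delta_p}$ bounds, so the bounded-modulus case you call clean is the one the lemma actually needs.
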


\noindent\textbf{Proof.}
See Appendix \ref{app:proof_sar_ned_weights}.

The lemma isolates the additional step created by a random interaction
operator. Even though $W_0$ depends on the potentially endogenous
weight-generating characteristics, local support and smooth row
normalization prevent this randomness from destroying the local
approximation inherited from the primitive innovation field. The result is
the operator-level input used to establish NED of the globally simultaneous
SAR outcome.

\begin{proposition}[NED of the SAR outcome and spatial transforms]
\label{prop:sar_ned}

Under Assumptions \ref{ass:stability} and
\ref{ass:primitive_sar_ned}, the SAR reduced form
\[
Y
=
\sum_{\ell=0}^{\infty}
\rho_0^\ell
W_0^\ell v
\]
is $L^p$-NED. More precisely, let
\[
L(r)
=
\left\lfloor
\frac{r}{4\bar a_W}
\right\rfloor .
\]
There exists $Y_i^{[r]}$, measurable with respect to innovations within
distance $r+O(1)$ of unit $i$, such that
\[
\sup_i
\|Y_i-Y_i^{[r]}\|_{L^p}
\leq
C
\left\{
\widetilde{\delta}_0(r/4)
+
q_p^{\,L(r)+1}
\right\}.
\]

The same conclusion holds for every fixed-order transform
\[
W_0^\ell X,
\qquad
W_0^\ell Q,
\]
and, up to the corresponding finite-sieve complexity factor, for smooth
finite-sieve score derivatives constructed from a fixed number of such
transforms. Hence exponentially NED primitive variables imply exponentially
NED interaction weights by Lemma \ref{lem:random_weight_ned}, and
exponentially NED outcomes, spatial transforms, and smooth finite-sieve
score derivatives by the present proposition.

\end{proposition}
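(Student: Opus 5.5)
We prove NED of $Y$ by truncating the Neumann series and then localizing each retained term. Since $|\rho_0|\leq 1-\delta_\rho$ and $\|W_0\|_\infty=1$, the reduced form $Y=\sum_{\ell\ge0}\rho_0^\ell W_0^\ell v$ converges. We split it at $L=L(r)$ into $\sum_{\ell\le L}\rho_0^\ell W_0^\ell v$ and the tail $\sum_{\ell>L}\rho_0^\ell W_0^\ell v$. The tail is controlled by the propagation-stability bound of Assumption~\ref{ass:primitive_sar_ned}. By Hölder's inequality on a row with at most $\bar d$ supported entries and weights in $[0,1]$, we get $\|(W_0A)_i\|_{L^p}\le \bar d^{1/p}\sup_j\|A_j\|_{L^p}$ (using the $L^{p+\delta_p}$ version where needed). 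Iterating this gives $\sup_i\|(\rho_0^\ell W_0^\ell v)_i\|_{L^p}\le C q_p^{\ell}$, so the tail is $O(q_p^{L+1})$, since $q_p<1$.

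For the finite part, we build the approximation $Y_i^{[r]}=\sum_{\ell\le L}\rho_0^\ell\{(W_0^{[c]})^\ell v^{[c]}\}_i$ with $c=r/4$. Here $W_0^{[c]}$ is the operator of Lemma~\ref{lem:random_weight_ned} and $v^{[c]}$ is the local innovation approximation of $v$. Because the support has radius $\bar a_W$, the $i$-th entry of $(W_0^{[c]})^\ell v^{[c]}$ depends only on units within $\ell\bar a_W\le r/4$ of $i$, and each of those depends on innovations within a further $c$. With $L\bar a_W\le r/4$, the approximation is therefore measurable with respect to innovations within distance $r/2+O(1)\le r+O(1)$. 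To bound the error we use the telescoping identity
\[
W_0^\ell v-(W_0^{[c]})^\ell v^{[c]}=\sum_{t=0}^{\ell-1}W_0^{t}(W_0-W_0^{[c]})(W_0^{[c]})^{\ell-1-t}v^{[c]}+(W_0^{[c]})^\ell(v-v^{[c]}).
\]
Each middle factor applies $(W_0-W_0^{[c]})$ to a field with uniformly bounded $L^{p+\delta_p}$ moments, so Lemma~\ref{lem:random_weight_ned} bounds it in $L^p$ by $C\widetilde\delta_0(c)$. The outer $W_0^t$ is then propagated with the $\bar d^{1/p}$ factor. Summing over $t$ and weighting by $|\rho_0|^\ell$ produces terms of order $\ell q_p^{\ell-1}\widetilde\delta_0(c)$, whose sum over $\ell$ is finite because $q_p<1$. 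The last term is $O(q_p^\ell\delta_0(c))$. Together these give the $C\widetilde\delta_0(r/4)$ contribution.

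The main obstacle is moment bookkeeping. The lemma needs its input fields in $L^{p+\delta_p}$ but returns only $L^p$ bounds, and the outer multiplication by $W_0^t$ must not require more moments than are available. We handle this by proving the bound for $(W_0^{[c]})^{j}v^{[c]}$ at level $L^{p+\delta_p}$, using the same degree argument and the fact that the approximate weights are still row-stochastic on the same support. We then perform the single lemma application and all subsequent propagation in $L^p$, which the degree-based $q_p$ condition (or the stated replacement $\kappa_p$ bound) supports. This is also why $\widetilde\delta_0$ appears rather than $\delta_0$.

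Fixed-order transforms $W_0^\ell X$ and $W_0^\ell Q$ are handled by the same telescoping argument with a finite number of terms, so no tail term arises. Smooth finite-sieve score derivatives are finite compositions of such transforms, $Y$, and locally Lipschitz functions with bounded derivatives. Applying a Lipschitz/Hölder product bound coordinatewise yields NED with the same coefficients, up to the finite-sieve complexity factor accounted for in the envelope. Finally, if $\delta_0$ decays exponentially, then $\widetilde\delta_0(r/4)$ and $q_p^{L(r)}=\exp\{-c\,r\}$ both decay exponentially, which gives the stated exponential NED conclusion.
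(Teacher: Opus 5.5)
Your route is the paper's: the pointwise Jensen/degree bound $\|W_0A\|_{p,\infty}\le\bar d^{1/p}\|A\|_{p,\infty}$, truncation at $L(r)$ with tail $O(q_p^{L+1})$, the localized sum with $c=r/4$, a telescoping decomposition handled by Lemma~\ref{lem:random_weight_ned}, and summation using $q_p<1$. Your explicit $L^{p+\delta_p}$ propagation of $(W_0^{[c]})^jv^{[c]}$ is what the paper compresses into its ``maintained higher-moment propagation bound.''

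One step is wrong as written: your telescoping identity. Its right-hand side falls short of the left-hand side by $\{W_0^\ell-(W_0^{[c]})^\ell\}(v-v^{[c]})$. The reason is that you telescoped the operators against $v^{[c]}$ but then paired that sum with $(W_0^{[c]})^\ell(v-v^{[c]})$. Use instead
\[
W_0^\ell v-(W_0^{[c]})^\ell v^{[c]}
=
W_0^\ell(v-v^{[c]})
+
\sum_{t=0}^{\ell-1}W_0^{t}(W_0-W_0^{[c]})(W_0^{[c]})^{\ell-1-t}v^{[c]},
\]
as the paper does. Alternatively, keep $(W_0^{[c]})^\ell(v-v^{[c]})$ and telescope against $v$ instead. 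Nothing else changes: the pointwise Jensen bound holds for $W_0$ as well as for $W_0^{[c]}$, so after weighting by $|\rho_0|^\ell$ the first term is still $O(q_p^\ell\delta_0(c))$.

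A smaller precision point concerns the fields $(W_0^{[c]})^{\ell-1-t}v^{[c]}$ that you feed into the lemma. They are not uniformly bounded in $L^{p+\delta_p}$; the degree argument only bounds them by $C\bar d^{(\ell-1-t)/(p+\delta_p)}$. Your $\ell q_p^{\ell-1}$ accounting therefore needs the constant in Lemma~\ref{lem:random_weight_ned} to be linear in $\sup_j\|A_j\|_{L^{p+\delta_p}}$. Its H\"older-based proof does give this, but you should say so.
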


\noindent\textbf{Proof.}
See Appendix \ref{app:proof_sar_ned}.

Although $Y$ depends on the entire spatial system, distant innovations affect
it through progressively longer paths in the SAR multiplier. Those paths
receive geometrically shrinking coefficients. Lemma
\ref{lem:random_weight_ned} ensures that randomness in $W_0$ preserves the
required local approximation property, while resolvent truncation controls
the additional global propagation generated by simultaneous spatial
feedback. Together these results supply the link from primitive spatial
dependence to the score objects used in buffered spatial cross-fitting.

\subsubsection{Primitive verification of derivative short memory}

\begin{proposition}[Primitive short-memory bound for the score derivative]
\label{prop:primitive_short_memory}

Suppose Assumptions \ref{ass:innovation_mixing} and
\ref{ass:primitive_sar_ned} hold, and suppose the coordinate fields of
$D_{i,b_n}$ satisfy the corresponding NED covariance and fourth-moment
summability conditions uniformly over $n$. Suppose also that the spatial
locations satisfy increasing-domain regularity with uniformly bounded local
density.

In addition, for every target-sensitive coordinate of the centered
derivative field, the fourth-order spatial moment bound
\[
E
\left|
\sum_{i\in A}
\{D_{i,b_n,j\ell}-E[D_{i,b_n,j\ell}]\}
\right|^4
\leq
C|A|^2
\]
holds uniformly over finite evaluation regions $A$, $n$, and the
target-sensitive coordinates $(j,\ell)$.

Then, uniformly over the fixed number of folds,
\[
E
\left[
\|
\overline D_{k,b_n}
\|_F^4
\right]
\leq
C
\frac{J_{\zeta,n}^{2}}
{|\mathcal I_k|^{2}}.
\]
If
\[
E\|u_k\|_2^4
\leq
C,
\]
then
\[
\left\|
\overline D_{k,b_n}u_k
-
E[
\overline D_{k,b_n}u_k
]
\right\|_2
=
O_p
\left(
\sqrt{
\frac{J_{\zeta,n}}
{|\mathcal I_k|}
}
\right).
\]
Hence Assumption \ref{ass:crossfit_short_memory} holds.

\end{proposition}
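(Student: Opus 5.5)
The plan is to treat the two displayed conclusions in turn: a Frobenius fourth-moment bound obtained coordinatewise, followed by a Cauchy--Schwarz/Markov argument for the product with the random direction $u_k$.

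\textbf{Step 1: the Frobenius bound.} Write
\[
\|\overline D_{k,b_n}\|_F^2=\sum_{(j,\ell)}\overline D_{k,j\ell}^2,
\qquad
\overline D_{k,j\ell}=|\mathcal I_k|^{-1}\sum_{i\in\mathcal I_k}\{D_{i,b_n,j\ell}-E[D_{i,b_n,j\ell}]\}.
\]
The matrix has $q$ rows with $q$ fixed and $J_{\zeta,n}$ columns, so the number of coordinates is $qJ_{\zeta,n}$. Apply Jensen's (power-mean) inequality to the squared sum of $qJ_{\zeta,n}$ terms:
\[
\|\overline D_{k,b_n}\|_F^4\le qJ_{\zeta,n}\sum_{(j,\ell)}\overline D_{k,j\ell}^4.
\]
Take expectations and insert the assumed fourth-order spatial moment bound with $A=\mathcal I_k$. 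Each coordinate then satisfies
\[
E\overline D_{k,j\ell}^4\le C|\mathcal I_k|^2/|\mathcal I_k|^4=C|\mathcal I_k|^{-2}.
\]
Summing over the $qJ_{\zeta,n}$ coordinates and multiplying by $qJ_{\zeta,n}$ gives $Cq^2J_{\zeta,n}^2|\mathcal I_k|^{-2}$. Because the fourth-moment constant is uniform over regions, $n$ and coordinates, and $K$ is fixed, the bound is uniform over folds.

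\textbf{Where the NED/mixing hypotheses enter.} Their role, together with Assumptions \ref{ass:innovation_mixing} and \ref{ass:primitive_sar_ned} and Proposition \ref{prop:sar_ned}, is to justify the fourth-order moment bound. If one prefers to derive it rather than assume it, expand the fourth moment into fourth-order cumulant and pairwise-covariance sums. Bound these with the NED covariance inequality of \citet{JenishPrucha2012}, using increasing-domain regularity to count lattice points at distance $r$. Summability of $r^{d_{\mathrm{sp}}-1}$ against the decay coefficients then yields $C|A|^2$. As stated, however, this bound is a hypothesis, so Step 1 is purely algebraic.

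\textbf{Step 2: the product with $u_k$.} Use
\[
\|\overline D_{k,b_n}u_k\|_2\le\|\overline D_{k,b_n}\|_F\|u_k\|_2
\]
and Cauchy--Schwarz to get
\[
E\|\overline D_{k,b_n}u_k\|_2^2\le (E\|\overline D_{k,b_n}\|_F^4)^{1/2}(E\|u_k\|_2^4)^{1/2}\le CJ_{\zeta,n}/|\mathcal I_k|.
\]
The centered vector $\overline D_{k,b_n}u_k-E[\overline D_{k,b_n}u_k]$ has second moment no larger than this, since centering does not increase the second moment. Markov's inequality then gives the $O_p(\sqrt{J_{\zeta,n}/|\mathcal I_k|})$ rate uniformly over the fixed number of folds, which is exactly Assumption \ref{ass:crossfit_short_memory}.

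\textbf{Main obstacle: dependence between $u_k$ and $\overline D_{k,b_n}$.} In the spatial setting the training direction $u_k$ is not independent of the evaluation-block derivative, so conditioning cannot be used as in the i.i.d.\ case. The plan sidesteps this with the crude operator bound and Cauchy--Schwarz, which require no independence at all. The price is the dimension factor $J_{\zeta,n}$ rather than a fixed-dimension rate. The mean term $E[\overline D_{k,b_n}u_k]$ is not controlled here; it is left to the leakage coefficient $\chi_n(s_n)$.
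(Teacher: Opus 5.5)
Your proposal is correct and follows the paper's proof essentially step for step. You obtain the Frobenius bound from the coordinatewise fourth-moment hypothesis via $(\sum_m x_m^2)^2\le N\sum_m x_m^4$, then pass to the $O_p$ rate by H\"older/Cauchy--Schwarz with $E\|u_k\|_2^4\le C$ and Markov; the only cosmetic difference is that you handle centering through $E\|X-EX\|_2^2\le E\|X\|_2^2$, whereas the paper bounds $\|E[\overline D_{k,b_n}u_k]\|_2$ by the same $C\sqrt{J_{\zeta,n}/|\mathcal I_k|}$ and applies the triangle inequality.
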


\noindent\textbf{Proof.}
See Appendix \ref{app:proof_derivative_short_memory}.

The centered derivative matrix behaves like a spatial sample average.
Under summable NED dependence its stochastic size is therefore
$|\mathcal I_k|^{-1/2}$ per target-sensitive coordinate. Multiplying by a
normalized nuisance-training error contributes the finite-sieve factor
$\sqrt{J_{\zeta,n}}$. This establishes the short-memory bound used in the
dependent cross-fitting argument.

\subsubsection{Innovation-local approximations of the learner}

Let $c_n\rightarrow\infty$ be an innovation-localization radius. Replace every
random object entering $D_{i,b_n}$ by its $c_n$-innovation approximation and
denote the resulting derivative matrix by
\[
D_{i,b_n}^{[c_n]}.
\]
Assume
\[
\max_i
\|
D_{i,b_n}
-
D_{i,b_n}^{[c_n]}
\|_{L^{2+\delta_\alpha},F}
\leq
\delta_{D,n}^{\mathrm{NED}}(c_n).
\]

Define the theoretical coupled nuisance estimator
\[
\widehat\zeta^{(-k,[c_n])}
\]
by applying exactly the same fold-$k$ sieve-GMM map, regularization rule, and
tuning procedure to the $c_n$-innovation approximations of the training
observations. This coupled estimator is used only in the proof.

\begin{assumption}[Stability of the local sieve learner under NED coupling]
\label{ass:learner_coupling}

Let $c_n\to\infty$ be an innovation-localization radius. For fold $k$, let
\[
\widehat{\zeta}^{(-k,[c_n])}
\]
denote the theoretical coupled nuisance estimator obtained by applying the
same fold-$k$ \emph{local-basin} sieve-GMM, profiling, Riesz-regularization,
and tuning maps to the $c_n$-innovation approximations of the training
observations. The local basin is the normalized identified basin used in
Section \ref{subsec:sieve_learning}, not a selection among separated global
minima. Let $\Delta v_k^{[c_n]}$ denote the normalized target-relevant
sieve-coordinate error of the coupled estimator and define
\[
u_k^{[c_n]}
=
\frac{\Delta v_k^{[c_n]}}{r_{\zeta,n}}.
\]
Uniformly over the fixed number of folds,
\[
\left(
E\|u_k-u_k^{[c_n]}\|_2^2
\right)^{1/2}
\leq
C\delta_{\mathrm{tr},n}^{\mathrm{NED}}(c_n),
\qquad
\delta_{\mathrm{tr},n}^{\mathrm{NED}}(c_n)\to0,
\]
and
\[
\sup_k E\|u_k^{[c_n]}\|_2^4
\leq
C.
\]

For the smooth penalized sieve-GMM implementation used in this paper, a
sufficient condition is the following. With probability approaching one,
both the actual and coupled preliminary estimators enter the same normalized
identified basin, and the local criterion on that basin is twice continuously
differentiable with
\[
\inf_{\zeta\in\mathcal B_{k,n}}
\lambda_{\min}
\left\{
\nabla^2 Q_{k,n}(\zeta)
\right\}
\geq
\kappa_{\mathrm{crit},n}
>
0.
\]
The population criterion has a separation margin from the boundary of the
basin, and the perturbation induced by replacing the training variables with
their $c_n$-innovation approximations is
$o_p(\kappa_{\mathrm{crit},n}b_{k,n}^2)$, where $b_{k,n}$ is the basin
radius. The numerical algorithm returns this local solution for both the
actual and coupled criteria with probability approaching one.

Suppose replacement of the training moments and gradients by their
$c_n$-innovation approximations changes the local first-order condition by
at most $\delta_{\mathrm{mom},n}(c_n)$, and suppose the tuning rule is either
deterministic or satisfies the coupling-stability bound
$\delta_{\mathrm{tun},n}(c_n)$. Then local strong convexity and the
mean-value expansion of the first-order conditions imply
\[
\left(
E\|u_k-u_k^{[c_n]}\|_2^2
\right)^{1/2}
\lesssim
\frac{
\delta_{\mathrm{mom},n}(c_n)
+
\delta_{\mathrm{tun},n}(c_n)
}{
\kappa_{\mathrm{crit},n}r_{\zeta,n}
}.
\]
It is therefore sufficient that
\[
\frac{
\delta_{\mathrm{mom},n}(c_n)
+
\delta_{\mathrm{tun},n}(c_n)
}{
\kappa_{\mathrm{crit},n}r_{\zeta,n}
}
\lesssim
\delta_{\mathrm{tr},n}^{\mathrm{NED}}(c_n)
\rightarrow0.
\]
For deterministic tuning parameters,
\[
\delta_{\mathrm{tun},n}(c_n)=0.
\]

\end{assumption}

The coupling argument is thus local rather than global. Multiple starting
values may still be useful computationally, but the proof does not assume
that a discontinuous global-argmin or minimum-norm selection map is stable
under an NED perturbation. Consistency and the joint rate place the learner
in a separated identified basin; local curvature then controls the
perturbation of the fitted nuisance coefficients within that basin. This is
the object needed to apply the spatial mixing inequality to genuinely
separated innovation sets.

\begin{proposition}[Primitive NED bound for learner-specific leakage]
\label{prop:ned_leakage}

Suppose Assumptions
\ref{ass:innovation_mixing},
\ref{ass:primitive_sar_ned}, and
\ref{ass:learner_coupling} hold. Suppose every coordinate of the
target-normalized derivative matrix has a uniformly bounded
$(2+\delta_\alpha)$ moment and define
\[
\nu_\alpha
=
\frac{\delta_\alpha}
{2(2+\delta_\alpha)}.
\]

Let $v_{\psi,n}$ bound the cardinality of the localized evaluation-score raw
footprint and let $v_{T,n}$ bound the fold-specific training footprint. If
\[
s_n
>
2c_n,
\]
then
\begin{align*}
\chi_n(s_n)
\lesssim{}&
\sqrt{J_{\zeta,n}}
\left(
v_{\psi,n}v_{T,n}
\right)^{\zeta_\alpha\nu_\alpha}
\bar\alpha_{\mathcal E}
(s_n-2c_n)^{\nu_\alpha}
\\
&+
\sqrt{J_{\zeta,n}}
\delta_{D,n}^{\mathrm{NED}}(c_n)
+
\delta_{\mathrm{tr},n}^{\mathrm{NED}}(c_n).
\end{align*}

\end{proposition}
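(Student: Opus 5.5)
We bound $\chi_n(s_n)=\max_k\|E[\overline D_{k,b_n}u_k]\|_2$ by replacing both the evaluation derivative and the training direction with their innovation-local couplings, and then applying a covariance inequality for $\alpha$-mixing fields to genuinely separated innovation sets. Fix a fold $k$. Since $E[\overline D_{k,b_n}]=0$ by construction, we write
\[
E[\overline D_{k,b_n}u_k]
=
E[\overline D_{k,b_n}(u_k-u_k^{[c_n]})]
+
E[(\overline D_{k,b_n}-\overline D_{k,b_n}^{[c_n]})u_k^{[c_n]}]
+
E[\overline D_{k,b_n}^{[c_n]}u_k^{[c_n]}].
\]
Here $\overline D_{k,b_n}^{[c_n]}$ is the block average of $D_{i,b_n}^{[c_n]}-E[D_{i,b_n}]$. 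We then bound these three terms in turn.

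For the first term we use Cauchy--Schwarz, $\|E[\overline D u]\|_2\le (E\|\overline D_{k,b_n}\|_F^2)^{1/2}(E\|u_k-u_k^{[c_n]}\|_2^2)^{1/2}$. Proposition~\ref{prop:primitive_short_memory} gives $E\|\overline D_{k,b_n}\|_F^2\lesssim J_{\zeta,n}/|\mathcal I_k|$, and $J_{\zeta,n}/|\mathcal I_k|$ is bounded under $r_{\zeta,n}\sqrt{J_{\zeta,n}}\to0$ and $|\mathcal I_k|\asymp n$. Together with Assumption~\ref{ass:learner_coupling}, this makes the first term $O(\delta^{\mathrm{NED}}_{\mathrm{tr},n}(c_n))$. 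If one prefers not to use the short-memory proposition here, the cruder bound $\max_i\|D_{i,b_n}\|_{L^2,F}\lesssim\sqrt{J_{\zeta,n}}$ from the uniform coordinate moments gives the same form up to the $\sqrt{J_{\zeta,n}}$ factor.

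For the second term we apply H\"older's inequality with exponents $2+\delta_\alpha$ and its conjugate. The error $\max_i\|D_{i,b_n}-D_{i,b_n}^{[c_n]}\|_{L^{2+\delta_\alpha},F}\le\delta^{\mathrm{NED}}_{D,n}(c_n)$ passes through the block average by the triangle inequality. The fourth-moment bound $E\|u_k^{[c_n]}\|_2^4\le C$ controls the conjugate norm because $(2+\delta_\alpha)/(1+\delta_\alpha)<4$. This gives the term $\sqrt{J_{\zeta,n}}\,\delta^{\mathrm{NED}}_{D,n}(c_n)$, where the $\sqrt{J_{\zeta,n}}$ factor is conservative bookkeeping from the coordinatewise definition.

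The third term is the main step. Each $D_{i,b_n}^{[c_n]}$ with $i\in\mathcal I_k$ is measurable with respect to innovations within $b_n+c_n$ of $\mathcal I_k$, that is, inside the enlarged evaluation footprint. The coupled learner $u_k^{[c_n]}$ is built from $c_n$-approximations of training observations, whose raw footprints lie in $\mathcal A_k^{\mathrm{aux}}$, so it is measurable with respect to innovations within $c_n$ of $\{j:d^*(j,\mathcal I_k)>b_n+s_n\}$. The two innovation sets are therefore separated by at least $s_n-2c_n>0$.

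For each coordinate $(j,\ell)$ we apply the $\alpha$-mixing covariance inequality (e.g. the form used in \citet{JenishPrucha2009}) to the centered $D^{[c_n]}_{i,b_n,j\ell}$ and $u^{[c_n]}_{k,\ell}$. Two details make this work. First, $E[D^{[c_n]}_{i,b_n}]-E[D_{i,b_n}]$ is of order $\delta^{\mathrm{NED}}_{D,n}$, so this mean shift is absorbed into the second term. Second, using $2+\delta_\alpha$ moments on both factors, the mixing coefficient enters with exponent $\nu_\alpha=\delta_\alpha/[2(2+\delta_\alpha)]$. Assumption~\ref{ass:innovation_mixing} bounds the mixing coefficient by $C(v_{\psi,n}v_{T,n})^{\zeta_\alpha}\bar\alpha_{\mathcal E}(s_n-2c_n)$, with cardinalities bounded by the two footprints.

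Summing over $\ell$ and taking the Euclidean norm over $j$ then gives
\[
\sqrt{J_{\zeta,n}}\,(v_{\psi,n}v_{T,n})^{\zeta_\alpha\nu_\alpha}\,\bar\alpha_{\mathcal E}(s_n-2c_n)^{\nu_\alpha}.
\]
The factor is $\sqrt{J_{\zeta,n}}$ rather than $J_{\zeta,n}$ because we apply the inequality to the scalar $a'\overline D^{[c_n]}_{k,b_n}u_k^{[c_n]}$ for unit vectors $a$ and use $\|u_k^{[c_n]}\|_2$ moments, rather than bounding coordinates separately.

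The obstacle I expect is this last step: checking that the coupled learner is measurable with respect to a set of bounded cardinality $v_{T,n}$. This requires that every map in the local-basin sieve-GMM, profiling, Riesz-regularization and tuning pipeline uses only the $c_n$-approximated training data, which is exactly what Assumption~\ref{ass:learner_coupling} is designed to guarantee. Once the three bounds are in place, taking the maximum over the fixed number of folds completes the argument.
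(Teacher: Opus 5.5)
Your proposal is correct and follows essentially the same route as the paper. Both arguments couple the evaluation derivative and the learner direction to their $c_n$-innovation approximations, use the $s_n-2c_n$ separation of the enlarged footprints together with Davydov's inequality and Assumption~\ref{ass:innovation_mixing}, pick up $\sqrt{J_{\zeta,n}}$ by Cauchy--Schwarz over the sieve coordinates, and add the two coupling errors.

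Your explicit three-term split is slightly more careful than the paper's proof. The paper asserts the $C\delta^{\mathrm{NED}}_{\mathrm{tr},n}(c_n)$ replacement cost without noting that it needs $(E\|\overline D_{k,b_n}\|_F^2)^{1/2}=O(1)$; you supply this, as you say, from Proposition~\ref{prop:primitive_short_memory}, at the price of that proposition's extra hypotheses.

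There is one harmless slip. Using $2+\delta_\alpha$ moments on both factors gives exponent $\delta_\alpha/(2+\delta_\alpha)=2\nu_\alpha$, not $\nu_\alpha$; the bound still holds since the mixing coefficient is at most $1/4$. The paper pairs $2+\delta_\alpha$ with $2$ to get exactly $\nu_\alpha$.
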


\noindent\textbf{Proof.}
See Appendix \ref{app:proof_ned_leakage}.

Leakage has three sources. The first is genuine dependence between the
innovation sets underlying the evaluation score and the nuisance learner; it
shrinks as the guard distance grows. The second is error from replacing the
evaluation derivative with a local innovation approximation. The third is
error from replacing the actual nuisance learner with its coupled local
version.

The result bounds leakage along the nuisance direction actually produced by
the learner. It therefore matches exactly the quantity needed in Lemma
\ref{lem:dependent_crossfit}, rather than requiring a stronger supremum over
all possible training-measurable directions.

\begin{corollary}[Feasible logarithmic guards under exponential decay]
\label{cor:primitive_guard}

Suppose
\[
r_{\zeta,n}
=
O(n^{-a}),
\qquad
a>\frac14,
\]
\[
J_{\zeta,n}
=
O(n^{\omega_J}),
\qquad
v_{\psi,n}
=
O(n^{\omega_\psi}),
\qquad
v_{T,n}
=
O(n).
\]
Suppose
\[
a
>
\frac{\omega_J}{2}.
\]

Suppose
\[
\bar\alpha_{\mathcal E}(r)
\leq
C\exp(-c_\alpha r),
\]
\[
\delta_{D,n}^{\mathrm{NED}}(c)
\leq
C\exp(-c_Dc),
\]
and
\[
\delta_{\mathrm{tr},n}^{\mathrm{NED}}(c)
\leq
C n^{\omega_{\mathrm{tr}}}
\exp(-c_{\mathrm{tr}}c).
\]

Choose
\[
c_n
=
c_c\log n
\]
and
\[
s_n
=
c_s\log n,
\qquad
c_s>2c_c.
\]
A sufficient set of restrictions is
\[
c_c
>
\frac{
1/2-a+\omega_J/2
}{
c_D
},
\]
\[
c_c
>
\frac{
1/2-a+\omega_{\mathrm{tr}}
}{
c_{\mathrm{tr}}
},
\]
and
\[
c_s
>
2c_c
+
\frac{
1/2-a
+
\omega_J/2
+
\zeta_\alpha\nu_\alpha(1+\omega_\psi)
}{
c_\alpha\nu_\alpha
}.
\]
Then
\[
r_{\zeta,n}
\sqrt{J_{\zeta,n}}
\rightarrow
0
\]
and
\[
\sqrt n\,
r_{\zeta,n}
\chi_n(s_n)
\rightarrow
0.
\]

If score localization and blockwise Riesz approximation additionally satisfy
\[
\sqrt n\,
r_{\zeta,n}
\left[
\delta_n^{\mathrm{loc}}(a_n^W,b_n)
+
\delta_{R,n}
\right]
\rightarrow
0,
\]
Assumption \ref{ass:spatial_blocks} follows.

\end{corollary}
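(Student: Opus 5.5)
The plan is to substitute the logarithmic choices $c_n=c_c\log n$ and $s_n=c_s\log n$ into the bound of Proposition \ref{prop:ned_leakage}. After that, each term of $\sqrt n\,r_{\zeta,n}\chi_n(s_n)$ is a power of $n$, and it suffices to check that every exponent is negative. Since $s_n-2c_n=(c_s-2c_c)\log n>0$ for large $n$, the condition $s_n>2c_n$ of that proposition holds eventually.

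The first conclusion is immediate. We have $r_{\zeta,n}\sqrt{J_{\zeta,n}}=O(n^{-a+\omega_J/2})$, and this tends to zero because $a>\omega_J/2$.

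For the leakage I would bound $\sqrt n\,r_{\zeta,n}=O(n^{1/2-a})$ and treat the three terms of Proposition \ref{prop:ned_leakage} one at a time.

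\emph{Mixing term.} Using $v_{\psi,n}v_{T,n}=O(n^{1+\omega_\psi})$ together with $\bar\alpha_{\mathcal E}(s_n-2c_n)^{\nu_\alpha}\le C n^{-c_\alpha\nu_\alpha(c_s-2c_c)}$, the product becomes
\[
\sqrt n\,r_{\zeta,n}\sqrt{J_{\zeta,n}}\,(v_{\psi,n}v_{T,n})^{\zeta_\alpha\nu_\alpha}\bar\alpha_{\mathcal E}(s_n-2c_n)^{\nu_\alpha}
=O\!\left(n^{1/2-a+\omega_J/2+\zeta_\alpha\nu_\alpha(1+\omega_\psi)-c_\alpha\nu_\alpha(c_s-2c_c)}\right).
\]
The exponent is negative exactly under the stated lower bound on $c_s$.

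\emph{Evaluation-derivative coupling term.} This is $O(n^{1/2-a+\omega_J/2-c_Dc_c})$, which vanishes by the first restriction on $c_c$.

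\emph{Training coupling term.} This is $O(n^{1/2-a+\omega_{\mathrm{tr}}-c_{\mathrm{tr}}c_c})$, which vanishes by the second restriction on $c_c$.

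Summing the three terms gives $\sqrt n\,r_{\zeta,n}\chi_n(s_n)\to0$. The restrictions are mutually feasible: first choose $c_c$ large enough for both $c_c$ conditions, then choose $c_s$ large enough for its condition. The guards $s_n$ are therefore feasible logarithmic guards.

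For the final claim I would verify Assumption \ref{ass:spatial_blocks} directly. The rate $r_{\zeta,n}\sqrt{J_{\zeta,n}}\to0$ was shown above. The combined rate condition follows by adding the added hypothesis on $\delta_n^{\mathrm{loc}}$ and $\delta_{R,n}$ to the leakage result. Fixed $K$ and fold proportions are maintained. The remaining point is nondegeneracy of the training sample, $|\mathcal T_k|/n\ge c_T$. I expect this to be the main obstacle, since it is not an exponent computation. Under increasing-domain regularity with bounded local density, the removed region (evaluation footprint, guard, and training footprint) adds only a boundary shell of width $O(\log n)$ around $\mathcal I_k$. That shell contains $o(n)$ units when the geographically coherent blocks have boundaries of order $n^{(d_{\mathrm{sp}}-1)/d_{\mathrm{sp}}}$. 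Hence $|\mathcal T_k|\ge n(1-\pi_k)-o(n)$, which is bounded below by a positive multiple of $n$. I would state this regularity of the blocks explicitly as part of the maintained block design rather than derive it.
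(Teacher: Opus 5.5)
Your proposal is correct and follows the paper's proof essentially line by line: substitute $c_n=c_c\log n$ and $s_n=c_s\log n$ into the three-term bound of Proposition~\ref{prop:ned_leakage}, check that each resulting exponent of $n$ is negative under the corresponding restriction, and use $a>\omega_J/2$ for the derivative-fluctuation rate. Your extra point about $|\mathcal T_k|/n\geq c_T$ is fair and more careful than the paper, which states that Assumption~\ref{ass:spatial_blocks} ``follows'' without comment. That nondegeneracy, like fixed $K$ and the fold proportions, is a maintained block-design condition (including $b_n,b_n^{\mathrm{tr}}=O(\log n)$) rather than a consequence of the displayed exponent restrictions.
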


\noindent\textbf{Proof.}
See Appendix \ref{app:proof_primitive_guard}.

With exponentially decaying spatial dependence and NED approximation errors,
the innovation-localization radius and guard distance need increase only
logarithmically with sample size. Thus asymptotic separation does not require
discarding an increasing fraction of the sample. The exact constants balance
nuisance complexity, spatial dependence, and the rate at which the learner
can be coupled to its local approximation.

\subsubsection{Primitive local-spline SAR design}

The preceding results can be collected into a concrete primitive benchmark.

\begin{corollary}[Primitive compatibility benchmark for spatial dependence and cross-fitting]
\label{cor:primitive_benchmark}

Suppose:

\begin{enumerate}

\item
The spatial locations satisfy increasing-domain regularity with uniformly
bounded local density.

\item
The candidate interaction support, pair-characteristic map, and interaction
function satisfy Assumption \ref{ass:primitive_sar_ned}.

\item
The primitive innovation field satisfies Assumption
\ref{ass:innovation_mixing} with exponential decay. Together with the
primitive NED and moment conditions below, the resulting oracle-score and
target-derivative fields satisfy Assumption
\ref{ass:ned_summability}.

\item
The primitive regressors, first-stage residuals, structural innovations, and
other raw random inputs are exponentially NED with sufficiently high moments.

\item
The nuisance functions are approximated by smooth B-spline or series sieves.
The joint stacked criterion satisfies the local curvature and effective
complexity conditions of Proposition \ref{prop:joint_preliminary}, with
\[
r_{\mathrm{joint},n}
+
a_{\mathrm{joint},n}
=
o(n^{-1/4}).
\]
The profiled interaction learner satisfies Proposition
\ref{prop:sieve_g_rate}, including its stronger
$\operatorname{dist}_{\mathcal G,+}$ rate. For the vector-valued projection
$\ell_0$, its aggregate error explicitly includes
$\sqrt{d_{B_g,n}}$. These conditions close the rate chain for the implemented
learner rather than assuming that the preliminary components held fixed in
the $g$ profile already have the desired rate.

\item
The effective complexities
$\mathfrak C_{\mathrm{joint},n}$,
$\mathfrak C_{g,n}$, and the corresponding fourth-moment envelopes are
compatible with the chosen sieve and moment dimensions, so the aggregate
nuisance rate in Assumption \ref{ass:rates} satisfies the required
second-order product conditions.

\item
The target-sensitive nuisance Jacobian has full column rank on each sieve
space. Either its smallest singular value is bounded away from zero or the
mildly ill-posed rate and source restrictions in Propositions
\ref{prop:riesz_existence} and \ref{prop:gamma_rate} hold.

\item
The interaction-moment richness and target-rank conditions in Assumptions
\ref{ass:sieve_moment_richness} and \ref{ass:rank} hold.

\item
The learner satisfies Assumption \ref{ass:learner_coupling}.

\item
The sieve dimensions, NED-localization radius, guard distance, Riesz
regularization, and nuisance rates satisfy Corollary
\ref{cor:primitive_guard} and
\[
\sqrt n\,
r_{\zeta,n}^{2}
\rightarrow
0.
\]

\end{enumerate}

Then Lemma \ref{lem:random_weight_ned} establishes the required local
approximation of the random interaction operator, and Proposition
\ref{prop:sar_ned} implies that the oracle score is an NED spatial field
satisfying the required spatial LLN and CLT. Assumption
\ref{ass:crossfit_short_memory} holds by Proposition
\ref{prop:primitive_short_memory}, the learner-specific
training-to-evaluation leakage is asymptotically negligible by Proposition
\ref{prop:ned_leakage}, and the feasible cross-fitted orthogonal score admits
the oracle reduction in Lemma \ref{lem:dependent_crossfit}.

Thus, conditional on the stated identification, nuisance-rate, and
learner-stability requirements, the spatial dependence and buffered
cross-fitting conditions used by the high-level theorem are compatible with
a nontrivial SAR design in which $W_0$ is random, $Y$ is globally
simultaneous, and the observed outcome need not itself be strongly mixing.

\end{corollary}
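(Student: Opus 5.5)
The corollary is an assembly result, so the plan is to verify, item by item, the high-level hypotheses of Lemma \ref{lem:dependent_crossfit} and Assumption \ref{ass:oracle_clt} from the primitive items (1)--(10). No new estimate is proved from scratch.

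\emph{Step 1: oracle score.} First I would establish the NED property of every random object entering the oracle score $\psi_i^0$. Items (2) and (4), combined with Lemma \ref{lem:random_weight_ned}, give exponentially decaying NED coefficients for the supported weights $w_{ij}(g_0)$ and for their action on fields with $L^{p+\delta_p}$ moments. Proposition \ref{prop:sar_ned} then extends this to $Y$, $W_0Y$, and the fixed-order transforms $W_0^\ell X$ and $W_0^\ell Q$ that enter $H_i(g_0)$. The control summaries $\overline U_{i,r}$ are handled by the absolute summability of $\kappa^{C,r}$ and the decay of $\tau_{C,n}$.

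Since $\psi_i^0=\phi_i-\Gamma_0 s_i$ is a finite-dimensional combination of products of these NED fields, I would invoke a product lemma. Hölder's inequality with $p>4$ yields $L_2$-NED with coefficient of order $\widetilde\delta_0(r/4)+q_p^{L(r)}$, which is still exponential. Exponential decay makes the summability conditions in Assumption \ref{ass:ned_summability} hold for any polynomial weight $r^{d_{\mathrm{sp}}-1}$. Item (3) supplies the mixing sum and the variance lower bound. The Jenish--Prucha LLN/ULLN and CLT for NED fields on a mixing base then deliver Assumption \ref{ass:oracle_clt}. The ULLN in $\theta$ is immediate, because $\psi_i$ is affine in $(\rho,\beta)$.

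\emph{Step 2: cross-fitting hypotheses.}
\begin{enumerate}
\item Assumption \ref{ass:crossfit_short_memory} follows from Proposition \ref{prop:primitive_short_memory}. The fourth-moment block bound $C|A|^2$ is obtained from exponential NED together with mixing, using the standard covariance and cumulant inequalities coordinatewise. $E\|u_k\|^4\le C$ comes from the fourth-moment clause of Assumption \ref{ass:rates}.
\item Assumption \ref{ass:rates} itself is assembled from item (5): Propositions \ref{prop:joint_preliminary}, \ref{prop:sieve_g_rate} and \ref{prop:g_to_W_transfer}, with the aggregate $\ell$ rate.
\item The $\Gamma$ rate comes from item (7) through Propositions \ref{prop:riesz_existence} and \ref{prop:gamma_rate}. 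The same propositions give Assumption \ref{ass:riesz}, with $\delta_{R,n}\lesssim a_{\Gamma,J}+\delta_{B,n}$.
\item Item (10) and Corollary \ref{cor:primitive_guard} give $r_{\zeta,n}\sqrt{J_{\zeta,n}}\to0$ and $\sqrt n\,r_{\zeta,n}\chi_n(s_n)\to0$. Here $\chi_n$ is bounded by Proposition \ref{prop:ned_leakage} under item (9).
\item Under the fixed-radius support $\bar a_W$, score localization reduces to resolvent truncation. With $b_n\asymp\log n$, the error $q_p^{L(b_n)}$ is $o(n^{-1/2})$ once the constant is large, which verifies Assumption \ref{ass:score_localization}.
\item Score smoothness (Assumption \ref{ass:score_smoothness}) follows from Assumption \ref{ass:operator_map_smoothness}, the smooth softmax map, and the $\operatorname{dist}_{\mathcal G,+}$ rate.
\end{enumerate}
The remaining part of Assumption \ref{ass:spatial_blocks}, namely $|\mathcal T_k|/n\ge c_T$, holds under increasing-domain regularity because the guard is only logarithmic.

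\emph{Step 3: conclusion.} With every hypothesis of Lemma \ref{lem:dependent_crossfit} checked, the oracle reduction follows directly from that lemma.

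\emph{Main obstacle.} I expect the hardest step to be the passage, in Step 1 and in item 1 of Step 2, from NED of the individual inputs to NED and fourth-moment summability of the score and of the derivative field $D_{i,b_n}$. These objects contain products of the random operator $W_0$, or its derivative from Proposition \ref{prop:w_derivative}, with $Y$ and with other random fields. Moment loss under multiplication is exactly what $\widetilde\delta_0$ and $q_p<1$ are designed to absorb, so I would first try to bound $\ell$-fold products uniformly via the propagation condition $\sup_i\|(W_0^\ell A)_i\|_{L^p}\le C\kappa_p^\ell$. I would then sum the geometric series in $\ell$, truncating at $L(r)$. The growing dimension $J_{\zeta,n}$ would be handled coordinatewise with uniform envelopes, which keeps the explicit $\sqrt{J_{\zeta,n}}$ factor already present in the leakage and short-memory bounds.
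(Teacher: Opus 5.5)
Your proposal follows the paper's proof essentially step for step: the paper also treats the corollary as a pure assembly, using Lemma \ref{lem:random_weight_ned} and Proposition \ref{prop:sar_ned} plus the spatial NED LLN/CLT for Assumption \ref{ass:oracle_clt}, Propositions \ref{prop:joint_preliminary}, \ref{prop:sieve_g_rate}, \ref{prop:riesz_existence} and \ref{prop:gamma_rate} for the rate and Riesz conditions, Propositions \ref{prop:primitive_short_memory} and \ref{prop:ned_leakage} with Corollary \ref{cor:primitive_guard} for short memory and leakage, and then Lemma \ref{lem:dependent_crossfit}. You go further than the paper by also trying to derive score localization, score smoothness and the training-fraction bound, which the paper simply takes as maintained; for localization, $\psi_{i,b_n}$ is built from observed $Y$ on a finite raw-data footprint, so under the fixed support the relevant error is the truncation of the control weights $\kappa^{C,r}$ (requiring $\sqrt n\,\tau_{C,n}(b_n)\to0$), not the resolvent tail $q_p^{L(b_n)}$, which belongs to the NED step instead.
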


\noindent\textbf{Proof.}
See Appendix \ref{app:proof_primitive_benchmark}.

This benchmark is deliberately a compatibility result rather than a claim
that every high-level identification and learner condition has been derived
from primitive assumptions. Its role is to close the spatial part of the
argument: a local smooth interaction map, a propagation-stable SAR
multiplier, spatially mixing primitive innovations, and logarithmically
growing guards can jointly deliver the NED, short-memory, and leakage bounds
needed for the oracle reduction. The interaction-moment richness, target
rank, and learner-rate conditions remain economically and statistically
substantive restrictions that must be verified for a particular
implementation.

\subsection{Orthogonal SAR-IV/GMM Estimator}\label{subsec:estimator}

\begin{assumption}[GMM weighting matrix]
\label{ass:gmm_weight}

Let $\mathcal M_0$ be symmetric positive definite with eigenvalues bounded
away from zero and infinity. The estimated weighting matrix satisfies
\[
\widehat{\mathcal M}
\xrightarrow{p}
\mathcal M_0.
\]

\end{assumption}

The GMM criterion must use a stable weighting matrix. The theorem allows a
fixed weighting matrix or an estimated one. Efficient weighting is obtained
as a special case by consistently estimating the inverse long-run covariance
matrix.

For $i\in\mathcal I_k$, define
\[
\widehat\psi_i(\theta)
=
\psi_{i,b_n}
\left(
\theta,
\widehat\eta^{(-k)},
\widehat\Gamma^{(-k)}
\right).
\]
Let
\[
\overline\psi_n(\theta)
=
\frac{1}{n}
\sum_{k=1}^{K}
\sum_{i\in\mathcal I_k}
\widehat\psi_i(\theta).
\]
The estimator is
\[
\widehat\theta
=
\arg\min_{\theta\in\Theta}
\overline\psi_n(\theta)^{\prime}
\widehat{\mathcal M}
\overline\psi_n(\theta).
\]

Define the oracle sample moment
\[
\overline\psi_n^0(\theta)
=
\frac{1}{n}
\sum_{i=1}^{n}
\psi_i(\theta,\eta_0,\Gamma_0).
\]

\begin{lemma}[Uniform feasible-to-oracle GMM reduction]
\label{lem:uniform_gmm_reduction}

Suppose Lemma \ref{lem:dependent_crossfit},
Assumptions \ref{ass:score_smoothness},
\ref{ass:oracle_clt}, and
\ref{ass:gmm_weight} hold. Then
\[
\sup_{\theta\in\Theta}
\|
\overline\psi_n(\theta)
-
\overline\psi_n^0(\theta)
\|_2
=
o_p(1).
\]
Moreover,
\[
\sup_{\theta\in\Theta}
\|
\overline\psi_n^0(\theta)
-
\mu_n(\theta)
\|_2
=
o_p(1),
\]
and therefore
\[
\sup_{\theta\in\Theta}
\left|
\overline\psi_n(\theta)^{\prime}
\widehat{\mathcal M}
\overline\psi_n(\theta)
-
Q_n(\theta)
\right|
=
o_p(1).
\]

\end{lemma}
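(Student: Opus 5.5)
The plan is to exploit the affine structure of the score in $\theta$. Since $\xi_i(\theta,\eta)$ is linear in $\theta=(\rho,\beta')'$ and $s_i$ depends on $\theta$ only through the $h$ and $g$ blocks, which are linear in $\xi_i$, every score can be written as
\[
\psi_i(\theta,\zeta)=\psi_i(\theta_0,\zeta)+\Psi_i(\zeta)(\theta-\theta_0),
\qquad
\Psi_i(\zeta)=\partial_\theta\psi_i(\theta_0,\zeta).
\]
The matrix $\Psi_i(\zeta)$ does not depend on $\theta$, and the same identity holds for the localized score $\psi_{i,b_n}$. Therefore
\[
\overline\psi_n(\theta)-\overline\psi_n^0(\theta)
=
\bigl[\overline\psi_n(\theta_0)-\overline\psi_n^0(\theta_0)\bigr]
+
\bigl[\widehat\Psi_n-\Psi_n^0\bigr](\theta-\theta_0),
\]
where $\widehat\Psi_n$ is the cross-fitted average of $\partial_\theta\psi_{i,b_n}(\theta_0,\widehat\zeta^{(-k)})$ and $\Psi_n^0$ is the oracle average of $\partial_\theta\psi_i(\theta_0,\zeta_0)$. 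Because $\Theta$ is compact (Assumption \ref{ass:stability}), $\sup_\theta\|\theta-\theta_0\|_2\le C$. Uniformity over $\theta$ therefore reduces to two pointwise statements at $\theta_0$.

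For the level term, Lemma \ref{lem:dependent_crossfit} gives
\[
\overline\psi_n(\theta_0)-\overline\psi_n^0(\theta_0)=o_p(n^{-1/2})=o_p(1),
\]
since its second display compares exactly the cross-fitted localized feasible score with the full oracle score. For the slope term, write the difference as a $K$-term sum with weights $|\mathcal I_k|/n\le1$. Each block is bounded by the last display of Assumption \ref{ass:score_smoothness}, which states
\[
\sup_k\Bigl\|\tfrac{1}{|\mathcal I_k|}\textstyle\sum_{i\in\mathcal I_k}\bigl[\partial_\theta\psi_{i,b_n}(\theta_0,\widehat\zeta^{(-k)})-\partial_\theta\psi_i(\theta_0,\zeta_0)\bigr]\Bigr\|=o_p(1).
\]
Since $K$ is fixed, this gives $\|\widehat\Psi_n-\Psi_n^0\|=o_p(1)$, which proves the first claim. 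The second claim is exactly the uniform law of large numbers in Assumption \ref{ass:oracle_clt}.

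For the criterion, combine the two uniform bounds into
\[
\sup_\theta\|\overline\psi_n(\theta)-\mu_n(\theta)\|_2=o_p(1).
\]
Next show that $\sup_\theta\|\mu_n(\theta)\|_2$ is bounded. By the same affine representation, $\mu_n(\theta)=\mu_n(\theta_0)+J_{\psi,0,n}(\theta-\theta_0)$, and $\mu_n(\theta_0)=0$ by Assumptions \ref{ass:instrument_validity} and \ref{ass:control_function} together with $E[s_i(\theta_0,\eta_0)]=0$. Assumption \ref{ass:rank} keeps $J_{\psi,0,n}$ bounded for large $n$. Hence $\sup_\theta\|\overline\psi_n(\theta)\|_2=O_p(1)$. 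The quadratic-form difference then decomposes as
\[
a'\widehat{\mathcal M}a-b'\mathcal M_0 b
=
(a-b)'\widehat{\mathcal M}a
+
b'\widehat{\mathcal M}(a-b)
+
b'(\widehat{\mathcal M}-\mathcal M_0)b,
\]
with $a=\overline\psi_n(\theta)$ and $b=\mu_n(\theta)$. Each term is $o_p(1)$ uniformly in $\theta$, by Assumption \ref{ass:gmm_weight} and the boundedness just established.

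The main obstacle is justifying that the $\theta$-dependence really is affine with a $\theta$-free slope once the nuisances, in particular $\widehat\Gamma$ and the localization, are plugged in. The localization replaces only the operators and the footprint, so it preserves linearity in $(\rho,\beta)$, and $\Gamma s_i$ is linear in $\xi_i$. If some implementation introduced nonlinear $\theta$-dependence, for example $\theta$-dependent dictionaries, I would fall back on a stochastic-equicontinuity argument. That would use a mean-value bound with the derivative condition of Assumption \ref{ass:score_smoothness} extended to a neighborhood, together with a finite $\epsilon$-net of $\Theta$.
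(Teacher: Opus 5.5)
Your proof is correct and follows the paper's argument. Both use the affine-in-$\theta$ decomposition: the level term is handled by Lemma~\ref{lem:dependent_crossfit}, the slope term by the Jacobian condition in Assumption~\ref{ass:score_smoothness}, the second claim by the oracle ULLN, and the criterion by a quadratic-form perturbation bound. One difference in your favour: you explicitly show that $\sup_{\theta}\|\mu_n(\theta)\|_2$ is bounded, via $\mu_n(\theta_0)=0$ and Assumption~\ref{ass:rank}. The paper only asserts this step, though your verification relies on assumptions outside the lemma's stated list (they are maintained in Theorem~\ref{thm:asymptotic_normality}).
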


\noindent\textbf{Proof.}
See Appendix \ref{app:proof_uniform_gmm}.

Not only does the feasible score equal the oracle score at the true
parameter, but the complete feasible GMM objective converges uniformly to
the population objective. This permits the usual consistency argument based
on global separation before the local root-$n$ expansion is carried out.

% Editorial numbering cleanup:
% The paper has one main theorem. It is displayed as Theorem 1, while the
% shared theorem/corollary counter is restored afterward so the next
% corollary is Corollary 3.3 rather than Corollary 3.4.
\begingroup
\setcounter{theorem}{0}
\renewcommand{\thetheorem}{\arabic{theorem}}
\begin{theorem}[Asymptotic linearity and normality]
\label{thm:asymptotic_normality}

Suppose Assumptions
\ref{ass:control_function},
\ref{ass:instrument_validity},
\ref{ass:local_separation},
\ref{ass:stability},
\ref{ass:spatial_locality},
\ref{ass:oracle_clt},
\ref{ass:riesz},
\ref{ass:rank},
\ref{ass:global_identification},
\ref{ass:score_localization},
\ref{ass:rates},
\ref{ass:score_smoothness},
\ref{ass:crossfit_short_memory},
\ref{ass:spatial_blocks}, and
\ref{ass:gmm_weight}
hold.

Define
\[
B_0
=
\left(
J_{\psi,0}^{\prime}
\mathcal M_0
J_{\psi,0}
\right)^{-1}
J_{\psi,0}^{\prime}
\mathcal M_0.
\]
Then
\[
\widehat\theta
\xrightarrow{p}
\theta_0
\]
and
\[
\sqrt n
(\widehat\theta-\theta_0)
=
-
B_0
\frac{1}{\sqrt n}
\sum_{i=1}^{n}
\psi_i(\theta_0,\eta_0,\Gamma_0)
+
o_p(1).
\]
Hence
\[
\sqrt n
(\widehat\theta-\theta_0)
\xrightarrow{d}
N(0,V_0),
\qquad
V_0
=
B_0\Omega_0B_0^{\prime}.
\]

Under efficient weighting,
\[
\mathcal M_0
=
\Omega_0^{-1},
\]
so that
\[
V_0
=
\left(
J_{\psi,0}^{\prime}
\Omega_0^{-1}
J_{\psi,0}
\right)^{-1}.
\]

\end{theorem}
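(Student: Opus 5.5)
The argument follows the standard extremum-estimator route: consistency from uniform convergence of the criterion plus global separation, then a local expansion of the first-order condition around $\theta_0$, with the feasible-to-oracle reduction of Lemma \ref{lem:dependent_crossfit} supplying the key step.

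\emph{Consistency.} Lemma \ref{lem:uniform_gmm_reduction} gives
\[
\sup_{\theta\in\Theta}\bigl|\overline\psi_n(\theta)'\widehat{\mathcal M}\,\overline\psi_n(\theta)-Q_n(\theta)\bigr|=o_p(1).
\]
By Assumption \ref{ass:instrument_validity} and Proposition \ref{prop:orthogonality}, together with $E[s_i(\theta_0,\eta_0)]=0$, we have $\mu_n(\theta_0)=0$, so $Q_n(\theta_0)=0$. The usual argmin argument then applies. The feasible criterion at $\widehat\theta$ is at most its value at $\theta_0$, and the latter is $o_p(1)$. Hence $Q_n(\widehat\theta)=o_p(1)$, and Assumption \ref{ass:global_identification} forces $\|\widehat\theta-\theta_0\|\ge\epsilon$ to have vanishing probability. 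So $\widehat\theta\xrightarrow{p}\theta_0$, and with probability approaching one $\widehat\theta$ lies in the neighborhood $\mathcal N_\theta$ and in the interior of $\Theta$.

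\emph{Local expansion.} The residual $\xi_i(\theta,\eta)$ is affine in $\theta$, so $\phi_i$ and $s_i$ are affine in $\theta$, and hence so is $\psi_{i,b_n}$. Therefore
\[
\overline\psi_n(\theta)=\overline\psi_n(\theta_0)+\widehat J_n(\theta-\theta_0)
\]
exactly, with $\widehat J_n$ not depending on $\theta$. (If affinity fails for some instrument construction, I would fall back on a mean-value expansion.) I would identify the limit of $\widehat J_n$ in three steps. First, the final display of Assumption \ref{ass:score_smoothness} replaces the fold-wise feasible Jacobians by the oracle ones, using that $K$ is fixed and $|\mathcal I_k|/n\to\pi_k$. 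Second, the oracle derivative LLN of Assumption \ref{ass:oracle_clt} replaces the oracle Jacobian by its expectation. Third, Assumption \ref{ass:rank} gives the limit $J_{\psi,0}$. Together these yield $\widehat J_n\xrightarrow{p}J_{\psi,0}$.

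The first-order condition of the quadratic criterion is $\widehat J_n'\widehat{\mathcal M}\,\overline\psi_n(\widehat\theta)=0$. Substituting the expansion gives
\[
\sqrt n(\widehat\theta-\theta_0)=-(\widehat J_n'\widehat{\mathcal M}\widehat J_n)^{-1}\widehat J_n'\widehat{\mathcal M}\,\sqrt n\,\overline\psi_n(\theta_0).
\]
The inverse exists with probability approaching one, since $\sigma_{\min}(J_{\psi,0})\ge c_J$ and Assumption \ref{ass:gmm_weight} holds. The matrix factor in front therefore converges in probability to $B_0$.

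\emph{Oracle reduction.} Lemma \ref{lem:dependent_crossfit} gives
\[
\sqrt n\,\overline\psi_n(\theta_0)=n^{-1/2}\sum_i\psi_i(\theta_0,\eta_0,\Gamma_0)+o_p(1).
\]
Its assumptions are exactly the orthogonality, localization, rate, smoothness, short-memory and block conditions listed in the theorem. This is the step that carries all the real difficulty, namely the cancellation of the first-order nuisance term via $\delta_{R,n}$, the leakage bound $\chi_n$, and the quadratic remainder bounded via $\sqrt n\,r_{\zeta,n}^2\to0$. Since that lemma is already established, the remaining obstacle is bookkeeping. I must check that its conclusion holds at $\theta_0$ with the localized score, and that the localized-versus-full discrepancy is absorbed at $\sqrt n$ scale via $\sqrt n\,\delta_n^{\mathrm{loc}}\to0$, which the lemma's second display already states. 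The oracle CLT in Assumption \ref{ass:oracle_clt} then gives convergence to $N(0,\Omega_0)$. Slutsky's theorem yields the asymptotic linear representation and $V_0=B_0\Omega_0B_0'$.

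\emph{Efficient weighting.} Under $\mathcal M_0=\Omega_0^{-1}$, direct substitution gives
\[
B_0\Omega_0B_0'=(J'\Omega_0^{-1}J)^{-1}J'\Omega_0^{-1}\Omega_0\Omega_0^{-1}J(J'\Omega_0^{-1}J)^{-1}=(J'\Omega_0^{-1}J)^{-1},
\]
with $J=J_{\psi,0}$, which is the stated efficient-weighting variance.
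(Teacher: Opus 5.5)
Your proposal is correct and follows essentially the same route as the paper's proof. Consistency comes from Lemma~\ref{lem:uniform_gmm_reduction} with Assumption~\ref{ass:global_identification}; the first-order condition is then linearized, with the Jacobian limit supplied by the final display of Assumption~\ref{ass:score_smoothness} and the oracle derivative ULLN in Assumption~\ref{ass:oracle_clt}; finally Lemma~\ref{lem:dependent_crossfit}, the oracle CLT and Slutsky's theorem deliver the result. The differences are cosmetic: you use exact affinity of the score in $\theta$ (which the paper itself invokes when proving Lemma~\ref{lem:uniform_gmm_reduction}) where the paper writes a mean-value expansion, and the paper additionally allows an $o_p(n^{-1/2})$ numerical optimization error in the first-order condition.
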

\endgroup
\setcounter{theorem}{2}

\noindent\textbf{Proof.}
See Appendix \ref{app:proof_asymptotic_normality}.

The estimator therefore has the same first-order distribution as an
infeasible GMM estimator based on the oracle orthogonal score, which knows
the true interaction operator, control function, first-stage function,
projection function, and debiasing operator. Estimation error in these
nuisance objects contributes no additional first-order term relative to
this oracle orthogonal-score benchmark. This statement does not require
the oracle orthogonal score to have the same variance as the conventional
uncorrected SAR-IV moment that would be used if $W_0$ were known.

The result is the central inferential payoff of the construction:
nonparametric estimation of the interaction operator can proceed at a rate
slower than root-$n$ without contaminating root-$n$ inference on
$(\rho_0,\beta_0^{\prime})^{\prime}$.

\begin{corollary}[Functional-form robustness over a common regularity class]
\label{cor:uniform_functional_robustness}

Let $\{P_n\}$ be any sequence of data-generating processes satisfying the
following common regularity conditions.

\begin{enumerate}

\item
The interaction function satisfies
\[
g_{0,n}\in\mathcal H^{\alpha}(M)\subset\mathcal G
\]
for a fixed bounded H\"older ball on a common compact support, with
\[
\alpha>\frac{d_R}{2}.
\]
The candidate-support radius, degree bound, spatial-stability constants, and
other operator-locality constants can be chosen uniformly along the sequence.
Whenever the interaction operator is learned from the SAR equation,
\[
|\rho_{0,n}|\geq\underline\rho>0.
\]

\item
The control-function and SAR-instrument validity restrictions in Assumptions
\ref{ass:control_function} and \ref{ass:instrument_validity} hold along the
sequence. The local separation, operator-richness, interaction-moment
richness, and orthogonal target-rank conditions hold with lower-bound
constants bounded away from zero.

\item
A common sequence of sieve spaces and tuning parameters is used. The joint
learner and profiled interaction learner satisfy Propositions
\ref{prop:joint_preliminary} and \ref{prop:sieve_g_rate} with
\[
r_{\mathrm{joint},n}+a_{\mathrm{joint},n}=o(n^{-1/4}),
\qquad
r_{g,+,n}=o(n^{-1/4}),
\]
and the vector-valued projection and Riesz correction satisfy
\[
r_{\ell,n}^{\mathrm{agg}}+r_{\Gamma,n}=o(n^{-1/4}).
\]
The corresponding fourth-moment bounds in Assumption \ref{ass:rates} hold.
More generally, these common-rate displays may be replaced by the
heterogeneous product-rate conditions in Assumption \ref{ass:rates}.

\item
The primitive spatial conditions in Corollary
\ref{cor:primitive_benchmark} hold with common moment, mixing, NED,
propagation, and local-density constants. The same logarithmic localization
and guard sequences may be chosen so that Corollary
\ref{cor:primitive_guard} applies and
\[
\sqrt n\,r_{\zeta,n}
\left[
\chi_n(s_n)
+
\delta_n^{\mathrm{loc}}(a_n^W,b_n)
+
\delta_{R,n}
\right]
\rightarrow0.
\]
The oracle long-run covariance remains finite and nonsingular along the
sequence.

\end{enumerate}

Then, for every fixed nonzero contrast $a\in\mathbb R^{d_\theta}$,
\[
\frac{
\sqrt n\,a^{\prime}
\{\widehat\theta-\theta_0(P_n)\}
}{
\{a^{\prime}V_0(P_n)a\}^{1/2}
}
\xrightarrow{d}
N(0,1).
\]
Consequently, any variance estimator consistent for $V_0(P_n)$ along the
same sequence yields asymptotically valid Wald inference. Because the
sequence $\{P_n\}$ is arbitrary within the maintained regularity class, the
same inferential procedure remains valid without requiring $g_{0,n}$ to
belong to a predetermined finite-dimensional spatial-decay family.

\end{corollary}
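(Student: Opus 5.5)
The plan is to reduce the corollary to Theorem \ref{thm:asymptotic_normality} applied along the sequence $\{P_n\}$, with every constant and every $o(\cdot)$ sequence controlled by the common regularity conditions. The point to check is that the theorem's proof uses its assumptions only through finite constants and rate sequences. We therefore fix an arbitrary sequence $P_n$ in the class, verify the high-level assumptions for each $n$ with constants independent of $n$, and then rerun the theorem's argument with $\theta_0$, $J_{\psi,0}$, $\Omega_0$ and $V_0$ replaced by their $n$-indexed counterparts.

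\textbf{Step 1: the spatial and dependence assumptions.} Condition 4 together with Corollary \ref{cor:primitive_benchmark} yields Assumptions \ref{ass:oracle_clt}, \ref{ass:score_localization}, \ref{ass:crossfit_short_memory} and \ref{ass:spatial_blocks}. Because the mixing, NED, propagation and density constants are common, the bounds in Lemma \ref{lem:random_weight_ned}, Proposition \ref{prop:sar_ned}, Proposition \ref{prop:primitive_short_memory} and Proposition \ref{prop:ned_leakage} hold with constants free of $n$.

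\textbf{Step 2: the nuisance rates.} Condition 1 gives $\alpha>d_R/2$ on a fixed H\"older ball, and condition 3 supplies the learner rates through Propositions \ref{prop:joint_preliminary}, \ref{prop:sieve_g_rate} and \ref{prop:gamma_rate}. Proposition \ref{prop:g_to_W_transfer} then gives Assumption \ref{ass:rates} with $\sqrt n\,r_{\zeta,n}^2\to0$.

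\textbf{Step 3: identification and the Riesz representation.} Condition 2 gives Assumptions \ref{ass:control_function}, \ref{ass:instrument_validity}, \ref{ass:local_separation}, \ref{ass:riesz} and \ref{ass:rank}, with $\sigma_{\min}(J_{\psi,0,n})\ge c_J/2$ uniformly. The lower bound $|\rho_{0,n}|\ge\underline\rho$ keeps the operator identified along the sequence.

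\textbf{Step 4: the linear representation.} With these inputs, Lemma \ref{lem:dependent_crossfit} and Lemma \ref{lem:uniform_gmm_reduction} deliver the representation
\[
\sqrt n(\widehat\theta-\theta_0(P_n))=-B_n n^{-1/2}\sum_i\psi_i^0+o_{P_n}(1),
\]
where $B_n=(J_{\psi,0,n}'\mathcal M_0J_{\psi,0,n})^{-1}J_{\psi,0,n}'\mathcal M_0$. Note that $B_n$ is built from the $n$-indexed Jacobian rather than a fixed limit.

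\textbf{Step 5: the CLT for a fixed contrast.} Fix $a\neq0$ and set $c_n=B_n'a$. Uniform invertibility and the eigenvalue bounds on $\mathcal M_0$ keep $\|c_n\|$ bounded above and below. Nonsingularity of $\Omega_n$ then gives
\[
a'V_0(P_n)a=c_n'\Omega_nc_n\ge c>0.
\]
We apply the triangular-array spatial CLT (Jenish--Prucha) to the scalar NED field $c_n'\psi_i^0$, normalized by its standard deviation. Assumption \ref{ass:ned_summability} is stated uniformly over unit vectors, so it covers the drifting direction $c_n/\|c_n\|$, and this yields convergence to $N(0,1)$. Since the $o_{P_n}(1)$ remainder divided by a standard deviation bounded below is still $o_{P_n}(1)$, Slutsky's lemma completes the argument. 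Any variance estimator consistent along the sequence can then be substituted by the continuous mapping theorem.

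The main obstacle is that the theorem as stated assumes fixed limits $J_{\psi,0}$ and $\Omega_0$, whereas here those limits may not exist along an arbitrary sequence. I would first try to avoid limits entirely by working throughout with the $n$-indexed quantities and using the uniform bounds, rather than forcing convergence. If that proves awkward, the fallback is a subsequence argument: every subsequence has a further subsequence along which $J_{\psi,0,n}$ and $\Omega_n$ converge, by compactness given the uniform bounds, so the theorem applies there; because the standardized limit $N(0,1)$ does not depend on the subsequence, the whole sequence converges.
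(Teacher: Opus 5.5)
Your proposal follows the paper's proof. Fix the sequence, verify the hypotheses of Theorem~\ref{thm:asymptotic_normality} from the four common-regularity conditions (the spatial pieces via Corollaries~\ref{cor:primitive_benchmark} and \ref{cor:primitive_guard}, the rates via Propositions~\ref{prop:joint_preliminary} and \ref{prop:sieve_g_rate}, identification via the uniform lower bounds), and conclude from the asymptotically linear representation, the oracle spatial CLT, and Slutsky's theorem. Your extra step of working with the $n$-indexed $B_n$ and $\Omega_n$ (or falling back on the subsequence argument) is a sound refinement of a point the paper passes over, since the theorem as stated presumes limits $J_{\psi,0}$ and $\Omega_0$ that need not exist along an arbitrary sequence.
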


\noindent\textbf{Proof.}
Fix an arbitrary sequence $\{P_n\}$ satisfying the stated conditions. The
common H\"older smoothness and sieve construction give the interaction rates
in Proposition \ref{prop:sieve_g_rate}, while Proposition
\ref{prop:joint_preliminary} supplies the preliminary joint rate. The
projection and Riesz-rate restrictions deliver Assumption \ref{ass:rates},
and Corollaries \ref{cor:primitive_guard} and
\ref{cor:primitive_benchmark} deliver the spatial short-memory, leakage,
localization, and oracle limit-theory conditions required for the dependent
cross-fit reduction. The maintained lower bounds preserve identification of
the interaction and target directions. Hence the conditions of Theorem
\ref{thm:asymptotic_normality} hold along the selected sequence, so its
asymptotic linear representation and the oracle spatial CLT imply the stated
standard-normal limit by Slutsky's theorem. Since the sequence was arbitrary,
this is exactly the sequence-wise robustness property in Definition
\ref{def:functional_robustness}.

The corollary is deliberately not stated as a separate uniform-in-$P$
coverage theorem. Such a stronger result would additionally require uniform
empirical-process and oracle-CLT approximations over a specified family of
DGPs. Those conditions are not needed for the functional-form robustness
claim made here.

\begin{corollary}[Root-$n$ inference under slower nuisance rates]
\label{cor:slow_rates}

Under Theorem \ref{thm:asymptotic_normality}, root-$n$ inference for
\[
\theta_0
=
(\rho_0,\beta_0^{\prime})^{\prime}
\]
does not require root-$n$ estimation of $m_0$, $g_0$, $h_0$, $\ell_0$, or
$\Gamma_0$.

If every second-order nuisance product satisfies
\[
\sqrt n\,
r_{a,n}r_{b,n}
=
o(1),
\]
\[
r_{\zeta,n}
\sqrt{J_{\zeta,n}}
=
o(1),
\]
and
\[
\sqrt n\,
r_{\zeta,n}
\left[
\chi_n(s_n)
+
\delta_n^{\mathrm{loc}}(a_n^W,b_n)
+
\delta_{R,n}
\right]
=
o(1),
\]
then nuisance estimation is asymptotically negligible.

Under a common fixed-complexity rate $r_n$,
\[
r_n
=
o(n^{-1/4})
\]
remains sufficient provided the dependence, localization, and
Riesz-approximation terms satisfy the preceding conditions.
\end{corollary}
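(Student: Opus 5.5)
The corollary is essentially a repackaging of Theorem~\ref{thm:asymptotic_normality}, so the plan is to verify that its hypotheses are exactly those used in the theorem's proof. I would reread the proof chain (Lemma~\ref{lem:dependent_crossfit} $\Rightarrow$ Lemma~\ref{lem:uniform_gmm_reduction} $\Rightarrow$ the theorem) and record where nuisance estimation enters. The first claim, that no root-$n$ rate is needed for $m_0,g_0,h_0,\ell_0,\Gamma_0$, follows once I show that Assumption~\ref{ass:rates} only requires $r_{\zeta,n}\to0$ together with product conditions. Nothing forces $r_{a,n}=O(n^{-1/2})$ for any individual component, so a configuration with all rates $n^{-1/3}$, for example, is admissible. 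The asymptotically linear representation in the theorem then contains no nuisance term.

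For the second claim, I would redo the decomposition in Lemma~\ref{lem:dependent_crossfit} fold by fold, using heterogeneous rates. By \eqref{eq:score_expansion}, the feasible-minus-oracle localized score splits into three parts: a first-order term $|\mathcal I_k|^{-1}\sum_{i\in\mathcal I_k} D_{i,b_n}\Delta v_k$, a quadratic remainder, and localization and Riesz errors. I would split the first-order term as
\[
\overline D_{k,b_n}\Delta v_k - E[\overline D_{k,b_n}\Delta v_k] \;+\; E[\overline D_{k,b_n}\Delta v_k] \;+\; \frac{1}{|\mathcal I_k|}\sum_{i\in\mathcal I_k}E[D_{i,b_n}]\Delta v_k.
\]
The first piece is $O_p(r_{\zeta,n}\sqrt{J_{\zeta,n}/n})$ by Assumption~\ref{ass:crossfit_short_memory}, which is $o_p(n^{-1/2})$ under $r_{\zeta,n}\sqrt{J_{\zeta,n}}=o(1)$. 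The second piece is bounded by $r_{\zeta,n}\chi_n(s_n)$. The third is at most $\delta_{R,n}r_{\zeta,n}$ by Proposition~\ref{prop:orthogonality}, with localization contributing $\delta_n^{\mathrm{loc}}$ through Assumption~\ref{ass:score_localization}. Multiplying by $\sqrt n$, the displayed bracket condition kills all of these pieces.

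The quadratic remainder is where heterogeneity matters. Instead of bounding it by $d_{\mathcal H,n}^2$, I would expand the second Gateaux derivative bilinearly across nuisance blocks, using Assumption~\ref{ass:operator_map_smoothness} for the $g$ block. Each cross term is then bounded in $L^1$ by $r_{a,n}r_{b,n}$, using the fourth-moment bounds and Cauchy--Schwarz. Under the product condition $\sqrt n\,r_{a,n}r_{b,n}=o(1)$ this term is $o_p(n^{-1/2})$.

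I expect this bilinear step to be the main obstacle, because $d_{\mathcal H,n}$ mixes components, and for the operator map the envelope $\mathfrak b_n$ pairs the target norm with the stronger $\operatorname{dist}_{\mathcal G,+}$ norm. I would handle it by assigning $r_{g,n}$ the $\operatorname{dist}_{\mathcal G,+}$ rate in each product. With these bounds in place, the remaining steps of the theorem's proof go through verbatim: uniform GMM reduction, consistency, and the mean-value expansion with Assumption~\ref{ass:rank}.

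The common-rate statement is the special case $r_{a,n}=r_n$. Then $\sqrt n\,r_n^2=o(1)$ is equivalent to $r_n=o(n^{-1/4})$. Fixed complexity makes $J_{\zeta,n}$ bounded, so $r_n\sqrt{J_{\zeta,n}}\to0$ holds automatically, and the remaining dependence, localization and Riesz condition is assumed.
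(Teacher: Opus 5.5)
Your proposal is correct and follows the paper's own argument: you reuse the three-part first-order decomposition from the proof of Lemma \ref{lem:dependent_crossfit} (centered short-memory fluctuation, learner-specific leakage $\chi_n(s_n)$, and the $\delta_{R,n}+\delta_n^{\mathrm{loc}}$ bound on the mean derivative), treat the nonlinear remainder as a sum of products $r_{a,n}r_{b,n}$, and specialize to a common rate to obtain $r_n=o(n^{-1/4})$. Your blockwise bilinear treatment of the remainder, charging the $g$ block its $\operatorname{dist}_{\mathcal G,+}$ rate, makes explicit a step the paper only asserts. Note that this step needs a blockwise second-derivative bound that is implicit in the heterogeneous-rate clause of Assumption \ref{ass:rates}, not literally contained in Assumption \ref{ass:score_smoothness}.
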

\noindent\textbf{Proof.} See Appendix \ref{app:proof_slow_rates}.

The familiar $n^{-1/4}$ DML benchmark remains available, but spatial
dependence adds explicit conditions for the complexity of the derivative
space and the residual training-to-evaluation dependence. The result also
shows why different nuisance functions need not converge at the same rate:
what matters are the products appearing in the orthogonal-score remainder.

\subsection{Spatial Variance Estimation}
\label{subsec:hac}

The operator-orthogonal score and buffered cross-fitting argument do not
depend on a particular spatial-HAC estimator. When inference is desired under
general residual spatial dependence, however, covariance estimation must
account for dependence across the cross-sectional score field.

For $i\in\mathcal I_{k(i)}$, let
\[
\widehat\psi_i
=
\psi_{i,b_n}
\left(
\widehat\theta,
\widehat\eta^{(-k(i))},
\widehat\Gamma^{(-k(i))}
\right)
\]
and
\[
\widetilde\psi_i
=
\widehat\psi_i
-
\overline\psi_n(\widehat\theta).
\]

Let $\nu_n$ denote a spatial-HAC bandwidth. Define
\[
\widehat\Omega
=
\frac{1}{n}
\sum_{i=1}^{n}
\sum_{j=1}^{n}
\mathcal K
\left(
\frac{d_{ij}^{*}}{\nu_n}
\right)
\widetilde\psi_i
\widetilde\psi_j^{\prime}.
\]

Because a generic spatial-HAC kernel need not produce a positive
semidefinite matrix in finite samples, define
\[
\widehat\Omega^{+}
=
\Pi_{\mathbb S_+^q}
\left[
\frac{
\widehat\Omega+\widehat\Omega^{\prime}
}{2}
\right],
\]
where $\Pi_{\mathbb S_+^q}$ denotes projection onto the cone of
$q\times q$ positive semidefinite matrices. When the chosen kernel already
guarantees positive semidefiniteness, this projection is unnecessary.

Let
\[
\widehat J_\psi
=
\frac{\partial}
{\partial\theta^{\prime}}
\overline\psi_n(\widehat\theta)
\]
and
\[
\widehat B
=
\left(
\widehat J_\psi^{\prime}
\widehat{\mathcal M}
\widehat J_\psi
\right)^{-1}
\widehat J_\psi^{\prime}
\widehat{\mathcal M}.
\]
Set
\[
\widehat V
=
\widehat B
\widehat\Omega
\widehat B^{\prime}
\]
and
\[
\widehat V^{+}
=
\widehat B
\widehat\Omega^{+}
\widehat B^{\prime}.
\]

Define
\[
\psi_i^0
=
\psi_i(\theta_0,\eta_0,\Gamma_0),
\]
\[
\check\psi_i
=
\psi_{i,b_n}
\left(
\theta_0,
\widehat\eta^{(-k(i))},
\widehat\Gamma^{(-k(i))}
\right),
\]
and
\[
\delta_{\psi,n}
=
\left[
\frac{1}{n}
\sum_{i=1}^{n}
E
\|
\check\psi_i-\psi_i^0
\|_2^2
\right]^{1/2}.
\]
Let
\[
\kappa_n(\nu_n)
=
\max_i
\sum_{j=1}^{n}
\left|
\mathcal K
\left(
\frac{d_{ij}^{*}}{\nu_n}
\right)
\right|.
\]

Let
\[
\overline\psi_n^0
=
\frac{1}{n}
\sum_{i=1}^{n}
\psi_i^0,
\qquad
\widetilde\psi_i^0
=
\psi_i^0-\overline\psi_n^0,
\]
and
\[
\widehat\Omega^0
=
\frac{1}{n}
\sum_{i=1}^{n}
\sum_{j=1}^{n}
\mathcal K
\left(
\frac{d_{ij}^{*}}{\nu_n}
\right)
\widetilde\psi_i^0
\widetilde\psi_j^{0\prime}.
\]

\begin{assumption}[Spatial-HAC regularity]
\label{ass:hac}

The kernel $\mathcal K$ is bounded, symmetric, and continuous at zero with
\[
\mathcal K(0)=1,
\qquad
\nu_n\rightarrow\infty.
\]
The oracle estimator satisfies
\[
\widehat\Omega^0
\xrightarrow{p}
\Omega_0.
\]
In addition,
\[
\kappa_n(\nu_n)
\delta_{\psi,n}
=
o_p(1),
\qquad
\frac{
\kappa_n(\nu_n)
}{
\sqrt n
}
\rightarrow
0.
\]

A sufficient score-replacement condition is
\[
\delta_{\psi,n}
=
O_p
\left(
r_{\zeta,n}
+
\delta_n^{\mathrm{loc}}(a_n^W,b_n)
\right)
\]
and
\[
\kappa_n(\nu_n)
\left[
r_{\zeta,n}
+
\delta_n^{\mathrm{loc}}(a_n^W,b_n)
\right]
=
o_p(1).
\]

\end{assumption}

Spatial-HAC consistency depends on two separate ingredients. First, the
kernel and bandwidth must consistently estimate the long-run variance of the
oracle score. Second, replacing the oracle score by the estimated orthogonal
score must not be magnified too strongly by the number of observations
receiving nonnegligible HAC weight. The PSD projection affects only
finite-sample numerical validity and does not change the asymptotic target.

\begin{proposition}[Consistency of spatial variance estimation]
\label{prop:variance}

Under Theorem \ref{thm:asymptotic_normality} and Assumption \ref{ass:hac},
\[
\widehat\Omega
\xrightarrow{p}
\Omega_0
\]
and
\[
\widehat V
\xrightarrow{p}
V_0.
\]
Moreover,
\[
\widehat\Omega^{+}
\xrightarrow{p}
\Omega_0
\]
and
\[
\widehat V^{+}
\xrightarrow{p}
V_0.
\]

\end{proposition}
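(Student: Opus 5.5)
The proof splits $\widehat\Omega-\Omega_0=(\widehat\Omega-\widehat\Omega^0)+(\widehat\Omega^0-\Omega_0)$. The second term is $o_p(1)$ by Assumption \ref{ass:hac}, so the work is to show $\widehat\Omega-\widehat\Omega^0=o_p(1)$. Write $\widetilde\psi_i=\widetilde\psi_i^0+e_i$, where
\[
e_i=(\widehat\psi_i-\psi_i^0)-\{\overline\psi_n(\widehat\theta)-\overline\psi_n^0\}.
\]
Then $\widehat\Omega-\widehat\Omega^0$ is a sum of two cross terms of the form $n^{-1}\sum_{i,j}\mathcal K(d_{ij}^{*}/\nu_n)\widetilde\psi_i^0e_j^{\prime}$ and one quadratic term $n^{-1}\sum_{i,j}\mathcal K(d_{ij}^{*}/\nu_n)e_ie_j^{\prime}$. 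For each, the Schur-test inequality $|\sum_{i,j}k_{ij}a_ib_j|\leq\kappa_n(\nu_n)\|a\|_2\|b\|_2$ (symmetric kernel, row sums bounded by $\kappa_n$) bounds it by $\kappa_n(\nu_n)\,(n^{-1}\sum\|\widetilde\psi_i^0\|^2)^{1/2}(n^{-1}\sum\|e_i\|^2)^{1/2}$, or by $\kappa_n(\nu_n)\,n^{-1}\sum\|e_i\|^2$. The oracle second moment $n^{-1}\sum\|\widetilde\psi_i^0\|^2=O_p(1)$ follows from the uniform $L_{2+\delta}$ integrability implicit in Assumption \ref{ass:oracle_clt}, applied through Markov's inequality. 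It therefore suffices to show $n^{-1}\sum\|e_i\|_2^2=O_p(\delta_{\psi,n}^2+\|\widehat\theta-\theta_0\|^2)$; the $\kappa_n\delta_{\psi,n}=o_p(1)$ condition then handles both terms, since $\kappa_n\delta_{\psi,n}^2\le\kappa_n\delta_{\psi,n}\cdot o(1)$.

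To control $e_i$, split $\widehat\psi_i-\psi_i^0=\{\widehat\psi_i(\widehat\theta)-\check\psi_i\}+(\check\psi_i-\psi_i^0)$. The second piece has mean-square average $\delta_{\psi,n}^2$ by definition, so Markov's inequality gives $O_p(\delta_{\psi,n}^2)$. For the first piece, the score is affine in $\theta$, because $\xi_i$ is linear in $(\rho,\beta)$ and $s_i$ is linear in $\theta$. Hence $\widehat\psi_i(\widehat\theta)-\check\psi_i=\partial_\theta\widehat\psi_i\,(\widehat\theta-\theta_0)$ exactly. Its averaged square is bounded by $n^{-1}\sum\|\partial_\theta\widehat\psi_i\|^2\cdot\|\widehat\theta-\theta_0\|^2$, which is $O_p(n^{-1})$ by Theorem \ref{thm:asymptotic_normality}, provided the derivative field has a bounded second moment. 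I would obtain that bound from Assumption \ref{ass:score_smoothness} together with the moment conditions. This piece contributes $\kappa_n(\nu_n)/\sqrt n\to0$ after the Schur bound, which is exactly where that condition in Assumption \ref{ass:hac} is used. The mean-correction term $\overline\psi_n(\widehat\theta)-\overline\psi_n^0$ is common across $i$. It equals $O_p(n^{-1/2})$ by Lemma \ref{lem:dependent_crossfit} and the same linear expansion, and $\overline\psi_n^0(\theta_0)=O_p(n^{-1/2})$ by the CLT. Its contribution is again at most $\kappa_n\cdot O_p(n^{-1/2})$.

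Given $\widehat\Omega\to_p\Omega_0$, the other components follow directly. Symmetrization is continuous, and projection onto $\mathbb S_+^q$ is $1$-Lipschitz in Frobenius norm and fixes $\Omega_0\succ0$, so $\widehat\Omega^+\to_p\Omega_0$. For $\widehat B\to_p B_0$, I need $\widehat J_\psi\to_p J_{\psi,0}$. Since the score is linear in $\theta$, $\widehat J_\psi$ does not depend on the evaluation point. Its convergence then follows from the last display of Assumption \ref{ass:score_smoothness}, the oracle derivative LLN in Assumption \ref{ass:oracle_clt}, and Assumption \ref{ass:rank}. Together with Assumption \ref{ass:gmm_weight} and invertibility of $J_{\psi,0}^{\prime}\mathcal M_0J_{\psi,0}$, the continuous mapping theorem gives $\widehat V,\widehat V^+\to_p V_0$.

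The main obstacle is the cross term. Bounding it crudely by $\kappa_n\delta_{\psi,n}$ is exactly what Assumption \ref{ass:hac} permits, so no finer cancellation is needed. However, making the Schur bound rigorous requires care, because $\delta_{\psi,n}$ is an expectation while the realized average must be converted by Markov's inequality uniformly across the $K$ folds. I would handle this fold by fold, using the fourth-moment bounds of Assumption \ref{ass:rates} where a second-moment statement is not enough.
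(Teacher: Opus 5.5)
Your proposal is correct and follows essentially the same route as the paper's proof. You write the centered feasible score as the centered oracle score plus an error whose empirical mean square is $O_p(\delta_{\psi,n}^2+n^{-1})$, using the exact affine expansion in $\theta$ and Theorem~\ref{thm:asymptotic_normality}. You then bound $\widehat\Omega-\widehat\Omega^0$ by $\kappa_n(\nu_n)$ times Cauchy--Schwarz products, so that $\kappa_n(\nu_n)\delta_{\psi,n}=o_p(1)$ and $\kappa_n(\nu_n)/\sqrt n\to0$ finish the argument, and you conclude via nonexpansiveness of the PSD projection and continuous mapping for $\widehat V$ and $\widehat V^{+}$, exactly as the paper does.

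Your Schur-test step is a cleaner rendering of the paper's ``Cauchy--Schwarz twice'' bound. The fold-by-fold concern at the end is unnecessary, since Markov's inequality applies directly to the pooled average defining $\delta_{\psi,n}$.
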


\noindent\textbf{Proof.}
See Appendix \ref{app:proof_variance}.

The spatial-HAC covariance estimator remains consistent after nuisance
learning and orthogonalization. If the raw HAC matrix is indefinite in a
finite sample, projecting it onto the positive semidefinite cone gives a
valid covariance matrix without changing its probability limit.

\subsection{Simulation}\label{subsec:sim}

I conduct a Monte Carlo experiment to examine the finite-sample behavior of the
proposed estimator when the spatial interaction operator is unknown, depends on an
endogenous characteristic, and must be learned jointly with the remaining nuisance
components. The design is intended to provide a regular generated-operator benchmark
for the asymptotic theory. In particular, the true interaction score is exactly
representable by the implemented finite-dimensional interaction sieve, so the main
experiment isolates nuisance learning, generated-$W$ inference, and
feasible-to-oracle reduction rather than approximation error in the interaction
function.

The main comparison is between the proposed operator-orthogonal estimator and a
naive learned-$W$ plug-in estimator. Both procedures use the same estimated spatial
operator and the same cross-fitted nuisance estimates; the difference is that the
plug-in estimator treats the estimated interaction operator as fixed at the inference
stage, whereas the proposed estimator applies the Riesz correction to remove the
first-order effect of nuisance learning. I also report three benchmark estimators. The
first imposes equal weights over the maintained local candidate support. The second
is an infeasible estimator that knows the true spatial operator and the true control
index. The third is an infeasible oracle-orthogonal estimator that evaluates the
orthogonal score using the true nuisance objects. As an additional diagnostic, I use
a true-$g/W$ benchmark that fixes the interaction operator at its population value
while retaining the feasible estimation of the remaining nuisance components.

The final experiment uses $1{,}000$ Monte Carlo replications for each of
\[
n\in\{50,100,200\}
\]
cross-sectional units. The reported performance measures are bias, root mean squared
error (RMSE), empirical standard deviation, average estimated standard error, the
ratio of the average standard error to the empirical standard deviation, and coverage
of nominal $95\%$ Wald confidence intervals. To connect the finite-sample evidence
directly to the theoretical argument, I additionally report target-relevant errors in
the learned spatial lag and generated instruments, interaction-score recovery,
Riesz-representation diagnostics, and the root-$n$ feasible-to-oracle gap.

\subsubsection{Design}\label{subsubsec:simulation_design}

Units are placed on a one-dimensional increasing-domain lattice indexed by
$i=1,\ldots,n$. Primitive innovation streams are mutually independent before local
spatial filtering. For a generic innovation sequence $\{e_i\}$, define
\[
\mathcal L_a(e_i)
=
\frac{
e_i+a e_{i-1}+a e_{i+1}
}{
\sqrt{1+2a^2}
},
\]
with the natural boundary adjustment. The predetermined variables and primitive
disturbances are generated as
\[
X_i=\mathcal L_{0.25}(e_i^X),
\qquad
Q_i=\mathcal L_{0.25}(e_i^Q),
\]
\[
U_i
=
0.45\,\mathcal L_{0.40}(e_i^U),
\qquad
\phi_i
=
0.35\,\mathcal L_{0.35}(e_i^\phi),
\]
where the underlying innovation sequences are independent standard normal draws.

The endogenous characteristic entering the interaction-weight construction satisfies
\[
Z_i=m_0(X_i,Q_i)+U_i,
\]
with
\[
m_0(X_i,Q_i)
=
-0.10
+
1.10Q_i
+
1.45X_i
+
0.20Q_i^2.
\]
Thus $X_i$ and $Q_i$ provide substantial predetermined variation for learning the
interaction operator, while $U_i$ retains nondegenerate residual variation and enters
the control-function channel.

Let $\mathcal N_i$ denote the set of immediately adjacent lattice units and define
the local residual summary
\[
\bar U_i
=
\frac{1}{|\mathcal N_i|}
\sum_{j\in\mathcal N_i}U_j.
\]
The control function is
\[
h_0(C_i)
=
0.90U_i
+
0.405\bar U_i
+
0.15U_i^2
+
0.10U_i\bar U_i.
\]
The structural outcome is generated from
\[
Y
=
\rho_0W_0Y
+
X\beta_0
+
h_0(C_0)
+
\phi,
\]
with
\[
\rho_0=0.55,
\qquad
\beta_0=1.50.
\]
Equivalently,
\[
Y
=
(I_n-\rho_0W_0)^{-1}
\left\{
X\beta_0+h_0(C_0)+\phi
\right\}.
\]

The candidate interaction support is local and fixed:
\[
S_{ij,n}
=
\mathbf 1\{0<|i-j|\leq 1\}.
\]
For each supported ordered pair, define
\[
z_i=\tanh(Z_i/2),
\qquad
z_j=\tanh(Z_j/2).
\]
The true bilateral interaction score is
\[
g_0(R_{ij})
=
\gamma_{1,0}z_j
+
\gamma_{2,0}z_i z_j,
\qquad
(\gamma_{1,0},\gamma_{2,0})=(1,0.6),
\]
and the true row-normalized spatial weights are
\[
w_{ij,0}
=
\frac{
S_{ij,n}\exp\{g_0(R_{ij})\}
}{
\sum_{\ell\neq i}
S_{i\ell,n}\exp\{g_0(R_{i\ell})\}
}.
\]
The bilateral term $z_i z_j$ allows the own-unit characteristic to modify the
relative influence of neighboring units. A standalone additive $z_i$ term is omitted
because it is constant within row $i$ and therefore cancels under row normalization.

To isolate generated-operator inference from sieve approximation error, the
interaction learner uses the same two target-relevant directions at every sample
size:
\[
g(R_{ij})
=
\gamma_1 z_j+\gamma_2 z_i z_j.
\]
Hence the true interaction score lies exactly in the implemented sieve for all
$n\in\{50,100,200\}$. The first-stage and control-function sieves likewise contain
the finite series appearing in the DGP. The projection nuisance
$\ell_0(C_i)=E[B_g(P_i)\mid C_i]$ is constant in this benchmark by construction,
but it is nevertheless estimated explicitly and retained in the stacked nuisance
system.

The interaction-moment dictionary $B_g(P_i)$ is constructed exclusively from
predetermined variables and is designed to target relative, rather than level,
variation in row-normalized weights. Let
\[
x_i^b=\tanh(X_i/2),
\qquad
q_i^b=\tanh(Q_i/2),
\]
and, for interior units, define the right-minus-left contrasts
\[
\Delta x_i=x_{i+1}^b-x_{i-1}^b,
\qquad
\Delta q_i=q_{i+1}^b-q_{i-1}^b,
\]
with the contrasts set to zero at boundary rows where there is only one candidate
neighbor. Let $\bar x_i^b$ and $\bar q_i^b$ denote the corresponding local neighbor
means. The dictionary uses reflection-invariant relative-variation directions drawn
from
\[
(\Delta q_i)^2,\quad
(\Delta x_i)^2,\quad
\Delta q_i\Delta x_i,
\]
their interactions with $q_i^b$ and $x_i^b$, and mean-by-relative-variation terms
such as
\[
\bar q_i^b(\Delta q_i)^2,\qquad
\bar x_i^b(\Delta x_i)^2,\qquad
\bar q_i^b\Delta q_i\Delta x_i.
\]
The dimension is fixed at
\[
d_{B_g}=12
\]
at every sample size. Importantly, $B_g(P_i)$ never uses $Z_i$, $U_i$, $Y_i$,
$W(g)$, any fitted nuisance object, or the true interaction coefficients.

For each auxiliary training sample, estimation follows the feasible procedure
developed in Section~\ref{sec:theory}. I first obtain a joint stacked sieve-GMM
estimate based on the target moment and the nuisance blocks
\[
(s_m,s_\ell,s_h,s_g).
\]
Because the interaction criterion is nonlinear, the joint $g$ block uses genuine
deterministic symmetric multi-starts; the selected joint estimate is the candidate
with the lowest training-sample stacked-GMM criterion. Holding the remaining
preliminary nuisance components fixed, I then refine $g$ using a local multi-start,
two-step profiled $s_g$ criterion. This numerical procedure is entirely
training-sample based and does not use the true interaction coefficients.

The target instrument vector is
\[
H_i(g)
=
\left(
X_i,\,
Q_i,\,
\{W(g)X\}_i,\,
\{W(g)^2X\}_i,\,
\{W(g)Q\}_i
\right)^{\prime}.
\]
The nuisance derivative system is then used to construct the Riesz correction on a
target-normalized tangent space. A training-sample spectral truncation removes
numerically irrelevant near-null directions, after which a vanishing Tikhonov
regularization is applied. The feasible orthogonal score takes the form
\[
\psi_i(\varpi,\eta,\Gamma)
=
\phi_i(\varpi,\eta)
-
\Gamma s_i(\varpi,\eta).
\]
The learned-$W$ plug-in estimator instead uses the uncorrected target moment.

Spatial sample splitting uses six geographically contiguous evaluation folds and
three repeated buffered partitions. Because the target instruments contain
$W^2X$, both the score footprint and the training-moment footprint extend two
lattice units from the evaluation index. An additional guard of
\[
s_n=\max\{2,\lceil 0.8\log n\rceil\}
\]
separates the raw-data training footprint from the evaluation footprint. All
nuisance fitting, tuning, projection estimation, and Riesz estimation use only the
corresponding auxiliary training observations and their admissible raw-data
footprints.

Long-run variance estimation uses a Bartlett spatial-HAC estimator. The primary
bandwidth is chosen to cover the model-implied local score footprint and geometric
SAR propagation and is also required to grow with $n$ at an $n^{1/2}$ floor. In the
three reported designs, the resulting bandwidths are
\[
L_n=7,\ 10,\ 14
\qquad
\text{for }
n=50,\ 100,\ 200,
\]
respectively. Lag-specific finite-sample denominators and a degrees-of-freedom
correction are used, and the estimated covariance matrix is projected onto the
positive-semidefinite cone when necessary.

\subsubsection{Results}\label{subsubsec:simulation_results}

Table~\ref{tab:simulation_main} reports the main Monte Carlo results. The proposed
learned-$W$ estimator becomes substantially more accurate as $n$ increases. For the
spatial autoregressive coefficient, the absolute bias of the operator-orthogonal
estimator falls from $0.0140$ at $n=50$ to less than $0.001$ at $n=100$ and remains
approximately $0.001$ at $n=200$. Its RMSE declines sharply from $0.1850$ to
$0.0521$ and then to $0.0255$.

The principal finite-sample comparison concerns inference under the same learned
operator. For $\rho$, nominal $95\%$ coverage of the operator-orthogonal estimator is
$88.4\%$, $89.2\%$, and $89.6\%$ for $n=50$, $100$, and $200$, respectively,
whereas the corresponding plug-in coverages are $83.6\%$, $85.6\%$, and $86.9\%$.
Thus the orthogonal correction improves coverage at every reported sample size,
although the feasible intervals remain below nominal coverage in these moderate
samples. This improvement is not driven by a large plug-in point-estimation bias:
both procedures are nearly unbiased by $n=200$. Rather, the comparison illustrates
the inferential effect of accounting for generated-operator uncertainty.

The same pattern is stronger for $\beta$. The operator-orthogonal coverage rates are
$90.5\%$, $88.6\%$, and $90.9\%$, compared with $85.0\%$, $81.5\%$, and $81.2\%$
for the learned-$W$ plug-in estimator. The orthogonal correction can increase
finite-sample dispersion, especially at the smallest sample size, so it need not
dominate plug-in estimation in RMSE. Its role is instead to reduce the leading
sensitivity of the target moment to nuisance estimation and thereby improve the
calibration of inference.

\begin{table}[!htbp]
\centering
\caption{Monte Carlo performance}
\label{tab:simulation_main}
\small
\begin{tabular}{llrrrrrr}
\toprule
$n$ & Estimator & Bias & RMSE & Emp.\ SD & Mean SE & SE/SD & Coverage \\
\midrule
\multicolumn{8}{l}{\textit{Panel A: Spatial autoregressive coefficient $\rho$}}\\
50
& Learned $W$, orthogonal
& -0.0140 & 0.1850 & 0.1846 & 0.1394 & 0.755 & 0.884 \\
& Learned $W$, plug-in
&  0.0121 & 0.1222 & 0.1216 & 0.0841 & 0.692 & 0.836 \\
& Equal local $W$
& -0.0810 & 0.4282 & 0.4207 & 0.4624 & 1.099 & 0.890 \\
& Known $W$ + true $C$
&  0.0081 & 0.0493 & 0.0487 & 0.0431 & 0.886 & 0.884 \\
& Oracle orthogonal
&  0.0048 & 0.0486 & 0.0484 & 0.0428 & 0.885 & 0.903 \\
\hline
100
& Learned $W$, orthogonal
& -0.0009 & 0.0521 & 0.0522 & 0.0401 & 0.769 & 0.892 \\
& Learned $W$, plug-in
&  0.0083 & 0.0421 & 0.0413 & 0.0325 & 0.788 & 0.856 \\
& Equal local $W$
&  0.0269 & 0.1458 & 0.1434 & 0.0908 & 0.633 & 0.863 \\
& Known $W$ + true $C$
&  0.0035 & 0.0320 & 0.0319 & 0.0292 & 0.918 & 0.920 \\
& Oracle orthogonal
&  0.0014 & 0.0288 & 0.0288 & 0.0273 & 0.950 & 0.911 \\
\hline
200
& Learned $W$, orthogonal
& -0.0014 & 0.0255 & 0.0254 & 0.0212 & 0.834 & 0.896 \\
& Learned $W$, plug-in
&  0.0022 & 0.0235 & 0.0234 & 0.0187 & 0.800 & 0.869 \\
& Equal local $W$
&  0.0153 & 0.0557 & 0.0536 & 0.0400 & 0.746 & 0.891 \\
& Known $W$ + true $C$
&  0.0016 & 0.0196 & 0.0195 & 0.0193 & 0.988 & 0.958 \\
& Oracle orthogonal
& -0.0002 & 0.0189 & 0.0189 & 0.0181 & 0.961 & 0.936 \\
\midrule
\multicolumn{8}{l}{\textit{Panel B: Structural coefficient $\beta$}}\\
50
& Learned $W$, orthogonal
& -0.0009 & 0.4546 & 0.4549 & 0.3219 & 0.708 & 0.905 \\
& Learned $W$, plug-in
& -0.0205 & 0.2410 & 0.2402 & 0.1644 & 0.684 & 0.850 \\
& Equal local $W$
&  0.0650 & 1.0809 & 1.0795 & 0.8244 & 0.764 & 0.897 \\
& Known $W$ + true $C$
& -0.0140 & 0.0837 & 0.0825 & 0.0815 & 0.987 & 0.933 \\
& Oracle orthogonal
& -0.0055 & 0.1077 & 0.1076 & 0.1035 & 0.962 & 0.931 \\
\hline
100
& Learned $W$, orthogonal
& -0.0073 & 0.1195 & 0.1193 & 0.0947 & 0.794 & 0.886 \\
& Learned $W$, plug-in
& -0.0125 & 0.0964 & 0.0956 & 0.0668 & 0.699 & 0.815 \\
& Equal local $W$
& -0.0218 & 0.2821 & 0.2814 & 0.1678 & 0.596 & 0.867 \\
& Known $W$ + true $C$
& -0.0073 & 0.0578 & 0.0574 & 0.0545 & 0.950 & 0.940 \\
& Oracle orthogonal
& -0.0022 & 0.0759 & 0.0760 & 0.0677 & 0.891 & 0.917 \\
\hline
200
& Learned $W$, orthogonal
& -0.0011 & 0.0580 & 0.0580 & 0.0529 & 0.911 & 0.909 \\
& Learned $W$, plug-in
& -0.0034 & 0.0543 & 0.0543 & 0.0384 & 0.707 & 0.812 \\
& Equal local $W$
&  0.0022 & 0.0979 & 0.0979 & 0.0734 & 0.749 & 0.865 \\
& Known $W$ + true $C$
& -0.0028 & 0.0332 & 0.0331 & 0.0349 & 1.053 & 0.948 \\
& Oracle orthogonal
& -0.0013 & 0.0496 & 0.0496 & 0.0475 & 0.957 & 0.920 \\
\bottomrule
\end{tabular}

\begin{minipage}{0.78\textwidth}
\footnotesize
Note: Each design uses $1{,}000$ Monte Carlo replications.
``Equal local $W$'' assigns equal weights to the maintained one-step candidate
neighbors. ``Known $W$ + true $C$'' and ``Oracle orthogonal'' are infeasible
benchmarks. Coverage refers to nominal $95\%$ Wald confidence intervals based on
the spatial-HAC variance estimator.
\end{minipage}
\end{table}

The learning diagnostics in Table~\ref{tab:simulation_diagnostics} show that the
improvement is accompanied by convergence of the learned interaction operator.
The mean estimated interaction coefficients move from
\[
(\widehat\gamma_1,\widehat\gamma_2)
=
(0.883,0.416)
\]
at $n=50$ to $(0.948,0.537)$ at $n=100$ and $(0.980,0.577)$ at $n=200$, approaching
the population values $(1,0.6)$. At the same time, the target-relevant errors in the
learned spatial lag and generated instruments decline substantially. In particular,
\[
r_W
=
\frac{\|(\widehat W-W_0)Y\|_2}{\sqrt n}
\]
falls from $0.258$ to $0.168$ and then to $0.119$, while the corresponding
cross-fitted instrument error falls from $0.209$ to $0.133$ and then to $0.093$.
The second-order diagnostic $\sqrt n\,r_W^2$ declines from $0.525$ to $0.315$ and
then to $0.225$, with the analogous instrument quantity declining from $0.347$ to
$0.200$ and then to $0.139$.

Most importantly, the direct root-$n$ feasible-to-oracle discrepancy contracts
sharply. The RMS of
\[
\sqrt n
\left(
\widehat\rho_{\mathrm{orth}}
-
\widehat\rho_{\mathrm{oracle}}
\right)
\]
falls from $1.608$ at $n=50$ to $0.476$ at $n=100$ and $0.253$ at $n=200$.
The component specifically associated with learning $g/W$ also becomes small: the
RMS of
\[
\sqrt n
\left(
\widehat\rho_{\mathrm{orth}}
-
\widehat\rho_{\mathrm{true}\text{-}g}
\right)
\]
falls from $0.558$ to $0.181$ and then to $0.145$. The evaluation-block Riesz
representation residual likewise declines from $0.920$ to $0.664$ and $0.519$.
These diagnostics provide direct finite-sample evidence for the oracle-reduction
mechanism underlying the asymptotic theory.

\begin{table}[!htbp]
\centering
\caption{Interaction learning and oracle-reduction diagnostics}
\label{tab:simulation_diagnostics}
\small
\begin{tabular}{rrrrrrrr}
\toprule
$n$
& $E[\widehat\gamma_1]$
& $E[\widehat\gamma_2]$
& $r_W$
& $r_H$
& $\sqrt n\,r_W^2$
& Riesz residual
& Root-$n$ oracle-gap RMS \\
\midrule
50  & 0.883 & 0.416 & 0.258 & 0.209 & 0.525 & 0.920 & 1.608 \\
100 & 0.948 & 0.537 & 0.168 & 0.133 & 0.315 & 0.664 & 0.476 \\
200 & 0.980 & 0.577 & 0.119 & 0.093 & 0.225 & 0.519 & 0.253 \\
\bottomrule
\end{tabular}

\begin{minipage}{0.74\textwidth}
\footnotesize
Note: The true interaction coefficients are
$(\gamma_{1,0},\gamma_{2,0})=(1,0.6)$.
The target-relevant errors $r_W$ and $r_H$ are averaged across cross-fitted
evaluation blocks. ``Riesz residual'' denotes the evaluation-block representation
residual. ``Root-$n$ oracle-gap RMS'' is the Monte Carlo RMS of
$\sqrt n(\widehat\rho_{\mathrm{orth}}-\widehat\rho_{\mathrm{oracle}})$.
\end{minipage}
\end{table}

\subsubsection{Discussion}\label{subsubsec:simulation_discussion}

The simulation highlights three features of the proposed procedure. First,
operator orthogonalization improves inference relative to naive plug-in treatment of
the learned spatial operator. The difference is most transparent because the two
estimators use the same learned $W$ and differ only in whether the first-order
nuisance effect is removed. For $\rho$, orthogonal coverage exceeds plug-in coverage
at all three reported sample sizes; for $\beta$, the coverage advantage is even
larger. The orthogonal correction need not reduce RMSE in finite samples because the
debiasing adjustment can increase dispersion. Its purpose is instead to protect the
target moment against first-order perturbations in the estimated nuisance system.

Second, the interaction-learning diagnostics show that the feasible estimator is
moving toward its oracle counterpart for the intended reason. The true interaction
score is exactly representable in the maintained sieve, and the estimated
coefficients approach $(1,0.6)$ as $n$ grows. The minimum singular value of the
normalized $g$-moment block remains informative over the reported sample sizes,
while its condition number remains moderate. At the same time, the target-relevant
$W$ and instrument errors, the Riesz representation discrepancy, and the direct
root-$n$ feasible-to-oracle gap all decline sharply. These patterns are consistent
with the local identification and second-order remainder conditions used in the
asymptotic analysis.

Third, the remaining undercoverage of the feasible $\rho$ interval is a finite-sample
feature rather than evidence of failed interaction learning. The oracle-orthogonal
coverage for $\rho$ increases from $90.3\%$ at $n=50$ to $91.1\%$ at $n=100$ and
$93.6\%$ at $n=200$, while the infeasible known-$W$ benchmark reaches $95.8\%$
coverage at $n=200$. Moreover, the true-$g/W$ benchmark yields $90.6\%$ coverage
for $\rho$ at $n=200$, only one percentage point above the fully feasible
operator-orthogonal estimator. Thus, by the largest reported sample size, learning
the interaction operator accounts for only a small portion of the remaining
coverage discrepancy.

The equal-local-$W$ benchmark provides a complementary specification comparison.
Unlike the main orthogonal-versus-plug-in comparison, it imposes a different
interaction operator and therefore need not be correctly specified under the
bilateral DGP. Its role is to show the cost of replacing heterogeneous relative
interaction strength with a fixed local weighting rule. The main inferential
conclusion does not depend on this comparison, because the orthogonal and plug-in
estimators use the same learned operator.

In summary, the Monte Carlo evidence is consistent with the theoretical mechanism
developed above. As the number of cross-sectional units increases, the learned
bilateral interaction score approaches its population value, target-relevant
operator errors decline, second-order nuisance diagnostics improve, and the
feasible orthogonal estimator moves rapidly toward the oracle benchmark. At the
same time, operator orthogonalization delivers systematically better confidence-
interval coverage than naive plug-in inference under the same learned spatial
operator. The remaining finite-sample undercoverage is therefore best interpreted as
a higher-order inference issue that diminishes as the nuisance-learning and
oracle-reduction errors contract, rather than as a failure of the generated-$W$
identification strategy.

\section{Empirical Application}\label{sec:empirics}

I illustrate the proposed estimator using U.S.\ county-level diabetes prevalence.
The application is useful for the present framework because health outcomes
display substantial spatial clustering, while the strength of interaction
among neighboring counties need not be well represented by equal contiguity
weights or by a predetermined geographic distance-decay rule. Socioeconomic
conditions provide one natural source of heterogeneity in local interactions,
but those same characteristics may be endogenous with respect to unobserved
determinants of health outcomes. I therefore allow the relative weights among
geographically adjacent counties to depend flexibly on county poverty and use
the control-function and operator-orthogonal construction developed above.

The empirical exercise is designed to illustrate two distinct sources of
sensitivity in SAR estimation. First, the estimated spatial autoregressive
coefficient can depend on the functional form imposed on the interaction
operator. Conventional specifications condition inference on a fixed
distance-decay or equal-contiguity matrix, whereas the proposed framework
allows the relative interaction weights to be learned from the data. Second,
even after an interaction operator has been learned, treating the resulting
$\widehat W$ as if it were known ignores the first-order effect of nuisance
estimation on the structural moments. Comparing a learned-$W$ plug-in
estimator with the proposed operator-orthogonal estimator isolates this second
margin.

The updated empirical results display both forms of sensitivity. The fixed
distance and equal-contiguity specifications imply spatial autoregressive
coefficients of approximately $0.69$--$0.70$. Learning the interaction weights
reduces the plug-in estimate to approximately $0.44$, and accounting for the
first-order effect of learning $W$ through the operator-orthogonal score
reduces the estimate further to approximately $0.20$. The proposed estimate
remains positive and statistically significant, but its magnitude is much
smaller than under either the fixed-$W$ or learned-$W$ plug-in specifications.
The application therefore illustrates that uncertainty about the interaction
operator can matter for the magnitude of estimated spatial dependence even
when the qualitative conclusion of positive dependence remains unchanged.

\subsection{Data}\label{subsec:empirical_data}

The analysis combines several publicly available county-level data sources.
The outcome is age-adjusted diagnosed diabetes prevalence from CDC PLACES,
measured in percentage points. Socioeconomic characteristics are obtained from
the 2018--2022 five-year American Community Survey. I use median household
income, the poverty rate, the unemployment rate, the share of the population
aged 25 and older with at least a bachelor's degree, and county population.
Geographic coordinates are taken from the county-level data. I additionally
use the 2023 USDA Rural--Urban Continuum Code (RUCC) as a predetermined measure
of county rurality and the 2025 Census county-adjacency file to construct the
local candidate interaction network.

The direct structural covariates are log median household income, the
unemployment rate, the bachelor's-or-higher share, and the RUCC rurality
score. These variables are standardized before estimation, so their
coefficients are measured in percentage points of diabetes prevalence per
one-standard-deviation difference in the corresponding regressor.

The poverty rate is treated as the potentially endogenous characteristic that
governs the relative strength of local interactions. Let $Z_i$ denote county
$i$'s poverty rate. In the first-stage control-function equation, $Z_i$ is
modeled flexibly as a function of the structural covariates together with log
county population, latitude, and longitude. The latter three variables enter
the empirical implementation as predetermined shifters $Q_i$. The resulting
first-stage residual and localized residual information form the control index
used to estimate the flexible control function $h(C_i)$. This construction
allows the characteristic governing the interaction weights to be correlated
with unobserved determinants of diabetes prevalence under the maintained
control-function conditions.

The initial merged data contain 3,143 counties. After imposing complete-case
requirements, 2,956 counties remain. I then retain the largest connected
component of the county-adjacency network, producing a final estimation sample
of 2,921 counties. The support contains 17,166 directed contiguous-county
links. The median county has six supported neighbors, with supported degrees
ranging from one to fourteen.

\subsection{Empirical Strategy}\label{subsec:empirical_strategy}

For county $i$, I estimate the cross-sectional SAR specification
\[
Y_i
=
\rho
\sum_j w_{ij}(g)Y_j
+
X_i^{\prime}\beta
+
h(C_i)
+
\xi_i,
\]
where $Y_i$ denotes age-adjusted diabetes prevalence, $X_i$ contains the
standardized county characteristics described above, and
$W(g)=\{w_{ij}(g)\}$ is a row-normalized interaction operator. The spatial lag
$W(g)Y$ enters the low-dimensional structural equation, while the score
function determining the relative interaction weights is treated as a
nuisance object.

I use the Census county-adjacency network as a predetermined local support.
Thus learning $W$ does not create new long-distance links; instead, it changes
the relative importance assigned to counties within the admissible local
network. In the empirical specification, the interaction score is learned as
a flexible function of the neighboring county's standardized poverty rate.
Writing $S_{ij}$ for the predetermined contiguity indicator, the empirical
operator has the row-normalized form
\[
w_{ij}(g)
=
\frac{
S_{ij}\exp\{g(\widetilde Z_j)\}
}{
\sum_{k}S_{ik}\exp\{g(\widetilde Z_k)\}
},
\]
where $\widetilde Z_j$ denotes standardized poverty. The unknown score
function $g$ is approximated by a two-dimensional spline sieve. Because only
relative scores matter after row normalization, the empirical implementation
uses the canonical row-centered representation described in the theoretical
framework.

The first-stage nuisance $m(V_i)$ is estimated flexibly using spline functions
of the four structural covariates together with log population, latitude, and
longitude. The control function $h(C_i)$ is also estimated by a flexible
series approximation. To keep the interaction-learning moments distinct from
the flexible control function, I separately estimate
\[
\ell(C_i)=E[B_g(P_i)\mid C_i],
\]
as in the theoretical construction. The empirical $B_g(P_i)$ dictionary is
fixed before learning $g$ and contains 18 local moment features constructed
from predetermined county characteristics and network features, including
neighbor variation in the direct covariates and population, geographic
distance, shared-boundary information, and selected local covariance terms.
It therefore does not mechanically reuse the estimated interaction score as
its own identifying variation.

The target SAR-IV moments use the direct covariates together with the
predetermined shifters and spatially transformed variables generated by the
candidate operator, following the finite-sieve construction in
Section~\ref{sec:theory}. Because the learned operator enters both the spatial
lag and the generated instrument components, estimation error in $g$ can
affect the target moments at first order.

The main specification therefore uses the operator-orthogonal score developed
in Section~\ref{sec:asymptotics}. In the finite-sieve implementation, the
correction is
\[
\widehat\psi_i
=
\widehat\phi_i
-
\widehat\Gamma\widehat s_i,
\qquad
\widehat\Gamma
=
\widehat G_{\eta}
\widehat A_{\eta}^{\dagger}.
\]
The Riesz step is estimated using training-sample tangent information and a
Tikhonov-regularized pseudoinverse. The regularization level is selected
within the nuisance-training sample by cross-validation. The spline penalty
for the interaction score and the penalty for the control function are also
selected using training-sample criteria, so the evaluation observations are
not used to tune the nuisance learners.

The empirical implementation uses buffered spatial cross-fitting. I construct
ten geographically organized spatial splits and use six evaluation folds
within each split. For a given evaluation fold, the score footprint extends
through two support-network hops, an additional one-hop guard separates that
footprint from the auxiliary sample, and nuisance-training centers are
retained only when their two-hop localized training neighborhoods remain
inside the admissible training region. Across the resulting fold fits, the
nuisance-training sample averages approximately $65.6\%$ of the full
estimation sample and never falls below approximately $52.5\%$.

For comparison, I report three alternative estimators. The first is a
learned-$W$ plug-in estimator. It uses the same flexible interaction-learning
architecture and cross-fitting design but omits the correction for the
first-order effect of estimating $W$. Comparing the proposed estimator with
this plug-in estimator therefore isolates the empirical importance of
generated-operator uncertainty.

The second benchmark imposes a predetermined geographic distance-decay rule on
the same local support. The third assigns equal row-normalized weights to
contiguous counties. These fixed-$W$ specifications retain the
control-function and cross-fitting treatment of the remaining nuisance
components but condition on the selected spatial weights matrix as known.
The baseline covariance estimator is spatial HAC with a 500-km Bartlett
kernel. I also report sensitivity to 300-km and 750-km bandwidths.

\subsection{Results}\label{subsec:empirical_results}

Table~\ref{tab:empirical_rho} reports the estimated spatial autoregressive
coefficient across the four specifications.

\begin{table}[!htbp]
\centering
\caption{Spatial autoregressive coefficient across interaction operators}
\label{tab:empirical_rho}
\begin{tabular}{lcccc}
\hline\hline
&
\multicolumn{1}{c}{Learned $W$:}
&
\multicolumn{1}{c}{Learned $W$:}
&
\multicolumn{1}{c}{Fixed distance}
&
\multicolumn{1}{c}{Equal contiguity}
\\
&
\multicolumn{1}{c}{orthogonal}
&
\multicolumn{1}{c}{plug-in}
&
\multicolumn{1}{c}{$W$}
&
\multicolumn{1}{c}{$W$}
\\
\hline
$\rho$
& 0.1984***
& 0.4437***
& 0.6861***
& 0.7005*** \\
& (0.0512)
& (0.0184)
& (0.0792)
& (0.0787) \\
95\% CI
& [0.0981,\ 0.2987]
& [0.4076,\ 0.4799]
& [0.5309,\ 0.8414]
& [0.5462,\ 0.8548] \\
\hline\hline
\end{tabular}

\begin{minipage}{0.72\textwidth}
\footnotesize
Note: The table reports estimates of the spatial autoregressive coefficient
for age-adjusted county diabetes prevalence. Spatial-HAC standard errors using
a 500-km Bartlett kernel are in parentheses. The learned-$W$ orthogonal
specification is the proposed estimator. The learned-$W$ plug-in
specification uses the same flexible interaction-learning architecture but
omits the correction for the first-order effect of estimating the interaction
operator. The fixed-distance and equal-contiguity specifications treat $W$ as
predetermined. *** denotes significance at the 1\% level.
\end{minipage}
\end{table}

\paragraph{Sensitivity of the spatial autoregressive coefficient.}
The proposed learned-$W$ operator-orthogonal estimator gives
\[
\widehat\rho=0.1984,
\qquad
\operatorname{SE}(\widehat\rho)=0.0512,
\]
with a 95\% confidence interval of $[0.0981,0.2987]$. The estimate therefore
provides evidence of positive conditional spatial dependence in county
diabetes prevalence, but the magnitude is considerably smaller than under
the alternative specifications.

The learned-$W$ plug-in estimator gives
\[
\widehat\rho_{\mathrm{plug\text{-}in}}=0.4437,
\qquad
\operatorname{SE}(\widehat\rho_{\mathrm{plug\text{-}in}})=0.0184.
\]
Thus, even conditional on learning the interaction structure rather than
fixing it in advance, treating the estimated operator as known produces a
spatial autoregressive coefficient more than twice as large as the proposed
estimate. The two 95\% confidence intervals do not overlap. Relative to the
plug-in estimate, operator orthogonalization reduces the estimated magnitude
of $\rho$ by approximately 55\%.

The fixed-$W$ benchmarks imply still larger spatial autoregressive
coefficients. The distance-decay specification yields $0.6861$, while equal
contiguity yields $0.7005$. Both are precisely estimated. Relative to these
conventional specifications, the proposed estimate is roughly 70\% smaller.
The sequence
\[
0.7005
\quad\longrightarrow\quad
0.4437
\quad\longrightarrow\quad
0.1984
\]
provides a useful decomposition of the empirical sensitivity. Moving from
equal contiguity to a flexibly learned operator changes the estimated
interaction structure and lowers the plug-in estimate substantially.
Accounting additionally for the first-order effect of estimating that
operator lowers the estimate again. Hence both the functional form of $W$ and
the treatment of $W$ as a generated nuisance object are empirically
consequential.

\paragraph{What the learned interaction operator changes.}
An informative feature of the updated application is that the learned
operator does not achieve its result by creating a radically different
network. The support is identical to the Census contiguity network, and the
reweighting within that support is relatively moderate. The learned operator
has an average effective number of neighbors of approximately $5.79$, and the
largest neighbor receives about $20.7\%$ of a county's row weight on average.

On the common support, the correlation between the learned edge weights and
equal-contiguity weights is approximately $0.91$. The mean row-$L_1$
difference between the two matrices is approximately $0.088$, and about
$75\%$ of individual edge weights differ from their equal-contiguity values
by less than $0.02$. The induced spatial lags are also highly correlated:
the sample correlation between learned-$W$ and equal-contiguity values of
$WY$ is approximately $0.997$.

This is useful for interpreting the empirical result. The large difference in
$\widehat\rho$ does not require a completely different graph. Rather, modest
target-relevant reweighting within a common local support can materially
change a SAR estimate, and the subsequent operator-orthogonal correction can
change it further. This pattern is consistent with the motivation for treating
the functional form of $W$ as a nuisance object rather than as an innocuous
normalization.

\paragraph{Remaining structural coefficients.}
Table~\ref{tab:empirical_beta} reports the remaining coefficients. Because the
regressors are standardized, each coefficient gives the difference in
diabetes prevalence, in percentage points, associated with a
one-standard-deviation difference in the corresponding county characteristic,
conditional on the maintained SAR-IV and control-function specification.

\begin{table}[!htbp]
\centering
\caption{County diabetes coefficients across interaction operators}
\label{tab:empirical_beta}
\small
\begin{tabular}{lcccc}
\hline\hline
&
\multicolumn{1}{c}{Learned $W$:}
&
\multicolumn{1}{c}{Learned $W$:}
&
\multicolumn{1}{c}{Fixed distance}
&
\multicolumn{1}{c}{Equal contiguity}
\\
&
\multicolumn{1}{c}{orthogonal}
&
\multicolumn{1}{c}{plug-in}
&
\multicolumn{1}{c}{$W$}
&
\multicolumn{1}{c}{$W$}
\\
\hline
Log household income
& -0.7887*** & -0.8762*** & -0.8186*** & -0.8060*** \\
& (0.1304)   & (0.0750)   & (0.1036)   & (0.0922)   \\[0.3em]

Unemployment rate
& 0.3538*** & 0.5847*** & 0.2986*** & 0.3160*** \\
& (0.0868)  & (0.0481)  & (0.0633)  & (0.0637)  \\[0.3em]

Bachelor's degree or higher
& -0.3060*** & -0.2615*** & -0.3296*** & -0.3669*** \\
& (0.0993)   & (0.0376)   & (0.0448)   & (0.0430)   \\[0.3em]

RUCC rurality score
& -0.4147*** & -0.4558*** & -0.5020*** & -0.5174*** \\
& (0.1534)   & (0.0389)   & (0.0332)   & (0.0333)   \\
\hline\hline
\end{tabular}

\begin{minipage}{0.74\textwidth}
\footnotesize
Note: The table reports coefficients on standardized county characteristics
for age-adjusted diabetes prevalence. Spatial-HAC standard errors using a
500-km Bartlett kernel are in parentheses. The learned-$W$ orthogonal
specification is the proposed estimator. *** denotes significance at the 1\%
level.
\end{minipage}
\end{table}

The signs of the non-spatial coefficients are stable across all four
specifications. Higher log household income and a larger bachelor's-degree
share are negatively associated with diabetes prevalence, while unemployment
is positively associated with diabetes prevalence. The RUCC coefficient is
negative under all four specifications. Although the magnitudes and standard
errors vary across estimators, the qualitative pattern of these coefficients
is substantially more stable than that of the spatial autoregressive
coefficient. The main empirical sensitivity therefore concerns the magnitude
assigned to spatial propagation rather than a wholesale reversal of the
direct county-level associations.

\paragraph{Robustness and numerical diagnostics.}
The positive learned-$W$ operator-orthogonal estimate is robust to the
spatial-HAC bandwidth. Table~\ref{tab:empirical_hac} reports the corresponding
sensitivity calculations.

\begin{table}[!htbp]
\centering
\caption{Spatial-HAC sensitivity of the learned-$W$ orthogonal estimate}
\label{tab:empirical_hac}
\begin{tabular}{cccc}
\hline\hline
Bandwidth (km) & $\widehat\rho$ & Standard error & 95\% CI \\
\hline
300 & 0.1668 & 0.0598 & [0.0495,\ 0.2841] \\
500 & 0.1984 & 0.0512 & [0.0981,\ 0.2987] \\
750 & 0.2032 & 0.0453 & [0.1145,\ 0.2920] \\
\hline\hline
\end{tabular}

\begin{minipage}{0.55\textwidth}
\footnotesize
Note: The table reports the learned-$W$ operator-orthogonal estimate under
the spatial-HAC bandwidth sensitivity calculations. The 500-km specification
is the baseline.
\end{minipage}
\end{table}

Across the three bandwidth calculations, the estimate ranges from $0.167$ to
$0.203$, and every reported 95\% confidence interval remains above zero. The
result is also stable to the spatial split construction. Leaving out one of
the ten spatial splits at a time produces estimates between approximately
$0.136$ and $0.302$, with every leave-one-split-out 95\% confidence interval
remaining above zero.

The nuisance and target diagnostics are also informative. The flexible
first-stage poverty regression has a mean training-sample $R^2$ of
approximately $0.59$, with a minimum across fold fits of approximately $0.36$.
The final learned-$W$ target problem is interior to the imposed
$\rho\in[-0.85,0.85]$ parameter region, the final target optimizer succeeds,
and the smallest singular value of the final target Jacobian is approximately
$0.225$. Thus the reported $\widehat\rho=0.1984$ is not generated by an active
parameter boundary or an evidently singular low-dimensional target problem.

The Riesz diagnostics show more finite-sample variation across individual
training/evaluation splits, as expected in the flexible learned-operator
specification. The held-out residual target-derivative ratio averages
approximately $0.90$ and has a median of approximately $0.88$. I therefore
treat the Riesz and interaction-score diagnostics as part of the empirical
regularization assessment rather than as model-fit statistics. The stability
of the final target estimate across HAC bandwidths and leave-one-split-out
calculations provides the more direct robustness check for the substantive
conclusion.

Overall, the application illustrates the two margins of interaction-operator
uncertainty emphasized by the theoretical framework. Under predetermined
distance-decay or equal-contiguity weights, the estimated spatial
autoregressive coefficient is close to $0.70$. Allowing the relative weights
to be learned flexibly lowers the plug-in estimate to approximately $0.44$.
Accounting additionally for the first-order effect of estimating the operator
through the proposed orthogonal score lowers the estimate to approximately
$0.20$. The preferred specification therefore continues to find positive
spatial dependence in county diabetes prevalence, but at a substantially
smaller magnitude than conventional fixed-$W$ or learned-$W$ plug-in
estimators would suggest. The empirical conclusion is not that spatial dependence disappears once $W$ is learned, but that its estimated strength is highly sensitive to how the interaction operator is constructed and to whether its unknown functional form is handled robustly.

\section{Conclusion}\label{sec:conclusion}
This paper develops a framework for inference in spatial autoregressive models
when the spatial interaction operator is learned rather than treated as known.
Within a maintained admissible support, interaction strength is generated by a
flexible function of geographic and socioeconomic characteristics. Because
estimation error in this function affects the spatial lag and, when used,
spatially transformed instruments, conventional plug-in procedures can leave a
first-order generated-weight effect.

I address this problem by constructing an operator-orthogonal SAR-IV/GMM score
that removes the leading first-order sensitivity to estimation of the
interaction function and other nuisance components. The framework also allows
the characteristics generating spatial interaction to be endogenous through a
flexible control-function representation. A feasible joint sieve-GMM procedure
provides initial estimates without requiring prior knowledge of the true
spatial weights matrix, while the finite-sieve Riesz construction accommodates
mild ill-posedness in the nuisance-to-target correction.

Spatial dependence requires a further modification of conventional
double/debiased machine learning. I develop buffered spatial cross-fitting,
which separates the complete evaluation-score footprint from the
nuisance-training footprint by a spatial guard region. The primitive
dependence analysis is based on near-epoch dependence on a spatially mixing
innovation field rather than strong mixing of the observed SAR outcome. Under
local interaction support, a stable SAR process, smooth nuisance learners, and
appropriate guard and rate conditions, the remaining dependence between
nuisance training and score evaluation becomes asymptotically negligible.

Under these conditions, the feasible estimator has the same first-order
behavior as the corresponding oracle orthogonal-score estimator, permitting
root-$n$ inference for the low-dimensional SAR parameters even when the
interaction function and other nuisance components are estimated at slower
nonparametric rates. More broadly, the results show that uncertainty about the
spatial weights matrix is not only a specification problem but also an
inference problem when the interaction structure is learned from the data.

The empirical application illustrates the practical importance of the framework using county-level diabetes prevalence in the contiguous United States. Conventional distance-decay and contiguity specifications imply spatial autoregressive coefficients near $0.70$, while a learned-$W$ plug-in estimator yields about $0.44$ and the operator-orthogonal estimator about $0.20$. Spatial dependence therefore remains positive and statistically significant, but its estimated magnitude declines substantially when the interaction operator is learned flexibly and the first-order effects of that learning are incorporated. Because the plug-in and orthogonal estimators use the same learned $W$, their contrast highlights the inferential importance of treating the learned operator as an estimated component of the model. The results also show that even relatively modest reweighting within a common local support can lead to economically meaningful changes in the estimated strength of spatial dependence.

Several extensions are natural. Alternative smooth learners can be used when
the required convergence and differentiability conditions hold, and broader
interaction supports can be accommodated when spatial locality can be
verified. The same operator-orthogonal perspective may also be useful in other
spatial and network models in which dependence structures are estimated rather
than known. Under additional regularity conditions, established spatial-HAC
methods \citep{KelejianPrucha2007,KimSun2011} provide a complementary approach
to covariance estimation under residual spatial dependence.

\newpage
\clearpage
\bibliographystyle{apalike}
\bibliography{main}

@article{AhrensBhattacharjee2015,
  author  = {Ahrens, Achim and Bhattacharjee, Arnab},
  title   = {Two-Step Lasso Estimation of the Spatial Weights Matrix},
  journal = {Econometrics},
  year    = {2015},
  volume  = {3},
  number  = {1},
  pages   = {128--155},
  doi     = {10.3390/econometrics3010128}
}

@article{AiChen2003,
  author  = {Ai, Chunrong and Chen, Xiaohong},
  title   = {Efficient Estimation of Models with Conditional Moment Restrictions Containing Unknown Functions},
  journal = {Econometrica},
  year    = {2003},
  volume  = {71},
  number  = {6},
  pages   = {1795--1843},
  doi     = {10.1111/1468-0262.00470}
}

@book{Anselin1988,
  author    = {Anselin, Luc},
  title     = {Spatial Econometrics: Methods and Models},
  year      = {1988},
  publisher = {Kluwer Academic Publishers},
  address   = {Dordrecht},
  doi       = {10.1007/978-94-015-7799-1}
}

@article{BramoulleDjebbariFortin2009,
  author  = {Bramoull{\'e}, Yann and Djebbari, Habiba and Fortin, Bernard},
  title   = {Identification of Peer Effects through Social Networks},
  journal = {Journal of Econometrics},
  year    = {2009},
  volume  = {150},
  number  = {1},
  pages   = {41--55},
  doi     = {10.1016/j.jeconom.2008.12.021}
}

@article{ChenPouzo2012,
  author  = {Chen, Xiaohong and Pouzo, Demian},
  title   = {Estimation of Nonparametric Conditional Moment Models with Possibly Nonsmooth Generalized Residuals},
  journal = {Econometrica},
  year    = {2012},
  volume  = {80},
  number  = {1},
  pages   = {277--321},
  doi     = {10.3982/ECTA7888}
}

@article{ChenPouzo2015,
  author  = {Chen, Xiaohong and Pouzo, Demian},
  title   = {Sieve {Wald} and {QLR} Inferences on Semi/Nonparametric Conditional Moment Models},
  journal = {Econometrica},
  year    = {2015},
  volume  = {83},
  number  = {3},
  pages   = {1013--1079},
  doi     = {10.3982/ECTA10771}
}

@article{ChernozhukovEtAl2018,
  author  = {Chernozhukov, Victor and Chetverikov, Denis and Demirer, Mert
             and Duflo, Esther and Hansen, Christian and Newey, Whitney
             and Robins, James},
  title   = {Double/Debiased Machine Learning for Treatment and Structural Parameters},
  journal = {The Econometrics Journal},
  year    = {2018},
  volume  = {21},
  number  = {1},
  pages   = {C1--C68},
  doi     = {10.1111/ectj.12097}
}

@article{ChernozhukovEscancianoIchimuraNeweyRobins2022,
  author  = {Chernozhukov, Victor and Escanciano, Juan Carlos
             and Ichimura, Hidehiko and Newey, Whitney K.
             and Robins, James M.},
  title   = {Locally Robust Semiparametric Estimation},
  journal = {Econometrica},
  year    = {2022},
  volume  = {90},
  number  = {4},
  pages   = {1501--1535},
  doi     = {10.3982/ECTA16294}
}

@article{ChernozhukovNeweySingh2022,
  author  = {Chernozhukov, Victor and Newey, Whitney K. and Singh, Rahul},
  title   = {Debiased Machine Learning of Global and Local Parameters Using Regularized {Riesz} Representers},
  journal = {The Econometrics Journal},
  year    = {2022},
  volume  = {25},
  number  = {3},
  pages   = {576--601},
  doi     = {10.1093/ectj/utac002}
}

@article{ChernozhukovHuangWang2026,
  author  = {Chernozhukov, Victor and Huang, Chen and Wang, Weining},
  title   = {Uniform Inference on High-Dimensional Spatial Panel Networks},
  journal = {Journal of Business \& Economic Statistics},
  year    = {2026},
  volume  = {44},
  number  = {1},
  pages   = {348--359},
  doi     = {10.1080/07350015.2025.2530122}
}

@article{ChiangKatoMaSasaki2022,
  author  = {Chiang, Harold D. and Kato, Kengo and Ma, Yukun and Sasaki, Yuya},
  title   = {Multiway Cluster Robust Double/Debiased Machine Learning},
  journal = {Journal of Business \& Economic Statistics},
  year    = {2022},
  volume  = {40},
  number  = {3},
  pages   = {1046--1056},
  doi     = {10.1080/07350015.2021.1895815}
}

@article{ChiangMaRodrigueSasaki2026,
  author  = {Chiang, Harold D. and Ma, Yukun and Rodrigue, Joel B. and Sasaki, Yuya},
  title   = {Double/Debiased Machine Learning for Dyadic Data},
  journal = {Econometric Theory},
  year    = {2026},
  pages   = {1--22},
  doi     = {10.1017/S0266466625100273},
  note    = {}
}

@article{CiganovicEtAl2026,
  author  = {Ciganovic, Milos and D'Amario, Federico and Tancioni, Massimiliano},
  title   = {Double Machine Learning for Time Series},
  journal = {The Econometrics Journal},
  year    = {2026},
  doi     = {10.1093/ectj/utag019}
}

@article{ConleyLigon2002,
  author  = {Conley, Timothy G. and Ligon, Ethan},
  title   = {Economic Distance and Cross-Country Spillovers},
  journal = {Journal of Economic Growth},
  year    = {2002},
  volume  = {7},
  number  = {2},
  pages   = {157--187},
  doi     = {10.1023/A:1015676113101}
}

@article{dePaulaRasulSouza2025,
  author  = {de Paula, {\'A}ureo and Rasul, Imran and Souza, Pedro C. L.},
  title   = {Identifying Network Ties from Panel Data: Theory and an Application to Tax Competition},
  journal = {The Review of Economic Studies},
  year    = {2025},
  volume  = {92},
  number  = {4},
  pages   = {2691--2729},
  doi     = {10.1093/restud/rdae088}
}

@article{EmmeneggerEtAl2025,
  author  = {Emmenegger, Corinne and Spohn, Meta-Lina and Elmer, Timon
             and B\"{u}hlmann, Peter},
  title   = {Treatment Effect Estimation with Observational Network Data Using Machine Learning},
  journal = {Journal of Causal Inference},
  year    = {2025},
  volume  = {13},
  number  = {1},
  pages   = {20230082},
  doi     = {10.1515/jci-2023-0082}
}

@misc{GuptaQuZhang2026,
  author        = {Gupta, Abhimanyu and Qu, Xi and Zhang, Jiajun},
  title         = {Semi-Nonparametric Estimation of Spatial Dynamic Panel Data Models with Nonparametric Spatial Weights},
  year          = {2026},
  eprint        = {2606.24266},
  archivePrefix = {arXiv},
  primaryClass  = {econ.EM},
  doi           = {10.48550/arXiv.2606.24266}
}

@article{HarrisMoffatKravtsova2011,
  author  = {Harris, Richard and Moffat, John and Kravtsova, Victoria},
  title   = {In Search of {$W$}},
  journal = {Spatial Economic Analysis},
  year    = {2011},
  volume  = {6},
  number  = {3},
  pages   = {249--270},
  doi     = {10.1080/17421772.2011.586721}
}

@article{JenishPrucha2009,
  author  = {Jenish, Nazgul and Prucha, Ingmar R.},
  title   = {Central Limit Theorems and Uniform Laws of Large Numbers for Arrays of Random Fields},
  journal = {Journal of Econometrics},
  year    = {2009},
  volume  = {150},
  number  = {1},
  pages   = {86--98},
  doi     = {10.1016/j.jeconom.2009.02.009}
}

@article{JenishPrucha2012,
  author  = {Jenish, Nazgul and Prucha, Ingmar R.},
  title   = {On Spatial Processes and Asymptotic Inference under Near-Epoch Dependence},
  journal = {Journal of Econometrics},
  year    = {2012},
  volume  = {170},
  number  = {1},
  pages   = {178--190},
  doi     = {10.1016/j.jeconom.2012.05.022}
}

@article{Juhl2020,
  author  = {Juhl, Sebastian},
  title   = {The Sensitivity of Spatial Regression Models to Network Misspecification},
  journal = {Political Analysis},
  year    = {2020},
  volume  = {28},
  number  = {1},
  pages   = {1--19},
  doi     = {10.1017/pan.2019.12}
}

@article{KelejianPrucha1998,
  author  = {Kelejian, Harry H. and Prucha, Ingmar R.},
  title   = {A Generalized Spatial Two-Stage Least Squares Procedure for Estimating a Spatial Autoregressive Model with Autoregressive Disturbances},
  journal = {The Journal of Real Estate Finance and Economics},
  year    = {1998},
  volume  = {17},
  number  = {1},
  pages   = {99--121},
  doi     = {10.1023/A:1007707430416}
}

@article{KelejianPrucha1999,
  author  = {Kelejian, Harry H. and Prucha, Ingmar R.},
  title   = {A Generalized Moments Estimator for the Autoregressive Parameter in a Spatial Model},
  journal = {International Economic Review},
  year    = {1999},
  volume  = {40},
  number  = {2},
  pages   = {509--533},
  doi     = {10.1111/1468-2354.00027}
}

@article{KelejianPrucha2007,
  author  = {Kelejian, Harry H. and Prucha, Ingmar R.},
  title   = {{HAC} Estimation in a Spatial Framework},
  journal = {Journal of Econometrics},
  year    = {2007},
  volume  = {140},
  number  = {1},
  pages   = {131--154}
}

@article{KimSun2011,
  author  = {Kim, Min Seong and Sun, Yixiao},
  title   = {Spatial Heteroskedasticity and Autocorrelation Consistent
             Estimation of Covariance Matrix},
  journal = {Journal of Econometrics},
  year    = {2011},
  volume  = {160},
  number  = {2},
  pages   = {349--371}
}

@article{KojevnikovMarmerSong2021,
  author  = {Kojevnikov, Denis and Marmer, Vadim and Song, Kyungchul},
  title   = {Limit Theorems for Network Dependent Random Variables},
  journal = {Journal of Econometrics},
  year    = {2021},
  volume  = {222},
  number  = {2},
  pages   = {882--908},
  doi     = {10.1016/j.jeconom.2020.05.019}
}

@article{LamSouza2020,
  author  = {Lam, Clifford and Souza, Pedro C. L.},
  title   = {Estimation and Selection of Spatial Weight Matrix in a Spatial Lag Model},
  journal = {Journal of Business \& Economic Statistics},
  year    = {2020},
  volume  = {38},
  number  = {3},
  pages   = {693--710},
  doi     = {10.1080/07350015.2019.1569526}
}

@article{Lee2004,
  author  = {Lee, Lung-Fei},
  title   = {Asymptotic Distributions of Quasi-Maximum Likelihood Estimators for Spatial Autoregressive Models},
  journal = {Econometrica},
  year    = {2004},
  volume  = {72},
  number  = {6},
  pages   = {1899--1925},
  doi     = {10.1111/j.1468-0262.2004.00558.x}
}

@misc{Lee2022,
  author        = {Lee, Jieun},
  title         = {Evidence and Strategy on Economic Distance in Spatially Augmented Solow--Swan Growth Model},
  year          = {2022},
  eprint        = {2209.05562},
  archivePrefix = {arXiv},
  primaryClass  = {econ.EM},
  doi           = {10.48550/arXiv.2209.05562}
}

@book{LeSagePace2009,
  author    = {LeSage, James P. and Pace, R. Kelley},
  title     = {Introduction to Spatial Econometrics},
  year      = {2009},
  publisher = {Chapman \& Hall/CRC},
  address   = {Boca Raton, FL}
}

@article{LinLee2010,
  author  = {Lin, Xu and Lee, Lung-Fei},
  title   = {{GMM} Estimation of Spatial Autoregressive Models with Unknown Heteroskedasticity},
  journal = {Journal of Econometrics},
  year    = {2010},
  volume  = {157},
  number  = {1},
  pages   = {34--52},
  doi     = {10.1016/j.jeconom.2009.10.035}
}

@article{LinSong2025,
   author  = {Lin, Yanli and Song, Yichun},
   title   = {Addressing Endogeneity Issues in a Spatial Autoregressive Model Using Copulas},
   journal = {Journal of Econometrics},
   volume  = {252},
   pages   = {106106},
   year    = {2025},
   doi     = {10.1016/j.jeconom.2025.106106}
}

@article{Newey1994,
  author  = {Newey, Whitney K.},
  title   = {The Asymptotic Variance of Semiparametric Estimators},
  journal = {Econometrica},
  year    = {1994},
  volume  = {62},
  number  = {6},
  pages   = {1349--1382},
  doi     = {10.2307/2951752}
}

@article{QuLee2015,
  author  = {Qu, Xi and Lee, Lung-fei},
  title   = {Estimating a Spatial Autoregressive Model with an Endogenous Spatial Weight Matrix},
  journal = {Journal of Econometrics},
  year    = {2015},
  volume  = {184},
  number  = {2},
  pages   = {209--232},
  doi     = {10.1016/j.jeconom.2014.08.008}
}

@article{QuLeeYang2021,
  author  = {Qu, Xi and Lee, Lung-fei and Yang, Chao},
  title   = {Estimation of a {SAR} Model with Endogenous Spatial Weights Constructed by Bilateral Variables},
  journal = {Journal of Econometrics},
  year    = {2021},
  volume  = {221},
  number  = {1},
  pages   = {180--197},
  doi     = {10.1016/j.jeconom.2020.05.011}
}

@article{StakhovychBijmolt2009,
  author  = {Stakhovych, Stanislav and Bijmolt, Tammo H. A.},
  title   = {Specification of Spatial Models: A Simulation Study on Weights Matrices},
  journal = {Papers in Regional Science},
  year    = {2009},
  volume  = {88},
  number  = {2},
  pages   = {389--408},
  doi     = {10.1111/j.1435-5957.2008.00213.x}
}

@article{Sun2016,
  author  = {Sun, Yiguo},
  title   = {Functional-Coefficient Spatial Autoregressive Models with Nonparametric Spatial Weights},
  journal = {Journal of Econometrics},
  year    = {2016},
  volume  = {195},
  number  = {1},
  pages   = {134--153},
  doi     = {10.1016/j.jeconom.2016.07.005}
}

\newpage
\appendix

\section*{Appendix}
\setcounter{equation}{0}
\setcounter{table}{0}
\setcounter{figure}{0}
\setcounter{subsection}{0}

\renewcommand{\theequation}{A.\arabic{equation}}
\renewcommand{\thetable}{A.\arabic{table}}
\renewcommand{\thefigure}{A.\arabic{figure}}

\renewcommand{\thesection}{A}
\renewcommand{\thesubsection}{A.\arabic{subsection}}

% Proofs are ordered to match the sequence of results in the manuscript.

\subsection{Proof of Proposition \ref{prop:w_derivative}}
\label{app:proof_w_derivative}

\noindent\textbf{Proof idea.}
Differentiate the row-normalized exponential weights along the path
$g_t=g+t\delta g$. The quotient rule separates the direct effect of the
pair-specific score from the offsetting change in the row normalization.
Applying the resulting derivative to $Y$ gives the derivative of the spatial
lag.

For $t$ in a neighborhood of zero, let
\[
A_{ij}(t)
=
S_{ij,n}\exp\{g(R_{ij})+t\delta g(R_{ij})\},
\qquad
A_i(t)
=
\sum_{k\neq i}A_{ik}(t).
\]
For a supported pair $i\neq j$,
$
w_{ij}(g_t)=\frac{A_{ij}(t)}{A_i(t)},
\;
A_{ij}^{\prime}(0)=A_{ij}(0)\delta g(R_{ij}),
$
and
$
A_i^{\prime}(0)
=
\sum_{k\neq i}A_{ik}(0)\delta g(R_{ik}).
$
The quotient rule therefore gives
\begin{align*}
D_gw_{ij}(g)[\delta g]
&=
\frac{A_{ij}^{\prime}(0)A_i(0)-A_{ij}(0)A_i^{\prime}(0)}{A_i(0)^2}
\\
&=
w_{ij}(g)
\left[
\delta g(R_{ij})
-
\sum_{k\neq i}w_{ik}(g)\delta g(R_{ik})
\right].
\end{align*}
If $S_{ij,n}=0$, then $w_{ij}(g)=0$ for every admissible $g$, so the
corresponding derivative equals zero.

Holding the realized vector $Y$ fixed,
$
\{W_n(g)Y\}_i
=
\sum_{j\neq i}w_{ij}(g)Y_j.
$
Hence
\begin{align*}
D_g\{W_n(g)Y\}_i[\delta g]
&=
\sum_{j\neq i}Y_jD_gw_{ij}(g)[\delta g]
\\
&=
\sum_{j\neq i}w_{ij}(g)Y_j\delta g(R_{ij})
-
\{W_n(g)Y\}_i
\sum_{j\neq i}w_{ij}(g)\delta g(R_{ij})
\\
&=
\sum_{j\neq i}
w_{ij}(g)
\left[
Y_j-\{W_n(g)Y\}_i
\right]
\delta g(R_{ij}).
\end{align*}
\hfill$\square$

\subsection{Proof of Proposition \ref{prop:g_to_W_transfer}}
\label{app:proof_g_to_W_transfer}

\noindent\textbf{Proof idea.}
The first-order term is controlled by the target-relevant seminorm, while the
second-order remainder is controlled by one target norm and one stronger
local envelope norm. This is the reason for introducing
$\|\cdot\|_{\mathcal G,+,n}$.

Let
$
d_{\mathrm{tar}}
=
\operatorname{dist}_{\mathcal G,\mathrm{tar}}(g,[g_0]_n),
\;
d_+
=
\operatorname{dist}_{\mathcal G,+}(g,[g_0]_n).
$
Choose $g_0^\star\in[g_0]_n$ attaining $d_+$ up to an arbitrarily small
error and write
$
\Delta g=g-g_0^\star.
$
Under row normalization, two representatives in $[g_0]_n$ differ on every
supported row only through directions that leave the normalized weights
unchanged. Such directions are in the null space of both
$D_g\{W_n(g_0)Y\}$ and $D_gH_i(g_0)$. Consequently the target seminorm of
$g-\widetilde g_0$ is invariant to the choice of
$\widetilde g_0\in[g_0]_n$, and therefore, up to the arbitrary selection
error,
$
\|\Delta g\|_{\mathcal G,\mathrm{tar},n}=d_{\mathrm{tar}},
\;
\|\Delta g\|_{\mathcal G,+,n}=d_+.
$
Also $W_n(g_0^\star)=W_n(g_0)$ and $H_i(g_0^\star)=H_i(g_0)$ because the
latter depends on $g$ through the induced spatial operator.

A second-order Gateaux expansion along
$g_t=g_0^\star+t\Delta g$ yields
\[
\{W_n(g)-W_n(g_0)\}Y
=
D_g\{W_n(g_0)Y\}[\Delta g]
+
R_{W,n}(\Delta g),
\]
where, by the integral form of the remainder and Assumption
\ref{ass:operator_map_smoothness},
\begin{align*}
\|R_{W,n}(\Delta g)\|_{2,n}
&\leq
\int_0^1(1-t)
\left\|
D_g^2\{W_n(g_t)Y\}[\Delta g,\Delta g]
\right\|_{2,n}\,dt
\\
&\leq
C
\|\Delta g\|_{\mathcal G,\mathrm{tar},n}
\|\Delta g\|_{\mathcal G,+,n}
\\
&\leq
C d_{\mathrm{tar}}d_+.
\end{align*}
By definition of the target seminorm,
$
\left\|
D_g\{W_n(g_0)Y\}[\Delta g]
\right\|_{2,n}
\leq
d_{\mathrm{tar}}.
$
The triangle inequality therefore gives
\[
\|\{W_n(g)-W_n(g_0)\}Y\|_{2,n}
\leq
C\{d_{\mathrm{tar}}+d_{\mathrm{tar}}d_+\}.
\]

The same expansion applied coordinatewise to $H_i(g)$ gives
$
H_i(g)-H_i(g_0)
=
D_gH_i(g_0)[\Delta g]+R_{H,i}(\Delta g).
$
The first derivative is controlled by the second component of the target
seminorm, while the second derivative is controlled by Assumption
\ref{ass:operator_map_smoothness}. Hence
\[
\left[
\frac1n\sum_{i=1}^n
E\|H_i(g)-H_i(g_0)\|_2^2
\right]^{1/2}
\leq
C\{d_{\mathrm{tar}}+d_{\mathrm{tar}}d_+\}.
\]
If $d_{\mathrm{tar}}=O_p(r_{g,n})$ and $d_+=o_p(1)$, both generated objects
are $O_p(r_{g,n})$, proving the final assertion.
\hfill$\square$

\subsection{Proof of Proposition \ref{prop:local_identification}}
\label{app:proof_identification}

\noindent\textbf{Proof idea.}
The first-stage restriction identifies $m_0$ first. Differentiating the
conditional structural restriction and applying $\mathcal R_C$ removes the
control-function direction. The local separation condition then forces both
the target direction and the target-relevant operator direction to vanish.

The first-stage conditional-mean restriction uniquely identifies $m_0$ by
Assumption \ref{ass:local_separation}. Hence a locally observationally
equivalent differentiable path has zero first-stage direction. Consider a
path
$
t\mapsto(\theta_t,g_t,h_t)
$
through $(\theta_0,g_0,h_0)$ with derivative
$(\delta\theta,\delta g,\delta h)$. Differentiating
\eqref{eq:conditional_moment} at $t=0$ gives
\[
E\left[
-\Delta_i(\delta\theta,\delta g)
-\delta h(C_{i0})
\mid
C_{i0},\mathcal A_i
\right]
=0.
\]
Applying $\mathcal R_C$ eliminates $\delta h(C_{i0})$ because it is
measurable with respect to $C_{i0}$, yielding
$
\mathcal R_C\Delta_i(\delta\theta,\delta g)=0.
$
Assumption \ref{ass:local_separation} therefore implies
$
\|\delta\theta\|_2
+
\|\delta g\|_{\mathcal G,\mathrm{tar},n}
=0.
$
Thus $\delta\theta=0$ and, by the definition of the target-relevant norm,
$
A_{\delta g}Y=0
\quad\text{in }L^2.
$
Returning to the differentiated conditional moment gives
$
\delta h(C_{i0})=0
\quad\text{a.s.}
$
If Assumption \ref{ass:operator_richness} also holds, then
$
\frac1nE\|A_{\delta g}\|_F^2
\leq
\kappa_W^{-1}
\frac1nE\|A_{\delta g}Y\|_2^2
=0,
$
so $A_{\delta g}=0$ in $L^2$.
\hfill$\square$

\subsection{Proof of Proposition \ref{prop:sieve_moment_identification}}
\label{app:proof_sieve_moment_identification}

\noindent\textbf{Proof idea.}
The baseline interaction dictionary is predetermined and independent of $g$,
so its partial derivative does not enter the $g$ moment. Residualization by
$C_{i0}$ converts the moment derivative into the conditional variation
$R_{i,\delta g_J}$, which the richness condition approximates uniformly.

At the truth, $B_{g,i}^0$ does not depend on $g$, and $m$ and $\ell$ are held
fixed when taking the partial Gateaux derivative in the $g$ direction.
Therefore
$
D_g\widetilde B_{g,i}(\eta_0)[\delta g_J]=0.
$
From \eqref{eq:xi_candidate},
$
D_g\xi_i(\theta_0,\eta_0)[\delta g_J]
=
-\rho_0(A_{\delta g_J}Y)_i,
$
so
$
D_gE[s_{g,i}(\theta_0,\eta_0)][\delta g_J]
=
-\rho_0
E\left[
\widetilde B_{g,i}(\eta_0)
(A_{\delta g_J}Y)_i
\right].
$
Let $T_i=(A_{\delta g_J}Y)_i$. Since
$\widetilde B_{g,i}(\eta_0)$ is measurable with respect to
$\sigma(C_{i0},\mathcal A_i)$ and
$E[\widetilde B_{g,i}(\eta_0)\mid C_{i0}]=0$,
\begin{align*}
E[\widetilde B_{g,i}(\eta_0)T_i]
&=
E\left[
\widetilde B_{g,i}(\eta_0)
E[T_i\mid C_{i0},\mathcal A_i]
\right]
\\
&=
E\left[
\widetilde B_{g,i}(\eta_0)
\{E[T_i\mid C_{i0},\mathcal A_i]-E[T_i\mid C_{i0}]\}
\right]
\\
&=
E\left[
\widetilde B_{g,i}(\eta_0)
R_{i,\delta g_J}
\right].
\end{align*}
Let $a_J=a_J(\delta g_J)$ be supplied by Assumption
\ref{ass:sieve_moment_richness}. Then
\begin{align*}
&a_J^{\prime}
D_g\left\{
\frac1n\sum_{i=1}^nE[s_{g,i}(\theta_0,\eta_0)]
\right\}[\delta g_J]
\\
&\qquad=
-\rho_0
\frac1n\sum_{i=1}^n
E\left[
\{a_J^{\prime}\widetilde B_{g,i}(\eta_0)\}
R_{i,\delta g_J}
\right]
\\
&\qquad=
-\rho_0
\frac1n\sum_{i=1}^nE[R_{i,\delta g_J}^2]
+R_{J,n}.
\end{align*}
By Cauchy--Schwarz and Assumption \ref{ass:sieve_moment_richness},
$
|R_{J,n}|
\leq
C|\rho_0|\zeta_{J_n}
\left[
\frac1n\sum_{i=1}^nE[R_{i,\delta g_J}^2]
\right]^{1/2}.
$
For a unit target-relevant direction, Assumption
\ref{ass:local_separation} with $\delta\theta=0$ gives
$
|\rho_0|
\left[
\frac1n\sum_{i=1}^nE[R_{i,\delta g_J}^2]
\right]^{1/2}
\geq
\kappa_{\mathrm{id}}.
$
Because $|\rho_0|$ is bounded away from zero and bounded above by stability,
the leading quadratic term is bounded away from zero, while
$\zeta_{J_n}\to0$ makes the approximation error asymptotically smaller.
Hence, for all sufficiently large $J_n$,
$
\left|
a_J^{\prime}
D_g\left\{
\frac1n\sum_{i=1}^nE[s_{g,i}(\theta_0,\eta_0)]
\right\}[\delta g_J]
\right|
\geq c>0.
$
Since $\|a_J\|_2\leq C$, Cauchy--Schwarz implies the stated norm lower bound.
Positive-definite GMM weighting and regular local second derivatives then
imply local quadratic identification of the population sieve criterion.
\hfill$\square$

\subsection{Proof of Proposition \ref{prop:joint_preliminary}}
\label{app:proof_joint_preliminary}

\noindent\textbf{Proof idea.}
Uniform convergence and global separation first place the feasible joint
estimator in the identified local basin. Local quadratic curvature then
converts the empirical criterion fluctuation, penalty drift, and numerical
suboptimality into an explicit rate for the entire joint parameter vector.
This rate is obtained before profiling $g$ and therefore closes the rate
chain used in Proposition \ref{prop:sieve_g_rate}.

Write the unpenalized sample GMM part as
$
\widehat Q_k^{\mathrm{GMM}}(\vartheta)
=
\widehat{\mathfrak m}_k(\vartheta)^{\prime}
\widehat{\mathcal W}_k
\widehat{\mathfrak m}_k(\vartheta),
$
and define
$
\mathbb G_{k,n}(\vartheta)
=
\widehat Q_k^{\mathrm{GMM}}(\vartheta)
-
Q_{0,n}(\vartheta),
$
with the asymptotically negligible weighting-matrix replacement absorbed into
this empirical criterion fluctuation. Then
$
\widehat Q_k(\vartheta)
=
Q_{0,n}(\vartheta)
+
\mathbb G_{k,n}(\vartheta)
+
\operatorname{Pen}_n(\vartheta),
$
where
$
\operatorname{Pen}_n(\vartheta)
=
\sum_{a\in\{m,g,h,\ell\}}
\lambda_{a,n}\mathcal P_a(\gamma_a).
$

\paragraph{Consistency.}
The spatial ULLN, convergence of $\widehat{\mathcal W}_k$, and the vanishing
uniform penalty imply

\noindent $
\sup_{\vartheta\in\mathcal V_{J_n}}
\left|
\widehat Q_k(\vartheta)-Q_{0,n}(\vartheta)
\right|
=o_p(1)
$
uniformly over the fixed number of folds. Let
$\vartheta_{0,J_n}$ denote the normalized population sieve minimizer. Global
separation implies that for every $\epsilon>0$ there exists
$c_\epsilon>0$ such that, for all sufficiently large $n$,
\[
\inf_{\substack{
\vartheta\in\mathcal V_{J_n}:\\
d_{\vartheta,n}(\vartheta,\vartheta_{0,J_n})\geq\epsilon
}}
\left\{
Q_{0,n}(\vartheta)-Q_{0,n}(\vartheta_{0,J_n})
\right\}
\geq
c_\epsilon.
\]
The standard argmin argument therefore gives
$
d_{\vartheta,n}
\left(
\widetilde\vartheta^{(-k)},\vartheta_{0,J_n}
\right)
=o_p(1)
$
uniformly over folds. The minimum-norm convention affects only the
representative of the observationally equivalent $g$ coefficients and not
the induced operator or the metric.

\paragraph{Local rate.}
By consistency, with probability approaching one the estimator lies in the
local neighborhood $\mathcal B_{J_n}$ on which the quadratic lower bound and
the local empirical-process bound hold. Let
$
\widehat d_k
=
d_{\vartheta,n}
\left(
\widetilde\vartheta^{(-k)},\vartheta_{0,J_n}
\right)
$
and set
$
a_{k,n}
=
C
\sqrt{
\frac{\mathfrak C_{\mathrm{joint},n}}
{|\mathcal T_k|}
}
+
b_{\mathrm{joint},n}^{\mathrm{pen}}.
$
The numerical solution satisfies
$
\widehat Q_k
\left(
\widetilde\vartheta^{(-k)}
\right)
\leq
\widehat Q_k(\vartheta_{0,J_n})
+
\epsilon_{\mathrm{opt},n}.
$
Subtract $Q_{0,n}(\vartheta_{0,J_n})$ and use the local criterion fluctuation
and penalty-drift bounds. Up to an event whose probability tends to one,
$
Q_{0,n}
\left(
\widetilde\vartheta^{(-k)}
\right)
-
Q_{0,n}(\vartheta_{0,J_n})
\leq
C a_{k,n}\widehat d_k
+
\epsilon_{\mathrm{opt},n}.
$
The local quadratic lower bound therefore implies
$
\kappa_{\mathrm{joint},n}\widehat d_k^2
\leq
C a_{k,n}\widehat d_k
+
\epsilon_{\mathrm{opt},n}.
$
Using $2xy\leq x^2+y^2$, or equivalently solving the quadratic inequality,
we obtain
$
\widehat d_k
=
O_p\left(
\frac{a_{k,n}}{\kappa_{\mathrm{joint},n}}
+
\sqrt{
\frac{\epsilon_{\mathrm{opt},n}}
{\kappa_{\mathrm{joint},n}}
}
\right).
$
Substituting the definition of $a_{k,n}$ gives exactly
$
\widehat d_k
=
O_p(r_{\mathrm{joint},n}).
$
The assumptions that the stochastic and penalty terms are
$o(\kappa_{\mathrm{joint},n})$ and that the optimization error is locally
negligible ensure that this rate remains inside $\mathcal B_{J_n}$, so the
local argument is self-consistent.

If
$
d_{\vartheta,n}(\vartheta_{0,J_n},\vartheta_0)
\leq
a_{\mathrm{joint},n},
$
the triangle inequality gives
$
d_{\vartheta,n}
\left(
\widetilde\vartheta^{(-k)},\vartheta_0
\right)
=
O_p\left(
r_{\mathrm{joint},n}+a_{\mathrm{joint},n}
\right).
$
Finally, under the stated fourth-moment versions of the normalized empirical
fluctuation and optimization bounds, the preceding quadratic inequality
holds in $L^4$ with the same normalization. Applying Minkowski's inequality
to the approximation term yields
$
\max_{1\leq k\leq K}
E\left[
d_{\vartheta,n}^4
\left(
\widetilde\vartheta^{(-k)},\vartheta_0
\right)
\right]
\leq
C
\left(
r_{\mathrm{joint},n}+a_{\mathrm{joint},n}
\right)^4.
$\hfill$\square$

\subsection{Proof of Proposition \ref{prop:sieve_g_rate}}
\label{app:proof_sieve_g_rate}

\noindent\textbf{Proof idea.}
Local curvature of the profiled criterion converts its empirical gradient
into a coefficient rate. The nuisance components held fixed in the profile
contribute only through the already-established joint rate from Proposition
\ref{prop:joint_preliminary}. A separate spline maximal inequality supplies
the stronger local envelope rate required by the second-order operator
expansion.

Let
$
\Delta\gamma_k
=
\widehat\gamma^{(-k)}-\gamma_{0,J_n}.
$
On the normalized local basin, local quadratic identification gives, for a
constant $c_g>0$,
$
Q_{g,0}(\gamma_{0,J_n}+\Delta\gamma_k)
-
Q_{g,0}(\gamma_{0,J_n})
\geq
c_g\|\Delta\gamma_k\|_2^2
+
o(\|\Delta\gamma_k\|_2^2).
$
The empirical first-order perturbation around $\gamma_{0,J_n}$ is bounded by
$
O_p\left(
\sqrt{
\frac{\mathfrak C_{g,n}}
{|\mathcal T_k|}
}
\right)
\|\Delta\gamma_k\|_2.
$
Sieve approximation contributes
$O(a_{J_n}^{\mathrm{tar}}\|\Delta\gamma_k\|_2)$ in the target-relevant
criterion. Smoothness of the profiled moment map in the components held
fixed at $\widetilde\vartheta_{-g}^{(-k)}$ contributes
$
O_p(r_{-g,n}\|\Delta\gamma_k\|_2),
$
and Proposition \ref{prop:joint_preliminary} gives
$
r_{-g,n}
\lesssim
r_{\mathrm{joint},n}+a_{\mathrm{joint},n}.
$
Finally, the local penalty perturbation is bounded by
$
O\left(
b_{g,n}^{\mathrm{pen}}\|\Delta\gamma_k\|_2
\right).
$
The basic inequality for the local minimizer therefore yields
\[
\|\Delta\gamma_k\|_2
=
O_p\left[
\sqrt{
\frac{\mathfrak C_{g,n}}
{|\mathcal T_k|}
}
+
a_{J_n}^{\mathrm{tar}}
+
r_{\mathrm{joint},n}
+
a_{\mathrm{joint},n}
+
b_{g,n}^{\mathrm{pen}}
\right].
\]
Local domination of the target-relevant sieve norm by the normalized
coefficient norm gives
\[
\operatorname{dist}_{\mathcal G,\mathrm{tar}}
\left(
\widehat g^{(-k)},[g_0]_n
\right)
=
O_p\left[
\sqrt{
\frac{\mathfrak C_{g,n}}
{|\mathcal T_k|}
}
+
a_{J_n}^{\mathrm{tar}}
+
r_{\mathrm{joint},n}
+
a_{\mathrm{joint},n}
+
b_{g,n}^{\mathrm{pen}}
\right].
\]
This proves the first rate and, in particular, removes the old circular term:
the preliminary nuisance contribution is controlled by an independently
proved joint rate.

For the stronger norm, decompose
$
\widehat g^{(-k)}-g_0^\star
=
\{
\widehat g^{(-k)}-g_{0,J_n}
\}
+
\{
g_{0,J_n}-g_0^\star
\},
$
where $g_0^\star\in[g_0]_n$ is the normalized representative used in the
sieve approximation. The assumed local spline maximal inequality gives
\[
\|\widehat g^{(-k)}-g_{0,J_n}\|_{\infty,\mathcal R_n}
=
O_p\left[
\sqrt{
\frac{\mathfrak C_{g,\infty,n}\log n}
{|\mathcal T_k|}
}
+
r_{\mathrm{joint},n}
+
a_{\mathrm{joint},n}
+
b_{g,n}^{\mathrm{pen}}
\right],
\]
while the deterministic approximation obeys
$
\|g_{0,J_n}-g_0^\star\|_{\infty,\mathcal R_n}
\lesssim
a_{J_n}^{+}.
$
Combining this sup-norm bound with the target-relevant rate proves
$
\operatorname{dist}_{\mathcal G,+}
\left(
\widehat g^{(-k)},[g_0]_n
\right)
=
O_p(r_{g,+,n}).
$

Under the benchmark
$|\mathcal T_k|\asymp n$,
$\mathfrak C_{g,n}\lesssim J_n$, and
$a_{J_n}^{\mathrm{tar}}\asymp J_n^{-\alpha/d_R}$, balancing
$J_n^{1/2}n^{-1/2}$ and $J_n^{-\alpha/d_R}$ gives
$
J_n\asymp n^{d_R/(2\alpha+d_R)}
$
and hence
$
\operatorname{dist}_{\mathcal G,\mathrm{tar}}
\left(
\widehat g^{(-k)},[g_0]_n
\right)
=
O_p\left(
n^{-\alpha/(2\alpha+d_R)}
\right).
$
With $\mathfrak C_{g,\infty,n}\lesssim J_n$, the local envelope rate differs
only by the usual $\sqrt{\log n}$ factor. Since
$
\frac{\alpha}{2\alpha+d_R}>\frac14
\quad\Longleftrightarrow\quad
\alpha>\frac{d_R}{2},
$
both rates are $o(n^{-1/4})$ under the stated smoothness condition. The
fourth-moment conclusion follows from the corresponding fourth-moment
maximal inequalities and the same deterministic approximation decomposition.
\hfill$\square$

\subsection{Proof of Proposition \ref{prop:naive_nonorthogonal}}
\label{app:proof_naive_nonorthogonal}

\noindent\textbf{Proof idea.}
Even when a valid instrument component does not depend on $g$, the structural
residual does. Hence a perturbation of the learned interaction function
changes the target moment at first order.

Because $H_i^0$ is independent of $g$,
$
D_gE[H_i^0\xi_i(\theta_0,\eta_0)][\delta g]
=
E\left[
H_i^0D_g\xi_i(\theta_0,\eta_0)[\delta g]
\right].
$
By \eqref{eq:xi_candidate},
$
D_g\xi_i(\theta_0,\eta_0)[\delta g]
=
-\rho_0D_g\{W_n(g_0)Y\}_i[\delta g].
$
Therefore
$
D_gE[H_i^0\xi_i(\theta_0,\eta_0)][\delta g]
=
-\rho_0
E\left[
H_i^0D_g\{W_n(g_0)Y\}_i[\delta g]
\right].
$
Assumption \ref{ass:w_sensitivity} makes this quantity nonzero for at least
one admissible direction, proving the proposition.
\hfill$\square$

\subsection{Proof of Proposition \ref{prop:riesz_existence}}
\label{app:proof_riesz_existence}

\noindent\textbf{Proof idea.}
On a finite target-relevant sieve space, full column rank of the stacked
nuisance Jacobian gives an exact Moore--Penrose representation of the target
sensitivity. Approximation of the sieve representer and asymptotic
homogeneity of blockwise derivative maps then yield the common blockwise
Riesz representation.

Because
$\operatorname{rank}(A_{\eta,0}^{(J)})=J$, the Moore--Penrose inverse
satisfies
$
\left(A_{\eta,0}^{(J)}\right)^\dagger
A_{\eta,0}^{(J)}
=I_J.
$
Define
$
\Gamma_{0,J}
=
G_{\eta,0}^{(J)}
\left(A_{\eta,0}^{(J)}\right)^\dagger.
$
Then
$
\Gamma_{0,J}A_{\eta,0}^{(J)}
=
G_{\eta,0}^{(J)}
\left(A_{\eta,0}^{(J)}\right)^\dagger
A_{\eta,0}^{(J)}
=
G_{\eta,0}^{(J)},
$
so the representation is exact on the sieve tangent space. The source
condition gives
$
\sup_J\|\Gamma_{0,J}\|_{\mathrm{op}}<\infty,
$
even if $\kappa_J\downarrow0$.

To obtain the blockwise approximation, let $A_{\eta,0}$ and
$G_{\eta,0}$ denote the common population derivative maps and use the same
symbols for their restrictions to the target-relevant sieve tangent. On this
space,
$
G_{\eta,0}=\Gamma_{0,J}A_{\eta,0}.
$
For each block,
$
G_{\eta,0,k}-\Gamma_0A_{\eta,0,k}
={}
\{G_{\eta,0,k}-G_{\eta,0}\}
+
(\Gamma_{0,J}-\Gamma_0)A_{\eta,0}
+
\Gamma_0\{A_{\eta,0}-A_{\eta,0,k}\}.
$
Under normalized tangent coordinates the operator norm of
$A_{\eta,0}$ is bounded. Hence
$
\|G_{\eta,0,k}-\Gamma_0A_{\eta,0,k}\|_{\eta,n}^*
\leq
C\{a_{\Gamma,J}+\delta_{B,n}\}
$
uniformly over the fixed number of blocks. Thus Assumption \ref{ass:riesz}
holds with
$
\delta_{R,n}
\lesssim
a_{\Gamma,J}+\delta_{B,n}.
$
The argument applies to the complete stacked Jacobian
$A_{\eta,0}^{(J)}$. In particular, derivatives of the $h$, $\ell$, and $g$
blocks with respect to $m$ are part of this matrix and therefore carry the
spatially aggregated directions induced by $C_i(m)$.
\hfill$\square$

\subsection{Proof of Proposition \ref{prop:gamma_rate}}
\label{app:proof_gamma_rate}

\noindent\textbf{Proof idea.}
The debiasing-operator error has four components: sieve approximation,
estimation of the target sensitivity, estimation of the nuisance Jacobian,
and Tikhonov regularization bias. Small singular values amplify the latter
three components. The source condition improves the population
regularization bias from the raw pseudoinverse order
$\lambda_{\Gamma,n}/\kappa_{\Gamma,n}^3$ to the target-relevant order
$\lambda_{\Gamma,n}/\kappa_{\Gamma,n}^2$.

Write
$
A_0=A_{\eta,0}^{(J_{\Gamma,n})},
\;
G_0=G_{\eta,0}^{(J_{\Gamma,n})},
\;
\kappa=\kappa_{\Gamma,n}.
$
For a fixed fold suppress the superscript $(-k)$ and write
$\widehat A=\widehat A_\eta$, $\widehat G=\widehat G_\eta$. Since
$
\|\widehat A-A_0\|_{\mathrm{op}}
=O_p(\Delta_{A,n})=o_p(\kappa),
$
Weyl's inequality implies
$
\sigma_{\min}(\widehat A)\geq\kappa/2
$
with probability approaching one. Because $A_0$ has full column rank,
$
\|A_0^\dagger\|_{\mathrm{op}}=\kappa^{-1}.
$
Standard perturbation bounds for full-column-rank pseudoinverses give
$
\|\widehat A^\dagger-A_0^\dagger\|_{\mathrm{op}}
=
O_p\left(
\frac{\Delta_{A,n}}{\kappa^2}
\right).
$

Define the regularized inverse
$
A_\lambda^\dagger
=
(A^{\prime}A+\lambda_{\Gamma,n}I)^{-1}A^{\prime}.
$
For a singular value $\sigma$, the scalar difference between
$\sigma/(\sigma^2+\lambda)$ and $1/\sigma$ is
$
\frac{\lambda}{\sigma(\sigma^2+\lambda)}.
$
Thus, on the event $\sigma_{\min}(\widehat A)\geq\kappa/2$,
$
\|\widehat A_\lambda^\dagger-\widehat A^\dagger\|_{\mathrm{op}}
=
O_p\left(
\frac{\lambda_{\Gamma,n}}{\kappa^3}
\right).
$
Local perturbation of the regularized inverse and
$\lambda_{\Gamma,n}=o(\kappa^2)$ yield
$
\|\widehat A_\lambda^\dagger-A_{0,\lambda}^\dagger\|_{\mathrm{op}}
=
O_p\left(
\frac{\Delta_{A,n}}{\kappa^2}
\right).
$

The raw regularization gap contains $\kappa^{-3}$, but after premultiplication
by $G_0$ the source condition gives the sharper target-relevant bound. To see
this, use
$
\Gamma_{0,J_{\Gamma,n}}=G_0A_0^\dagger,
\;
\sup_n\|\Gamma_{0,J_{\Gamma,n}}\|_{\mathrm{op}}<\infty.
$
An SVD calculation gives
$
\|G_0(A_{0,\lambda}^\dagger-A_0^\dagger)\|_{\mathrm{op}}
\leq
C
\frac{\lambda_{\Gamma,n}}{\kappa^2}.
$
Combining this with the perturbation bound gives
$
\|G_0(\widehat A_\lambda^\dagger-A_0^\dagger)\|_{\mathrm{op}}
=
O_p\left(
\frac{\Delta_{A,n}}{\kappa^2}
+
\frac{\lambda_{\Gamma,n}}{\kappa^2}
\right).
$

Now decompose
$
\widehat\Gamma-\Gamma_{0,J_{\Gamma,n}}
={}
(\widehat G-G_0)A_0^\dagger
+
G_0(\widehat A_\lambda^\dagger-A_0^\dagger)
+
(\widehat G-G_0)
(\widehat A_\lambda^\dagger-A_0^\dagger).
$
The first term is
$
O_p\left(\frac{\Delta_{G,n}}{\kappa}\right).
$
The second is bounded as above. For the interaction term,
$
\|\widehat A_\lambda^\dagger-A_0^\dagger\|_{\mathrm{op}}
=
O_p\left(
\frac{\Delta_{A,n}}{\kappa^2}
+
\frac{\lambda_{\Gamma,n}}{\kappa^3}
\right),
$
so
\[
\|(\widehat G-G_0)
(\widehat A_\lambda^\dagger-A_0^\dagger)\|_{\mathrm{op}}
=
O_p\left(
\frac{\Delta_{G,n}\Delta_{A,n}}{\kappa^2}
+
\frac{\Delta_{G,n}\lambda_{\Gamma,n}}{\kappa^3}
\right).
\]
Because $\lambda_{\Gamma,n}=o(\kappa^2)$, the last term is
$o_p(\Delta_{G,n}/\kappa)$ and is absorbed by the leading
$\Delta_{G,n}/\kappa$ term. Therefore
\[
\|\widehat\Gamma-\Gamma_{0,J_{\Gamma,n}}\|_{\mathrm{op}}
=
O_p\left(
\frac{\Delta_{G,n}}{\kappa}
+
\frac{\Delta_{A,n}}{\kappa^2}
+
\frac{\Delta_{G,n}\Delta_{A,n}}{\kappa^2}
+
\frac{\lambda_{\Gamma,n}}{\kappa^2}
\right).
\]
Finally,
$
\|\Gamma_{0,J_{\Gamma,n}}-\Gamma_0\|_{\mathrm{op}}
\leq
a_{\Gamma,J_{\Gamma,n}},
$
so the triangle inequality yields the claimed rate. The stated sufficient
condition for $r_{\Gamma,n}=o(n^{-1/4})$ follows immediately; the product
term is then of smaller order under the displayed first-order restrictions.
\hfill$\square$

\subsection{Proof of Proposition \ref{prop:orthogonality}}
\label{app:proof_orthogonality}

\noindent\textbf{Proof idea.}
The Riesz correction reproduces the first derivative of the target moment by
a linear combination of derivatives of the nuisance moments. Subtracting it
therefore removes the first-order nuisance sensitivity. Perturbations of the
Riesz operator are exactly orthogonal because the nuisance moments have zero
population mean.

For block $k$ and a nuisance direction $\delta\eta$,
$
\frac1{|\mathcal I_k|}
\sum_{i\in\mathcal I_k}
D_\eta E[\psi_i(\theta_0,\eta_0,\Gamma_0)][\delta\eta]
=
G_{\eta,0,k}[\delta\eta]
-
\Gamma_0A_{\eta,0,k}[\delta\eta].
$
Taking the target-relevant operator norm and applying Assumption
\ref{ass:riesz} gives
$
\sup_{1\leq k\leq K}
\left\|
\frac1{|\mathcal I_k|}
\sum_{i\in\mathcal I_k}
D_\eta E[\psi_i(\theta_0,\eta_0,\Gamma_0)]
\right\|_{\eta,n}^*
\leq
\delta_{R,n}.
$
For a perturbation $\delta\Gamma$,
$
D_\Gamma E[\psi_i(\theta_0,\eta_0,\Gamma_0)][\delta\Gamma]
=
-\delta\Gamma E[s_i(\theta_0,\eta_0)].
$
Since $E[s_i(\theta_0,\eta_0)]=0$, this derivative is exactly zero.
\hfill$\square$

\subsection{Proof of Lemma \ref{lem:dependent_crossfit}}
\label{app:proof_dependent_crossfit}

\noindent\textbf{Proof idea.}
Expand the feasible localized score around the oracle nuisance values. The
centered derivative fluctuation is controlled by spatial short memory, its
nonzero mean by the learner-specific leakage coefficient, and the average
population derivative by approximate orthogonality plus localization. The
quadratic remainder is controlled by the fourth-moment nuisance rate.

For fold $k$, write
$
\Delta\zeta_k
=
\widehat\zeta^{(-k)}-\zeta_0.
$
Assumption \ref{ass:score_smoothness} gives
$
\psi_{i,b_n}(\theta_0,\widehat\zeta^{(-k)})
-
\psi_{i,b_n}(\theta_0,\zeta_0)
=
G_{i,b_n}[\Delta\zeta_k]+r_{i,k}.
$
Let
$
L_k
=
\frac1{|\mathcal I_k|}
\sum_{i\in\mathcal I_k}
G_{i,b_n}[\Delta\zeta_k].
$
By the exact finite-dimensional tangent representation,
$
G_{i,b_n}[\Delta\zeta_k]
=
D_{i,b_n}\Delta v_k,
\;
\Delta v_k=r_{\zeta,n}u_k.
$
Therefore
$
L_k
={}
\overline D_{k,b_n}\Delta v_k
+
\left\{
\frac1{|\mathcal I_k|}
\sum_{i\in\mathcal I_k}E[D_{i,b_n}]
\right\}
\Delta v_k.
$
Decompose the first term into a centered fluctuation and its mean. Assumption
\ref{ass:crossfit_short_memory} gives
$
\overline D_{k,b_n}\Delta v_k
-
E[\overline D_{k,b_n}\Delta v_k]
=
O_p\left(
r_{\zeta,n}
\sqrt{\frac{J_{\zeta,n}}{|\mathcal I_k|}}
\right),
$
while the definition of $\chi_n(s_n)$ gives
$
\left\|
E[\overline D_{k,b_n}\Delta v_k]
\right\|_2
\leq
r_{\zeta,n}\chi_n(s_n).
$

For the mean derivative, Proposition \ref{prop:orthogonality} controls the
full-score population derivative by $\delta_{R,n}$. Assumption
\ref{ass:score_localization} changes the derivative by at most
$\delta_n^{\mathrm{loc}}(a_n^W,b_n)$. Hence
\[
\left\|
\frac1{|\mathcal I_k|}
\sum_{i\in\mathcal I_k}E[G_{i,b_n}]
\right\|_{\mathcal H,n}^*
\leq
C\left\{
\delta_{R,n}
+
\delta_n^{\mathrm{loc}}(a_n^W,b_n)
\right\}.
\]
Assumption \ref{ass:rates} therefore implies
\[
\left\|
\left\{
\frac1{|\mathcal I_k|}
\sum_{i\in\mathcal I_k}E[D_{i,b_n}]
\right\}
\Delta v_k
\right\|_2
=
O_p\left[
r_{\zeta,n}
\left\{
\delta_{R,n}
+
\delta_n^{\mathrm{loc}}(a_n^W,b_n)
\right\}
\right].
\]
Combining the three first-order components gives
\begin{align*}
L_k
=O_p\Bigg(&
r_{\zeta,n}
\sqrt{\frac{J_{\zeta,n}}{|\mathcal I_k|}}
+
r_{\zeta,n}\chi_n(s_n)+
r_{\zeta,n}\delta_n^{\mathrm{loc}}(a_n^W,b_n)
+
r_{\zeta,n}\delta_{R,n}
\Bigg).
\end{align*}

For the quadratic remainder, Assumptions \ref{ass:score_smoothness} and
\ref{ass:rates} imply
$
E\|r_{i,k}\|_2^2
\leq
C r_{\zeta,n}^4.
$
Minkowski's inequality therefore gives
$
\left\|
\frac1{|\mathcal I_k|}
\sum_{i\in\mathcal I_k}r_{i,k}
\right\|_{L^2}
\leq
C r_{\zeta,n}^2,
$
so the average remainder is $O_p(r_{\zeta,n}^2)$.

Since $|\mathcal I_k|\asymp n$, Assumption \ref{ass:spatial_blocks} gives
$
r_{\zeta,n}\sqrt{J_{\zeta,n}}=o(1)
$
and
$
\sqrt n\,r_{\zeta,n}
\left\{
\chi_n(s_n)
+
\delta_n^{\mathrm{loc}}(a_n^W,b_n)
+
\delta_{R,n}
\right\}
=o(1).
$
Assumption \ref{ass:rates} gives
$\sqrt n\,r_{\zeta,n}^2=o(1)$. Hence every term above is
$o_p(n^{-1/2})$, proving the foldwise oracle reduction.

Pooling over the fixed number of folds preserves the order. Finally,
Assumption \ref{ass:score_localization} implies that replacing the pooled
localized oracle score by the full oracle score contributes
$o_p(n^{-1/2})$. Thus
\[
\frac1n\sum_{k=1}^K\sum_{i\in\mathcal I_k}
\psi_{i,b_n}(\theta_0,\widehat\zeta^{(-k)})
=
\frac1n\sum_{i=1}^n\psi_i(\theta_0,\zeta_0)
+
o_p(n^{-1/2}).
\]
\hfill$\square$

\subsection{Proof of Lemma \ref{lem:random_weight_ned}}
\label{app:proof_sar_ned_weights}

\noindent\textbf{Proof idea.}
Local support and smooth row normalization transfer the NED approximation of
the weight-generating characteristics to the random weights. An interpolation
argument then controls the product of the weight-approximation error with a
possibly dependent random field.

On the candidate support define
$
q_{ij}=g_0(r(Z_i,Z_j,D_{ij})),
\;
q_{ij}^{[c]}
=
g_0(r(Z_i^{[c]},Z_j^{[c]},D_{ij})).
$
The Lipschitz assumptions on $r$ and $g_0$ imply
$
|q_{ij}-q_{ij}^{[c]}|
\leq
C\left(
\|Z_i-Z_i^{[c]}\|_2
+
\|Z_j-Z_j^{[c]}\|_2
\right).
$
Within a row, the normalized exponential map is a softmax map and
$
\frac{\partial w_{ij}}{\partial q_{i\ell}}
=
w_{ij}
\left(
\mathbbm 1\{j=\ell\}-w_{i\ell}
\right).
$
Its Jacobian is therefore uniformly Lipschitz in the rowwise $\ell_1$ norm.
Because each row contains at most $\bar d$ supported neighbors,
$
\sum_{j=1}^n|w_{ij}-w_{ij}^{[c]}|
\leq
C\sum_{j:S_{ij,n}=1}|q_{ij}-q_{ij}^{[c]}|.
$
The $L^{p+\delta_p}$-NED approximation of $Z$ and bounded degree yield
$
\sup_i
\left\|
\sum_{j=1}^n|w_{ij}-w_{ij}^{[c]}|
\right\|_{L^{p+\delta_p}}
\leq
C\delta_0(c).
$
Since supported neighbors lie within a fixed distance of unit $i$,
$w_{ij}^{[c]}$ is measurable with respect to innovations in a
$c+O(1)$ neighborhood of $i$. Hence each supported weight is $L^p$-NED.

For the operator action define
$
B_{i,c}
=
\sum_{j=1}^n|w_{ij}-w_{ij}^{[c]}|,
\;
M_i
=
\max_{j:S_{ij,n}=1}|A_j|.
$
Because both rows are probability vectors,
$0\leq B_{i,c}\leq2$, and bounded degree plus the maintained moment bound
gives
$
\sup_i\|M_i\|_{L^{p+\delta_p}}<\infty.
$
Moreover,
$
\left|\{(W_0-W_0^{[c]})A\}_i\right|
\leq
B_{i,c}M_i.
$
Set
$
r_p
=
\frac{p(p+\delta_p)}{\delta_p},
\;
\frac1p
=
\frac1{r_p}
+
\frac1{p+\delta_p}.
$
If $0<\delta_p\leq p$, interpolation between
$\|B_{i,c}\|_{L^{p+\delta_p}}\leq C\delta_0(c)$ and
$\|B_{i,c}\|_{L^\infty}\leq2$ gives
$
\|B_{i,c}\|_{L^{r_p}}
\leq
C\delta_0(c)^{\delta_p/p}.
$
If $\delta_p>p$, then $r_p<p+\delta_p$ and norm monotonicity gives
$
\|B_{i,c}\|_{L^{r_p}}
\leq
C\delta_0(c).
$
Thus in both cases
$
\|B_{i,c}\|_{L^{r_p}}
\leq
C\delta_0(c)^{\chi_p}
=
C\widetilde\delta_0(c).
$
H\"older's inequality now gives
$
\sup_i
\left\|\{(W_0-W_0^{[c]})A\}_i\right\|_{L^p}
\leq
C\widetilde\delta_0(c),
$
which proves the lemma.
\hfill$\square$

\subsection{Proof of Proposition \ref{prop:sar_ned}}
\label{app:proof_sar_ned}

\noindent\textbf{Proof idea.}
Truncate the SAR resolvent after a finite number of spatial propagation steps
and then replace the primitive variables and random weights by local
innovation approximations. Bounded candidate degree controls random
$L^p$ propagation, while $q_p<1$ makes the resolvent tail geometrically
summable.

For a random field $A=(A_1,\ldots,A_n)^{\prime}$ define
$
\|A\|_{p,\infty}
=
\sup_i\|A_i\|_{L^p}.
$
Since every row has at most $\bar d$ supported entries and
$0\leq w_{ij,0}\leq1$, Jensen's inequality gives
\begin{align*}
E|(W_0A)_i|^p
&\leq
E\sum_j w_{ij,0}|A_j|^p
\\
&\leq
\sum_{j:S_{ij,n}=1}E|A_j|^p
\leq
\bar d\|A\|_{p,\infty}^p.
\end{align*}
Hence
$
\|W_0A\|_{p,\infty}
\leq
\bar d^{1/p}\|A\|_{p,\infty},
$
and iterating gives
$
\|W_0^\ell A\|_{p,\infty}
\leq
\bar d^{\ell/p}\|A\|_{p,\infty}.
$
The same bound holds for the localized operator $W_0^{[c]}$ because it has
the same candidate support and is row stochastic.

The reduced form is
$
Y
=
\sum_{\ell=0}^{\infty}\rho_0^\ell W_0^\ell v.
$
Let
$
L=L(r)
=
\left\lfloor\frac{r}{4\bar a_W}\right\rfloor,
\;
Y^{(L)}
=
\sum_{\ell=0}^{L}\rho_0^\ell W_0^\ell v.
$
Using $q_p=|\rho_0|\bar d^{1/p}<1$,
\[
\sup_i\|Y_i-Y_i^{(L)}\|_{L^p}
\leq
C\sum_{\ell=L+1}^{\infty}q_p^\ell
\leq
Cq_p^{L+1}.
\]

Set $c=r/4$. Replace the primitive variables entering $v$ and the
weight-generating characteristics entering $W_0$ by their $c$-innovation
approximations, obtaining $v^{[c]}$ and $W_0^{[c]}$. Define
$
Y_i^{[r]}
=
\left\{
\sum_{\ell=0}^{L}
\rho_0^\ell(W_0^{[c]})^\ell v^{[c]}
\right\}_i.
$
Each application of the spatial operator moves by at most $\bar a_W$.
Therefore every path of length at most $L$ remains within distance
$L\bar a_W\leq r/4$ of $i$. The $c=r/4$ innovation approximation enlarges
the underlying innovation neighborhood by at most another $r/4$, up to the
fixed primitive locality radius. Hence $Y_i^{[r]}$ is measurable with respect
to innovations within distance $r+O(1)$ of unit $i$.

For $\ell\geq1$, the telescoping identity gives
\[
W_0^\ell v-(W_0^{[c]})^\ell v^{[c]}
=
W_0^\ell(v-v^{[c]})
+
\sum_{s=0}^{\ell-1}
W_0^s(W_0-W_0^{[c]})
(W_0^{[c]})^{\ell-1-s}v^{[c]}.
\]
The first term is bounded by
$
C\bar d^{\ell/p}\delta_0(c).
$
For each operator-difference term, Lemma
\ref{lem:random_weight_ned}, the maintained higher-moment propagation bound,
and the outer $W_0^s$ propagation bound give
\[
\left\|
W_0^s(W_0-W_0^{[c]})
(W_0^{[c]})^{\ell-1-s}v^{[c]}
\right\|_{p,\infty}
\leq
C\bar d^{\ell/p}\widetilde\delta_0(c).
\]
Consequently,
$
\|W_0^\ell v-(W_0^{[c]})^\ell v^{[c]}\|_{p,\infty}
\leq
C(1+\ell)\bar d^{\ell/p}\widetilde\delta_0(c).
$
Summing over $\ell\leq L$ and using $q_p<1$ yields
\[
\sup_i\|Y_i^{(L)}-Y_i^{[r]}\|_{L^p}
\leq
C\widetilde\delta_0(c).
\]
Combining this with the resolvent tail and setting $c=r/4$ gives
\[
\sup_i\|Y_i-Y_i^{[r]}\|_{L^p}
\leq
C\left\{
\widetilde\delta_0(r/4)
+
q_p^{L(r)+1}
\right\}.
\]
Thus $Y$ is $L^p$-NED.

The same finite-propagation and local-approximation argument applies to every
fixed-order transform $W_0^\ell X$ and $W_0^\ell Q$. Smooth finite-sieve score
derivatives are finite compositions and products of these NED objects. Under
the maintained envelope, moment, and sieve-complexity conditions, they
inherit the corresponding NED property up to the stated finite-sieve
complexity factor.
\hfill$\square$

\subsection{Proof of Proposition \ref{prop:primitive_short_memory}}
\label{app:proof_derivative_short_memory}

\noindent\textbf{Proof idea.}
The centered score-derivative matrix is a spatial sample average. The assumed
fourth-order spatial moment inequality gives the usual
$|\mathcal I_k|^{-1/2}$ scale per target-sensitive coordinate. H\"older's
inequality then controls multiplication by the normalized nuisance-training
direction.

Recall
$
\overline D_{k,b_n}
=
\frac1{|\mathcal I_k|}
\sum_{i\in\mathcal I_k}
\{D_{i,b_n}-E[D_{i,b_n}]\}.
$
For every target-sensitive coordinate $(j,\ell)$, the assumed fourth-moment
bound gives
$
E|\overline D_{k,b_n,j\ell}|^4
\leq
\frac{C}{|\mathcal I_k|^2}.
$
The target dimension $q$ is fixed. Summing over the
$J_{\zeta,n}$ target-sensitive coordinates and using
$
\left(\sum_m x_m^2\right)^2
\leq
J_{\zeta,n}\sum_mx_m^4
$
gives
$
E\|\overline D_{k,b_n}\|_F^4
\leq
C\frac{J_{\zeta,n}^2}{|\mathcal I_k|^2}.
$
Hence
$
\left(E\|\overline D_{k,b_n}\|_F^4\right)^{1/4}
\leq
C\sqrt{\frac{J_{\zeta,n}}{|\mathcal I_k|}}.
$
Since $E\|u_k\|_2^4\leq C$, H\"older's inequality implies
\[
\left(
E\|\overline D_{k,b_n}u_k\|_2^2
\right)^{1/2}
\leq
\left(E\|\overline D_{k,b_n}\|_F^4\right)^{1/4}
\left(E\|u_k\|_2^4\right)^{1/4}
\leq
C\sqrt{\frac{J_{\zeta,n}}{|\mathcal I_k|}}.
\]
The same bound applies to the norm of the expectation. Therefore
$
\left\|
\overline D_{k,b_n}u_k
-
E[\overline D_{k,b_n}u_k]
\right\|_2
=
O_p\left(
\sqrt{\frac{J_{\zeta,n}}{|\mathcal I_k|}}
\right),
$
which is Assumption \ref{ass:crossfit_short_memory}.
\hfill$\square$

\subsection{Proof of Proposition \ref{prop:ned_leakage}}
\label{app:proof_ned_leakage}

\noindent\textbf{Proof idea.}
Couple both the evaluation derivative and the fold-specific nuisance learner
to innovation-local approximations. Within the separated local basin the
coupled learner is stable by Assumption \ref{ass:learner_coupling}. Once both
objects are localized, their innovation sigma-fields are separated by at
least $s_n-2c_n$, so a spatial mixing covariance inequality controls the
remaining mean dependence.

For fold $k$, let
$
u_k^{[c_n]}
=
\frac{\Delta v_k^{[c_n]}}{r_{\zeta,n}}
$
be the normalized local-basin coupled learner direction. Assumption
\ref{ass:learner_coupling} gives
\[
\left(E\|u_k-u_k^{[c_n]}\|_2^2\right)^{1/2}
\leq
C\delta_{\mathrm{tr},n}^{\mathrm{NED}}(c_n)
\]
and a uniform fourth-moment bound for $u_k^{[c_n]}$. Replace
$D_{i,b_n}$ by $D_{i,b_n}^{[c_n]}$. By assumption,
\[
\max_i
\|D_{i,b_n}-D_{i,b_n}^{[c_n]}\|_{L^{2+\delta_\alpha},F}
\leq
\delta_{D,n}^{\mathrm{NED}}(c_n).
\]

The localized evaluation derivative depends on innovations in the
$c_n$-enlargement of the evaluation-score footprint, while the coupled local
learner depends on innovations in the $c_n$-enlargement of the training
footprint. The raw footprints are separated by $s_n$, so if $s_n>2c_n$ the
two innovation sets are separated by at least $s_n-2c_n$.

Consider one output coordinate and one target-sensitive sieve coordinate.
Davydov's covariance inequality with moment exponents $2+\delta_\alpha$ and
$2$ gives
\[
\left|
\operatorname{Cov}
\left(
D_{i,b_n,j\ell}^{[c_n]},
u_{k,\ell}^{[c_n]}
\right)
\right|
\leq
C
\left[
\alpha_{v_{\psi,n},v_{T,n},n}^{\mathcal E}(s_n-2c_n)
\right]^{\nu_\alpha},
\]
where
$
\nu_\alpha
=
1-
\frac1{2+\delta_\alpha}
-
\frac12
=
\frac{\delta_\alpha}{2(2+\delta_\alpha)}.
$
Assumption \ref{ass:innovation_mixing} therefore yields
\[
\left|
\operatorname{Cov}
\left(
D_{i,b_n,j\ell}^{[c_n]},
u_{k,\ell}^{[c_n]}
\right)
\right|
\leq
C
(v_{\psi,n}v_{T,n})^{\zeta_\alpha\nu_\alpha}
\bar\alpha_{\mathcal E}(s_n-2c_n)^{\nu_\alpha}.
\]
Averaging over $i$ does not increase this bound. Summing over the
$J_{\zeta,n}$ target-sensitive coordinates and applying Cauchy--Schwarz costs
at most $\sqrt{J_{\zeta,n}}$. Hence the fully localized contribution to
$\chi_n(s_n)$ is bounded by
\[
C\sqrt{J_{\zeta,n}}
(v_{\psi,n}v_{T,n})^{\zeta_\alpha\nu_\alpha}
\bar\alpha_{\mathcal E}(s_n-2c_n)^{\nu_\alpha}.
\]

Replacing $D_{i,b_n}^{[c_n]}$ by $D_{i,b_n}$ adds at most
$
C\sqrt{J_{\zeta,n}}
\delta_{D,n}^{\mathrm{NED}}(c_n),
$
while replacing $u_k^{[c_n]}$ by $u_k$ adds at most
$
C\delta_{\mathrm{tr},n}^{\mathrm{NED}}(c_n).
$
Therefore
\begin{align*}
\chi_n(s_n)
\lesssim{}
\sqrt{J_{\zeta,n}}
(v_{\psi,n}v_{T,n})^{\zeta_\alpha\nu_\alpha}
\bar\alpha_{\mathcal E}(s_n-2c_n)^{\nu_\alpha}
+
\sqrt{J_{\zeta,n}}
\delta_{D,n}^{\mathrm{NED}}(c_n)
+
\delta_{\mathrm{tr},n}^{\mathrm{NED}}(c_n),
\end{align*}
which proves the proposition. The argument uses only stability of the local
identified solution; it does not require continuity of a global argmin map
across separated nonconvex basins.
\hfill$\square$

\subsection{Proof of Corollary \ref{cor:primitive_guard}}
\label{app:proof_primitive_guard}

\noindent\textbf{Proof idea.}
Under exponential decay, logarithmic localization and guard radii transform
every dependence error into a power of $n$. The stated inequalities make the
exponents of all root-$n$ leakage terms strictly negative.

Suppose
$
r_{\zeta,n}=O(n^{-a}),
\;
J_{\zeta,n}=O(n^{\omega_J}),
\;
v_{\psi,n}=O(n^{\omega_\psi}),
\;
v_{T,n}=O(n).
$
Set
$
c_n=c_c\log n,
\;
s_n=c_s\log n.
$
For the derivative-approximation component,
$
\sqrt n\,r_{\zeta,n}\sqrt{J_{\zeta,n}}
\delta_{D,n}^{\mathrm{NED}}(c_n)
=
O\left(
n^{1/2-a+\omega_J/2-c_Dc_c}
\right),
$
which converges to zero if
$
c_c
>
\frac{1/2-a+\omega_J/2}{c_D}.
$
For learner coupling,
$
\sqrt n\,r_{\zeta,n}
\delta_{\mathrm{tr},n}^{\mathrm{NED}}(c_n)
=
O\left(
n^{1/2-a+\omega_{\mathrm{tr}}-c_{\mathrm{tr}}c_c}
\right),
$
which vanishes under the second displayed restriction in the corollary.

For the mixing component,
\begin{align*}
\sqrt n\,r_{\zeta,n}\sqrt{J_{\zeta,n}}
(v_{\psi,n}v_{T,n})^{\zeta_\alpha\nu_\alpha}
\bar\alpha_{\mathcal E}(s_n-2c_n)^{\nu_\alpha}
=
O\left(
n^{
1/2-a+\omega_J/2
+\zeta_\alpha\nu_\alpha(1+\omega_\psi)
-c_\alpha\nu_\alpha(c_s-2c_c)
}
\right).
\end{align*}
The third restriction makes this exponent negative. Finally,
$
r_{\zeta,n}\sqrt{J_{\zeta,n}}
=
O(n^{-a+\omega_J/2})
\to0
$
whenever $a>\omega_J/2$. Together with the separately imposed localization
and Riesz-approximation rate, these inequalities imply Assumption
\ref{ass:spatial_blocks}.
\hfill$\square$

\subsection{Proof of Corollary \ref{cor:primitive_benchmark}}
\label{app:proof_primitive_benchmark}

\noindent\textbf{Proof idea.}
The corollary collects the primitive pieces already established. The revised
joint-rate result closes the nuisance-learning chain, the stronger
$\mathcal G,+$ rate controls nonlinear operator remainders, and the NED
results supply the oracle limit theory and dependent cross-fit bounds.

By Lemma \ref{lem:random_weight_ned}, local innovation approximations of the
weight-generating characteristics induce local approximations of the random
interaction operator and its action on sufficiently integrable random
fields. Proposition \ref{prop:sar_ned} then shows that the SAR outcome,
fixed-order spatial transforms, and smooth finite-sieve score derivatives
are NED on the primitive innovation field. Under increasing-domain
regularity, exponential innovation mixing, the maintained moment bounds, and
Assumption \ref{ass:ned_summability}, the standard spatial NED LLN and CLT
therefore yield the oracle laws in Assumption \ref{ass:oracle_clt}.

Proposition \ref{prop:joint_preliminary} gives a rate for the full feasible
joint sieve start, including the target and all nuisance components held fixed
in the subsequent interaction profile. Proposition \ref{prop:sieve_g_rate}
then yields both the target-relevant $g$ rate and the stronger local envelope
rate without circularity. The maintained smoothness and dimensionality
conditions for $m$, $h$, and $\ell$, including the aggregate
$\sqrt{d_{B_g,n}}$ factor for the vector-valued projection, combine with
Proposition \ref{prop:gamma_rate} to give Assumption \ref{ass:rates} and the
required fourth-moment controls.

Proposition \ref{prop:riesz_existence} gives the finite-sieve Riesz
representation and the blockwise approximation error, while Proposition
\ref{prop:gamma_rate} controls estimation of the regularized representer.
The target-rank and interaction-richness assumptions maintain identification
of the relevant finite-dimensional and operator directions.

Proposition \ref{prop:primitive_short_memory} establishes Assumption
\ref{ass:crossfit_short_memory}. The local-basin coupling condition in
Assumption \ref{ass:learner_coupling}, Proposition \ref{prop:ned_leakage},
and Corollary \ref{cor:primitive_guard} make learner-specific
training-to-evaluation leakage asymptotically negligible with logarithmic
innovation-localization and guard radii. The condition
$\sqrt n\,r_{\zeta,n}^2\to0$ controls the quadratic score remainder, while
the displayed localization and Riesz rates control the remaining first-order
terms.

All conditions of Lemma \ref{lem:dependent_crossfit} therefore hold, so the
feasible cross-fitted orthogonal score admits the oracle reduction. This
establishes the claimed compatibility of random $W_0$, globally simultaneous
$Y$, smooth sieve learning, and buffered spatial cross-fitting.
\hfill$\square$

\subsection{Proof of Lemma \ref{lem:uniform_gmm_reduction}}
\label{app:proof_uniform_gmm}

\noindent\textbf{Proof idea.}
The dependent cross-fit lemma gives oracle equivalence at the true target
parameter. Conditional on the nuisance functions, the target and nuisance
moments are affine in $\theta$, so convergence of the target derivative
extends the equivalence uniformly over the compact parameter space. The
explicit oracle ULLN then transfers the feasible GMM criterion to its
population counterpart.

Because $\xi_i(\theta,\eta)$ is affine in $\theta$ and the nuisance moments
$s_{h,i}$ and $s_{g,i}$ depend on $\theta$ only through this residual,
$\psi_i(\theta,\eta,\Gamma)$ is affine in $\theta$ once
$(\eta,\Gamma)$ are fixed. Hence
\[
\overline\psi_n(\theta)-\overline\psi_n^0(\theta)
=
\{\overline\psi_n(\theta_0)-\overline\psi_n^0(\theta_0)\}
+
(\widehat J_n-J_n^0)(\theta-\theta_0),
\]
where $\widehat J_n$ and $J_n^0$ are the feasible and oracle sample target
Jacobians. Lemma \ref{lem:dependent_crossfit} gives
$
\|\overline\psi_n(\theta_0)-\overline\psi_n^0(\theta_0)\|_2
=o_p(n^{-1/2}),
$
and Assumption \ref{ass:score_smoothness}, together with the oracle
derivative ULLN in Assumption \ref{ass:oracle_clt}, gives
$
\|\widehat J_n-J_n^0\|_{\mathrm{op}}=o_p(1).
$
Compactness of $\Theta$ therefore implies
$
\sup_{\theta\in\Theta}
\|\overline\psi_n(\theta)-\overline\psi_n^0(\theta)\|_2
=o_p(1).
$
The first ULLN in Assumption \ref{ass:oracle_clt} gives directly
$
\sup_{\theta\in\Theta}
\|\overline\psi_n^0(\theta)-\mu_n(\theta)\|_2
=o_p(1).
$
Since $\widehat{\mathcal M}\xrightarrow{p}\mathcal M_0$ and the moments are
uniformly bounded in probability on compact $\Theta$, the inequality
$
|a^{\prime}Ma-b^{\prime}Nb|
\leq
C\|a-b\|(\|a\|+\|b\|)
+
\|M-N\|_{\mathrm{op}}\|b\|^2
$
implies
\[
\sup_{\theta\in\Theta}
\left|
\overline\psi_n(\theta)^{\prime}\widehat{\mathcal M}
\overline\psi_n(\theta)
-
Q_n(\theta)
\right|
=o_p(1).
\]
\hfill$\square$

\subsection{Proof of Theorem \ref{thm:asymptotic_normality}}
\label{app:proof_asymptotic_normality}

\noindent\textbf{Proof idea.}
Uniform convergence of the feasible GMM criterion gives consistency. The
oracle-reduction lemma then replaces the feasible score at the truth by the
oracle score up to $o_p(n^{-1/2})$. A standard GMM linearization and the
oracle spatial CLT yield asymptotic normality.

Lemma \ref{lem:uniform_gmm_reduction} and Assumption
\ref{ass:global_identification} imply
$
\widehat\theta\xrightarrow{p}\theta_0.
$
Since $\theta_0$ lies in the interior of $\Theta$, the GMM first-order
condition holds with probability approaching one, up to the assumed
asymptotically negligible numerical optimization error:
$
\widehat J_\psi(\widehat\theta)^{\prime}
\widehat{\mathcal M}
\overline\psi_n(\widehat\theta)
=o_p(n^{-1/2}).
$
A mean-value expansion gives
$
\overline\psi_n(\widehat\theta)
=
\overline\psi_n(\theta_0)
+
\widetilde J_\psi
(\widehat\theta-\theta_0),
$
where $\widetilde J_\psi$ is evaluated between $\theta_0$ and
$\widehat\theta$. The score-smoothness condition and the oracle derivative
ULLN imply
$
\widetilde J_\psi
\xrightarrow{p}
J_{\psi,0}.
$
Assumption \ref{ass:rank} implies that
$J_{\psi,0}^{\prime}\mathcal M_0J_{\psi,0}$ is nonsingular. Hence
\begin{align*}
\sqrt n(\widehat\theta-\theta_0)
={}&
-
\left(
\widetilde J_\psi^{\prime}
\widehat{\mathcal M}
\widetilde J_\psi
\right)^{-1}
\widetilde J_\psi^{\prime}
\widehat{\mathcal M}
\sqrt n\,\overline\psi_n(\theta_0)
+
o_p(1).
\end{align*}
Lemma \ref{lem:dependent_crossfit} gives
$
\sqrt n\,\overline\psi_n(\theta_0)
=
\frac1{\sqrt n}\sum_{i=1}^n
\psi_i(\theta_0,\eta_0,\Gamma_0)
+
o_p(1).
$
Furthermore,
$
\left(
\widetilde J_\psi^{\prime}
\widehat{\mathcal M}
\widetilde J_\psi
\right)^{-1}
\widetilde J_\psi^{\prime}
\widehat{\mathcal M}
\xrightarrow{p}
B_0.
$
Therefore
\[
\sqrt n(\widehat\theta-\theta_0)
=
-B_0
\frac1{\sqrt n}\sum_{i=1}^n
\psi_i(\theta_0,\eta_0,\Gamma_0)
+
o_p(1).
\]
The oracle spatial CLT in Assumption \ref{ass:oracle_clt} and Slutsky's
theorem yield
$
\sqrt n(\widehat\theta-\theta_0)
\xrightarrow{d}
N(0,B_0\Omega_0B_0^{\prime}).
$
If $\mathcal M_0=\Omega_0^{-1}$, straightforward matrix algebra gives
$
B_0\Omega_0B_0^{\prime}
=
\left(
J_{\psi,0}^{\prime}\Omega_0^{-1}J_{\psi,0}
\right)^{-1}.
$
\hfill$\square$

\subsection{Proof of Corollary \ref{cor:slow_rates}}
\label{app:proof_slow_rates}

\noindent\textbf{Proof idea.}
Approximate Neyman orthogonality removes the ordinary first-order nuisance
term. The remaining feasible-to-oracle error consists of second-order
nuisance products and first-order terms specific to spatial dependence,
localization, finite-sieve derivative complexity, and Riesz approximation.

The expansion in the proof of Lemma \ref{lem:dependent_crossfit} shows that
the first-order feasible-to-oracle difference is bounded by
\[
O_p\left[
r_{\zeta,n}\sqrt{\frac{J_{\zeta,n}}{n}}
+
r_{\zeta,n}\chi_n(s_n)
+
r_{\zeta,n}\delta_n^{\mathrm{loc}}(a_n^W,b_n)
+
r_{\zeta,n}\delta_{R,n}
\right].
\]
At root-$n$ scale, the centered derivative term is negligible if
$
r_{\zeta,n}\sqrt{J_{\zeta,n}}=o(1),
$
and the remaining first-order spatial terms are negligible if
$
\sqrt n\,r_{\zeta,n}
\left[
\chi_n(s_n)
+
\delta_n^{\mathrm{loc}}(a_n^W,b_n)
+
\delta_{R,n}
\right]
=o(1).
$
Approximate orthogonality removes the other first-order nuisance effects.
The nonlinear remainder is a sum of products $r_{a,n}r_{b,n}$, so
$
\sqrt n\,r_{a,n}r_{b,n}=o(1)
$
for every product appearing in the expansion is sufficient. Under a common
fixed-complexity nuisance rate $r_n$, this reduces to
$
\sqrt n\,r_n^2=o(1),
$
or equivalently $r_n=o(n^{-1/4})$, together with the spatial terms above.
\hfill$\square$

\subsection{Proof of Proposition \ref{prop:variance}}
\label{app:proof_variance}

\noindent\textbf{Proof idea.}
The feasible score is close to the oracle score in mean square. The spatial
kernel row-sum bound converts this score-replacement error into a bound on
the difference between feasible and oracle HAC matrices. Oracle HAC
consistency then transfers to the feasible estimator, and projection onto the
positive-semidefinite cone preserves the probability limit.

Let
$
e_i
=
\widehat\psi_i-
\psi_i^0.
$
A mean-value expansion in $\theta$, Theorem
\ref{thm:asymptotic_normality}, and the score-replacement condition imply
\[
\left[
\frac1n\sum_{i=1}^nE\|e_i\|_2^2
\right]^{1/2}
=
O_p\left(
\delta_{\psi,n}+n^{-1/2}
\right).
\]
Define
$
\delta_{\psi,n}^{\mathrm{tot}}
=
O_p\left(
\delta_{\psi,n}+n^{-1/2}
\right).
$
After centering,
$
\widetilde\psi_i
=
\widetilde\psi_i^0+
\widetilde e_i,
$
where
$
\left[
\frac1n\sum_{i=1}^n
\|\widetilde e_i\|_2^2
\right]^{1/2}
=
O_p(\delta_{\psi,n}^{\mathrm{tot}}).
$
Hence
\[
\widetilde\psi_i\widetilde\psi_j^{\prime}
-
\widetilde\psi_i^0\widetilde\psi_j^{0\prime}
=
\widetilde e_i\widetilde\psi_j^{0\prime}
+
\widetilde\psi_i^0\widetilde e_j^{\prime}
+
\widetilde e_i\widetilde e_j^{\prime}.
\]
Let
$
K_{ij,n}
=
\mathcal K\left(
\frac{d_{ij}^*}{\nu_n}
\right).
$
Using Cauchy--Schwarz twice and
$
\kappa_n(\nu_n)
=
\max_i\sum_j|K_{ij,n}|,
$
we obtain
\[
\|\widehat\Omega-\widehat\Omega^0\|_F
=
O_p\left[
\kappa_n(\nu_n)
\delta_{\psi,n}^{\mathrm{tot}}
\{1+\delta_{\psi,n}^{\mathrm{tot}}\}
\right].
\]
Assumption \ref{ass:hac} gives
$
\kappa_n(\nu_n)\delta_{\psi,n}=o_p(1)
$
and
$
\frac{\kappa_n(\nu_n)}{\sqrt n}\to0.
$
Therefore
$
\widehat\Omega-
\widehat\Omega^0=o_p(1).
$
Since
$
\widehat\Omega^0\xrightarrow{p}\Omega_0,
$
it follows that
$
\widehat\Omega\xrightarrow{p}\Omega_0.
$
Further,
$
\widehat J_\psi\xrightarrow{p}J_{\psi,0},
\;
\widehat{\mathcal M}\xrightarrow{p}\mathcal M_0,
$
so
$
\widehat B\xrightarrow{p}B_0.
$
Continuous mapping therefore yields
$
\widehat V
=
\widehat B\widehat\Omega\widehat B^{\prime}
\xrightarrow{p}
B_0\Omega_0B_0^{\prime}
=
V_0.
$

Finally, let
$
\widehat\Omega^{\mathrm{sym}}
=
\frac{\widehat\Omega+\widehat\Omega^{\prime}}{2}.
$
Because $\Omega_0$ is symmetric and positive definite,
$
\widehat\Omega^{\mathrm{sym}}
\xrightarrow{p}
\Omega_0.
$
Projection onto the closed convex cone $\mathbb S_+^q$ is nonexpansive in
the Frobenius norm, and $\Pi_{\mathbb S_+^q}(\Omega_0)=\Omega_0$. Hence
$
\|\widehat\Omega^+-\Omega_0\|_F
\leq
\|\widehat\Omega^{\mathrm{sym}}-\Omega_0\|_F
=
o_p(1).
$
Thus
$
\widehat\Omega^+\xrightarrow{p}\Omega_0,
\;
\widehat V^+
=
\widehat B\widehat\Omega^+\widehat B^{\prime}
\xrightarrow{p}
V_0.
$
\hfill$\square$

\end{document}